\newtheorem{theorem}{Theorem}
\newtheorem{assumption}{Assumption}
\newtheorem{definition}{Definition}
\newtheorem{proposition}{Proposition}
\newenvironment{proof}[1][Proof]{\textbf{#1.} }{\ \rule{0.5em}{0.5em}}
\begin{document}
\title{\textbf{Doubly Robust Estimation of Direct and Indirect Quantile Treatment Effects with Machine Learning}}
\author{Yu-Chin Hsu\thanks{Institute of Economics, Academia Sinica, 128, Section 2, Academia Road, Nankang, Taipei 115, Taiwan. E-mail: \texttt{ychsu@econ.sinica.edu.tw}.}\\Academia Sinica \and Martin Huber\thanks{University of Fribourg, Department of Economics, Bd. de P\'{e}rolles 90, 1700 Fribourg, Switzerland. E-mail: \texttt{martin.huber@unifr.ch}.}\\University of Fribourg \and Yu-Min Yen\thanks{Department of International Business, National Chengchi University, 64, Section 2, Zhi-nan Road, Wenshan, Taipei 116, Taiwan. E-mail: \texttt{yyu\_min@nccu.edu.tw}.} \\ National\\Chengchi University}
\date{\today}

\maketitle
\begin{abstract} We suggest double/debiased machine learning estimators of direct and indirect quantile treatment effects under a selection-on-observables assumption. This permits disentangling the causal effect of a binary treatment at a specific outcome rank into an indirect component that operates through an intermediate variable called mediator and an (unmediated) direct
impact. The proposed method is based on the efficient score functions of the cumulative distribution functions of potential outcomes, which are robust to certain misspecifications of the nuisance parameters, i.e., the outcome, treatment, and mediator models. We estimate these nuisance parameters by machine learning and use cross-fitting to reduce overfitting bias in the estimation of direct and indirect quantile treatment effects. We establish uniform consistency and asymptotic normality of our effect estimators. We also propose a multiplier bootstrap for statistical inference and show the validity of the multiplier bootstrap. Finally, we investigate the finite sample performance of our method in a simulation study and apply it to empirical data from the National Job Corp Study to assess the direct and indirect earnings effects of training.\\
\textbf{JEL classification:} C01, C21\\
\textbf{Keywords: } Causal inference, efficient score, mediation analysis, quantile treatment effect, semiparametric efficiency
\end{abstract}

\clearpage
\section{Introduction}
	Causal mediation analysis aims at understanding the mechanisms through which a treatment affects an outcome of interest. It disentangles the treatment effect into an indirect effect, which operates through a mediator, and a direct effect, which captures any causal effect not operating through the mediator. Such a decomposition of the total treatment effect permits learning the drivers of the effect, which may be helpful for improving the design of a policy or intervention. Causal mediation analysis typically focuses on the estimation of average indirect and direct effects, which may mask interesting effect heterogeneity across individuals. For this reason, several contributions focusing on total (rather than direct and indirect) effects consider quantile treatment effects (QTE) instead of average treatment effects (ATE). The QTE corresponds to the difference between the potential outcomes with and without treatment at a specific rank of the potential outcome distributions, but has so far received little attention in the causal mediation literature. 
	
	The main contribution of this paper is to propose  doubly robust/debiased machine learning (DML) estimators of the direct and indirect QTE under a selection-on-observables (or sequential ignorability) assumption, implying that the treatment and the mediator are as good as random when controlling for observed covariates. The method computes the quantile of a potential outcome by inverting an DML estimate of its  cumulative distributional function (c.d.f.). This approach makes use of the efficient score function of the c.d.f., into which models for the outcome, treatment, and mediator enter as plug-in or nuisance parameters. Relying on the efficient score function makes treatment effect estimation robust, i.e., first-order insensitive to (local) misspecifications of the nuisance parameters, a property known as \citet{Neyman1959}-orthogonality. This permits estimating the nuisance parameters by machine learning (which generally introduces regularization bias) and still obtains root-n-consistent treatment effect estimators, given that certain regularity conditions hold.  In addition, cross-fitting is applied to mitigate overfitting bias. Cross-fitting consists of estimating the nuisance parameter models and treatment effects in different subsets of the data and swapping the roles of the data to exploit the entire sample for treatment effect estimation, see \citet{CCDDHNR_2018}. We then establish uniform consistency and asymptotic normality of the effect estimators.
 
	For conducting statistical inference, we propose a multiplier bootstrap procedure and show the validity of the multiplier bootstrap. We also provide a simulation study to investigate the finite sample performance of our method. Finally, we apply our method to empirical data from the Job Corps Study to analyse the direct and indirect QTE of participation in a training program on earnings when considering general health as a mediator. The results point to positive direct effects of training across a large range of the earnings quantiles, while the indirect effects are generally close to zero and mostly statistically insignificant. 

 \begin{figure}[h]
	\centering
	\includegraphics[height=8cm, width = 11cm]{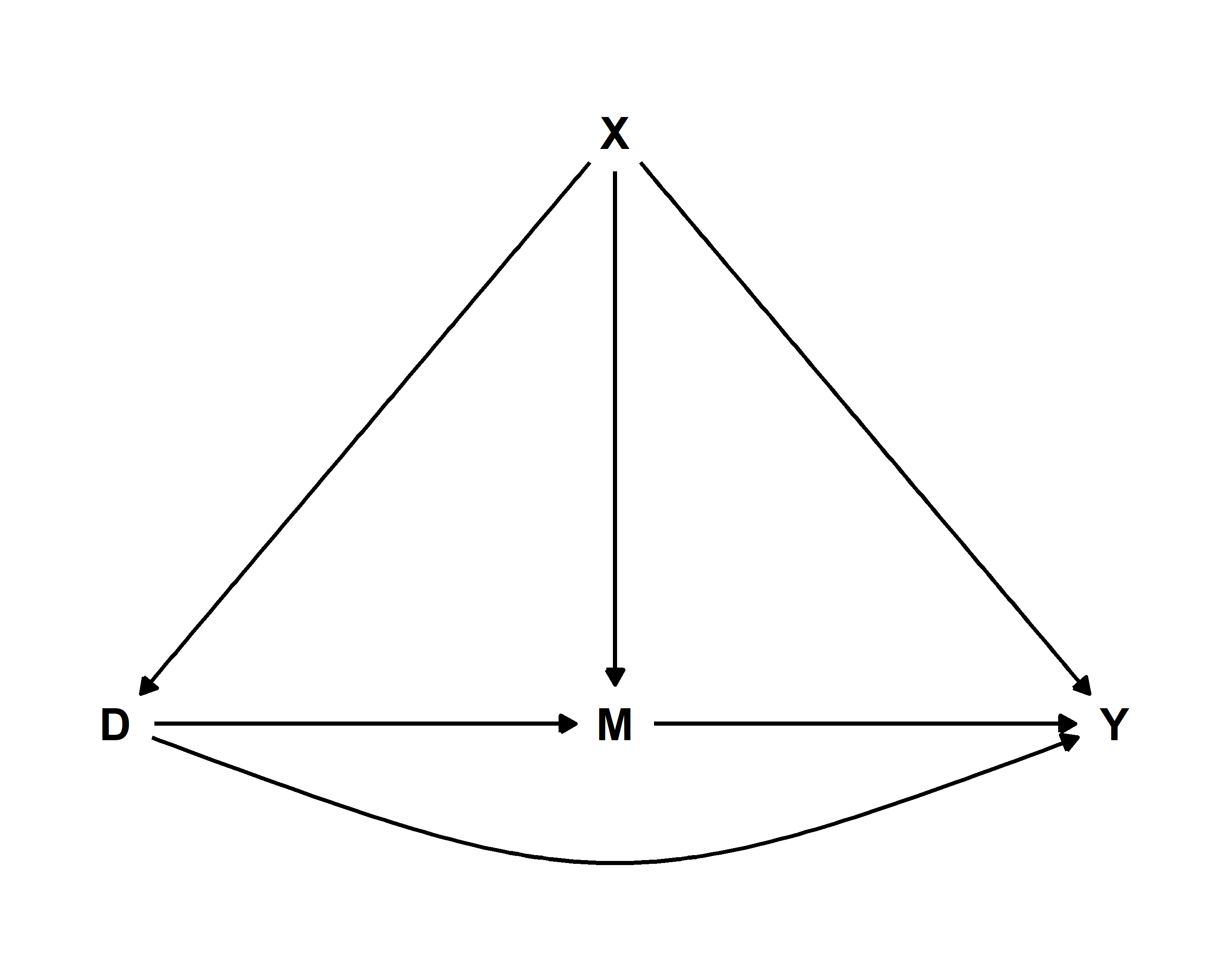}
	\caption{DAG illustrating causal links between outcome $Y$, treatment $D$, mediator $M$ 
		and covariates $X$.}
	\label{figure1}
\end{figure}

	To more formally discuss the direct and indirect effects of interest, let $Y$ denote the outcome of interest, $D$ the binary treatment, $M$ the mediator, and $X$ a vector of pre-treatment covariates. Following \citet{Pearl00}, we may represent causal relationships between $\left(Y,D,M,X\right)$
	by means of a directed acyclic graph (DAG), as provided in Figure \ref{figure1}. The causal arrows in the DAG imply that  $\left(D,M,X\right)$ may affect $Y$, $\left(D,X\right)$ may affect  $M$,  and $X$ may affect $D$. We can therefore define the  outcome as a function of the treatment and the mediator, $Y=Y\left(D, M\right)$,
	and the mediator as a function of the treatment, $M=M\left(D\right)$, while being agnostic about $X$. Furthermore, we make use of the potential outcome notation advocated by \citet{Neyman23} and \citet{Rubin74} to denote by $Y\left(d,m\right)$ the potential outcome if $D$ were
	set to a specific value $d\in\left\{ 0,1\right\}$ and $M$ were set to some value $m$ in the support of the mediator, while $M\left(d\right)$ denotes the potential mediator for $D=d$. 
 Accordingly, $Y\left(d, M\left(d\right)\right)$ is the potential outcome if $D$ were set to $d$, implying that the mediator is not forced to take a specific value $m$, but corresponds to its potential value under $D=d$. Depending on the actual treatment and mediator values of an observation, $Y\left(d, M\left(d\right)\right)$, $Y\left(d,m\right)$, $M\left(d\right)$ is either observed or counterfactual. Furthermore, the potential outcome $Y\left(d, M\left(1-d\right)\right)$  is inherently counterfactual, as no observation can be observed in the opposite treatment states $d$ and $1-d$ at the same time. 
 
Armed with this notation, we define the causal parameters of interest. The natural direct effect (NDE), which is for instance considered in \citet{RoGr92}, \citet{Pearl01} and \citet{TS_2012}, is based on a comparison of the potential outcomes when varying the treatment, but keeping the potential mediator fixed at treatment value $D=d$: $Y\left(1,M\left(d\right)\right)-Y\left(0,M\left(d\right)\right)$. The natural indirect effect (NIE) is based on a comparison of the potential outcomes when fixing the treatment to $D=d$, but varying the potential mediator according to the values it takes under treatment and non-treatment: $Y\left(d,M\left(1\right)\right)-Y\left(d,M\left(0\right)\right)$. It is worth noting that the NDE and the NIE defined upon opposite treatment states $d$ and $1-d$ sum up to the total effect (TE): $Y\left(1,M\left(1\right)\right)-Y\left(0,M\left(0\right)\right)$. 
	
	Previous methodological research on causal mediation predominantly focused on the  estimation of averages of the aforementioned NDE, NIE and TE or of averages of related path-wise causal effects \citep{IKY_2010,TS_2012,HHL_2019,FHLLS_2022,Zhou_2022}. We complement this literature by suggesting a method for estimating natural direct and indirect QTEs, which permits assessing the effects across the entire distribution of potential outcomes. The estimation of the total (rather than the direct or indirect) QTE has already been studied in multiple contributions \citep{AAI_2002,CH_2005,Firpo_2007,DH_2014,BCFH_2017,ALZ_2022,HLL_2022}. Among the few studies considering QTEs in causal mediation is \citet{BVSC_2017}, suggesting a two-stage quantile regression estimation to estimate the controlled direct QTE, i.e., $Y\left(1,m\right)-Y\left(0,m\right)$ at a specific rank, as well as a particular indirect QTE. The latter is based on first estimating the mediator at a specific rank and then including it in a quantile regression of the outcome, which generally differs from the natural indirect QTE considered in this paper.  Furthermore, our approach is nonparametric and relies on results on semiparametric efficiency, very much in contrast to the parametric approach of \citet{BVSC_2017}. \citet{HSS_2022} adapted the Changes-in-Changes (CiC) approach of \citet{AtheyImbens06} to estimate direct and indirect QTEs in subgroups defined in terms of how the mediator reacts to (or complies with) the treatment. The NDE and NIE investigated here differ from such subgroup-specific causal parameters and furthermore, our identification strategy relies on a selection-on-observables (or sequential ignorability) assumption rather than CiC.  

 The remainder of this study is organized as follows. Section \ref{Methodology} introduces the natural direct and indirect QTE, the identifying assumptions, the effect estimators based on double/debiased machine learning, and the multiplier bootstrap procedure for inference. Section \ref{theores} gives the theoretical results on the asymptotic behavior of our methods. Section \ref{sim} presents a simulation study that investigates the finite sample properties of our method. Section \ref{appl} provides an empirical application to data from the National Job Corps Study to assess the direct earnings effects of training across the earnings distribution, as well as the indirect effects operating via general health.  Section \ref{conclusion} concludes. 

\section{Methodology}\label{Methodology}

\subsection{Causal effects and Identifying Assumptions}

	To define the direct and indirect QTEs of interest, let $Q_{Z}\left(\tau\right):=\inf\left\{ q\in\mathbb{R}:P\left(Z\leq q\right)\geq\tau\right\} $
	denote the $\tau$th-quantile of a random variable $Z$, where $\tau\in(0,1)$. Furthermore, let $Q_{Z|V}\left(\tau\right):=\inf\left\{ q\in\mathbb{R}:P\left(Z\leq q|V\right)\geq\tau\right\}$
	denote the $\tau$th-quantile of $Z$ conditional on another random variable (or a random vector) $V$, where $\tau\in(0,1)$. Let $F_{Z}\left(z\right)$ and $f_{Z}\left(z\right)$ denote
	cumulative distribution function (c.d.f.) and probability density
	or probability mass function (p.d.f. or p.m.f.) of $Z$ at $z$, and
	$F_{Z|V}\left(z|v\right)$ and $f_{Z|V}\left(z|v\right)$ denote the
	c.d.f. and p.d.f. (or p.m.f.) of $Z$ at $z$ conditional on $V=v$. We define the natural direct quantile treatment effect (NDQTE) at the $\tau$th-quantile as:
	\begin{equation}
		\text{NDQTE}\left(\tau\right):=Q_{Y\left(1,M\left(0\right)\right)}\left(\tau\right)-Q_{Y\left(0,M\left(0\right)\right)}\left(\tau\right),\label{NDQTE}
	\end{equation}
	and the natural indirect quantile treatment effect (NIQTE) at the $\tau$th-quantile as: 
	\begin{equation}
		\text{NIQTE}\left(\tau\right):=Q_{Y\left(1,M\left(1\right)\right)}\left(\tau\right)-Q_{Y\left(1,M\left(0\right)\right)}\left(\tau\right).\label{NIQTE}
	\end{equation}
	The NDQTE in equation (\ref{NDQTE}) corresponds to the direct effect of the treatment when fixing the mediator at its potential value under non-treatment, $M(0)$. Alternatively, we may consider the NDQTE when conditioning on the potential mediator under treatment, $M(1)$: 
	\begin{equation}
		\text{NDQTE}^{\prime}\left(\tau\right):=Q_{Y\left(1,M\left(1\right)\right)}\left(\tau\right)-Q_{Y\left(0,M\left(1\right)\right)}\left(\tau\right).\label{NDQTE1}
	\end{equation} Likewise, the NIQTE in equation (\ref{NDQTE1}) is the indirect effect when varying the potential mediators but keeping the treatment fixed ad $D=1$, but we may also consider the indirect effect conditional on $D=0$:
	\begin{equation}\text{NIQTE}^{\prime}\left(\tau\right):=Q_{Y\left(0,M\left(1\right)\right)}\left(\tau\right)-Q_{Y\left(0,M\left(0\right)\right)}\left(\tau\right).\label{NIQTE1}
	\end{equation} If the effects in expressions (\ref{NDQTE}) and (\ref{NDQTE1}) (or (\ref{NIQTE}) and (\ref{NIQTE1})) are different, then this implies effect heterogeneity due to interaction effects between the treatment and the mediator. The sum of NDQTE (NDQTE') and NIQTE (NIQTE') yields the total quantile treatment effect
	(TQTE) at the $\tau$th-quantile, which includes all causal mechanisms through which the treatment affects the outcome: 
	\begin{align}
		\text{TQTE}\left(\tau\right)&=\text{NDQTE}\left(\tau\right)+\text{NIQTE}\left(\tau\right)=\text{NDQTE}^{\prime}\left(\tau\right)+\text{NIQTE}^{\prime}\left(\tau\right)\nonumber\\
		&=Q_{Y\left(1,M\left(1\right)\right)}\left(\tau\right)-Q_{Y\left(0,M\left(0\right)\right)}\left(\tau\right).\label{TQTE}
	\end{align}We aim at estimating the quantile treatment effects (\ref{NDQTE}) to (\ref{TQTE}).\footnote{We do not consider the controlled direct quantile treatment effect (CDQTE) at the $\tau$-quantile, $\text{CDQTE}\left(\tau\right):=Q_{Y\left(1,m\right)}\left(\tau\right)-Q_{Y\left(0,m\right)}\left(\tau\right)$, which can be identified under less stringent assumptions than required for the identification of natural effects. 
		} To this end, we first need to estimate the $\tau$th-quantile of the relevant potential outcomes, by inverting estimates of the corresponding c.d.f.'s at the $\tau$th-quantile. Let $F_{Y\left(d,M\left(d^{\prime}\right)\right)}\left(a\right)$ denote the c.d.f. of the potential outcome $Y\left(d,M\left(d^{\prime}\right)\right)$
	at value $a$. To identify $F_{Y\left(d,M\left(d^{\prime}\right)\right)}\left(a\right)$ in the data, we impose the following assumptions.
	\begin{assumption}
		\item[1.] 
		For any observation and $d\in\left\{ 0,1\right\} $ as well as $m$ in the support of $M$, $M=M\left(d\right)$ if $D=d$, and $Y=Y\left(d,m\right)$ if $D=d$ and $M=m$.
		\item[2.] 
		$\left(Y\left(d,m\right),M\left(d^{\prime}\right)\right)\perp D|X=x$
		for $\left(d,d^{\prime}\right)\in\left\{ 1,0\right\}^{2} $ and $m,x$ in the support
		of $(M,X)$. 
		\item[3.] 
		$Y\left(d,m\right)\perp M\left(d^{\prime}\right)|D=d^{\prime},X=x$
		for $\left(d,d^{\prime}\right)\in\left\{ 1,0\right\}^{2}$ and $m,x$ in the support of $\left(M,X\right)$. 
		\item[4.] 
		$f_{D|M,X}\left(d|m,x\right)>0$ for any $d\in\left\{ 1,0\right\} $ and $m,x$ in the support of $(M,X)$.
	\end{assumption}
	Assumption 1.1 implies the stable unit treatment value assumption (SUTVA), see \citet{Cox58} and \citet{Rubin80}, stating the potential mediators and potential outcomes are only a function of an individual's own treatment and mediator states, respectively, which are well defined (ruling out multiple treatment or mediator versions). Assumptions 1.2 and 1.3 are \textit{sequential ignorability or selection-on-observables conditions} \citep{IKY_2010} for causal mediation analysis. Assumption 1.2 states that conditional on $X$, the treatment variable $D$ is independent of the potential outcome $Y(d,m)$ and the potential mediator $M(d^{\prime})$. This assumption also implies that $Y\left(d,m\right)\perp D|M\left(d^{\prime}\right)=m^{\prime},X=x$. Assumption 1.3 requires that $Y(d,m)$ and $M(d^{\prime})$ are independent, too, conditional on $X$ and $D$. Even if treatment  $D$ were random, this would not suffice to identify direct and indirect effects and for this reason, we need to impose an identifying assumption like Assumption 1.3 to tackle the endogeneity of the mediator. Assumption 1.4 is a common support condition, which says that the treatment is not deterministic in covariates $X$ and mediator $M$ such that for each covariate-mediator combination in the population, both treated and non-treated subjects exist. 
	
	Under Assumptions 1.1 to 1.4, we obtain the following identification result.
	\begin{proposition} Under Assumptions 1.1 to 1.4, 
		\begin{equation}
	F_{Y\left(d,M\left(d^{\prime}\right)\right)}\left(a\right)  =  \int g_{d,d^{\prime},a}\left(x\right)f_{X}\left(x\right)dx\label{key_idf_ym}
		\end{equation}
		where $\left(d,d^{\prime}\right)\in \{0,1\}^{2}$, $a\in \mathcal{A}$ where $\mathcal{A}$ is a countable subset of $\mathbb{R}$, and 
		\begin{align*}
			g_{d,d^{\prime},a}\left(x\right) & =  \int F_{Y|D,M,X}\left(a|d,m,x\right)f_{M|D,X}\left(m|d^{\prime},x\right)dm\\
			& = E\left[F_{Y|D,M,X}\left(a|d,M,X\right)|d^{\prime},X=x\right].
		\end{align*}
	\end{proposition}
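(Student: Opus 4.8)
The plan is to prove the claimed identity by repeated use of the law of iterated expectations together with Assumptions 1.1--1.4, peeling the (partly counterfactual) object $\mathbf{1}\{Y(d,M(d'))\le a\}$ down to the fully observed conditional c.d.f.\ $F_{Y|D,M,X}$. First I would write $F_{Y(d,M(d'))}(a)=E\big[\mathbf{1}\{Y(d,M(d'))\le a\}\big]=E\big[E[\mathbf{1}\{Y(d,M(d'))\le a\}\mid X]\big]$, so it suffices to show that the inner conditional expectation equals $g_{d,d',a}(X)$. Conditioning further on the potential mediator $M(d')$, and noting that $Y(d,M(d'))=Y(d,m)$ on the event $\{M(d')=m\}$, the task reduces to (a) showing $E[\mathbf{1}\{Y(d,m)\le a\}\mid M(d')=m,X=x]=F_{Y|D,M,X}(a\mid d,m,x)$ and (b) integrating this against the conditional law of $M(d')$ given $X$, which I will show equals $f_{M|D,X}(\cdot\mid d',x)$.

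For (a) I would chain the following equalities. (i) Assumption 1.2 gives $(Y(d,m),M(d'))\perp D\mid X$, hence $Y(d,m)\perp D\mid M(d'),X$, so I may insert the conditioning event $D=d'$. (ii) Assumption 1.3 (with this $d'$) gives $Y(d,m)\perp M(d')\mid D=d',X$, so the conditioning on $M(d')=m$ may be dropped, leaving $E[\mathbf{1}\{Y(d,m)\le a\}\mid D=d',X=x]$. (iii) Assumption 1.2 again, now in the weaker form $Y(d,m)\perp D\mid X$, lets me switch the treatment event from $D=d'$ to $D=d$. (iv) Under $D=d$, SUTVA (Assumption 1.1) identifies $M$ with $M(d)$, and Assumption 1.3 with $d'$ replaced by $d$ gives $Y(d,m)\perp M(d)\mid D=d,X$, so the conditioning $M=m$ may be re-inserted. (v) Finally, on $\{D=d,M=m\}$, SUTVA identifies $Y$ with $Y(d,m)$, yielding $F_{Y|D,M,X}(a\mid d,m,x)$. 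For (b): Assumption 1.2 gives $M(d')\perp D\mid X$, and SUTVA under $D=d'$ identifies $M$ with $M(d')$, so $f_{M(d')|X}(m\mid x)=f_{M|D,X}(m\mid d',x)$. Combining, the inner conditional expectation equals $\int F_{Y|D,M,X}(a\mid d,m,x)\,f_{M|D,X}(m\mid d',x)\,dm=g_{d,d',a}(x)$; the second expression for $g_{d,d',a}$ is just this integral rewritten as $E[F_{Y|D,M,X}(a\mid d,M,X)\mid d',X=x]$. Assumption 1.4 ensures all the conditional objects appearing are well defined on the relevant support. Taking the outer expectation over $X$ gives $F_{Y(d,M(d'))}(a)=\int g_{d,d',a}(x)f_X(x)\,dx$, completing the argument.

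The main obstacle is bookkeeping rather than depth: at each step one must verify that the conditioning set involves only variables for which the invoked independence is actually valid. In particular, step (iii) is the subtle one — a naive chaining of Assumptions 1.2 and 1.3 alone does not let one pass from conditioning on $D=d'$ to conditioning on $D=d$, and one genuinely needs the $D$-unconditional conditional independence $Y(d,m)\perp D\mid X$ implied by Assumption 1.2 to build this bridge when $d\neq d'$. A secondary, minor point is the measure-theoretic care needed in conditioning on the density-level events $\{M=m\}$ and $\{M(d')=m\}$ and in the change of variables of step (b); this is where the common-support condition, Assumption 1.4, is used.
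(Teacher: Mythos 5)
Your proposal is correct and follows essentially the same route as the paper's proof: iterated expectations over $X$ and then over the potential mediator $M(d')$, followed by the same chain of insertions and deletions of conditioning events justified by Assumption 1.2 (insert $D=d'$, then switch to $D=d$), Assumption 1.3 (drop and re-insert the mediator conditioning), and SUTVA (replace potential by observed variables). The only cosmetic difference is the order in which you apply SUTVA versus the conditional-independence step when removing the conditioning on the mediator, which is immaterial since on $\{D=d'\}$ the events $\{M(d')=m\}$ and $\{M=m\}$ coincide.
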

	
	The proof of Proposition 1 is provided in the appendix. Under Proposition 1, we may estimate $F_{Y\left(d,M\left(d^{\prime}\right)\right)}\left(a\right)$ based on plug-in estimation of the nuisance parameters $F_{Y|D,M,X}\left(a|d,m,X\right)$ and $f_{M|D,X}\left(m|d^{\prime},X\right)$: 
	\begin{equation}
		\hat{\theta}_{d,d^{\prime},a}^{YM}=\frac{1}{n}\sum_{i=1}^{n}\hat{g}_{d,d^{\prime},a}\left(X_{i}\right),\label{estimator_ym}
	\end{equation}
	where 
	\begin{equation}
		\hat{g}_{d,d^{\prime},a}\left(X_{i}\right)=\int\hat{F}_{Y|D,M,X}\left(a|d,m,X_{i}\right)\hat{f}_{M_{i}|D_{i},X_{i}}\left(m|d^{\prime},X_{i}\right)dm, \label{g_hat}
	\end{equation}
	and $\hat{F}_{Y|D,M,X}\left(a|d,m,X_{i}\right)$ and $\hat{f}_{M|D,X}\left(m|d^{\prime},X_{i}\right)$
	are estimates of $F_{Y|D,M,X}\left(a|d,m,X\right)$ and $f_{M|D,X}\left(m|d^{\prime},X_{i}\right)$. 
	If $M$ is a continuous variable, we may avoid estimating the conditional density $f_{M|D,X}\left(m|d^{\prime},X\right)$, and use the following alternative estimator for estimating $F_{Y\left(d,M\left(d^{\prime}\right)\right)}\left(a\right)$:
	\begin{equation}
		\hat{\theta}_{d,d^{\prime},a}^{RI}=\frac{1}{n}\sum_{i=1}^{n}\hat{E}\left[F_{Y|D,M,X}\left(a|d,M_{i},X_{i}\right)|d^{\prime},X_{i}\right],\label{estimator_yme}
	\end{equation}
	where $\hat{E}\left[F_{Y|D,M,X}\left(a|d,M_{i},X_{i}\right)|d^{\prime},X_{i}\right]$
	is an estimate of $E\left[F_{Y|D,M,X}\left(a|d,M_{i},X_{i}\right)|d^{\prime},X_{i}\right]$. For example, it might be based on a ``regression-imputation'' \citep{Zhou_2022}, corresponding to the fitted value of a linear regression of $F_{Y|D,M,X}\left(a|d,X_{i},M_{i}\right)$
	on $D_{i}$ and $X_{i}$ at $\left(d^{\prime},X_{i}\right)$. 
	
	The quality of estimators (\ref{estimator_ym}) and (\ref{estimator_yme}) crucially depends on the accuracy of nuisance parameter estimation. If the number of pretreatment covariates $X$ is small (low dimensional $X$) and the functional forms of the nuisance parameters are known, parametric methods can provide high-quality estimations on the nuisance parameters. In contrast, if $X$ is high dimensional and/or the nuisance parameters have complex forms, machine learning may be the preferred choice of estimation. However, applying ML directly to estimate expressions (\ref{estimator_ym}) or (\ref{estimator_yme}) may result in non-negligible bias induced by regularization and/or overfitting \citep{CCDDHNR_2018}. Causal machine learning algorithms aim at avoiding such biases by applying ML estimation when making use of Neyman-orthogonal moment conditions, which imply that the estimation of causal parameters is first order insensitive to (regularization) bias in the nuisance parameters, and of cross-fitting, which avoids overfitting.  One of these causal algorithms is double/debiased machine learning (DML) \citep{CCDDHNR_2018}, which has been previously adapted to the estimation of average effects in causal mediation analysis \citep{FHLLS_2022}, while this study extends it to the estimation of direct and indirect quantile treatment effects. 
	
	Let $Y_{a}=1\{Y\leq a\}$ be an indicator function which is one if outcome $Y$ is smaller than or equal to $a$ (and zero otherwise) and $W_{a}=\left(Y_{a}, D, M, X\right)$ be a vector of the observed variables. An estimator of the c.d.f. of the potential outcome that satisfies Neyman-orthogonality can be derived from the efficient influence function (EIF) of $F_{Y\left(d,M\left(d^{\prime}\right)\right)}\left(a\right)$:
	\begin{align}
		\psi_{d,d^{\prime},a}^{\theta}\left(W_{a};v_{a}\right) & =\psi_{d,d^{\prime},a}\left(W_{a};v_{a}\right)-\theta,\label{efficient_score}
	\end{align}
	where 
	\begin{align}
		\psi_{d,d^{\prime},a}\left(W_{a};v_{a}\right) & = \frac{1\left\{ D=d\right\} }{f_{D|X}\left(d^{\prime}|X\right)}\frac{f_{D|M,X}\left(d^{\prime}|M,X\right)}{f_{D|M,X}\left(d|M,X\right)}\times\left[Y_{a} -F_{Y|D,M,X}\left(a|d,M,X\right)\right]\nonumber \\
		& +\frac{1\left\{ D=d^{\prime}\right\} }{f_{D|X}\left(d^{\prime}|X\right)}\times\left[F_{Y|D,M,X}\left(a|d,M,X\right)-g_{d,d^{\prime},a}\left(X\right)\right]+g_{d,d^{\prime},a}\left(X\right)\label{psi_d_dprime_a}
	\end{align} for $\left(d,d^{\prime}\right)\in \{0,1\}^{2}$, $a\in \mathcal{A}$ and $v_{a}$ denoting the vector of nuisance parameters. Let $\theta_{d,d^{\prime},a}$ denote the value of $\theta$ that satisfies $E[\psi_{d,d^{\prime},a}^{\theta}\left(W_{a};v_{a}\right)]=0$:
	\begin{equation}
		\theta_{d,d^{\prime},a} = E\left[ \psi_{d,d^{\prime},a}\left(W_{a};v_{a}\right)\right].
		\label{triply_robust}
	\end{equation}
	We can show that $\theta_{d,d^{\prime},a}=F_{Y(d,M(d^{\prime})}\left(a\right)$ if Assumptions 1.1 to 1.4 hold, see the appendix for a derivation of these results. Therefore, we may use the sample analogue of equation (\ref{triply_robust}) to estimate $F_{Y\left(d,M\left(d^{\prime}\right)\right)}\left(a\right)$. A similar strategy was previously used to derive the triply robust approach for estimating $E\left[Y\left(d,M\left(d^{\prime}\right)\right)\right]$ in \citet{TS_2012} and \cite{FHLLS_2022}. If $d=d^{\prime}$, then the estimator of equation (\ref{triply_robust}) reduces to
	\begin{equation}
		\theta_{d,d,a} =  E\left[ \psi_{d,d,a}\left(W_{a};v_{a}\right)\right] \label{doubly_robust},
	\end{equation}
	where
	\begin{align*}
		\psi_{d,d,a}\left(W_{a};v_{a}\right)&=\frac{1\left\{ D=d\right\} }{f_{D|X}\left(d|X\right)}\times\left[Y_{a} -g_{d,d,a}\left(X\right)\right]+g_{d,d,a}\left(X\right),\\
		g_{d,d,a}\left(X\right) & = \int F_{Y|D,M,X}\left(a|d,m,X\right)f_{M|D,X}\left(m|d,X\right)dm\\
		& = F_{Y|D,X}\left(a|d,X\right),
	\end{align*}for $d\in \{0,1\}$ and $a\in \mathcal{A}$. We may use the sample analogue of equation (\ref{doubly_robust}) to estimate $F_{Y\left(d,M\left(d\right)\right)}\left(a\right)$. This is in analogy to the doubly robust approach for estimating $E\left[Y\left(d,M\left(d\right)\right)\right]$ in \citet{RRZ_1994} and \citet{Hahn_1998}.
	
	We note that by using the Bayes rule, we can rewrite equation (\ref{psi_d_dprime_a}) alternatively as:
	\begin{align}
		\psi_{d,d^{\prime},a}^{\prime}\left(W_{a};v_{a}^{\prime}\right) & = \frac{1\left\{ D=d\right\} }{f_{D|X}\left(d|X\right)}\frac{f_{M|D,X}\left(M|d^{\prime},X\right)}{f_{M|D,X}\left(M|d,X\right)}\times\left[Y_{a} -F_{Y|D,M,X}\left(a|d,M,X\right)\right]\nonumber \\
		& +\frac{1\left\{ D=d^{\prime}\right\} }{f_{D|X}\left(d^{\prime}|X\right)}\times\left[F_{Y|D,M,X}\left(a|d,M,X\right)-g_{d,d^{\prime},a}\left(X\right)\right]+g_{d,d^{\prime},a}\left(X\right). \label{psi_d_dprime_a1}
	\end{align} Therefore, 
	\begin{equation}
		\theta_{d,d^{\prime},a}^{\prime}=E\left[\psi_{d,d^{\prime},a}^{\prime}\left(W_{a};v_{a}^{\prime}\right)\right] \label{triply_robust1}
	\end{equation}
	can also be used to construct an estimator of $F_{Y(d,M(d^{\prime})}\left(a\right)$. There are several differences between the estimators based on equations (\ref{triply_robust}) and (\ref{triply_robust1}). Making use of equation (\ref{triply_robust}) requires estimating four nuisance parameters: $f_{D|X}\left(d|x\right)$, $f_{D|M,X}\left(d|m,x\right)$, $F_{Y|D,M,X}\left(a|d,m,x\right)$ and $g_{d,d^{\prime},a}\left(x\right)$. Since $D$ is binary, the first two nuisance parameters may for instance be estimated by a logit or probit model. The conditional c.d.f. $F_{Y|D,M,X}\left(a|d,m,x\right)$ can be estimated by distributional regression (DR) \citep{CFM_2013}. $g_{d,d^{\prime},a}\left(x\right)$ might be estimated by regression imputation as outlined in equation (\ref{estimator_yme}). However, the estimator based on equation (\ref{triply_robust1}) requires only three nuisance parameter estimates of $f_{D|X}\left(d|x\right)$, $f_{M|D,X}\left(m|d,x\right)$ and $F_{Y|D,M,X}\left(a|d,m,x\right)$. We may estimate $g_{d,d^{\prime},a}\left(x\right)$ based on equation (\ref{g_hat}) after having estimated $f_{M|D,X}\left(m|d,x\right)$ and $F_{Y|D,M,X}\left(a|d,m,x\right)$. The estimator based on equation (\ref{triply_robust1}) appears particularly attractive if the mediator $M$ is discrete and takes a finite (and relatively small) number of values. However, if $M$ is continuous, estimation based on equation (\ref{triply_robust}) may appear more attractive, because it avoids estimating the conditional density $f_{M|D,X}\left(m|d,x\right)$ and the integral in equation (\ref{g_hat}) to obtain an estimate of $g_{d,d^{\prime},a}\left(x\right)$.
	
	The estimators based on equations (\ref{triply_robust}) and (\ref{triply_robust1}) also differ in terms of their robustness to misspecification of the nuisance parameters.
	Let $\hat{\theta}_{d,d^{\prime},a}$ and $\hat{\theta}_{d,d^{\prime},a}^{\prime}$ denote estimators based on equations (\ref{triply_robust}) and (\ref{triply_robust1}) and the respective estimators of the nuisance parameters. Applying the theorem of semiparametric efficiency in \cite{TS_2012} and \cite{Zhou_2022}, we can show that under certain regularity conditions, the following results hold for estimating $F_{Y\left(d,M\left(d^{\prime}\right)\right)}\left(a\right)$ at outcome value $a$: 
	\begin{itemize}
		\item If $F_{Y|D,M,X}\left(a|d,m,x\right)$ and $f_{M|D,X}\left(m|d,x\right)$
		are correctly specified, $\hat{\theta}_{d,d^{\prime},a}^{\prime}\overset{p.}{\longrightarrow}\theta_{d,d^{\prime},a}^{\prime}$.
		\item If $F_{Y|D,M,X}\left(a|d,x,m\right)$ and $f_{D|X}\left(d^{\prime}|x\right)$
		are correctly specified, $\hat{\theta}_{d,d^{\prime},a}^{\prime}\overset{p.}{\longrightarrow}\theta_{d,d^{\prime},a}^{\prime}$.
		\item  If $f_{D|X}\left(d^{\prime}|x\right)$ and $f_{M|D,X}\left(m|d,x\right)$
		are correctly specified, $\hat{\theta}_{d,d^{\prime},a}^{\prime}\overset{p.}{\longrightarrow}\theta_{d,d^{\prime},a}^{\prime}$.
	\end{itemize}
	
	This implies that if two of the nuisance parameters entering equation (\ref{psi_d_dprime_a1}) are correctly specified, while also certain regularity conditions and Assumptions 1.1 to 1.4 hold, then $\hat{\theta}_{d,d^{\prime},a}^{\prime}$ is a consistent estimator of $F_{Y\left(d,M\left(d^{\prime}\right)\right)}\left(a\right)$ at outcome value $a$. In contrast, $\hat{\theta}_{d,d^{\prime},a}\overset{p.}{\longrightarrow}\theta_{d,d^{\prime},a}$ only holds if $f_{D|X}(d|x)$ is consistently estimated. If the latter holds and only one of the other three nuisance parameters in equation (\ref{psi_d_dprime_a}) is misspecified, while certain regularity conditions and Assumptions 1.1 to 1.4 are satisfied, then $\hat{\theta}_{d,d^{\prime},a}$ remains a consistent estimator of $F_{Y\left(d,M\left(d^{\prime}\right)\right)}\left(a\right)$ at outcome value $a$. Finally, if all nuisance parameters are correctly specified and consistently estimated, while certain regularity conditions and Assumptions 1 to 4 also hold, then both $\hat{\theta}_{d,d^{\prime},a}$ and $\hat{\theta}_{d,d^{\prime},a}^{\prime}$ are semiparametrically efficient. 
	
	\subsection{Improving Finite Sample Behavior}
	The estimate of the c.d.f. of $Y\left(d,M\left(d^{\prime}\right)\right)$ can be inverted at a specific rank $\tau$ to obtain an estimate of the $\tau$th quantile, which we denote by $Q_{Y\left(d,M\left(d^{\prime}\right)\right)}\left(\tau\right)$. Suppose $Y\left(d,M\left(d^{\prime}\right)\right)$ is continuous and let the grid of points used for the estimation be a non-decreasing sequence $\{a_{l}\}_{l=1}^{L}$, where $0<\underline{a}<a_{1}<a_{2}<\ldots<a_{L}<\bar{a}<\infty$. Let  $\hat{p}_{l}$ 
	denote an estimate of $F_{Y\left(d,M\left(d^{\prime}\right)\right)}\left(a_{l}\right)$ (e.g., the K-fold cross-fitting estimate, see Section 2.2). Note that $\hat{p}_{l}$ is not necessarily bounded away from 0 and 1 nor monotonically increasing in $a_{l}$, as required for a valid c.d.f. For this reason, we apply two additional constraints on the estimates $\hat{p}_{l}$, $l=1,\ldots,L$. The first one restricts their values to be within
	the range $\left[0,1\right]$. That is, we replace $\hat{p}_{l}$
	with $\tilde{p}_{l}=\max\left\{ \min\left\{ \hat{p}_{l},1\right\} ,0\right\} $.
	Then we follow \citet{CFG_2010} and use the rearrangement operator to sort $\tilde{p}_{l}$ in non-decreasing order. Let $\left(\tilde{p}_{\left(1\right)},\tilde{p}_{\left(2\right)},\ldots,\tilde{p}_{\left(L\right)}\right)$
	be the sorted sequence of $\tilde{p}_{l}$, $l=1,2,\ldots,L$.
	The sequence $\left(\tilde{p}_{\left(1\right)},\tilde{p}_{\left(2\right)},\ldots,\tilde{p}_{\left(L\right)}\right)$
	is our final estimate of the c.d.f. of $Y\left(d,M\left(d^{\prime}\right)\right)$
	at $\left(a_{1},a_{2},\ldots,a_{L}\right)$. 
	We then fit a function for points $\left(\tilde{p}_{\left(l\right)},a_{l}\right)$, $l=1,2,\ldots,L$ with linear interpolation, and use the fitted function to calculate the value of $a$ at rank $\tau$ to estimate  $Q_{Y\left(d,M\left(d^{\prime}\right)\right)}\left(\tau\right)$,\footnote{The monotonicity property is preserved under the linear interpolation.}, which permits estimating the quantile treatment effects (\ref{NDQTE}) to (\ref{TQTE}). When $Y\left(d,M\left(d^{\prime}\right)\right)$ is discrete, we need not fit the function for points $\left(\tilde{p}_{\left(l\right)},a_{l}\right)$; we may obtain $Q_{Y\left(d,M\left(d^{\prime}\right)\right)}\left(\tau\right)$ by directly using the definition of the $\tau$th-quantile.
	
	\subsection{K-Fold Cross-Fitting}
	Neyman-orthogonality may mitigate regularization bias coming from machine learning-based estimation of the nuisance parameters in
	equations (\ref{triply_robust}) or (\ref{triply_robust1}). To also safeguard against overfitting bias, we follow \citet{CCDDHNR_2018}
	and \cite{FHLLS_2022} and apply K-fold cross-fitting to estimate the nuisance parameters and the potential outcome distributions, $F_{Y\left(d,M\left(d^{\prime}\right)\right)}\left(a\right)$, in different parts of the data. To describe the approach,  let  $Y_{a,i}=1\{Y_{i}\leq a\}$ and $W_{a,i}=\left(Y_{a,i},D_{i},M_{i},X_{i}\right)$ denote the $i$th
	observation, $i=1,2,\ldots,n$. In the following, we use the estimator
	based on equation (\ref{triply_robust}) to illustrate K-fold cross-fitting. 
	\begin{enumerate}
		\item Randomly split the $n$ samples into $K$ (mutually exclusive) subsamples
		of equal sample size $n_{k}=n/K$, $k=1,2,\ldots,K$.
		Let $I_{k}$, $k=1,2,\ldots,K$ denote the set of indices for the
		$K$ different subsamples. Let $I_{k}^{c}$, $k=1,2,\ldots,K$ denote
		the complement set of $I_{k}$: $I_{k}^{c}=\left\{ 1,2,\ldots,n\right\} \setminus I_{k}$.
		\item For each $k$, estimate the model parameters of the nuisance parameters $F_{Y|D,M,X}\left(a\right)$,
		$f_{D|X}\left(d^{\prime}|X\right)$, $f_{D|M,X}\left(d|M,X\right)$
		and $g_{d,d^{\prime},a}\left(X\right)$ based on observations $W_{a,i}$,
		$i\in I_{k}^{c}$. For observations $W_{a,i}$, $i\in I_{k}$, predict the nuisance parameters: $\hat{F}_{Y|D,M,X}^{(k)}\left(a|D_{i},M_{i},X_{i}\right)$,
		$\hat{f}_{D|X}^{(k)}\left(d^{\prime}|X_{i}\right)$, $\hat{f}_{D|M,X}^{(k)}\left(d|M_{i},X_{i}\right)$,
		$\hat{f}_{D|M,X}^{(k)}\left(d^{\prime}|M_{i},X_{i}\right)$ and $\hat{g}_{d,d^{\prime},a}\left(X\right)$,
		$i\in I_{k}$.
		\item For each $k$, compute the estimate of $F_{Y\left(d,M\left(d^{\prime}\right)\right)}\left(a\right)$
		using the predicted nuisance parameters of step 2 as 
		\begin{align}
			\hat{\theta}_{d,d^{\prime},a}^{(k)} & =\frac{1}{n_{k}}\sum_{i\in I_{k}}\left\{ \frac{1\left\{ D_{i}=d\right\} \hat{f}_{D|M,X}^{(k)}\left(d^{\prime}|M_{i},X_{i}\right)}{\hat{f}_{D|X}^{(k)}\left(d^{\prime}|X_{i}\right)\hat{f}_{D|M,X}^{(k)}\left(d|M_{i},X_{i}\right)}\right.\nonumber \\
			& \times\left[1\left\{ Y_{i}\leq a\right\} -\hat{F}_{Y|D,M,X}^{(k)}\left(a|d,M_{i},X_{i}\right)\right]\label{K_fold_theta}\\
			& +\frac{1\left\{ D_{i}=d^{\prime}\right\} }{\hat{f}_{D|X}^{(k)}\left(d^{\prime}|X_{i}\right)}\left[\hat{F}_{Y|D,M,X}^{(k)}\left(a|d,M_{i},X_{i}\right)-\hat{g}_{d,d^{\prime},a}^{(k)}\left(X_{i}\right)\right]\nonumber \\
			& \left.+\hat{g}_{d,d^{\prime},a}^{(k)}\left(X_{i}\right)\right\} .\nonumber 
		\end{align}
		\item Average $\hat{\theta}_{d,d^{\prime},a}^{(k)}$ over $k=1,2,\ldots,K$
		to obtain the final estimate of $F_{Y\left(d,M\left(d^{\prime}\right)\right)}\left(a\right)$:
		\begin{equation}
			\hat{\theta}_{d,d^{\prime},a}=\frac{1}{K}\sum_{k=1}^{K}\hat{\theta}_{d,d^{\prime},a}^{(k)}.\label{theta_hat}
		\end{equation}
	\end{enumerate}
	We repeat steps 1 to 4 for a grid of points $a$,
	a non-decreasing sequence $\{a_{l}\}_{l=1}^{L}$, where $0<\underline{a}<a_{1}<a_{2}<\ldots<a_{L}<\bar{a}<\infty$,
	to construct the estimate of the c.d.f.\ profile of $Y\left(d,M\left(d^{\prime}\right)\right)$. In Section 3, we will establish the asymptotic properties uniformly over $a\in \cal{A}$ for the K-fold cross-fitting estimator of equation (\ref{theta_hat}). 
	
	Alternatively, we can also construct the K-fold cross-fitting estimator
	based on equation (\ref{triply_robust1}):
	\begin{equation}
		\hat{\theta}_{d,d^{\prime},a}^{\prime}=\frac{1}{K}\sum_{k=1}^{K}\hat{\theta}_{d,d^{\prime},a}^{\prime(k)},\label{theta_hat1}
	\end{equation}
	where 
	\begin{align}
		\hat{\theta}_{d,d^{\prime},a}^{\prime(k)}, & =\frac{1}{n_{k}}\sum_{i\in I_{k}}\left\{ \frac{1\left\{ D_{i}=d\right\} \hat{f}_{M|D,X}^{(k)}\left(M_{i}|d^{\prime},X_{i}\right)}{\hat{f}_{D|X}^{(k)}\left(d|X_{i}\right)\hat{f}_{M|D,X}^{(k)}\left(M_{i}|d,X_{i}\right)}\right.\nonumber \\
		& \times\left[1\left\{ Y_{i}\leq a\right\} -\hat{F}_{Y|D,M,X}^{(k)}\left(a|d,M_{i},X_{i}\right)\right]\label{K_fold_theta1}\\
		& +\frac{1\left\{ D_{i}=d^{\prime}\right\} }{\hat{f}_{D|X}^{(k)}\left(d^{\prime}|X_{i}\right)}\left[\hat{F}_{Y|D,M,X}^{(k)}\left(a|d,M_{i},X_{i}\right)-\hat{g}_{d,d^{\prime},a}^{(k)}\left(X_{i}\right)\right]\nonumber \\
		& \left.+\hat{g}_{d,d^{\prime},a}^{(k)}\left(X_{i}\right)\right\} .\nonumber 
	\end{align}

	\subsection{Nuisance Parameter Estimation}
	In the case when $D$ and $M$ are binary variables, $f_{D|X}\left(d|x\right)$ and $f_{M|D,X}\left(m|d,x\right)$
	can be estimated straightforwardly, for instance by logit or probit models. If $M$ is continuous, we might prefer $\hat{\theta}_{d,d^{\prime},a}$
	to avoid estimating $f_{M|D,X}\left(m|d,x\right)$, while the required 
	nuisance parameter $f_{D|M,X}\left(d|m,x\right)$ can be straightforwardly 
	estimated if $D$ is binary. The estimation of any nuisance parameter may be based on machine learning, for example, the lasso or neural networks. For estimating $F_{Y|D,M,X}\left(a|d,m,x\right)$, we use distributional regression (DR). Conditional on $\left(D=d,M=m,X=x\right)$,
	the conditional c.d.f. of $Y$ may be written as
	\begin{equation}
		F_{Y|D,M,X}\left(a|d,m,x\right)=E\left[Y_a |d,m,x\right],\label{eq1}
	\end{equation}
	where $a$ is a constant and $a\in\mathcal{A}\subset\mathbb{R}$,
	with $\mathcal{A}$ being a countable subset of $\mathbb{R}$, and
	$Y_{a}=1\left\{ Y\leq a\right\} $ being an indicator function for the event
	$\left\{ Y\leq a\right\} $. Equation (\ref{eq1}) is the building
	block for constructing the DR estimator \citep{FP_1995, CFM_2013}, which is based on using the binary dependent variable $Y_{a}$ to estimate $F_{Y|D,M,X}\left(a|d,m,x\right)$ by a regression approach that estimates $E\left[Y_{a}|d,m,x\right]$. For example, one may
	assume that the c.d.f. is linear in variables $\left(D,M,X\right)$,
	their higher-order terms and interaction terms, and estimate a linear probability model (LPM) by OLS. However, the LPM does not guarantee
	that the estimated $F_{Y|D,M,X}\left(a|d,m,x\right)$ will lie within
	the interval $\left[0,1\right]$. To overcome this difficulty, we
	may assume that
	\[
	F_{Y|D,M,X}\left(a|d,m,x\right)=G_{a}\left(v_{a}\left(d,m,x\right)\right),
	\]
	where $v_{a}:\left(D,M,X\right)\mapsto\mathbb{R}$ and $G_{a}\left(.\right)$
	is a link function which is non-decreasing and satisfies $G_{a}:\mathbb{R}\mapsto\left[0,1\right]$,
	$G_{a}\left(y\right)\rightarrow0$ if $y\rightarrow-\infty$ and $G_{r}\left(y\right)\rightarrow1$
	if $y\rightarrow\infty$. The choices of $v_{a}\left(D,M,X\right)$
	and the link function $G_{a}\left(.\right)$ are flexible. For example,
	$v_{a}\left(D,M,X\right)$ might be a neural network (see Section 3.1) or a transformation of $\left(D,M,X\right)$, which can vary with $a$. 
	Depending on whether $Y$ is continuous or discrete, the link function $G_{a}\left(.\right)$ may be the logit, probit, linear, log-log, Gosset, the Cox proportional
	hazard function or the incomplete Gamma function. 
	As pointed out by \citet{CFM_2013}, for any given link function $G_{a}\left(.\right)$, we can approximate  $F_{Y|D,M,X}\left(a|d,m,x\right)$ arbitrarily well if $v_{a}\left(D,M,X\right)$
	is sufficiently flexible. 
	
	There are various ways to implement DR, and a popular choice is maximum likelihood estimation. Let $\hat{v}_{a}\left(D,M,X\right)$
	denote the maximum likelihood estimator of $v_{a}\left(D,M,X\right)$, which is obtained
	by 
	\begin{equation}
		\max_{v_{a}\in\mathcal{V}_{a}}\sum_{i=1}^{n}\left\{ Y_a \ln G_{a}\left(v_{a}\left(D_{i},M_{i},X_{i}\right)\right)+(1-Y_a) \ln\left[1-G_{a}\left(v_{a}\left(D_{i},M_{i},X_{i}\right)\right)\right]\right\} ,\label{MLE}
	\end{equation}
	where $\mathcal{V}_{a}$ is the parameter space of $\hat{v}_{a}\left(D,M,X\right)$. Then, $G\left(\hat{v}_{a}\left(D,M,X\right)\right)$ is an estimate of $F_{Y|D,M,X}\left(a|d,m,x\right)$. $v_{a}$ may be estimated by machine learning methods when the dimension of $X$ is large and/or the functional form of $v_{a}$ is complex. 
	
	\subsection{Summarizing the Estimation Approach}
	Our estimation approach can be summarized as follows:
	\begin{itemize}
		\item[Step 1] \textit{Modeling the nuisance parameters}\\ The nuisance parameters include $f_{D|X}(d|x)$, $f_{D|M,X}(d|m,x)$, $F_{Y|D,M,X}(a|d,m,x)$ and $g_{d,d^{\prime},a}(X)$. Depending on the properties of $(Y,D,M)$, choose appropriate functional forms for the nuisance parameters.
		\item[Step 2] \textit{Estimating the nuisance parameters}\\ Estimate the nuisance parameters by K-fold cross-fitting, as described in points 1 and 2 in Section 2.2.
		\item[Step 3] \textit{Computing the Neyman-orthogonal estimator}\\ With estimates of the nuisance parameters from K-fold cross-fitting, compute the Neyman-orthogonal estimator as described in points 3 and 4 in Section 2.2.
		\item[Step 4] \textit{Repeating steps 2 to 3 for a grid of points $\{a_{l}\}_{l=1}^{L}$, where $0<\underline{a}<a_{1}<a_{2}<\ldots<a_{L}<\bar{a}<\infty$}\\ Notice that only $F_{Y|D,M,X}(a|d,m,x)$ needs to be re-estimated, not the remaining nuisance parameters. 
		\item[Step 5] \textit{Adopting the two constraints described in Section 2.1 to obtain $\tilde{p}_{(l)}$, the final estimate of $F_{Y\left(d,M\left(d^{\prime}\right)\right)}\left(a_{l}\right)$}
		\item[Step 6] \textit{If $Y(d,M(d^{\prime}))$ is continuous, fitting a function for the points $\left(\tilde{p}_{\left(l\right)},a_{l}\right)$, $l=1,2,\ldots,L$, and using the fitted function to obtain $\hat{Q}_{Y\left(d,M\left(d^{\prime}\right)\right)}\left(\tau\right)$, the estimate of $Q_{Y\left(d,M\left(d^{\prime}\right)\right)}\left(\tau\right)$; if $Y(d,M(d^{\prime}))$ is discrete, using the definition of the $\tau$th quantile to obtain $\hat{Q}_{Y\left(d,M\left(d^{\prime}\right)\right)}\left(\tau\right)$}. 
		\item[Step 7] \textit{Estimating the quantile treatment effects of interest as defined in equations (\ref{NDQTE}) to (\ref{TQTE}) based on $\hat{Q}_{Y\left(d,M\left(d^{\prime}\right)\right)}\left(\tau\right)$}. 
	\end{itemize}
	
	\subsection{Multiplier Bootstrap}
	We propose a multiplier bootstrap for statistical inference. Let $\{\xi_i\}_{i=1}^n$ be a sequence of i.i.d.\ (pseudo) random variables, independent of the sample path $\{(Y_{a,i},M_i,D_i,X_i)\}_{i=1}^n$, with $E[\xi_i]=0$, $Var(\xi_i)=1$ and $E\left[\exp\left(\left|\xi_i\right|\right)\right]<\infty$. Let $\hat{v}_{k,a}$ denote a vector containing the
	K-fold cross-fitting estimates of the nuisance parameters, whose model parameters are estimated based on observations in the complement set, $W_{a,i}$ with ${i\in I^{c}_{k}}$. The proposed multiplier bootstrap estimator for $\hat{\theta}_{d,d^{\prime},a}$ in equation (\ref{theta_hat}) is given by:
	\begin{equation}
		\hat{\theta}_{d,d^{\prime},a}^{*}=\hat{\theta}_{d,d^{\prime},a}+\frac{1}{n}\sum_{i=1}^{n}\xi_{i}\left(\psi_{d,d^{\prime},a}\left(W_{a,i};\hat{v}_{k,a}\right)-\hat{\theta}_{d,d^{\prime},a}\right),\label{m_bootstrap}
	\end{equation}
	The multiplier bootstrap estimator does not require re-estimating the nuisance parameters and re-calculating the causal parameters of interest in each bootstrap sample. This is particularly useful in our context, since our estimation approach is to be repeatedly applied across grid points $a$. After obtaining $\hat{\theta}_{d,d^{\prime},a}^{*}$ for all grid points, we use the procedures introduced in Section 2.1 to construct the bootstrap estimate of the c.d.f.\ of $Y(d,M(d^{\prime}))$ at these grid points and the bootstrap estimate of the $\tau$th quantile $Q_{Y(d,M(d^{\prime}))}^{*}(\tau)$. Section 3 establishes the uniform validity of the proposed multiplier bootstrap procedure.

\section{Theoretical Results}\label{theores}
\subsection{Notation}
In this section, we show that the proposed K-fold cross-fitting estimator in Section 2.2 is uniformly valid under certain conditions. We focus on establishing the theoretical properties of the estimator in equation (\ref{theta_hat}) and note that the assumptions and procedures required for demonstrating uniform validity of estimation based on equation (\ref{theta_hat1}) are similar and omitted for this reason. To ease notation in our analysis, let $g_{1d}^{0}\left(X\right):=F_{D|X}^{0}\left(d|X\right)$, $g_{2d}^{0}\left(M,X\right):=F_{D|M,X}^{0}\left(d|M,X\right)$, $g_{3a}^{0}\left(D,M,X\right):=F_{Y|D,M,X}^{0}\left(a|D,M,X\right)$ and $g_{4ad}^{0}\left(D,X\right):=E\left[F_{Y|D,M,X}^{0}\left(a|d,M,X\right)|D,X\right]$ denote the nuisance parameters. Let $v_{a}^{0}$ denote the vector containing these true nuisance parameters and $\mathcal{G}_{a}$ be the set of all $v_{a}^{0}$. Let $F_{Y\left(d,M\left(d^{\prime}\right)\right)}^{0}\left(a\right)$ denote the true c.d.f.\ of the potential outcome $Y(d,M(d^{\prime}))$. The EIF of $F_{Y\left(d,M\left(d^{\prime}\right)\right)}^{0}\left(a\right)$ is $\psi_{d,d^{\prime},a}^{\theta}\left(W_{a};v_{a}^{0}\right)=\psi_{d,d^{\prime},a}\left(W_{a};v_{a}^{0}\right)-\theta$, where
\begin{align*}
	\psi_{d,d^{\prime},a}\left(W_{a};v_{a}^{0}\right) & =\frac{1\left\{ D=d\right\} \left[1-g_{2d}^{0}\left(M,X\right)\right]}{\left[1-g_{1d}^{0}\left(X\right)\right]g_{2d}^{0}\left(M,X\right)}\times\left[1\left\{ Y\leq a\right\} -g_{3a}^{0}\left(d,M,X\right)\right]\\
	& +\frac{1\left\{ D=d^{\prime}\right\} }{1-g_{1d}^{0}\left(X\right)}\left[g_{3a}^{0}\left(d,M,X\right)-g_{4ad}^{0}\left(d^{\prime}X\right)\right]+g_{4ad}^{0}\left(d^{\prime},X\right).\label{psi_0}
\end{align*}
Under Assumptions 1.1 to 1.4, it can be shown that 
\begin{equation}
	F_{Y\left(d,M\left(d^{\prime}\right)\right)}^{0}\left(a\right)=E\left[\psi_{d,d^{\prime},a}\left(W_{a},v_{a}^{0}\right)\right]\label{identification}
\end{equation}
for all $a\in \mathcal{A}$ and $\left(d,d^{\prime}\right)\in\left\{ 0,1\right\} ^{2}$ (see proof of Theorem 1).

In the subsequent theoretical analysis, the expectation $E\left[.\right]$ is
operated under the probability $P\in\mathcal{P}_{n}$. Let $N=n/K$
be the size in a fold or subsample, where $K$ is a fixed number. Let $E_{n}:=n^{-1}\sum_{i=1}^{n}\varsigma_{W_{i}}$
and $E_{N,k}:=N^{-1}\sum_{i\in I_{k}}\varsigma_{W_{i}}$ where $\varsigma_{w}$
is a probability distribution degenerating at $w$ and $I_{k}$ is
a set of indices of observations in the $k$th subsample. Let $Z\rightsquigarrow Z^{\prime}$
denote a random variable $Z$ that weakly converges to a random variable
$Z^{\prime}$. Let $\left\Vert x\right\Vert $ denote the
$l^{1}$ norm and $\left\Vert x\right\Vert _{q}$ denote
the $l^{q}$ norm, $q\geq2$ for a deterministic vector $x$.
Let $\left\Vert X\right\Vert _{P,q}$ denote $\left(E[\left\Vert X\right\Vert ^{q}]\right)^{1/q}$
for a random vector $X$. The function $\psi_{d,d^{\prime},a}$
for identifying the parameter of interest and constructing the estimator
is such that $\psi_{d,d^{\prime},a}\left(w,t\right):\mathcal{W}_{a}\times\mathcal{V}_{a}\longmapsto\mathbb{R}$,
where $\left(d,d^{\prime}\right)\in\left\{ 0,1\right\} ^{2}$, $a\in\mathcal{A}\subset\mathbb{R}$,
$\mathcal{W}_{a}\subset\mathbb{R}^{d_{w}}$ is a $d_{w}$ dimensional
Borel set and $\mathcal{V}_{a}$ is a $G$ dimensional set of Borel
measurable maps. Let $\boldsymbol{\psi}_{a}=\left(\psi_{1,1,a},\psi_{1,0,a}\psi_{0,1,a},\psi_{0,0,a}\right)$
and $\boldsymbol{\psi}_{a}:\mathcal{W}_{a}\times\mathcal{V}_{a}\longmapsto\mathbb{R}^{4}$.
Let $v_{a,g}^{0}:\mathcal{U}_{a}\longmapsto\mathbb{R}$ denote the
$g$th true nuisance parameter, where $\mathcal{U}_{a}\subseteq\mathcal{W}_{a}$
is a Borel set, and $v_{a}^{0}:=\left(v_{a,1}^{0},\ldots,v_{a,G}^{0}\right)\in\mathcal{V}_{a}$
denote the vector of these true nuisance parameters. Let $\hat{v}_{k,a,g}:\mathcal{U}_{a}\longmapsto\mathbb{R}$
denote an estimate of $v_{a,g}^{0}$, which is obtained by using the
K-fold cross-fitting, such that the model parameters of the nuisance parameters are estimated based on observations in the complement set, $W_{a,j}$ with $j\in I_{k}^{c}$.
Let $\hat{v}_{k,a}:=\left(\hat{v}_{k,a,1},\ldots,\hat{v}_{k,a,G}\right)$
denote the vector of these estimates. $v_{a}^{0}$ and $\hat{v}_{k,a}$
are both functions of $U_{a}\in\mathcal{U}_{a}$, a subvector of $W_{a}\in\mathcal{W}_{a}$.
But to ease notation, we will write $v_{a}^{0}$ and $\hat{v}_{k,a}$
instead $v_{a}^{0}\left(U_{a}\right)$ and $\hat{v}_{k,a}\left(U_{a}\right)$.

$\boldsymbol{\psi}_{a}\left(W_{a},v\right)$ denotes $\boldsymbol{\psi}_{a}$
with elements $\psi_{d,d^{\prime},a}\left(W_{a};v\right)$,
$\left(d,d^{\prime}\right)\in\left\{ 0,1\right\} ^{2}$. The
parameter of interest is $F_{Y\left(d,M\left(d^{\prime}\right)\right)}^{0}\left(a\right)$,
which can be identified by equation (\ref{identification}), the expectation of
$\psi_{d,d^{\prime},a}\left(W_{a};v\right)$ evaluated at the true nuisance parameters $v_{a}^{0}$. Let $\theta_{d,d^{\prime},a}^{0}:=E\left[\psi_{d,d^{\prime},a}\left(W_{a},v_{a}^{0}\right)\right]$. The proposed estimator of $\theta_{d,d^{\prime},a}^{0}$ is the K-fold cross-fitting estimator $\hat{\theta}_{d,d^{\prime},a}$ of (\ref{theta_hat}), in which $\hat{\theta}_{d,d^{\prime},a}^{\left(k\right)}=N^{-1}\sum_{i\in I_{k}}\psi_{d,d^{\prime},a}\left(W_{a,i};\hat{v}_{k,a}\right).$
In our case, the vector $\hat{v}_{k,a}$ contains estimates of the
four true nuisance parameters $g_{1d}^{0}\left(X\right),g_{2d}^{0}\left(M,X\right),g_{3a}^{0}\left(D,M,X\right)$
and $g_{4ad}^{0}\left(D,X\right)$ when estimating the model parameters based on observations in the complement set, $W_{a,i}$ with $i\in I_{k}^{c}$. Let
$\boldsymbol{\theta}_{a}^{0}$, $\hat{\boldsymbol{\theta}}_{a}$, $\hat{\boldsymbol{\theta}}_{a}^{\left(k\right)}$ and $\mathbf{F}^{0}(a)$ denote vectors containing $\theta_{d,d^{\prime},a}^{0}$, $\hat{\theta}_{d,d^{\prime},a}$, $\hat{\theta}_{d,d^{\prime},a}^{(k)}$ and $F_{Y\left(d,M\left(d^{\prime}\right)\right)}^{0}\left(a\right)$ for different $\left(d,d^{\prime}\right)\in\{0,1\}^{2}$. We note that $\hat{\boldsymbol{\theta}}_{a}=K^{-1}\sum_{k=1}^{K}\hat{\boldsymbol{\theta}}_{a}^{\left(k\right)}$ and $\boldsymbol{\theta}_{a}^{0}=E\left[\boldsymbol{\psi}_{a}\left(W_{a};v_{a}^{0}\right)\right]$, and if equation (\ref{identification}) holds for all $a\in\mathcal{A}$ and $\left(d,d^{\prime}\right)\in \{0,1\}^{2}$, then $\mathbf{F}^{0}(a)=\boldsymbol{\theta}_{a}^{0}$.
\subsection{Main Results}
To establish the uniform validity of $\hat{\boldsymbol{\theta}}_{a}$
when estimating $\mathbf{F}^{0}(a)$, we impose the following
conditions. 
\begin{assumption} 
	\item[1.] For $\mathcal{P}:=\bigcup_{n=n_{0}}^{\infty}\mathcal{P}_{n}$,
	$Y_{a}:=1\left\{ Y\leq a\right\} $ satisfies 
	\begin{align*}
		\lim_{\delta\searrow0}\sup_{P\in\mathcal{P}}\sup_{d_{\mathcal{A}}\left(a,\bar{a}\right)\leq\delta}\left\Vert Y_{a}-Y_{\bar{a}}\right\Vert _{P,2} & =0,\\
		\sup_{P\in\mathcal{P}}E\sup_{a\in\mathcal{A}}\left|Y_{a}\right|^{2+c} & <\infty,
	\end{align*}
	where $\left(a,\bar{a}\right)\in\mathcal{A}$ and $\mathcal{A}$ are a
	totally bounded metric space equipped with a semimetric $d_{\mathcal{A}}$.
	The uniform covering number of the set $\mathcal{G}_{5}:=\left\{ Y_{a}:a\in\mathcal{A}\right\} $
	satisfies 
	\[
	\sup_{Q}\log N\left(\epsilon\left\Vert \mathcal{G}_{5}\right\Vert _{Q,2},\mathcal{G}_{5},\left\Vert \right\Vert _{Q,2}\right)\leq C\log\frac{\text{e}}{\epsilon},
	\]
	for all $P\in\mathcal{P}$, where $B_{5}\left(W\right)=\sup_{a\in\mathcal{A}}\left|Y_{a}\right|$
	is an envelope function with the supremum taken over all finitely
	discrete probability measures $Q$ on $\left(\mathcal{W},\mathcal{X}_{\mathcal{W}}\right)$. 
	
	\item[2.] For $d\in\left\{ 0,1\right\} $, $P\left(\varepsilon_{1}<g_{1d}^{0}\left(X\right)<1-\varepsilon_{1}\right)=1$
	and $P\left(\varepsilon_{2}<g_{2d}^{0}\left(M,X\right)<1-\varepsilon_{2}\right)=1$,
	where $\varepsilon_{1},\varepsilon_{2}\in\left(0,1/2\right)$.
	
	\item[3.]The models for estimating the nuisance parameters $v_{a}^{0}$
	have functional forms  
	\[
	\left(h_{1}\left(f\left(x\right)^{\top}\boldsymbol{\beta}_{1}\right),h_{2}\left(f\left(m,x\right)^{\top}\boldsymbol{\beta}_{2}\right),h_{3}\left(f\left(d,m,x\right)^{\top}\boldsymbol{\beta}_{3}\right),h_{4}\left(f\left(d^{\prime},x\right)^{\top}\boldsymbol{\beta}_{4}\right)\right),
	\]
	respectively, and satisfy the following conditions. 
	\begin{itemize}
		\item[3a.] \textbf{Functional forms of $h_{i}(.)$} The functions $h_{i}$,
		$i=1,2,3$ take the forms of commonly used link functions 
		\[
		\mathcal{L}=\left\{ \mathbf{I}\text{d},\varLambda,1-\varLambda,\Phi,1-\Phi\right\} ,
		\]
		where $\mathbf{I}\text{d}$ is the identity function, $\varLambda$
		is the logistic link, and $\Phi$ is the probit link.Function
		$h_{4}$ has the form of the identity function $\mathbf{I}\text{d}$. 
		\item[3b.] \textbf{Dictionary controls $f(.)$} The dimension of exogenous variables
		is $\dim\left(X\right)=p$ and $\log p=o\left(n^{-1/3}\right)$. The functions $h_{i}$ contain a linear combination of dictionary controls
		$f\left(.\right)$, where $\text{dim}\left(f\left(x\right)\right)=p\times1$
		(dimension of $X$), $\text{dim}\left(f\left(m,x\right)\right)=(p+1)\times1$
		($X$ plus one mediator), $\text{dim}\left(f\left(d,m,x\right)\right)=(p+2)\times1$
		($X$ plus one mediator and one treatment variable), and
		$\text{dim}\left(f\left(d^{\prime},x\right)\right)=(p+1)\times1$
		($X$ plus one treatment variable). 
		\item[3c.] \textbf{Approximately sparsity} The vectors of coefficients $\boldsymbol{\beta}_{i},$
		$i=1,\ldots,4$ satisfy $\left\Vert \boldsymbol{\beta}_{i}\right\Vert _{0}\leq s_{i}$,
		where $\left\Vert \right\Vert _{0}$ denotes the $l^{0}$ norm and
		$s_{i}$ denotes the sparsity index. Furthermore, $\sum_{i=1}^{4}s_{i}\leq s\ll n$
		and 
		\[
		s^{2}\log^{2}\left(p\vee n\right)\log^{2}n\leq\delta_{n}n.
		\]
		Let $\bar{\boldsymbol{\beta}}_{i}$, denote estimators of $\boldsymbol{\beta}_{i}$.
		These estimators are sparse such that $\sum_{i=1}^{4}\left\Vert \bar{\boldsymbol{\beta}}_{i}\right\Vert _{0}\leq C^{\prime}s,$
		where $C^{\prime}<1$ is some constant. 
		\item[3d.] \textbf{Gram matrix} The empirical and population norms induced by the Gram matrix formed by the dictionary $f$ are equivalent on sparse subsets: 
		\[
		\sup_{\left\Vert \delta\right\Vert _{0}\leq s\log n}\left|\left\Vert f^{\top}\delta\right\Vert _{\mathbb{P}_{n},2}/\left\Vert f^{\top}\delta\right\Vert _{P,2}-1\right|\rightarrow0
		\]
		as $n\rightarrow\infty$, and also $\left\Vert \left\Vert f\right\Vert _{\infty}\right\Vert _{P,\infty}\leq L_{n}$.
		\item[3e.] $\left\Vert X\right\Vert _{P,q}<\infty$.
	\end{itemize}
	\item[4.] Given a random subset $I$ of $\left[n\right]=\left\{ 1,\ldots,n\right\} $
	of size $n/K$, let $\hat{\boldsymbol{\beta}_{i}}$ denote an estimate
	of coefficient vector $\boldsymbol{\beta}_{i}$ defined in Assumption
	3. These estimated nuisance parameters 
	\begin{align*}
		\hat{v}_{a} & :=\left(h_{1}\left(f\left(X\right)^{\top}\hat{\boldsymbol{\beta}}_{1}\right),h_{2}\left(f\left(M,X\right)^{\top}\hat{\boldsymbol{\beta}}_{2}\right),h_{3}\left(f\left(D,M,X\right)^{\top}\hat{\boldsymbol{\beta}}_{3}\right),h_{4}\left(f\left(D,X\right)^{\top}\hat{\boldsymbol{\beta}}_{4}\right)\right)\\
		& =\left(\hat{g}_{1d}\left(X\right),\hat{g}_{2d}\left(M,X\right),\hat{g}_{3a}\left(D,M,X\right),\hat{g}_{4ad}\left(D,X\right)\right)
	\end{align*}
	satisfy the following conditions concerning their estimation
	quality. For $d\in\left\{ 0,1\right\} $ 
	\begin{align*}
		P\left(\varepsilon_{1}<\hat{g}_{1d}\left(X\right)<1-\varepsilon_{1}\right) & =1,\\
		P\left(\varepsilon_{2}<\hat{g}_{2d}\left(M,X\right)<1-\varepsilon_{2}\right) & =1,
	\end{align*}
	where $\varepsilon_{1},\varepsilon_{2}>0$. Let $\delta_{n}$ be a sequence converging to zero from above at
	a speed at most polynomial in $n$, e.g., $\delta_{n}\geq n^{-c}$ for some $c>0$. With probability $P$ 
	at least $1-\Delta_{n}$, for $d\in\left\{ 0,1\right\} $, all $a\in\mathcal{A}$
	and $q\geq4$, 
	\begin{align*}
		\left\Vert \hat{v}_{a}-v_{a}^{0}\right\Vert _{P,q} & \leq C,\\
		\left\Vert \hat{v}_{a}-v_{a}^{0}\right\Vert _{P,2} & \leq\delta_{n}n^{-\frac{1}{4}},\\
		\left\Vert \hat{g}_{1d}\left(X\right)-0.5\right\Vert _{P,\infty} & \leq0.5-\epsilon,\\
		\left\Vert \hat{g}_{2d}\left(M,X\right)-0.5\right\Vert _{P,\infty} & \leq0.5-\epsilon,\\
		\left\Vert \hat{g}_{1d}\left(X\right)-g_{1d}^{0}\left(X\right)\right\Vert _{P,2}\left\Vert \hat{g}_{2d}\left(M,X\right)-g_{2d}^{0}\left(M,X\right)\right\Vert _{P,2} & \leq\delta_{n}n^{-\frac{1}{2}},\\
		\left\Vert \hat{g}_{1d}\left(X\right)-g_{1d}^{0}\left(X\right)\right\Vert _{P,2}\left\Vert \hat{g}_{3a}\left(D,M,X\right)-g_{3a}^{0}\left(D,M,X\right)\right\Vert _{P,2} & \leq\delta_{n}n^{-\frac{1}{2}},\\
		\left\Vert \hat{g}_{1d}\left(X\right)-g_{1d}^{0}\left(X\right)\right\Vert _{P,2}\left\Vert \hat{g}_{4ad}\left(D,X\right)-g_{4ad}^{0}\left(D,X\right)\right\Vert _{P,2} & \leq\delta_{n}n^{-\frac{1}{2}},\\
		\left\Vert \hat{g}_{2d}\left(M,X\right)-g_{2d}^{0}\left(M,X\right)\right\Vert _{P,2}\left\Vert \hat{g}_{3a}\left(D,M,X\right)-g_{3a}^{0}\left(D,M,X\right)\right\Vert _{P,2} & \leq\delta_{n}n^{-\frac{1}{2}}.
	\end{align*}
\end{assumption}

Note that in Assumption 2.4, $\hat{v}_{a}$ is by definition constructed based on observations in the complement set $\left(W_{a,i}\right)_{i\in I^{c}}$: $\hat{v}_{a}=\hat{v}_{a}\left(\left(W_{a,i}\right)_{i\in I^{c}}\right)$. When the random subset $I=I_{k}$, then $\hat{v}_{a}=\hat{v}_{k,a}$. Let $\mathbb{G}_{n}$ denote an empirical process $\mathbb{G}_{n}f\left(W\right)=\sqrt{n}\left(E_{n}f\left(W\right)-E\left[f\left(W\right)\right]\right)$,
where $f$ is any $P\in\mathcal{P}_{n}$ integrable function on the
set $\mathcal{W}$. Let $\mathbb{G}_{P}f\left(W\right)$ denote the
limiting process of $\mathbb{G}_{n}f\left(W\right)$, which is a Gaussian
process with zero mean and a finite covariance matrix $E\left[\left(f\left(W\right)-E\left[f\left(W\right)\right]\right)\left(f\left(W\right)-E\left[f\left(W\right)\right]\right)^{\top}\right]$ under probability $P$ (the $P$-Brownian bridge). Based on our notation and assumptions, we obtain the following result concerning the asymptotic behaviour of our estimator. \begin{theorem} If Assumptions 1 and 2 hold, the K-fold cross-fitting estimator $\hat{\boldsymbol{\theta}}_{a}=K^{-1}\sum_{k=1}^{K}\hat{\boldsymbol{\theta}}_{a}^{\left(k\right)}$
	for estimating $\mathbf{F}^{0}(a)$ satisfies 
	\[
	\sqrt{n}\left(\hat{\boldsymbol{\theta}}_{a}-\mathbf{F}^{0}(a)\right)_{a\in\mathcal{A}}=Z_{n,P}+o_{P}\left(1\right),
	\]
	in $l^{\infty}\left(\mathcal{A}\right)^{4}$, uniformly in $P\in\mathcal{P}_{n}$,
	where $Z_{n,P}:=\left( \mathbb{G}_{n}\left(\boldsymbol{\psi}_{a}\left(W_{a};v_{a}^{0}\right)-\boldsymbol{\theta}_{a}^{0}\right)\right) _{a\in\mathcal{A}}$.
	Furthermore, 
	\[
	Z_{n,P}\rightsquigarrow Z_{P}
	\]
	in $l^{\infty}\left(\mathcal{A}\right)^{4}$, uniformly in $P\in\mathcal{P}_{n}$,
	where $Z_{P}:=\left( \mathbb{G}_{P}\left(\boldsymbol{\psi}_{a}\left(W_{a};v_{a}^{0}\right)-\boldsymbol{\theta}_{a}^{0}\right)\right) _{a\in\mathcal{A}}$ and paths of $\mathbb{G}_{P}\left(\boldsymbol{\psi}_{a}\left(W_{a};v_{a}^{0}\right)-\boldsymbol{\theta}_{a}^{0}\right)$ have the properties that uniformly in $P\in\mathcal{P}_{n}$,
	\begin{align*}
		\sup_{P\in\mathcal{P}_{n}}E\left[\sup_{a\in\mathcal{A}}\left\Vert \mathbb{G}_{P}\left(\boldsymbol{\psi}_{a}\left(W_{a};v_{a}^{0}\right)-\boldsymbol{\theta}_{a}^{0}\right)\right\Vert \right] & <\infty,\\
		\lim_{\delta\rightarrow0}\sup_{P\in\mathcal{P}_{n}}E\left[\sup_{d_{\mathcal{A}}\left(a,\bar{a}\right)}\left\Vert \mathbb{G}_{P}\left(\boldsymbol{\psi}_{a}\left(W_{a};v_{a}^{0}\right)-\boldsymbol{\theta}_{a}^{0}\right)-\mathbb{G}_{P}\left(\boldsymbol{\psi}_{\bar{a}}\left(W_{\bar{a}};v_{\bar{a}}^{0}\right)-\boldsymbol{\theta}_{\bar{a}}^{0}\right)\right\Vert \right] & =0.
	\end{align*}
\end{theorem}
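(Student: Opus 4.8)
The plan is to follow the double/debiased machine learning argument of \citet{CCDDHNR_2018}, combined with the function-indexed (uniform-in-$a$) apparatus used by \citet{BCFH_2017}. As a preliminary step I would verify the identification identity (\ref{identification}), i.e.\ that $\theta_{d,d^{\prime},a}^{0}=F_{Y(d,M(d^{\prime}))}^{0}(a)$ under Assumptions 1.1--1.4: this follows from Proposition 1 by iterated expectations together with a direct check that the two inverse-probability-weighted correction terms in (\ref{psi_d_dprime_a}) have conditional mean zero given $(M,X)$ (for the first term) and given $X$ (for the second term), which uses the overlap condition in Assumption 1.4 so the weights are well defined, leaving $E[\psi_{d,d^{\prime},a}(W_{a};v_{a}^{0})]=E[g_{d,d^{\prime},a}(X)]=F_{Y(d,M(d^{\prime}))}^{0}(a)$. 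After this the theorem reduces to a linearization of $\sqrt{n}\bigl(\hat{\boldsymbol{\theta}}_{a}-\mathbf{F}^{0}(a)\bigr)$, and the key structural input is that the Gateaux derivative of $v\mapsto E[\psi_{d,d^{\prime},a}(W_{a};v)]$ at $v_{a}^{0}$ vanishes in every direction (Neyman orthogonality of the efficient score).

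Fixing a fold $k$ and a pair $(d,d^{\prime})$, and writing $\mathbb{G}_{N,k}f:=\sqrt{N}(E_{N,k}f-E[f])$ for the fold-$k$ empirical process, I would condition on the complement sample $(W_{a,i})_{i\in I_{k}^{c}}$ so that $\hat{v}_{k,a}$ is treated as fixed and decompose
\begin{align*}
\sqrt{N}\bigl(\hat{\theta}_{d,d^{\prime},a}^{(k)}-\theta_{d,d^{\prime},a}^{0}\bigr)
&=\mathbb{G}_{N,k}\bigl(\psi_{d,d^{\prime},a}(W_{a};v_{a}^{0})-\theta_{d,d^{\prime},a}^{0}\bigr)\\
&\quad+\mathbb{G}_{N,k}\bigl(\psi_{d,d^{\prime},a}(W_{a};\hat{v}_{k,a})-\psi_{d,d^{\prime},a}(W_{a};v_{a}^{0})\bigr)\\
&\quad+\sqrt{N}\,E\bigl[\psi_{d,d^{\prime},a}(W_{a};\hat{v}_{k,a})-\psi_{d,d^{\prime},a}(W_{a};v_{a}^{0})\,\big|\,I_{k}^{c}\bigr].
\end{align*}
The first term is the oracle process. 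For the second term, since $\psi_{d,d^{\prime},a}$ is Lipschitz in the nuisances on the region pinned down by the overlap bounds of Assumptions 2.2 and 2.4, its conditional variance is $O(\|\hat{v}_{k,a}-v_{a}^{0}\|_{P,2}^{2})=o_{P}(1)$ by Assumption 2.4, so conditional Chebyshev gives $o_{P}(1)$; the passage to $\sup_{a\in\mathcal{A}}$ uses a maximal inequality with the uniform covering-number bound on $\mathcal{G}_{5}=\{Y_{a}:a\in\mathcal{A}\}$ from Assumption 2.1, preserved under the Lipschitz maps, finite sums and bounded products that build $\psi_{d,d^{\prime},a}$ out of $Y_{a}$, the indicators, the overlap-bounded weights and the $a$-indexed conditional means $g_{3a}^{0},g_{4ad}^{0}$, whose envelope shrinks with $\|\hat{v}_{k,a}-v_{a}^{0}\|_{P,2}$. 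For the third term I would take a second-order Taylor expansion of $v\mapsto E[\psi_{d,d^{\prime},a}(W_{a};v)\mid I_{k}^{c}]$ about $v_{a}^{0}$: the first-order term vanishes by Neyman orthogonality, and the second-order term is bounded by the pairwise products $\|\hat{g}_{1d}-g_{1d}^{0}\|_{P,2}\|\hat{g}_{2d}-g_{2d}^{0}\|_{P,2}$, $\|\hat{g}_{1d}-g_{1d}^{0}\|_{P,2}\|\hat{g}_{3a}-g_{3a}^{0}\|_{P,2}$, $\|\hat{g}_{1d}-g_{1d}^{0}\|_{P,2}\|\hat{g}_{4ad}-g_{4ad}^{0}\|_{P,2}$ and $\|\hat{g}_{2d}-g_{2d}^{0}\|_{P,2}\|\hat{g}_{3a}-g_{3a}^{0}\|_{P,2}$, each $o(n^{-1/2})$ by Assumption 2.4 uniformly over $a\in\mathcal{A}$; hence this term is $o_{P}(1)$ as well.

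Summing over the fixed number $K$ of folds, using $\hat{\boldsymbol{\theta}}_{a}=K^{-1}\sum_{k}\hat{\boldsymbol{\theta}}_{a}^{(k)}$ and $n=KN$, and stacking the four pairs $(d,d^{\prime})$, I obtain
\[
\sqrt{n}\bigl(\hat{\boldsymbol{\theta}}_{a}-\mathbf{F}^{0}(a)\bigr)_{a\in\mathcal{A}}=\bigl(\mathbb{G}_{n}(\boldsymbol{\psi}_{a}(W_{a};v_{a}^{0})-\boldsymbol{\theta}_{a}^{0})\bigr)_{a\in\mathcal{A}}+o_{P}(1)=Z_{n,P}+o_{P}(1)
\]
in $l^{\infty}(\mathcal{A})^{4}$, with the remainder negligible uniformly over $P\in\mathcal{P}_{n}$ because every rate bound (Assumption 2.4) and entropy/envelope bound (Assumption 2.1) used above holds uniformly over $\mathcal{P}:=\bigcup_{n\ge n_{0}}\mathcal{P}_{n}$. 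For the weak-convergence conclusion I would show that $\mathcal{F}:=\{\boldsymbol{\psi}_{a}(W_{a};v_{a}^{0})-\boldsymbol{\theta}_{a}^{0}:a\in\mathcal{A}\}$ is a $P$-Donsker class uniformly in $P$: it has an integrable envelope by the $(2+c)$-moment bound in Assumption 2.1 together with the overlap bounds, its uniform covering numbers are polynomial in $1/\epsilon$ (inherited from $\{Y_{a}\}$ and from the monotone, bounded families $\{g_{3a}^{0}\}$, $\{g_{4ad}^{0}\}$ after finitely many sums and products), and the $L^{2}$-continuity modulus of $a\mapsto Y_{a}$ in the first display of Assumption 2.1 transfers to $a\mapsto\boldsymbol{\psi}_{a}(W_{a};v_{a}^{0})$. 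A uniform-in-$P$ functional central limit theorem (as in \citet{BCFH_2017}, building on \citet{CCDDHNR_2018}) then gives $Z_{n,P}\rightsquigarrow Z_{P}$ with $Z_{P}$ a tight mean-zero Gaussian process ($P$-Brownian bridge) on $\mathcal{A}$, and the two displayed path properties of $\mathbb{G}_{P}(\boldsymbol{\psi}_{a}(W_{a};v_{a}^{0})-\boldsymbol{\theta}_{a}^{0})$ are exactly the tightness and asymptotic-equicontinuity statements that accompany the uniform Donsker conclusion.

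The main obstacle is the \emph{joint} uniformity: showing that the second and third terms in the per-fold decomposition are $o_{P}(1)$ simultaneously uniformly over $a\in\mathcal{A}$ and over $P\in\mathcal{P}_{n}$. Two ingredients are delicate --- (i) the Neyman-orthogonality expansion must be carried out carefully enough that the second-order remainder genuinely factors into the pairwise nuisance-error products that Assumption 2.4 bounds, rather than a squared error in a nuisance whose rate is only $\delta_{n}n^{-1/4}$; and (ii) the empirical-process remainder must admit a maximal inequality whose constants depend on $P$ only through the overlap constants $\varepsilon_{1},\varepsilon_{2}$ and the uniform entropy and envelope constants of Assumption 2.1, which requires verifying that the class $\{\psi_{d,d^{\prime},a}(\cdot;v)-\psi_{d,d^{\prime},a}(\cdot;v_{a}^{0}):a\in\mathcal{A},\ \|v-v_{a}^{0}\|_{P,2}\le\delta_{n}\}$ has uniformly controlled complexity and a shrinking envelope. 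The high-dimensional sparse estimation in Assumption 2.3 enters only insofar as it is what would be used to certify the rate conditions of Assumption 2.4 (the $\delta_{n}n^{-1/4}$ single rate and the $\delta_{n}n^{-1/2}$ cross-products); at the level of Theorem 1 these are taken as hypotheses.
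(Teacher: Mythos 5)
Your proposal is correct and follows essentially the same route as the paper: the paper proves a general master result (Theorem A.1) using exactly your three-term per-fold decomposition --- oracle process, empirical-process remainder controlled by a maximal inequality, and a conditional-bias term killed at first order by Neyman orthogonality and at second order by the pairwise nuisance-error products of Assumption 2.4 --- and then verifies its high-level conditions from Assumptions 1 and 2, concluding with a uniform-in-$P$ Donsker argument as in \citet{BCFH_2017}. The one point where your sketch is looser than the paper is the claim that Assumption 2.3 matters only for certifying the rates in Assumption 2.4: in the paper the sparse linear-index structure is also what delivers the uniform entropy bound ($\lesssim s\log p + s\log(\mathrm{e}/\epsilon)$, via a VC-subgraph argument) for the class $\{\psi_{d,d^{\prime},a}(\cdot;v)-\psi_{d,d^{\prime},a}(\cdot;v_{a}^{0})\}$ indexed by both $a$ and the candidate nuisances, which is needed to make your maximal-inequality step for the second term go through uniformly over $a$ and $P$.
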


Next, we establish the uniform validity of the multiplier bootstrap under Assumptions 1 and 2. As in Assumption 2.4, let $\hat{v}_{a}$ denote the
cross-fitting estimator of the nuisance parameters, whose model parameters are estimated based on observations in the complement set, $W_{a,i}$ with ${i\in I^{c}_{k}}$. Recall the multiplier bootstrap estimator in equation (\ref{m_bootstrap}) and the definition of the random variable $\xi$. 
By independence of $\xi$ and $W_{a}$, we have that
\[
E\left[\xi\left(\psi_{d,d^{\prime},a}\left(W_{a};\hat{v}_{a}\right)-\hat{\theta}_{d,d^{\prime},a}\right)\right]=E\left[\xi\right]E\left[\psi_{d,d^{\prime},a}\left(W_{a};\hat{v}_{a}\right)-\hat{\theta}_{d,d^{\prime},a}\right]=0,
\]
and therefore,  
\begin{align*}
	\sqrt{n}\left(\hat{\theta}_{d,d^{\prime},a}^{*}-\hat{\theta}_{d,d^{\prime},a}\right)
	& =\mathbb{G}_{n}\xi\left(\psi_{d,d^{\prime},a}\left(W_{a};\hat{v}_{a}\right)-\hat{\theta}_{d,d^{\prime},a}\right).
\end{align*}
Let $\hat{\boldsymbol{\theta}}_{a}^{*}$ denote a vector containing the multiplier bootstrap estimators $\hat{\theta}_{d,d^{\prime},a}^{*}$ for different $\left(d,d^{\prime}\right)\in \{0,1\}^{2}$. We write the
previous result in vector form as
\[
\sqrt{n}\left(\hat{\boldsymbol{\theta}}_{a}^{*}-\hat{\boldsymbol{\theta}}_{a}\right)=\mathbb{G}_{n}\xi\left(\boldsymbol{\psi}_{a}\left(W_{a};\hat{v}_{k,a}\right)-\hat{\boldsymbol{\theta}}_{a}\right),
\]
and let $Z_{n,P}^{*}:=\left(\mathbb{G}_{n}\xi\left(\boldsymbol{\psi}_{a}\left(W_{a};\hat{v}_{k,a}\right)-\hat{\boldsymbol{\theta}}_{a}\right)\right)_{a\in\mathcal{A}}$. Then we obtain the following result on the asymptotic behavior of the multiplier bootstrap.
\begin{theorem} If Assumptions 1 and 2 hold, the large sample law
	$Z_{P}$ of $Z_{n,P}$, can be consistently approximated by the bootstrap
	law $Z_{n,P}^{*}$: 
	\[
	Z_{n,P}^{*}\rightsquigarrow_{B}Z_{P}
	\]
	uniformly over $P\in\mathcal{P}_{n}$ in $l^{\infty}\left(\mathcal{A}\right)^{4}$.
\end{theorem}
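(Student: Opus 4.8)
The plan is the standard two-step argument for multiplier bootstraps of asymptotically linear estimators. First I would show that replacing the cross-fitted nuisance estimates $\hat v_{k,a}$ and the estimated centering $\hat{\boldsymbol{\theta}}_a$ by the population quantities $v_a^0$ and $\boldsymbol{\theta}_a^0$ perturbs the bootstrap process only by $o_P(1)$ in $l^{\infty}(\mathcal{A})^4$; then I would apply a conditional multiplier CLT to the resulting infeasible process, whose indexing class is already known to be $P$-Donsker uniformly in $P$ from the proof of Theorem 1. Since $E[\xi]=0$ and $\{\xi_i\}$ is independent of the data, $Z_{n,P}^{*}=\sqrt n\,E_n[\xi_i(\boldsymbol{\psi}_a(W_{a,i};\hat v_{k,a})-\hat{\boldsymbol{\theta}}_a)]$ and the infeasible analogue is $\tilde Z_{n,P}^{*}:=\mathbb{G}_n\,\xi(\boldsymbol{\psi}_a(W_a;v_a^0)-\boldsymbol{\theta}_a^0)$, with
\[
Z_{n,P}^{*}-\tilde Z_{n,P}^{*}
=\underbrace{\mathbb{G}_n\,\xi\big(\boldsymbol{\psi}_a(W_a;\hat v_{k,a})-\boldsymbol{\psi}_a(W_a;v_a^0)\big)}_{R_{n,1}}
-\underbrace{\big(\sqrt n\,E_n[\xi_i]\big)\big(\hat{\boldsymbol{\theta}}_a-\boldsymbol{\theta}_a^0\big)}_{R_{n,2}}.
\]
The term $R_{n,2}$ is $o_P(1)$ uniformly in $a$: $\sqrt n\,E_n[\xi_i]=O_P(1)$ by the ordinary CLT since $\mathrm{Var}(\xi)=1$, while $\sup_{a\in\mathcal{A}}\|\hat{\boldsymbol{\theta}}_a-\boldsymbol{\theta}_a^0\|=O_P(n^{-1/2})$ by Theorem 1 (recall $\mathbf{F}^0(a)=\boldsymbol{\theta}_a^0$).

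The crux is $R_{n,1}$, which I would handle fold by fold. For each of the $K$ (fixed) folds, condition on $(W_{a,i})_{i\in I_k^c}$ so that $\hat v_{k,a}$ is deterministic; the conditional mean of each multiplier summand vanishes because $E[\xi]=0$, so, unlike in Theorem 1, no orthogonality-based bias term arises here. It then remains to bound the subsample multiplier empirical process over $I_k$, indexed by $a\in\mathcal{A}$. The relevant function class $\{\boldsymbol{\psi}_a(\cdot;v)-\boldsymbol{\psi}_a(\cdot;v_a^0):a\in\mathcal{A},\ v\ \text{near}\ v_a^0\}$ is uniformly bounded (since $Y_a$ is an indicator and the inverse-propensity-type weights are bounded by the overlap conditions in Assumptions 2.2 and 2.4) and has polynomial uniform covering numbers by Assumptions 2.1 and 2.3; combined with the sub-exponential moment $E[\exp(|\xi|)]<\infty$, a symmetrization plus entropy-integral maximal inequality gives a conditional bound of the order of $\sup_{a\in\mathcal{A}}\|\boldsymbol{\psi}_a(\cdot;\hat v_{k,a})-\boldsymbol{\psi}_a(\cdot;v_a^0)\|_{P,2}$ up to slowly varying factors. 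By Lipschitz-continuity of $\boldsymbol{\psi}_a$ in its nuisance arguments (away from the propensity-score boundaries) and the rate $\|\hat v_a-v_a^0\|_{P,2}\le\delta_n n^{-1/4}$ in Assumption 2.4, this is $o_P(1)$; summing over the $K$ folds, $\|R_{n,1}\|_{l^{\infty}(\mathcal{A})^4}=o_P(1)$, and hence $\|Z_{n,P}^{*}-\tilde Z_{n,P}^{*}\|_{l^{\infty}(\mathcal{A})^4}=o_P(1)$.

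Next I would establish $\tilde Z_{n,P}^{*}\rightsquigarrow_B Z_P$ uniformly in $P\in\mathcal{P}_n$. The class $\mathcal{F}:=\{\boldsymbol{\psi}_a(\cdot;v_a^0)-\boldsymbol{\theta}_a^0:a\in\mathcal{A}\}$ is $P$-Donsker uniformly in $P\in\mathcal{P}_n$, with the sample-path regularity of the limit $Z_P$ as recorded in Theorem 1; this is exactly what was proved there to obtain $Z_{n,P}\rightsquigarrow Z_P$. The multiplier satisfies $E[\xi]=0$, $\mathrm{Var}(\xi)=1$, and $\|\xi\|_{2,1}<\infty$ (implied by $E[\exp(|\xi|)]<\infty$), so the conditional multiplier CLT for Donsker classes, in the uniform-in-$P$ form developed in \citet{CCDDHNR_2018}, gives that the law of $\tilde Z_{n,P}^{*}$ given the data converges to that of $Z_P$ in the bounded-Lipschitz metric, in $P$-probability uniformly over $P\in\mathcal{P}_n$. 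Combining this with the negligibility of $Z_{n,P}^{*}-\tilde Z_{n,P}^{*}$ via the triangle inequality for the bounded-Lipschitz metric (a conditional Slutsky step) yields $Z_{n,P}^{*}\rightsquigarrow_B Z_P$ uniformly in $P$, which is the assertion of Theorem 2.

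The main obstacle is the control of $R_{n,1}$: a maximal inequality for the multiplier empirical process evaluated at the machine-learning nuisance estimates, uniform over the continuum $a\in\mathcal{A}$ and over $P\in\mathcal{P}_n$. One must simultaneously accommodate the extra randomness of the sub-exponential multipliers, the data-dependence of $\hat v_{k,a}$ (neutralized by cross-fitting), and the entropy of the $a$-indexed class; this is effectively the bootstrap counterpart of the stochastic-equicontinuity step in the proof of Theorem 1, with the simplification that the orthogonality-based bias term is absent because $E[\xi]=0$.
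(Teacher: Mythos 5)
Your proposal is correct and follows essentially the same route as the paper's proof of Theorem 2: the identical decomposition into the nuisance-replacement term (your $R_{n,1}$, the paper's $\Pi_1$, handled fold by fold via the entropy bound on $\xi\mathcal{F}_{2,n}$ and the maximal inequality of \citet{CCDDHNR_2018}) and the centering term (your $R_{n,2}$, the paper's $\Pi_2$, which is $O_P(n^{-1/2})$ by Theorem 1), followed by the conditional multiplier CLT for the uniformly Donsker class $\mathcal{F}_0$ and the bounded-Lipschitz triangle-inequality step. Your observation that no orthogonality-based bias term arises because $E[\xi]=0$ is exactly the simplification the paper exploits.
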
 

Let $\phi_{\tau}\left(F_{X}\right):=\inf\left\{ a\in\mathbb{R}:F_{X}\left(a\right)\geq\tau\right\} $
be the $\tau$th quantile function of a random variable $X$ whose c.d.f. is $F_{X}$. The von Mises expansion of $\phi_{\tau}\left(F_{X}\right)$
(p.292 in \citet{Vaart_1998}) is given by:
\[
\phi_{\tau}\left(E_{n}\right)-\phi_{\tau}\left(E\right)=\frac{1}{\sqrt{n}}\phi_{\tau,E}^{\prime}\left(\mathbb{G}_{n}\right)+\ldots+\frac{1}{m!}\frac{1}{n^{m/2}}\phi_{\tau,E}^{\left(k\right)}\left(\mathbb{G}_{n}\right)+\ldots,
\]
where $\phi_{\tau,E}^{\prime}\left(.\right)$ is a linear derivative
map and $\mathbb{G}_{n}$ denotes an empirical process:  $\mathbb{G}_{n}f\left(W\right)=\sqrt{n}\left(E_{n}f\left(W\right)-E\left[f\left(W\right)\right]\right)$. 

Let $\phi_{\boldsymbol{\theta}}^{\prime}:=\left(\phi_{\tau,\boldsymbol{\theta}}^{\prime}\right)_{\tau\in\mathcal{T}}$,
where $\boldsymbol{\theta}=\left(\boldsymbol{\theta}_{a}\right)_{a\in\mathcal{A}}$. Let $Q_{Y\left(d,M\left(d^{\prime}\right)\right)}^{0}\left(\tau\right):=\inf\left\{ a\in\mathbb{R}:F_{Y(d,M(d^{\prime}))}^{0}(a)\geq\tau\right\} $,
$\hat{Q}_{Y\left(d,M\left(d^{\prime}\right)\right)}\left(\tau\right):=\inf\left\{ a\in\mathbb{R}:\hat{\theta}_{d,d^{\prime},a}\geq\tau\right\} $
and   $\hat{Q}_{Y\left(d,M\left(d^{\prime}\right)\right)}^{*}\left(\tau\right):=\inf\left\{a\in\mathbb{R}:\hat{\theta}_{d,d^{\prime},a}^{*}\geq\tau\right\}$. Let $\mathbf{Q}^{0}_{\tau}$, $\hat{\mathbf{Q}}_{\tau}$ and $\mathbf{\hat{Q}}^{*}_{\tau}$ denote the corresponding vectors containing  $Q_{Y\left(d,M\left(d^{\prime}\right)\right)}^{0}\left(\tau\right)$, $\hat{Q}_{Y\left(d,M\left(d^{\prime}\right)\right)}\left(\tau\right)$ and $\hat{Q}_{Y\left(d,M\left(d^{\prime}\right)\right)}^{*}\left(\tau\right)$ over different $(d,d^{\prime})\in \{0,1\}^{2}$, respectively. 
We then obtain the following result of uniform validity for the estimation of quantiles, which can be proven by invoking the functional delta theorems (Theorems B.3 and B.4) of \citet{BCFH_2017}.
\begin{theorem} If Assumptions 1 and 2 hold, 
	\begin{align*}
		\sqrt{n}\left(\hat{\mathbf{Q}}_{\tau}-\mathbf{Q}_{\tau}^{0}\right)_{\tau\in\mathcal{T}} & \rightsquigarrow T_{P}:=\phi_{\boldsymbol{\theta}}^{\prime}\left(Z_{P}\right),\\
		\sqrt{n}\left(\mathbf{\hat{Q}}_{\tau}^{*}-\hat{\mathbf{Q}}_{\tau}\right)_{\tau\in\mathcal{T}} & \rightsquigarrow_{B}T_{P}:=\phi_{\boldsymbol{\theta}}^{\prime}\left(Z_{P}\right).
	\end{align*}
	uniformly over $P\in\mathcal{P}_{n}$ in $l^{\infty}\left(\mathcal{T}\right)^{4}$,
	where $\mathcal{T}\subset(0,1)$, $T_{P}$ is a zero mean tight Gaussian process for each $P\in\mathcal{P}_{n}$
	and $Z_{P}:=\left(\mathbb{G}_{P}\boldsymbol{\psi}_{a}\left(W_{a};v_{a}^{0}\right)\right)_{a\in\mathcal{A}}$.
\end{theorem}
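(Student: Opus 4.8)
The plan is to obtain Theorem 4 from Theorems 1 and 2 by a uniform-in-$P$ functional delta method applied to the quantile (generalized left-inverse) operator. Write $\phi_{\tau}(F):=\inf\{a\in\mathbb{R}:F(a)\geq\tau\}$ and let $\Phi$ be the map sending a quadruple of c.d.f.\ paths $(F_{d,d^{\prime}})_{(d,d^{\prime})\in\{0,1\}^{2}}\in l^{\infty}(\mathcal{A})^{4}$ to the quadruple of quantile paths $\big(\phi_{\tau}(F_{d,d^{\prime}})\big)_{\tau\in\mathcal{T},\,(d,d^{\prime})}\in l^{\infty}(\mathcal{T})^{4}$. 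Theorem 1 provides $\sqrt{n}\big(\hat{\boldsymbol{\theta}}_{a}-\mathbf{F}^{0}(a)\big)_{a\in\mathcal{A}}\rightsquigarrow Z_{P}$ in $l^{\infty}(\mathcal{A})^{4}$ uniformly over $P\in\mathcal{P}_{n}$, with $Z_{P}$ a tight zero-mean Gaussian process, and Theorem 2 provides the matching bootstrap statement $Z_{n,P}^{*}\rightsquigarrow_{B}Z_{P}$. Since $\hat{\mathbf{Q}}_{\tau}$ is $\Phi$ evaluated at the truncated/rearranged/interpolated c.d.f.\ estimates, $\hat{\mathbf{Q}}_{\tau}^{*}$ is $\Phi$ at the bootstrap analogues, and $\mathbf{Q}_{\tau}^{0}=\Phi(\mathbf{F}^{0})$, it remains to establish Hadamard differentiability of $\Phi$ at $\mathbf{F}^{0}$, identify the derivative with $\phi_{\boldsymbol{\theta}}^{\prime}$, and invoke the uniform functional delta theorems of \citet{BCFH_2017} (their Theorems B.3 and B.4).

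First I would record Hadamard differentiability of the scalar quantile map. Under the maintained setup each potential outcome $Y(d,M(d^{\prime}))$ is continuously distributed, so $F^{0}_{Y(d,M(d^{\prime}))}$ is continuously differentiable on a neighbourhood of $\{Q^{0}_{Y(d,M(d^{\prime}))}(\tau):\tau\in\mathcal{T}\}$ with density bounded away from zero there. By the standard argument for the quantile functional (see \citet{Vaart_1998}, Chapter~21, in the uniform-in-$\tau$ form used by \citet{BCFH_2017}), the map $F\mapsto(\phi_{\tau}(F))_{\tau\in\mathcal{T}}$ is then Hadamard differentiable at $F^{0}_{Y(d,M(d^{\prime}))}$ tangentially to the subspace of continuous functions, with linear derivative $h\mapsto\big(-h(Q^{0}(\tau))/f^{0}(Q^{0}(\tau))\big)_{\tau\in\mathcal{T}}$. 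Stacking the four coordinates yields Hadamard differentiability of $\Phi$, and its derivative is the continuous linear map we call $\phi_{\boldsymbol{\theta}}^{\prime}$; consequently $T_{P}:=\phi_{\boldsymbol{\theta}}^{\prime}(Z_{P})$ is a continuous linear image of the tight zero-mean Gaussian process $Z_{P}$, hence itself a tight zero-mean Gaussian process for each $P$.

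Next I would absorb the preprocessing of the c.d.f.\ estimates described in Section 2.1. Because $\mathbf{F}^{0}$ is already a proper c.d.f., strictly increasing on the range of interest, the truncation-to-$[0,1]$ operator acts as the identity near $\mathbf{F}^{0}$, the rearrangement operator is Hadamard differentiable at (weakly) increasing functions with the identity as its derivative \citep{CFG_2010}, and the linear interpolation over a refining grid converges to the identity; hence none of these steps changes the derivative and all can be incorporated into $\Phi$ without loss. With differentiability of $\Phi$ in place, I would apply the uniform functional delta theorem (Theorem B.3 of \citet{BCFH_2017}) to Theorem 1 to conclude $\sqrt{n}(\hat{\mathbf{Q}}_{\tau}-\mathbf{Q}_{\tau}^{0})_{\tau\in\mathcal{T}}\rightsquigarrow T_{P}$ uniformly over $P\in\mathcal{P}_{n}$, and the delta method for the bootstrap (Theorem B.4 of \citet{BCFH_2017}) to Theorem 2 to conclude $\sqrt{n}(\hat{\mathbf{Q}}_{\tau}^{*}-\hat{\mathbf{Q}}_{\tau})_{\tau\in\mathcal{T}}\rightsquigarrow_{B}T_{P}$.

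The main obstacle is the differentiability input, and specifically making the smoothness and positivity of the potential-outcome densities hold \emph{uniformly} over $\mathcal{P}_{n}$: one needs a common modulus of continuity for $f^{0}_{Y(d,M(d^{\prime}))}$ and a uniform lower bound on it over a neighbourhood of the quantile set $\{Q^{0}(\tau):\tau\in\mathcal{T}\}$, so that the derivative map is uniformly bounded and Hadamard differentiability holds in the uniform sense required by \citet{BCFH_2017}. A secondary point is verifying that the semimetric on $\mathcal{A}$ used in Theorem 1 is compatible with the derivative map, i.e.\ that $h\mapsto(h(Q^{0}(\tau)))_{\tau\in\mathcal{T}}$ and the induced map into $l^{\infty}(\mathcal{T})^{4}$ are continuous on the tangent set on which $Z_{P}$ concentrates, and that the rearrangement and interpolation steps are genuinely asymptotically negligible rather than only heuristically so.
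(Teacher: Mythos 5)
Your proposal takes essentially the same route as the paper's proof: both reduce the theorem to Theorems 1 and 2 via the uniform functional delta method, relying on uniform Hadamard tangential differentiability of the quantile map $\phi_{\boldsymbol{\theta}}$ and invoking Theorems B.3 and B.4 of \citet{BCFH_2017} (restated as Theorems A.4 and A.5 in the appendix). Your write-up is in fact more explicit than the paper's two-line proof about the form of the derivative, the uniform-in-$P$ smoothness and positivity of the potential-outcome densities needed for the differentiability input, and the asymptotic negligibility of the truncation/rearrangement/interpolation steps, all of which the paper leaves implicit.
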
 

\section{Simulation}\label{sim}

\subsection{Simulation Design}
This section presents a simulation study to examine the finite sample performance of
the proposed DML estimators in equations (\ref{theta_hat}) and (\ref{theta_hat1}).
We consider the following data-generating process for the observed covariates $X=\left(X_{1},X_{2},X_{3}\right)$, where 
\begin{eqnarray*}
	X_{1} & = & 0.75V_{1}+0.1V_{2}+0.15V_{3},\\
	X_{2} & = & 0.15V_{1}+0.7V_{2}+0.15V_{3},\\
	X_{3} & = & 0.14V_{1}+0.08V_{2}+0.78V_{3},
\end{eqnarray*}
with $V_{1}$, $V_{2}$ and $V_{3}$ being i.i.d.\ random variables following a chi-squared distribution 
with 1 degree of freedom. The binary treatment variable
$D$ is generated based on the following model:
\begin{eqnarray*}
	D & = & 1\left\{ 0.371+X^{\top}\left(0.198,0.125,-0.323\right)+\varepsilon_{D}>0\right\} .
\end{eqnarray*}
For the binary mediator $M$, the data-generating process is 
\[
M=1\left\{ -0.070+0.710D+X^{\top}\left(-0.054,-0.482,0.299\right)+\varepsilon_{M}>0\right\} .
\]
The model for outcome $Y$ is defined as follows: 
\begin{eqnarray*}
	h\ensuremath{\left(D,M,X\right)} & = & 0.766+0.458D+0.836DM+0.383M+X^{\top}\left(0.640,0.260,0.474\right),\\
	Y & = & h\left(D,M,X\right)^{-1}\varepsilon_{Y}.
\end{eqnarray*}
The error terms $\left(\varepsilon_{Y},\varepsilon_{D},\varepsilon_{M}\right)$
are mutually independent standard normal random variables and also independent
of $X$. Analytically computing the unconditional c.d.f. and quantiles of the
potential outcome $Y\left(d,M\left(d^{\prime}\right)\right)$ is difficult for the data generating process considered. For this reason, we use a Monte Carlo simulation to approximate the true values of the c.d.f. and the quantiles. We draw 40 million observations of $\left(\varepsilon_{Y},\varepsilon_{M},V_{1},V_{2},V_{3}\right)$
from their respective true distributions, and for each observation, we calculate the corresponding potential outcomes $Y\left(d,M\left(d^{\prime}\right)\right)$ for $(d,d^{\prime})\in \{0,1\}^{2}$. 
In the next step, we evaluate profiles of the empirical c.d.f.'s and quantiles of the 40 million sampled potential outcomes and use the evaluated profiles as approximations to their true profiles. In Figure \ref{figure3} in the appendix,
the upper panel shows the approximate true profiles of the c.d.f.'s $F_{Y\left(d,M\left(d^{\prime}\right)\right)}\left(a\right)$, while the lower panel provides the approximate true profiles
of quantiles $Q_{Y\left(d,M\left(d^{\prime}\right)\right)}(\tau)$ for $(d,d^{\prime})\in \{0,1\}^{2}$. Figure \ref{figure4} in the appendix depicts the approximate true profiles of NDQTE (NDQTE'), NIQTE (NIQTE') and TQTE across quantiles. 

In our simulation design, the nuisance parameters have the following functional forms:
\begin{eqnarray*}
	F_{Y|D,M,X}\left(a|D,M,X\right) & = & \Phi\left(\beta_{0,a}+\alpha_{0.a}D+\alpha_{1,a}DM+\beta_{1,a}M+X^{\top}\boldsymbol{\beta}_{2,a}\right)\\
	f_{D|X}\left(D=1|X\right) & = & \Phi\left(\lambda_{0}+X^{\top}\boldsymbol{\lambda}_{1}\right)\\
	f_{M|D,X}\left(M=1|D,X\right) & = & \Phi\left(b_{0}+a_{0}D+X^{\top}\boldsymbol{b}_{1}\right),
\end{eqnarray*}
where $\Phi\left(.\right)$ is the c.d.f.\ of a standard normal random
variable. 
The vector of parameters satisfies $\left(\beta_{0,a},\alpha_{0,a},\alpha_{1,a},\beta_{1,a},\boldsymbol{\beta}_{2,a}^{\top}\right)=a\times\left(\beta_{0},\alpha_{0},\alpha_{1},\beta_{1},\boldsymbol{\beta}_{2}^{\top}\right)$.
When running the simulations, we also include a set of auxiliary (exogenous) variables:
$X^{aug}:=\left(X_{1}^{aug},X_{2}^{aug},\ldots,X_{J}^{aug}\right)$,
$X_{j}^{aug}=U_{1j}\left(V_{1}+V_{2}+V_{3}\right)+U_{2j}\left(Z_{j}\right)^{2}$,
$j=1,\ldots,J$, where $U_{1j}\sim i.i.d.U\left(0,0.2\right)$, $U_{2j}\sim i.i.d.U\left(0.8,1\right)$.
$\mathbf{Z}:=\left(Z_{1},Z_{2},\ldots,Z_{J}\right)$ follow a multivariate
normal distribution with a mean vector $\mathbf{0}$ and a covariance
matrix with elements $0.5^{\left|j-l\right|},$ $j,l=1,\ldots,J$.
$V_{1},V_{2},V_{3},U_{1j},U_{2j}$, $\mathbf{Z}$ and the error terms
$\left(\varepsilon_{Y},\varepsilon_{D},\varepsilon_{M}\right)$ are
mutually independent. Depending on the realized values of $U_{1j}$
and $U_{2j}$, the correlations $cor\left(X_{1},X_{j}^{a}\right)$, $cor\left(X_{2},X_{j}^{a}\right)$,
$cor\left(X_{3},X_{j}^{a}\right)$ vary, and
on average they amount to 0.139, 0.147 and 0.135, respectively. 

\subsection{The Post Lasso Estimator}
Let $W_{a,i}^{aug}=\left(Y_{i},D_{i},M_{i},X_{i},X_{i}^{aug}\right)$
denote the $i$th observation of the simulated data. When applying K-fold cross-fitting to $W_{a,i}^{aug}$, we estimate the models of the nuisance parameters based on post-lasso regression: we first estimate the models by lasso regression and then re-estimate the models without (lasso) penalization when including only those regressors with non-zero coefficients in the respective previous lasso steps. We denote 
the lasso estimator of the coefficients of $F_{Y|D,M,X}\left(a|D,M,X\right)$
in K-fold cross-fitting by 
\begin{equation}
	\hat{\boldsymbol{\gamma}}_{Y,a}\in\arg\min_{\boldsymbol{\gamma}_{Y,a}\in\mathbb{R}^{p}}\frac{1}{\left|I_{k}^{c}\right|}\sum_{i\in I_{k}^{c}}L\left(W_{a,i}^{aug};\boldsymbol{\gamma}_{Y,a}\right)+\frac{\gamma}{\left|I_{k}^{c}\right|}\left\Vert \hat{\varPsi}\boldsymbol{\gamma}_{Y,a}\right\Vert _{1},\label{lasso}
\end{equation}
where $p$ is the number of covariates, $\left|I_{k}^{c}\right|$
is the number of observations in the complement set $I_{k}^{c}$, $\gamma$ is the penalty
parameter, $\left\Vert .\right\Vert _{1}$ denotes the $l_{1}$ norm
and $\hat{\varPsi}$ is a diagonal matrix of penalty loadings. Here,
the loss function $L\left(.\right)$ corresponds to that in equation (\ref{MLE})
and the link function $G_{a}\left(.\right)$ is $\Phi\left(.\right)$.
When solving the lasso estimation problem of expression (\ref{lasso}), we only
impose a penalty on $\left(X,X^{aug}\right)$ and
therefore, the first four diagonal elements (for the intercept term,
$D$, $M$ and $DM$) of $\hat{\varPsi}$ are ones, while the remaining diagonal
elements are zeros. The value of the penalty parameter is determined based on the procedure outlined in \citet{BCFH_2017}.
The other nuisance parameters $f_{D|X}\left(d|X\right)$
and $f_{M|D,X}\left(m|D,X\right)$ are estimated in an analogous way and we denote by $\hat{\boldsymbol{\gamma}}_{D}$ and $\hat{\boldsymbol{\gamma}}_{M}$
their corresponding lasso estimators. Let $\tilde{\Xi}$ denote
the union of variables in $\left(X,X^{aug}\right)$ with non-zero lasso coefficient estimates in one or several lasso regressions of the three nuisance parameters, with $\tilde{\Xi}\subseteq\text{supp}\left(\hat{\boldsymbol{\gamma}}_{Y,a}\right)\cup\text{supp}\left(\hat{\boldsymbol{\gamma}}_{D}\right)\cup\text{supp}\left(\hat{\boldsymbol{\gamma}}_{M}\right)$.
The post-lasso estimator of $F_{Y|D,M,X}\left(a|D,M,X\right)$
is defined as 
\begin{equation}
	\tilde{\boldsymbol{\gamma}}_{Y,a}\in\arg\min_{\boldsymbol{\gamma}_{Y,a}\in\mathbb{R}^{p}}\frac{1}{\left|I_{k}^{c}\right|}\sum_{i\in I_{k}^{c}}L\left(W_{a,i}^{aug};\boldsymbol{\gamma}_{Y,a}\right):\text{supp}\left(\boldsymbol{\gamma}_{Y,a}\right)\subseteq\text{supp}\left(\hat{\boldsymbol{\gamma}}_{Y,a}\right)\cup\tilde{\Xi}.\label{post_lasso_Y_a}
\end{equation}
The post-lasso estimators of $f_{D|X}\left(d|X\right)$
and $f_{M|D,X}\left(m|D,X\right)$ are obtained analogously. Based on the post-lasso estimates of the coefficients, we estimate the nuisance parameters among observations $W_{a,i}^{aug}$, $i\in I_{k}$, and use them to compute estimators (\ref{K_fold_theta}) or (\ref{K_fold_theta1}). 

When using the estimator based on equation (\ref{theta_hat}), we approximate the nuisance
parameters $f_{D|M,X}\left(d|M,X\right)$ by a probit model $\Phi\left(\lambda_{2}+\lambda_{3}M+X^{\top}\boldsymbol{\lambda}_{4}\right)$. The post-lasso approach for estimating $f_{D|M,X}\left(d|M,X\right)$ is the same as before. To estimate $E\left[F_{Y|D,M,X}\left(a|D,M,X\right)|d^{\prime},X\right]$,
we approximate $E\left[F_{Y|D,M,X}\left(a|D,M,X\right)|D,X\right]$
by a linear model $\beta_{3}+\beta_{4}D+X^{\top}\boldsymbol{\beta}_{5}$.
We calculate post-lasso estimates of $F_{Y|D,M,X}\left(a|D,M,X\right)$
among observations in the complement set, $W_{a,i}^{aug}$ with $i\in I_{k}^{c}$, and estimate $\left(\beta_{3},\beta_{4},\boldsymbol{\beta}_{5}\right)$
by linearly regressing these estimates on $D$ and those covariates previously selected for computing the post-lasso estimate of $F_{Y|D,M,X}\left(a|D,M,X\right)$. We then use the linear regression coefficients coming from the complement set to make cross-fitted predictions among observations with indices $i\in I_{k}$ and $D_{i}=d^{\prime}$, which serve as estimates of $E\left[F_{Y|D,M,X}\left(a|D,M,X\right)|d^{\prime},X\right]$. 

\subsection{Simulation Results}

To evaluate the performance of the proposed DML estimators of the c.d.f.'s of the potential outcomes across grid values $a$, 
 we calculate integrated mean squared error (IMSE)
and integrated Anderson--Darling weighted MSE (IWMSE) for each simulation: 
\begin{eqnarray}
	\text{IMSE} & = & \int_{a\in\mathcal{A}}\left[\hat{F}_{Y\left(d,M\left(d^{\prime}\right)\right)}\left(a\right)-F_{Y\left(d,M\left(d^{\prime}\right)\right)}\left(a\right)\right]^{2}dF_{Y\left(d,M\left(d^{\prime}\right)\right)}\left(a\right),\label{ISE}\\
	\text{IWMSE} & = & \int_{a\in\mathcal{A}}\frac{\left[\hat{F}_{Y\left(d,M\left(d^{\prime}\right)\right)}\left(a\right)-F_{Y\left(d,M\left(d^{\prime}\right)\right)}\left(a\right)\right]^{2}}{F_{Y\left(d,M\left(d^{\prime}\right)\right)}\left(a\right)\left(1-F_{Y\left(d,M\left(d^{\prime}\right)\right)}\left(a\right)\right)}dF_{Y\left(d,M\left(d^{\prime}\right)\right)}\left(a\right).\label{IWISE}
\end{eqnarray}
To assess the performance of the estimators of the quantiles of the potential outcomes, $Q_{Y\left(d,M\left(d^{\prime}\right)\right)}$, we compute the integrated absolute error
(IAE) across ranks $\tau$: 
\begin{equation}
	\text{IAE}  =  \frac{1}{\left|\mathcal{T}\right|}\sum_{\tau\in\mathcal{T}}\left|\hat{Q}_{Y\left(d,M\left(d^{\prime}\right)\right)}\left(\tau\right)-Q_{Y\left(d,M\left(d^{\prime}\right)\right)}\left(\tau\right)\right|,\label{IAE}
\end{equation}
where $\mathcal{T}$ is the grid of ranks, which we set to $(0.05,0.06,\ldots,0.95)$. Furthermore, we calculate the IAE for the estimators of the quantile treatment effects defined in equations (\ref{NDQTE}) to (\ref{TQTE}). In the simulations, we set $K=3$ for 3-fold cross-fitting. The number of auxiliary variables is $J=250$ and we consider sample sizes 2,500, 5,000 and 10,000 observations in the simulations. The reported performance
measures are averages for $\left(d,d^{\prime}\right)\in\{0,1\}^{2}$ over 1,000 simulations.

\begin{table}
	\centering \caption{Integrated mean squared error (IMSE) and integrated Anderson Darling
		weighted MSE (IWMSE) for estimated potential outcome distributions 
  }
	\begin{tabular}{lccccccc}
		\hline 
		& \multicolumn{7}{c}{$\hat{\theta}_{d,d^{\prime},a}$}\tabularnewline
		\cline{2-8} \cline{3-8} \cline{4-8} \cline{5-8} \cline{6-8} \cline{7-8} \cline{8-8} 
		&  & IMSE  &  &  &  & IWMSE  & \tabularnewline
		\cline{2-4} \cline{3-4} \cline{4-4} \cline{6-8} \cline{7-8} \cline{8-8} 
		& 2,500  & 5,000  & 10,000  &  & 2,500  & 5,000  & 10,000\tabularnewline
		\hline 
		$F_{Y\left(1,M\left(1\right)\right)}$  & 0.114  & 0.056  & 0.027  &  & 0.640  & 0.315  & 0.149\tabularnewline
		$F_{Y\left(1,M\left(0\right)\right)}$  & 0.159  & 0.079  & 0.038  &  & 0.891  & 0.443  & 0.213\tabularnewline
		$F_{Y\left(0,M\left(1\right)\right)}$  & 0.314  & 0.130  & 0.062  &  & 1.555  & 0.661  & 0.315\tabularnewline
		$F_{Y\left(0,M\left(0\right)\right)}$  & 0.211  & 0.100  & 0.047  &  & 1.060  & 0.506  & 0.242\tabularnewline
		\hline 
		& \multicolumn{7}{c}{$\hat{\theta}_{d,d^{\prime},a}^{\prime}$}\tabularnewline
		\cline{2-8} \cline{3-8} \cline{4-8} \cline{5-8} \cline{6-8} \cline{7-8} \cline{8-8} 
		&  & IMSE  &  &  &  & IWMSE  & \tabularnewline
		\cline{2-4} \cline{3-4} \cline{4-4} \cline{6-8} \cline{7-8} \cline{8-8} 
		& 2,500  & 5,000  & 10,000  &  & 2,500  & 5,000  & 10,000\tabularnewline
		\hline 
		$F_{Y\left(1,M\left(1\right)\right)}$  & 0.127  & 0.057  & 0.028  &  & 0.690  & 0.316  & 0.157\tabularnewline
		$F_{Y\left(1,M\left(0\right)\right)}$  & 0.169  & 0.078  & 0.037  &  & 0.927  & 0.437  & 0.209\tabularnewline
		$F_{Y\left(0,M\left(1\right)\right)}$  & 0.267  & 0.130  & 0.066  &  & 1.360  & 0.661  & 0.336\tabularnewline
		$F_{Y\left(0,M\left(0\right)\right)}$  & 0.214  & 0.107  & 0.051  &  & 1.080  & 0.539  & 0.260\tabularnewline
		\hline
	\end{tabular}\label{table1} 
\end{table}

Table \ref{table1} reports the results for the IMSE and IWMSE (scaled by 1,000), Table \ref{table2} those for the IAE. All the performance measures behave rather favorably. As the sample size increases, the performance measures (and thus, estimation errors) decline sharply. However, for different combinations of $(d,d^{\prime})$, the levels of the performance measures are different, especially when the sample size is small. Estimation errors are significantly larger if $(d,d^{\prime}) = (0,1)$ and $(0,0)$, rather than $(d,d^{\prime})=(1,1)$ and $(1,0)$. This is also reflected by the performance measures of NDQTE (NDQTE') and TQTE, which point to higher errors than those of NIQTE (NIQTE'). When comparing the performance measures of the two estimators based on equations (\ref{theta_hat}) and (\ref{theta_hat1}), we find some differences in their levels when the sample size is small. However, the differences vanish as the sample size increases, which suggests that the two estimators perform equally well asymptotically in the simulation design considered. 

\begin{table}
	\centering \caption{Integrated absolute error (IAE) for estimated quantiles of potential outcomes and quantile treatment effects 
  }
	\begin{tabular}{lccccccc}
		\hline 
		&  & $\hat{\theta}_{d,d^{\prime},a}$ &  &  &  & $\hat{\theta}_{d,d^{\prime},a}^{\prime}$ & \tabularnewline
		\cline{2-4} \cline{3-4} \cline{4-4} \cline{6-8} \cline{7-8} \cline{8-8} 
		& 2,500  & 5,000  & 10,000  &  & 2,500  & 5,000  & 10,000 \tabularnewline
		\hline 
		$Q_{Y\left(1,M\left(1\right)\right)}$  & 0.010  & 0.007  & 0.005  &  & 0.010  & 0.007  & 0.005 \tabularnewline
		$Q_{Y\left(1,M\left(0\right)\right)}$  & 0.014  & 0.010  & 0.007  &  & 0.014  & 0.010  & 0.007 \tabularnewline
		$Q_{Y\left(0,M\left(1\right)\right)}$  & 0.028  & 0.022  & 0.017  &  & 0.028  & 0.022  & 0.018 \tabularnewline
		$Q_{Y\left(0,M\left(0\right)\right)}$  & 0.031  & 0.026  & 0.022  &  & 0.031  & 0.026  & 0.022 \tabularnewline
		&  &  &  &  &  &  & \tabularnewline
		NDQTE  & 0.034 & 0.027 & 0.023 &  & 0.034 & 0.028 & 0.023\tabularnewline
		NDQTE' & 0.030 & 0.023 & 0.018 &  & 0.029 & 0.023 & 0.018\tabularnewline
		NIQTE  & 0.008 & 0.006 & 0.004 &  & 0.008 & 0.006 & 0.004\tabularnewline
		NIQTE' & 0.016 & 0.012 & 0.010 &  & 0.016 & 0.012 & 0.010\tabularnewline
		TQTE  & 0.033 & 0.026 & 0.022 &  & 0.033 & 0.027 & 0.023\tabularnewline
		\hline 
	\end{tabular}\label{table2} 
\end{table}

\section{Empirical Application}\label{appl}

\subsection{The Job Corps Data}

We apply the proposed estimators of natural direct and indirect quantile treatment effects to data from the National Job Corps Study, in order to evaluate the impact of the Job Corps (JC) training program on earnings of young individuals with disadvantaged backgrounds. JC is the largest and most comprehensive job training program for disadvantaged youth in the US. It provides participants with vocational training and/or classroom education, housing, and board over an average duration of 8 months. Participants also receive health education as well as health and dental care. \citet{SchBuGl01} and \citet{SchBuMc08} assess the average effects of random assignment to JC on several labor market outcomes and find it to increase education, employment, and earnings in the longer run. Other contributions evaluate more specific aspects or components of JC, like the average effect of the time spent in training or of particular training sequences on employment and earnings, see e.g.\ \citet{FlGoNe12} and \citet{BoHuLa2020}. 

Furthermore, several studies conduct mediation analyses to assess the average direct and indirect effects of program participation. \citet{FlFl09} and \citet{Hu14} consider work experience or employment as mediators, respectively, and find positive direct effects of JC on earnings and general health, respectively, when invoking a selection-on-observables assumption. \citet{FlFl10} avoid the latter assumption based on a partial identification approach based on which they compute upper and lower bounds for the causal mechanisms of JC when considering the achievement of a GED, high school degree, or vocational degree as mediators. Under their strongest set of bounding assumptions, they find a positive direct effect on labor market outcomes, net of the indirect mechanism via obtaining a degree. \citet{FrHu17} base their mediation analysis on separate instrumental variables for the treatment and the mediator and find a positive indirect effect of training on earnings through an increase in the number of hours worked.  We contribute to the causal mediation literature on the effectiveness of the JC program by considering quantile treatment effects across different ranks of the potential outcome distributions, which provides more insights on effect heterogeneity than the evaluation of average effects.

For our empirical analysis, we consider the JC data provided in the \texttt{causalweight} package by \citet{BH_2022} for the statistical software \texttt{R}, which is a processed data set with 9,240 observations that contains a subset of the variables available in the original National Job Corps Study. Our outcome of interest is weekly earnings in the third year after the assignment (the variable \texttt{earny3} in the \texttt{JC} data frame), while the treatment is a binary indicator for participation in any (classroom-based or vocational) training in the first year after program assignment (\texttt{trainy1}). We aim at assessing whether training directly affects the earnings outcome, and whether it also has an indirect effect by affecting health. For this reason, we consider general health one year after program assignment (\texttt{health12}) as mediator, a categorical variable ranging from 1 (excellent health) to 4 (poor health). The motivation is that participation in training aimed at increasing human capital and labor market perspectives may have an impact on mental health, which in turn may affect labor market success. Furthermore, JC might also affect physical health through health education and health/dental care, which can influence labor market success, too. For this reason, we aim at disentangling the direct earnings effect of training and its indirect effect operating via health. 

\begin{table}
	\centering \caption{Estimates of Average Effects}
	\begin{tabular}{ccccccc}
		\hline 
		 & TE & NDE & NIE & NDE' & NIE'\tabularnewline
		\hline 
		Effect   & 16.591 & 16.995 & -0.403 & 16.586 & 0.005\tabularnewline
		Std.err  & 3.740 & 3.747 & 0.190 & 3.770 & 0.553\tabularnewline
		p-value  & 0.000 & 0.000 & 0.034 & 0.000 & 0.992\tabularnewline
		\hline 
	\end{tabular}\label{table4} 
\end{table}

The data set also contains 28 pre-treatment covariates, which include socio-economic information such as a study participant's gender, age, ethnicity, (own) education and parents' education, mother tongue, marital status, household size, previous employment, earnings and welfare receipt, health status, smoking behavior, alcohol consumption, and whether a study participant has at least one child. We assume that sequential ignorability of the treatment and the mediator holds conditional on these observed characteristics, implying that the permit controlling for any factors jointly affecting training participation and the earnings outcome, training participation and health 12 months after assignment, or health and earnings. To make lasso-based estimation of the nuisance parameters in our DML approach more flexible, we create interaction terms between all of the 28 covariates and squared terms for any non-binary covariates. This entails a total of 412 control variables that include both the original covariates and the higher order/interaction terms which we include in our DML approach. Table \ref{table3} provides summary statistics for the outcome, the treatment, the mediator and the covariates. 

\subsection{Effect Estimates}

Before considering quantile treatment effects, we first estimate the average direct and indirect effects by a K-fold cross-fitting estimator based on Theorem 2 in \cite{FHLLS_2022}, as implemented in the \texttt{causalweight} package for \texttt{R}.
Table \ref{table4} reports the estimated average total effect (TE) of training, the average natural direct effects (NDE and NDE') and the average natural indirect effects (NIE and NIE') operating via general health.  
The TE estimate (Effect) suggests that participation in JC increases average weekly earnings in the third year by roughly 16 to 17 USD. As the estimated mean potential outcome under non-treatment amounts to approximately 161 USD, the program increases weekly earnings by roughly 10\% according to our estimate. The TE is highly statistically significant as the standard error (Sdt.err) of 3.740 is rather low relative to the effect estimate, such that p-value that is close to zero. 

The total effect seems to be predominantly driven by the direct impact of training on earnings, as both NDE and NDE' are of similar magnitude as TE and highly statistically significant. In contrast, the indirect effect under non-treatment ($d=0$), NIE', is close to zero and insignificant, while that under treatment ($d=1$), NIE, amounts to -0.403 USD and is statistically significant at the 5\% level. Bearing in mind that the health mediator is inversely coded (a smaller value implies better health), this negative estimate suggests a positive average indirect effect of training participation on earnings under treatment, which is, however, rather modest. Furthermore, the  effect heterogeneity across NIE and NIE' points to moderate interaction effects of the treatment and the mediator: the impact of health on earnings appears to be somewhat more important under training than without training.     

\begin{figure}[H]
	\centering
	\mbox{
		\subfigure{\includegraphics[height=5cm, width = 8cm]{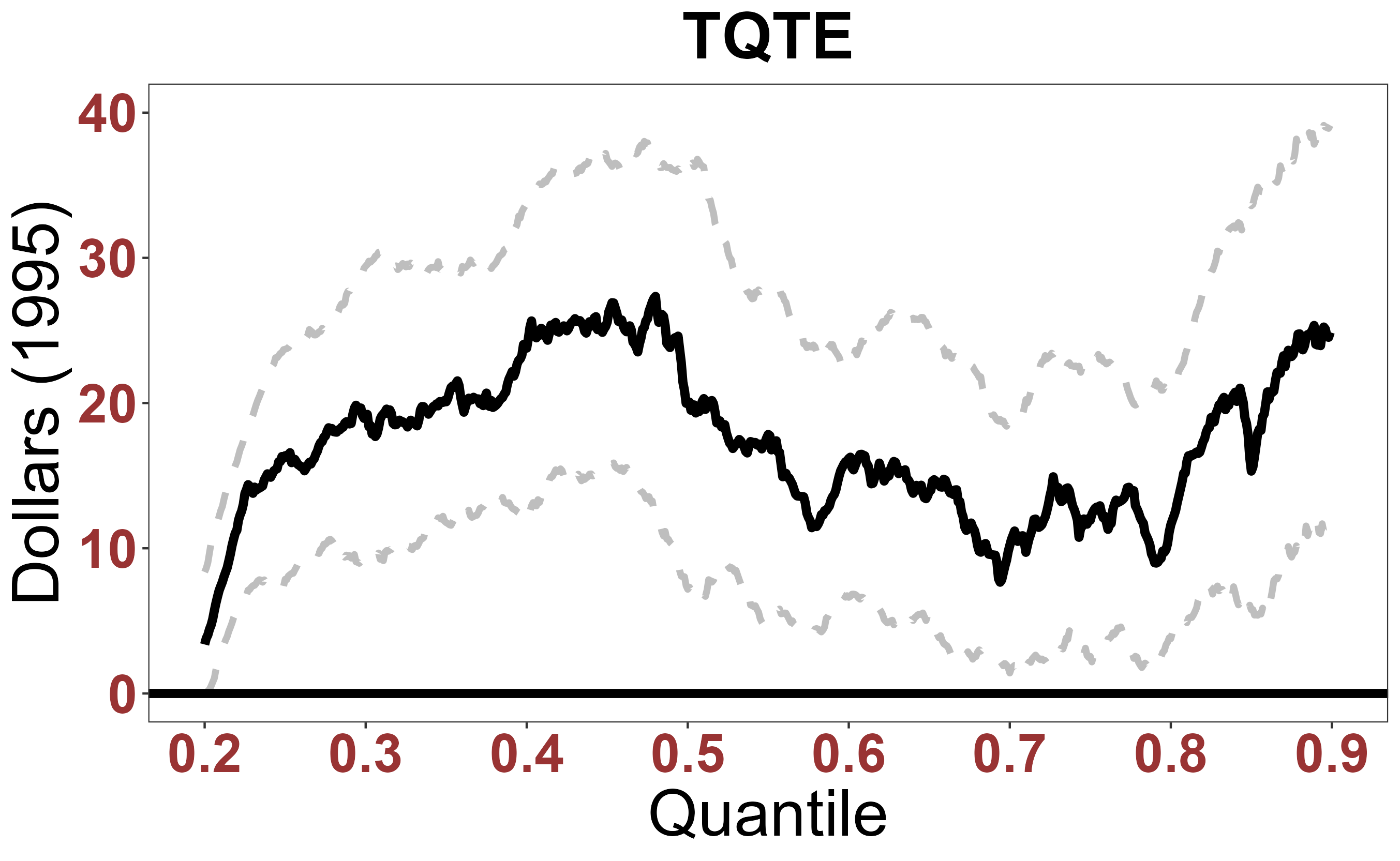}}
	}
	\caption{Estimates of the TQTE across ranks 0.2 to 0.9 (solid lines), based on inverting $\hat{\theta}_{d,d^{\prime},a}$. 95\% confidence intervals (dashed lines) are based on the multiplier bootstrap.}
	\label{figure5}
\end{figure}

The average effects might mask interesting effect heterogeneity across ranks of the earnings distribution. For this reason, 
we estimate the total quantile treatment effect (TQTE), natural direct quantile treatment effects (NDQTE and NDQTE') and natural indirect quantile treatment effects (NIQTE and NIQTE') across ranks ($\tau$) 0.2 to 0.9. To this end, we invert our K-fold cross-fitting estimator $\hat{\theta}_{d,d^{\prime},a}$ of equation (\ref{theta_hat}) and estimate the nuisance parameters by post-lasso regression as outlined in Section 4.2. Figures \ref{figure5} and \ref{figure6} depict the estimates of the causal effects (on the y-axis) across $\tau$ (on the x-axis), which correspond to the solid lines in the respective graphs. The dashed lines provide the 95\% confidence intervals based on the multiplier bootstrap introduced in Section 2.5. 

The quantile treatment effects are by and large in  line with the average treatment effects. TQTE, NDQTE and NDQTE' are statistically significantly positive at the 5\% across almost all ranks $\tau$ considered and generally quite similar to each other. In contrast, all of the NIQTE estimates (the indirect effects under $d=1$) are relatively close to zero and statistically insignificant. The majority of the NIQTE' estimates (the indirect effects under $d=0$) are not statistically significantly different from zero either. However, several of the negative effects measured at lower ranks (roughly between the 0.2th and 0.4th quantiles) are marginally statistically significant and point to an earnings-increasing indirect effect under non-treatment (due to inverse coding of the health mediator). This potentially interesting pattern is averaged out when considering NIE' (the average indirect effect under $d=0$), which we found to be virtually zero and insignificant, see Table \ref{table4}. Finally, the non-monotonic shape of the point estimates of TQTE, NDQTE and NDQTE' across ranks $\tau$ suggests heterogeneous effects at different quantiles of the potential earnings distributions. At the same time, the width of the confidence intervals suggests that the null hypothesis of homogeneous effects cannot be rejected for most of the quantiles considered. 

\begin{figure}[H]
	\centering
	\mbox{
		\subfigure{\includegraphics[height=5.5cm, width = 8cm]{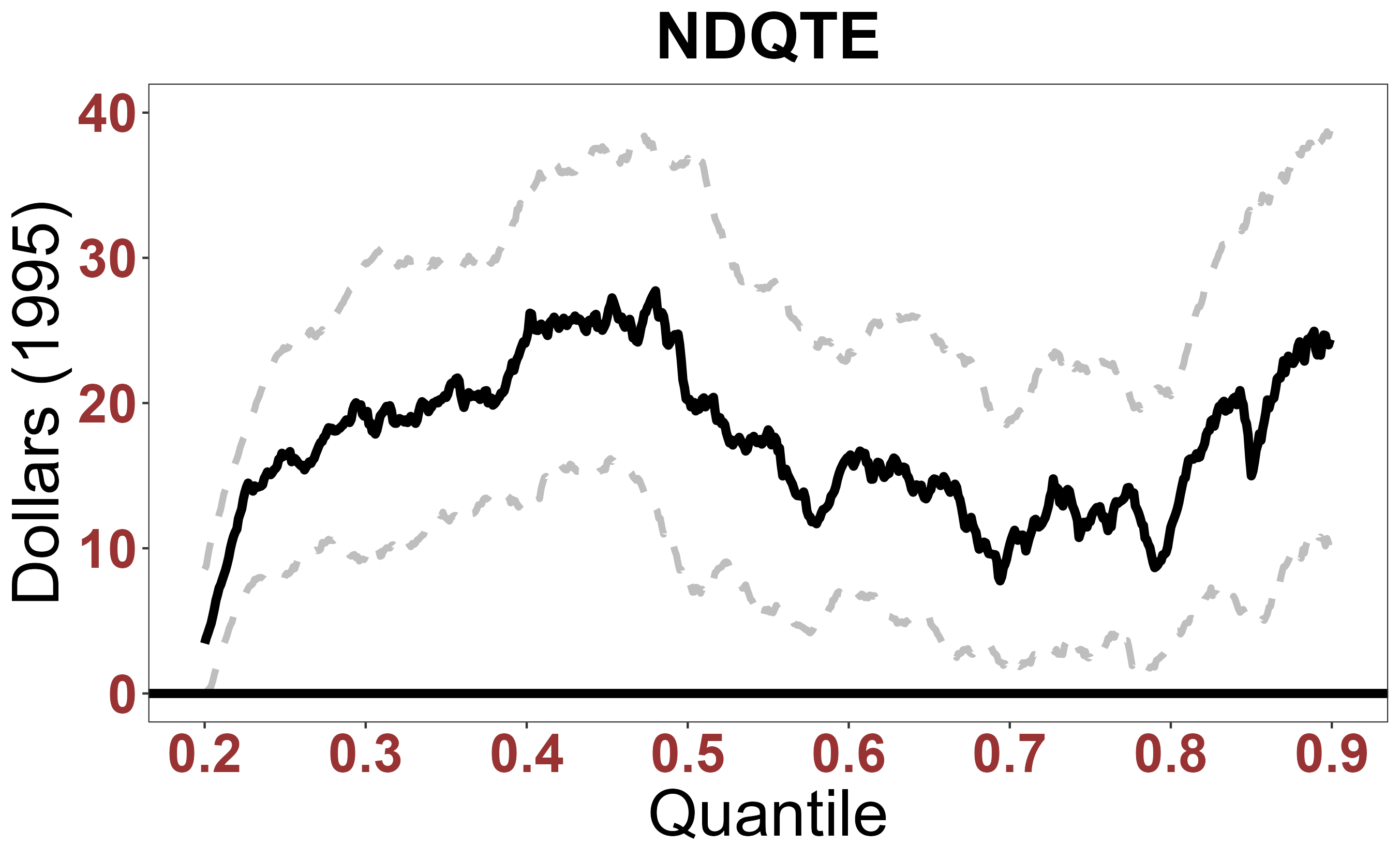}}
            \subfigure{\includegraphics[height=5.5cm, width = 8cm]{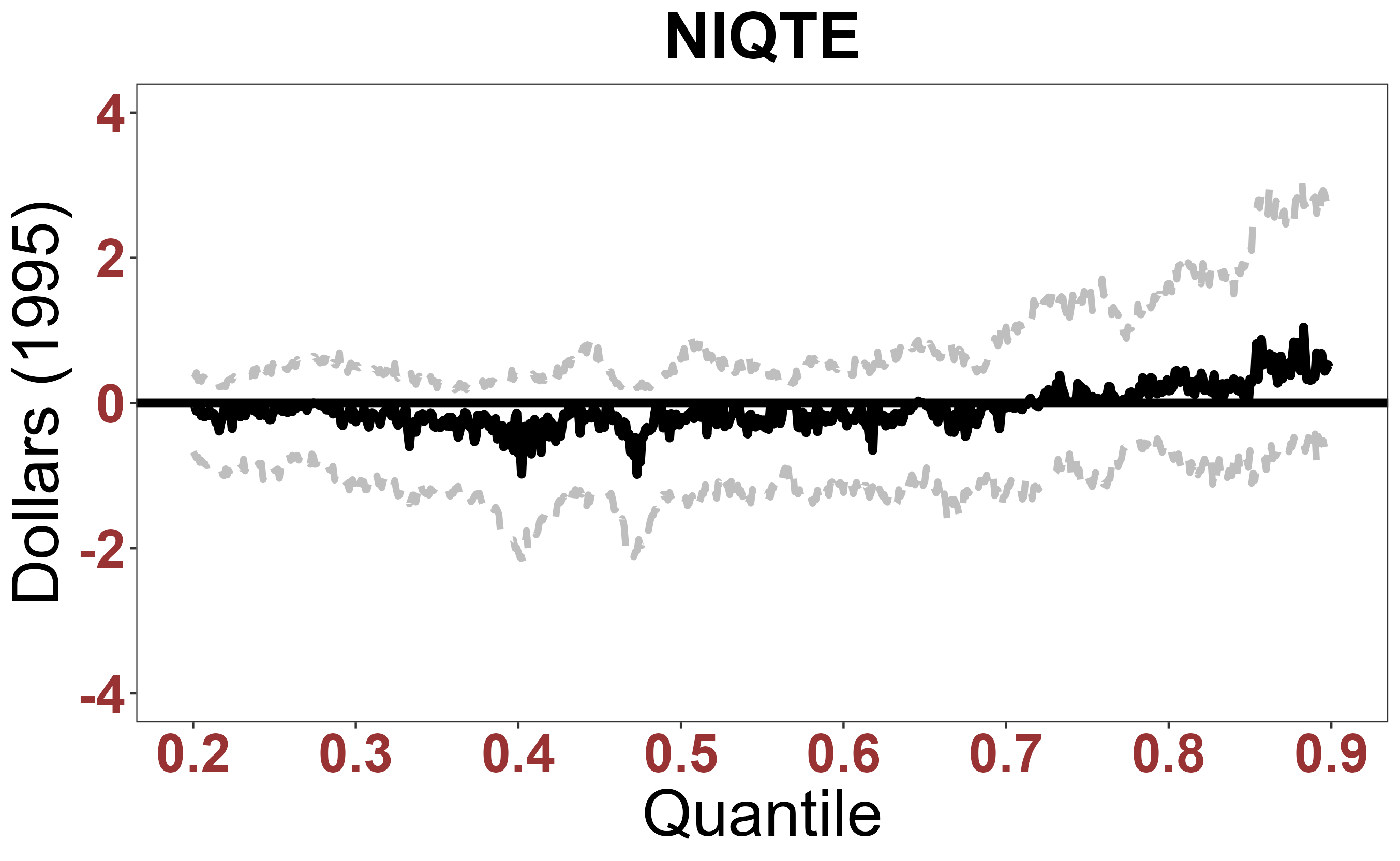}}
	}	
	\mbox{
			\subfigure{\includegraphics[height=5.5cm, width = 8cm]{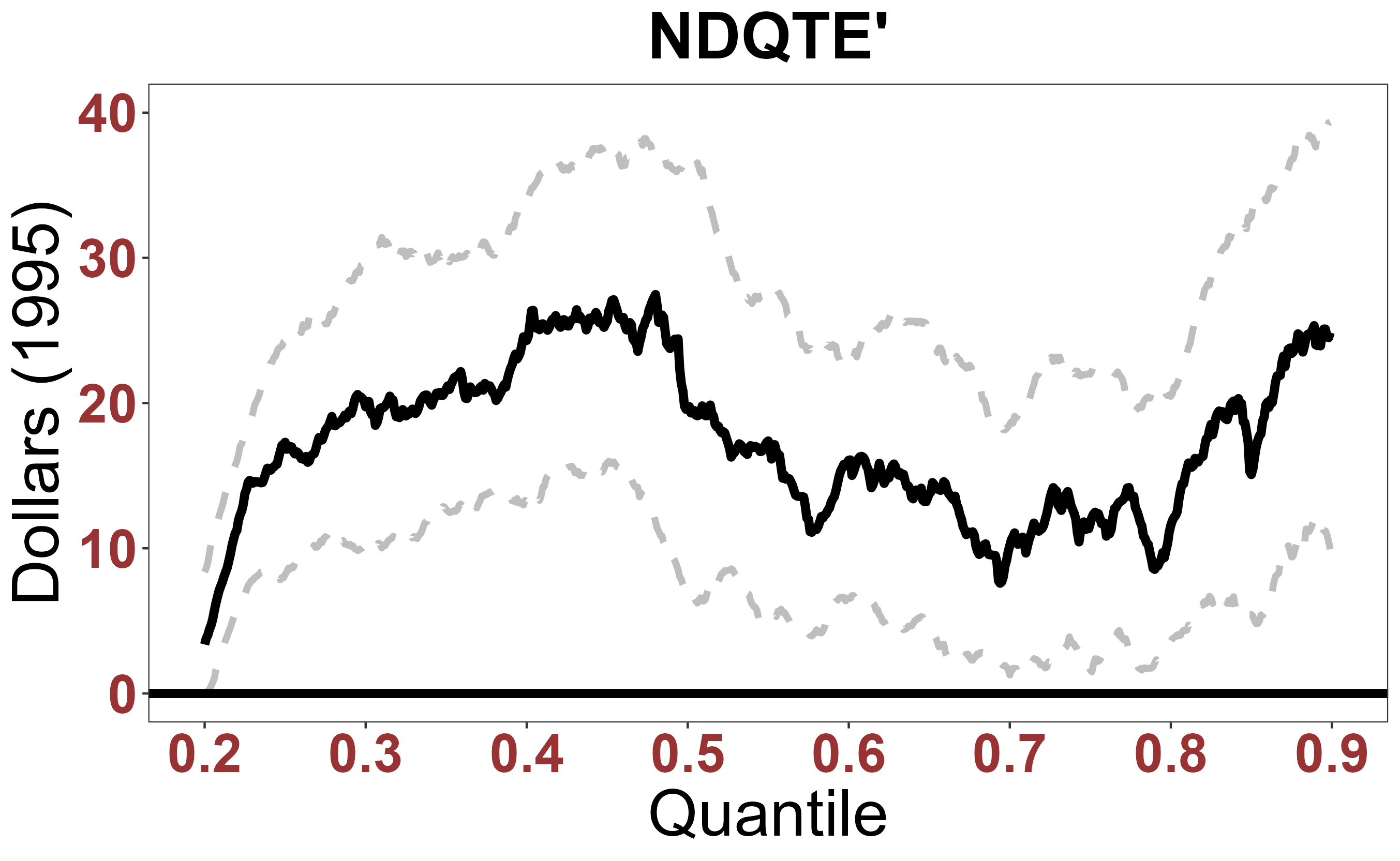}}
		\subfigure{\includegraphics[height=5.5cm, width = 8cm]{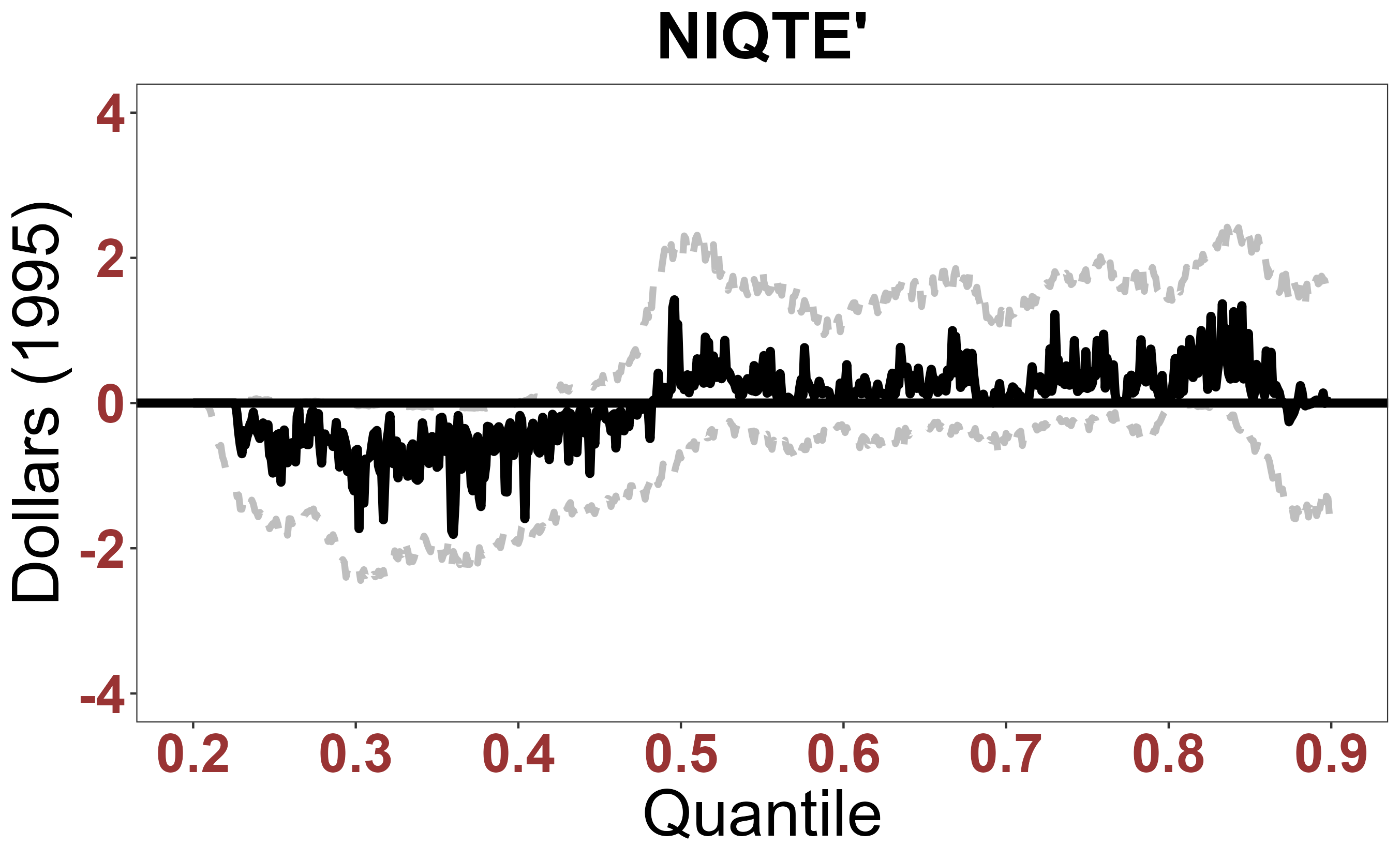}}
	}		
	\caption{Estimates of the NDQTE, NIQTE, NDQTE' and NIQTE' across ranks 0.2 to 0.9 (solid lines), based on inverting $\hat{\theta}_{d,d^{\prime},a}$. 95\% confidence intervals (dashed lines) are based on the multiplier bootstrap.}
	\label{figure6}
\end{figure}

\section{Conclusion}\label{conclusion}

We proposed a DML approach for estimating natural direct and indirect quantile treatment effects under a sequential ignorability assumption. The method relies on the efficient score functions of the potential outcomes' cumulative distributional functions, which are inverted to compute the quantiles as well as the treatment effects (as the differences in potential outcomes at those quantiles). The robustness property of the efficient score functions permits estimating the nuisance parameters (outcome, treatment, and mediator models) by machine learning and cross-fitting avoids overfitting bias. We demonstrated that our quantile treatment effect estimators are root-n-consistent and asymptotically normal. Furthermore, we suggested a multiplier bootstrap and demonstrated its consistency for uniform statistical inference. We also investigated the finite sample performance of our estimators by means of a simulation study. Finally, we applied our method to data from the National Job Corp Study to evaluate the direct earnings effects of training across the earnings distribution, as well as the indirect effects operating via general health. We found positive and statistically significant direct effects across a large range of the earnings quantiles, while the indirect effects were generally close to zero and mostly statistically insignificant.


{\large \renewcommand{\theequation}{A-\arabic{equation}} %
\setcounter{equation}{0} \appendix
}
\appendix \numberwithin{equation}{section}

\clearpage

\section{Appendix}
\small
\subsection{Proof of Proposition 1}
\begin{proof} The proof relies on using Assumptions 1.1 through 1.4. Under these assumptions, it can be shown that: 
	\begin{eqnarray*}
		F_{Y\left(d,M\left(d^{\prime}\right)\right)}\left(a\right) & = & \int P\left(Y\left(d,M\left(d^{\prime}\right)\right)\leq a|X=x\right)f_{X}\left(x\right)dx\text{ (by iterated expectation) }\\
		& = & \int\int P\left(Y\left(d,m\right)\leq a|M\left(d^{\prime}\right)=m,X=x\right)dP\left(M\left(d^{\prime}\right)=m|X=x\right)f_{X}\left(x\right)dx\\
		&  & \text{ (by iterated expectation) }\\
		& = & \int\int P\left(Y\left(d,m\right)\leq a|D=d^{\prime},M\left(d^{\prime}\right)=m,X=x\right)\\
		&  & \times dP\left(M\left(d^{\prime}\right)=m|D=d^{\prime},X=x\right)f_{X}\left(x\right)dx\text{ (by Assumption 2) }\\
		& = & \int\int P\left(Y\left(d,m\right)\leq a|D=d^{\prime},M=m,X=x\right)dP\left(M=m|D=d^{\prime},X=x\right)f_{X}\left(x\right)dx\\
		&  & \text{ (by Assumption 1) }\\
		& = & \int\int P\left(Y\left(d,m\right)\leq a|D=d^{\prime},X=x\right)dP\left(M=m|D=d^{\prime},X=x\right)f_{X}\left(x\right)dx\\
		&  & \text{ (by Assumption 3) }\\
		& = & \int\int P\left(Y\left(d,m\right)\leq a|D=d,X=x\right)dP\left(M=m|D=d^{\prime},X=x\right)f_{X}\left(x\right)dx\\
		&  & \text{ (by Assumption 2) }\\
		& = & \int\int P\left(Y\left(d,m\right)\leq a|D=d,M=m,X=x\right)dP\left(M=m|D=d^{\prime},X=x\right)f_{X}\left(x\right)dx\\
		&  & \text{ (by Assumption 3) }\\
		& = & \int P\left(Y\leq a|D=d,M=m,X=x\right)dP\left(M=m|D=d^{\prime},X=x\right)f_{X}\left(x\right)dx\\
		&  & \text{ (by Assumption 1) }\\
		& = & \int\int F_{Y|D,M,X}\left(a|d,m,x\right)f_{M|D,X}\left(m|d^{\prime},x\right)f_{X}\left(x\right)dmdx.
	\end{eqnarray*}
\end{proof}

\subsection{Derivations of the EIF}
The derivation of the efficient influence function (EIF) of an estimand is based on calculating Gateaux derivatives for the estimand. 
Let $P$ denote the true data generating distribution and $\Psi\left(P\right)$ the estimand of interest, which is a statistical functional of $P$.
The Gateaux derivative of $\Psi\left(.\right)$ measures how the estimand
$\Psi\left(.\right)$ changes as $P$ shifts in the direction of another
distribution, say $\tilde{P}$. Let $P_{t}=t\tilde{P}+\left(1-t\right)P$,
where $t\in\left[0,1\right]$. Formally, the Gateaux derivative of 
estimand $\Psi\left(.\right)$ when changing $P$ in the direction of
$\tilde{P}$ is defined as 
\begin{equation}
	\lim_{t\downarrow0}\left(\frac{\Psi\left(P_{t}\right)-\Psi\left(P\right)}{t}\right)=\left.\frac{d}{dt}\Psi\left(P_{t}\right)\right|_{t=0},\label{Gateaux_derivative}
\end{equation}
if the limit on the right-hand side exists. It can be shown that under
certain regularity conditions, the EIF of $\Psi\left(P\right)$ under
the distribution $\tilde{P}$ is equal to Gateaux derivative (\ref{Gateaux_derivative})
\citep{HDDV_2022}. This fact provides a convenient way of deriving
the EIF. Following \citet{HDDV_2022}, we use the strategy of ``point
mass contamination'' to derive the EIF of $F_{Y\left(d,M\left(d^{\prime}\right)\right)}\left(a\right)$.
Specifically, we consider $\tilde{P}$ to be a point mass of a single
observation, say $\tilde{o}$, and then the EIF of $\Psi\left(P\right)$
evaluated at $\tilde{o}$ is equal to the Gateaux derivative (\ref{Gateaux_derivative}). This derivation strategy appears attractive when the treatment variable $D$ is discrete.\footnote{Notice that if $D$ is not discrete, this strategy can not be used,
	and the derivation needs to rely on using other methods instead, see
	\citet{FK_2019,Levy_2019,IN_2022}.} Let 
\[
1_{\tilde{o}}\left(o\right)=\begin{cases}
	1 & \text{if }o=\tilde{o}\\
	0 & \text{otherwise}
\end{cases}
\]
denote the Dirac delta function with respect to $\tilde{o}$. If the
density function for $P$ is $f_{O}\left(o\right)$, the density function
for $P_{t}$ is $f_{O}^{t}\left(o\right)=t1_{\tilde{o}}\left(o\right)+\left(1-t\right)f_{O}\left(o\right)$
and 
\[
\left.\frac{d}{dt}f_{O}^{t}\left(o\right)\right|_{t=0}=1_{\tilde{o}}\left(o\right)-f_{O}\left(o\right),
\]
and $f_{O}^{t}\left(o\right)=f_{O}\left(o\right)$ when $t=0$. Under Assumptions 1.1 to 1.4, it follows from
Proposition 1 that
\begin{eqnarray*}
	F_{Y\left(d,M\left(d^{\prime}\right)\right)}\left(a\right) & = & \int\int F_{Y|D,M,X}\left(a|d,m,x\right)f_{M|D,X}\left(m|d^{\prime},x\right)f_{X}\left(x\right)dmdx\\
	& = & \int\int\int1\left\{ y\leq a\right\} \frac{f_{Y,D,M,X}\left(y,d,m.x\right)}{f_{D,M,X}\left(d,m,x\right)}\frac{f_{M,D,X}\left(m,d^{\prime},x\right)}{f_{D,X}\left(d^{\prime},x\right)}f_{X}\left(x\right)dydmdx.
\end{eqnarray*}
Let 
\[
\Psi\left(P_{t}\right):=\int\int\int1\left\{ y\leq a\right\} \frac{f_{Y,D,M,X}^{t}\left(y,d,m.x\right)}{f_{D,M,X}^{t}\left(d,m,x\right)}\frac{f_{M,D,X}^{t}\left(m,d^{\prime},x\right)}{f_{D,X}^{t}\left(d^{\prime},x\right)}f_{X}^{t}\left(x\right)dydmdx.
\]
We would like to calculate the Gateau derivative: 
\[
\left.\frac{d}{dt}\Psi\left(P_{t}\right)\right|_{t=0}=\left.\frac{d}{dt}\left(\int\int\int1\left\{ y\leq a\right\} \frac{f_{Y,D,M,X}^{t}\left(y,d,m.x\right)}{f_{D,M,X}^{t}\left(d,m,x\right)}\frac{f_{M,D,X}^{t}\left(m,d^{\prime},x\right)}{f_{D,X}^{t}\left(d^{\prime},x\right)}f_{X}^{t}\left(x\right)dydmdx\right)\right|_{t=0}.
\]
It can be shown that 
\begin{eqnarray}
	\left.\frac{d}{dt}\Psi\left(P_{t}\right)\right|_{t=0} & = & \int\int\int1\left\{ y\leq a\right\} \left.\left(\frac{f_{Y,D,M,X}^{t}\left(y,d,m.x\right)}{f_{D,M,X}^{t}\left(d,m,x\right)}\frac{f_{M,D,X}^{t}\left(m,d^{\prime},x\right)}{f_{D,X}^{t}\left(d^{\prime},x\right)}\frac{d}{dt}f_{X}^{t}\left(x\right)\right)\right|_{t=0}dydmdx\notag\\\label{Gateaux}\\
	&  & +\int\int\int1\left\{ y\leq a\right\} \left.\left(f_{X}^{t}\left(x\right)\frac{d}{dt}\frac{f_{Y,D,M,X}^{t}\left(y,d,m.x\right)}{f_{D,M,X}^{t}\left(d,m,x\right)}\frac{f_{M,D,X}^{t}\left(m,d^{\prime},x\right)}{f_{D,X}^{t}\left(d^{\prime},x\right)}\right)\right|_{t=0}dydmdx.\notag\\\label{Gateaux1}
\end{eqnarray}
Considering the expression within the integral of (\ref{Gateaux}), 
\begin{eqnarray*}
	\left.\left(\frac{f_{Y,D,M,X}^{t}\left(y,d,m.x\right)}{f_{D,M,X}^{t}\left(d,m,x\right)}\frac{f_{M,D,X}^{t}\left(m,d^{\prime},x\right)}{f_{D,X}^{t}\left(d^{\prime},x\right)}\frac{d}{dt}f_{X}^{t}\left(x\right)\right)\right|_{t=0} & = & \frac{f_{Y,D,M,X}^{t}\left(y,d,m.x\right)}{f_{D,M,X}^{t}\left(d,m,x\right)}\frac{f_{M,D,X}^{t}\left(m,d^{\prime},x\right)}{f_{D,X}^{t}\left(d^{\prime},x\right)}\times\left[1_{\tilde{x}}\left(x\right)-f_{X}\left(x\right)\right]\\
	& = & f_{Y|D,M,X}\left(y|d,m.x\right)f_{M|D,X}\left(m|d^{\prime},x\right)1_{\tilde{x}}\left(x\right)\\
	&  & -f_{Y|D,M,X}\left(y|d,m.x\right)f_{M|D,X}\left(m|d^{\prime},x\right)f_{X}\left(x\right).
\end{eqnarray*}
Therefore, (\ref{Gateaux}) can be further expressed as 
\begin{equation}
	E\left[F_{Y|D,M,X}\left(a|d,M,\tilde{x}\right)|d^{\prime},\tilde{x}\right]-E\left[g_{d,d^{\prime},a}\left(X\right)\right].\label{EIF}
\end{equation}
Considering the xpression within the integral of (\ref{Gateaux1}), 
\begin{eqnarray}
	\left.\left(f_{X}^{t}\left(x\right)\frac{d}{dt}\frac{f_{Y,D,M,X}^{t}\left(y,d,m.x\right)}{f_{D,M,X}^{t}\left(d,m,x\right)}\frac{f_{M,D,X}^{t}\left(m,d^{\prime},x\right)}{f_{D,X}^{t}\left(d^{\prime},x\right)}\right)\right|_{t=0} & = & \left.\left(f_{X}^{t}\left(x\right)\frac{f_{Y,D,M,X}^{t}\left(y,d,m.x\right)}{f_{D,M,X}^{t}\left(d,m,x\right)}\frac{d}{dt}\frac{f_{M,D,X}^{t}\left(m,d^{\prime},x\right)}{f_{D,X}^{t}\left(d^{\prime},x\right)}\right)\right|_{t=0}\notag\\\label{Gateaux2}\\
	&  & +\left.\left(f_{X}^{t}\left(x\right)\frac{f_{M,D,X}^{t}\left(m,d^{\prime},x\right)}{f_{D,X}^{t}\left(d^{\prime},x\right)}\frac{d}{dt}\frac{f_{Y,D,M,X}^{t}\left(y,d,m.x\right)}{f_{D,M,X}^{t}\left(d,m,x\right)}\right)\right|_{t=0}.\notag\\\label{Gateaux3}
\end{eqnarray}
Considering (\ref{Gateaux2}), 
\begin{eqnarray*}
	\left.\frac{d}{dt}\frac{f_{M,D,X}^{t}\left(m,d^{\prime},x\right)}{f_{D,X}^{t}\left(d^{\prime},x\right)}\right|_{t=0} & = & \left.\left(\frac{1}{f_{D,X}^{t}\left(d^{\prime},x\right)}\frac{d}{dt}f_{M,D,X}^{t}\left(m,d^{\prime},x\right)\right)\right|_{t=0}\\
	&  & -\left.\left(\frac{f_{M,D,X}^{t}\left(m,d^{\prime},x\right)}{\left(f_{D,X}^{t}\left(d^{\prime},x\right)\right)^{2}}\frac{d}{dt}f_{D,X}^{t}\left(d^{\prime},x\right)\right)\right|_{t=0}\\
	& = & \frac{1\left\{ D=d^{\prime}\right\} }{f_{D,X}\left(d^{\prime},x\right)}\left[1_{\left(\tilde{m},\tilde{x}\right)}\left(m,x\right)-f_{M|D,X}\left(m|d^{\prime},x\right)1_{\tilde{x}}\left(x\right)\right].
\end{eqnarray*}
Using some algebra, the part of (\ref{Gateaux1}) appearing in (\ref{Gateaux2})
can be expressed as 
\begin{equation}
	\frac{1\left\{ D=d^{\prime}\right\} }{f_{D|X}\left(d^{\prime}|\tilde{x}\right)}\left(F_{Y|D,M,X}\left(a|d,\tilde{m},\tilde{x}\right)-E\left[F_{Y|D,M,X}\left(a|d,M,\tilde{x}\right)|d^{\prime},\tilde{x}\right]\right).\label{EIF1}
\end{equation}
Concerning (\ref{Gateaux3}), 
\begin{eqnarray*}
	\left.\frac{d}{dt}\frac{f_{Y,D,M,X}^{t}\left(y,d,m.x\right)}{f_{D,M,X}^{t}\left(d,m,x\right)}\right|_{t=0} & = & \left.\left(\frac{1}{f_{D,M,X}^{t}\left(d,m,x\right)}\frac{d}{dt}f_{Y,D,M,X}^{t}\left(y,d,m.x\right)\right)\right|_{t=0}\\
	&  & -\left.\left(\frac{f_{Y,D,M,X}^{t}\left(y,d,m,x\right)}{\left(f_{D,M,X}^{t}\left(d,m,x\right)\right)^{2}}\frac{d}{dt}f_{D,M,X}^{t}\left(d,m,x\right)\right)\right|_{t=0}\\
	& = & \frac{1}{f_{D,M,X}\left(d,m,x\right)}\left[1_{\left(\tilde{y},\tilde{m},\tilde{x}\right)}\left(y,m,x\right)1\left\{ D=d\right\} -f_{Y,D,M,X}\left(y,d,m,x\right)\right]\\
	&  & -\frac{f_{Y|D,M,X}\left(y|d,m,x\right)}{f_{D,M,X}\left(d,m,x\right)}\left[1_{\left(\tilde{m},\tilde{x}\right)}\left(m,x\right)1\left\{ D=d\right\} -f_{D,M,X}\left(d,m,x\right)\right]\\
	& = & \frac{1\left\{ D=d\right\} }{f_{D,M,X}\left(d,m,x\right)}\left[1_{\left(\tilde{y},\tilde{m},\tilde{x}\right)}\left(y,m,x\right)-f_{Y|D,M,X}\left(y|d,m,x\right)1_{\tilde{m},\tilde{x}}\left(m,x\right)\right].
\end{eqnarray*}
Furthermore, the part of (\ref{Gateaux1}) appearing in (\ref{Gateaux3})
can be expressed as 
\begin{equation}
	\frac{1\left\{ D=d\right\} }{f_{D|X}\left(d|\tilde{x}\right)}\frac{f_{M|D,X}\left(\tilde{m}|d^{\prime},\tilde{x}\right)}{f_{M|D,X}\left(\tilde{m}|d,\tilde{x}\right)}\left(1\left\{ \tilde{y}\leq a\right\} -F_{Y|D,M,X}\left(a|d,\tilde{m},\tilde{x}\right)\right).\label{EIF2}
\end{equation}
Combing (\ref{EIF}), (\ref{EIF1}) and (\ref{EIF2}), we obtain
\begin{eqnarray*}
	\left.\frac{d}{dt}\Psi\left(P_{t}\right)\right|_{t=0} & = & E\left[F_{Y|D,M,X}\left(a|d,M,\tilde{x}\right)|d^{\prime},\tilde{x}\right]-E\left[g_{d,d^{\prime},a}\left(X\right)\right]\\
	&  & +\frac{1\left\{ D=d^{\prime}\right\} }{f_{D|X}\left(d^{\prime}|\tilde{x}\right)}\left(F_{Y|D,M,X}\left(a|d,\tilde{m},\tilde{x}\right)-E\left[F_{Y|D,M,X}\left(a|d,M,\tilde{x}\right)|d^{\prime},\tilde{x}\right]\right)\\
	&  & +\frac{1\left\{ D=d\right\} }{f_{D|X}\left(d|\tilde{x}\right)}\frac{f_{M|D,X}\left(\tilde{m}|d^{\prime},\tilde{x}\right)}{f_{M|D,X}\left(\tilde{m}|d,\tilde{x}\right)}\left(1\left\{ \tilde{y}\leq a\right\} -F_{Y|D,M,X}\left(a|d,\tilde{m},\tilde{x}\right)\right).
\end{eqnarray*}
If we replace the notation $\left(\tilde{y},\tilde{m},\tilde{x}\right)$
with $\left(Y,M,X\right)$ and notice that $g_{d,d^{\prime},a}\left(X\right):=E\left[F_{Y|D,M,X}\left(a|d,M,X\right)|d^{\prime},X\right]$
and $E\left[g_{d,d^{\prime},a}\left(X\right)\right]=F_{Y\left(d,M\left(d^{\prime}\right)\right)}\left(a\right)$,
then $E\left[\left.\frac{d}{dt}\Psi\left(P_{t}\right)\right|_{t=0}\right]=0$
implies that 
\[
F_{Y\left(d,M\left(d^{\prime}\right)\right)}\left(a\right)=E\left[\psi^{\prime}\left(W_{a},v_{a}^{\prime}\right)\right],
\]
where $\psi^{\prime}\left(W_{a},v_{a}^{\prime}\right)$ is defined
in equation (\ref{psi_d_dprime_a1}). We can apply the Bayes rule to rewrite
the term (\ref{EIF2}) as 
\[
\frac{1\left\{ D=d\right\} }{f_{D|X}\left(d^{\prime}|\tilde{x}\right)}\frac{f_{D|M,X}\left(d^{\prime}|\tilde{m},\tilde{x}\right)}{f_{D|M,X}\left(d|\tilde{m},\tilde{x}\right)}\left(1\left\{ \tilde{y}\leq a\right\} -F_{Y|D,M,X}\left(a|d,\tilde{m},\tilde{x}\right)\right),
\]
and $E\left[\left.\frac{d}{dt}\Psi\left(P_{t}\right)\right|_{t=0}\right]=0$
now implies that 
\[
F_{Y\left(d,M\left(d^{\prime}\right)\right)}\left(a\right)=E\left[\psi\left(W_{a},v_{a}\right)\right],
\]
where $\psi\left(W_{a},v_{a}\right)$ is defined in equation (\ref{psi_d_dprime_a}).

\subsection{Proofs of Theorems in Section 3}
\begin{proof}[Proof of Theorem 1]
The proof relies on Theorem A.1 in Appendix A.4, which states that K-fold cross-fitting is uniformly valid for estimating a parameter of interest under certain regularity conditions. We show that the conditions in Assumption 2 are sufficient for the proposed K-fold cross-fitting estimator to satisfy Assumptions A.1.1 to A.1.8, which are required for establishing Theorem A.1. Notice that the condition $\left\Vert \hat{v}_{a}-v_{a}^{0}\right\Vert _{P,2} \leq\delta_{n}n^{-\frac{1}{4}}$ in Assumption 2.4 already satisfies Assumption A.1.8. For this reason, we will only verify Assumption A.1.1 to A.1.7 in the subsequent discussion. We first derive several preliminary results which are useful for the proof to follow.

Let $\mathcal{G}_{an}$ be the set of \[v:=\left(g_{1d}\left(X\right),g_{2d}\left(M,X\right),g_{3a}\left(D,M,X\right),g_{4ad}\left(D,X\right)\right).\]
 $g_{1d}\left(X\right),g_{2d}\left(M,X\right),g_{3a}\left(D,M,X\right)$
and $g_{4ad}\left(D,X\right)$ are $P$-integrable functions such
that for $d\in\left\{ 0,1\right\}$, 
\begin{align}
	P\left(\varepsilon_{1}<g_{1d}\left(X\right)<1-\varepsilon_{1}\right) & =1,\label{bound_1}\\
	P\left(\varepsilon_{2}<g_{2d}\left(M,X\right)<1-\varepsilon_{2}\right) & =1,\label{bound_2}
\end{align}
where $\varepsilon_{1},\varepsilon_{2}\in\left(0,1/2\right)$, and
with probability $P$ at least $1-\Delta_{n}$, for $d\in\left\{ 0,1\right\} $,
all $a\in\mathcal{A}$ and $q\geq4$, 
\begin{align*}
	\left\Vert v-v_{a}^{0}\right\Vert _{P,q} & \leq C,\\
	\left\Vert v-v_{a}^{0}\right\Vert _{P,2} & \leq\delta_{n}n^{-\frac{1}{4}},\\
	\left\Vert g_{1d}\left(X\right)-0.5\right\Vert _{P,\infty} & \leq0.5-\epsilon,\\
	\left\Vert g_{2d}\left(M,X\right)-0.5\right\Vert _{P,\infty} & \leq0.5-\epsilon,\\
	\left\Vert g_{1d}\left(X\right)-g_{1d}^{0}\left(X\right)\right\Vert _{P,2}\left\Vert g_{2d}\left(M,X\right)-g_{2d}^{0}\left(M,X\right)\right\Vert _{P,2} & \leq\delta_{n}n^{-\frac{1}{2}},\\
	\left\Vert g_{1d}\left(X\right)-g_{1d}^{0}\left(X\right)\right\Vert _{P,2}\left\Vert g_{3a}\left(D,M,X\right)-g_{3a}^{0}\left(D,M,X\right)\right\Vert _{P,2} & \leq\delta_{n}n^{-\frac{1}{2}},\\
	\left\Vert g_{1d}\left(X\right)-g_{1d}^{0}\left(X\right)\right\Vert _{P,2}\left\Vert g_{4ad}\left(D,X\right)-g_{4ad}^{0}\left(D,X\right)\right\Vert _{P,2} & \leq\delta_{n}n^{-\frac{1}{2}},\\
	\left\Vert g_{2d}\left(M,X\right)-g_{2d}^{0}\left(M,X\right)\right\Vert _{P,2}\left\Vert g_{3a}\left(D,M,X\right)-g_{3a}^{0}\left(D,M,X\right)\right\Vert _{P,2} & \leq\delta_{n}n^{-\frac{1}{2}}.
\end{align*}
Notice that $\hat{v}_{a}\in\mathcal{G}_{an}$ by Assumption 2.4. Proving the result for all functions in
$\mathcal{G}_{an}$ implies that it also holds for $\hat{v}_{a}$. By
Assumption 2.2, it can be shown that for $\left(d,d^{\prime}\right)\in\left\{ 0,1\right\} ^{2}$,
the event 
\[
0<\frac{\varepsilon_{2}}{1-\varepsilon_{2}}<\frac{g_{2d}^{0}\left(M,X\right)}{g_{2d^{\prime}}^{0}\left(M,X\right)}<\frac{1-\varepsilon_{2}}{\varepsilon_{2}}<\infty
\]
holds with probability one. Furthermore,
\begin{small}
	\begin{align*}
		\left\Vert g_{3a}\left(D,M,X\right)-g_{3a}^{0}\left(D,M,X\right)\right\Vert _{P,q} & =\left(E\left[E\left[\left|g_{3a}\left(D,M,X\right)-g_{3a}^{0}\left(D,M,X\right)\right|^{q}|M,X\right]\right]\right)^{\frac{1}{q}}\\
		& =\left(E\left[\sum_{d\in\left\{ 0,1\right\} }\left|g_{3a}\left(d,M,X\right)-g_{3a}^{0}\left(d,M,X\right)\right|^{q}g_{2d}^{0}\left(M,X\right)\right]\right)^{\frac{1}{q}}\\
		& \geq\varepsilon_{2}^{\frac{1}{q}}\left(E\left[\sum_{d\in\left\{ 0,1\right\} }\left|g_{3a}\left(d,M,X\right)-g_{3a}^{0}\left(d,M,X\right)\right|^{q}\right]\right)^{\frac{1}{q}}\\
		& \geq\varepsilon_{2}^{\frac{1}{q}}\left(\max_{d\in\left\{ 0,1\right\} }E\left[\left|g_{3a}\left(d,M,X\right)-g_{3a}^{0}\left(d,M,X\right)\right|^{q}\right]\right)^{\frac{1}{q}}.
	\end{align*}
\end{small}
which implies that for $d\in\left\{ 0,1\right\} $ and all $a\in\mathcal{A}$,
with $P$ probability at least $1-\Delta_{n}$, 
\[
\left\Vert g_{3a}\left(d,M,X\right)-g_{3a}^{0}\left(d,M,X\right)\right\Vert _{P,q}\leq C\varepsilon_{2}^{-\frac{1}{q}},
\]
by the assumption $\left\Vert v-v_{a}^{0}\right\Vert _{P,q}\leq C$
and $\varepsilon_{2}>0$. A similar argument can be used to show that
for $d\in\left\{ 0,1\right\} $ and all $a\in\mathcal{A}$, with 
probability $P$ at least $1-\Delta_{n}$, 
\begin{align}
	\left\Vert g_{3a}\left(d,M,X\right)-g_{3a}^{0}\left(d,M,X\right)\right\Vert _{P,2} & \leq\delta_{n}n^{-\frac{1}{4}}\varepsilon_{2}^{-\frac{1}{q}}\lesssim\delta_{n}n^{-\frac{1}{4}},\label{bound_3}\\
	\left\Vert g_{4ad}\left(d^{\prime},X\right)-g_{4d}^{0}\left(d^{\prime},X\right)\right\Vert _{P,2} & \leq\delta_{n}n^{-\frac{1}{4}}\varepsilon_{1}^{-\frac{1}{q}}\lesssim\delta_{n}n^{-\frac{1}{4}},\label{bound_4}
\end{align}
by assumption $\left\Vert v-v_{a}^{0}\right\Vert _{P,2}\leq\delta_{n}n^{-\frac{1}{4}}$
and $\varepsilon_{2}>0$. Similarly, 
\begin{align}
	\left\Vert g_{1d}\left(X\right)-g_{1d}^{0}\left(X\right)\right\Vert _{P,2}\left\Vert g_{3a}\left(d,M,X\right)-g_{3a}^{0}\left(d,M,X\right)\right\Vert _{P,2} & \leq\varepsilon_{2}^{-\frac{1}{q}}\delta_{n}n^{-\frac{1}{2}}\lesssim\delta_{n}n^{-\frac{1}{2}},\label{bound_5}\\
	\left\Vert g_{1d}\left(X\right)-g_{1d}^{0}\left(X\right)\right\Vert _{P,2}\left\Vert g_{4ad}\left(d^{\prime},X\right)-g_{4ad}^{0}\left(d^{\prime},X\right)\right\Vert _{P,2} & \leq\varepsilon_{1}^{-\frac{1}{q}}\delta_{n}n^{-\frac{1}{2}}\lesssim\delta_{n}n^{-\frac{1}{2}},\label{bound_6}\\
	\left\Vert g_{2d}\left(M,X\right)-g_{2d}^{0}\left(M,X\right)\right\Vert _{P,2}\left\Vert g_{3a}\left(d,M,X\right)-g_{3a}^{0}\left(d,M,X\right)\right\Vert _{P,2} & \leq\varepsilon_{2}^{-\frac{1}{q}}\delta_{n}n^{-\frac{1}{2}}\lesssim\delta_{n}n^{-\frac{1}{2}}.\label{bound_7}
\end{align}

\subsubsection*{Verifying Assumption A.1.1: $F_{Y(d,M(d^{\prime}))}^{0}(a)=\theta_{d,d^{\prime},a}^{0}$}
\subsubsection*{The case when $d\protect\neq d^{\prime}$}
Recall that 
\begin{align}
	\psi_{d,d^{\prime},a}\left(W_{a};v_{a}^{0}\right) & =\frac{1\left\{ D=d\right\} \left[1-g_{2d}^{0}\left(M,X\right)\right]}{\left[1-g_{1d}^{0}\left(X\right)\right]g_{2d}^{0}\left(M,X\right)}\times\left[Y_{a} -g_{3a}^{0}\left(d,M,X\right)\right]\label{efficient_score1}\\
	& +\frac{1\left\{ D=d^{\prime}\right\} }{1-g_{1d}^{0}\left(X\right)}\left[g_{3a}^{0}\left(d,M,X\right)-g_{4ad}^{0}\left(d^{\prime},X\right)\right]+g_{4ad}^{0}\left(d^{\prime},X\right).\label{efficient_score2}
\end{align}
Conditional on $M$ and $X$, the expectation of the first term after the equals sign in expression (\ref{efficient_score1})
is 
\begin{align*}
	E\left[\frac{1\left\{ D=d\right\} \left[1-g_{2d}^{0}\left(M,X\right)\right]}{\left[1-g_{1d}^{0}\left(X\right)\right]g_{2d}^{0}\left(M,X\right)}Y_{a} |M,X\right] & =\frac{1-g_{2d}^{0}\left(M,X\right)}{\left[1-g_{1d}^{0}\left(X\right)\right]g_{2d}^{0}\left(M,X\right)}\times E\left[1\left\{ D=d\right\} Y_{a} |M,X\right]\\
	& =\frac{1-g_{2d}^{0}\left(M,X\right)}{\left[1-g_{1d}^{0}\left(X\right)\right]g_{2d}^{0}\left(M,X\right)}\times g_{2d}\left(M,X\right)g_{3a}\left(d,M,X\right)\\
	& =\frac{1-g_{2d}^{0}\left(M,X\right)}{1-g_{1d}^{0}\left(X\right)}g_{3a}\left(d,M,X\right),
\end{align*}
and the expectation of the last term right in expression  (\ref{efficient_score1}) is
\begin{align*}
	E\left[\frac{1\left\{ D=d\right\} \left[1-g_{2d}^{0}\left(M,X\right)\right]}{\left[1-g_{1d}^{0}\left(X\right)\right]g_{2d}^{0}\left(M,X\right)}g_{3a}^{0}\left(d,M,X\right)|M,X\right] & =\frac{1-g_{2d}^{0}\left(M,X\right)}{\left[1-g_{1d}^{0}\left(X\right)\right]g_{2d}^{0}\left(M,X\right)}\times g_{2d}\left(M,X\right)g_{3a}\left(d,M,X\right)\\
	& =\frac{1-g_{2d}^{0}\left(M,X\right)}{1-g_{1d}^{0}\left(X\right)}g_{3a}\left(d,M,X\right).
\end{align*}
Therefore, conditional on $M$ and $X$, the term in expression (\ref{efficient_score1})
is zero and its unconditional expectation is also zero. Concerning the terms
in expression (\ref{efficient_score2}), notice that the first term 
\begin{align*}
	E\left[\frac{1\left\{ D=d^{\prime}\right\} }{1-g_{1d}^{0}\left(X\right)}\left[g_{3a}^{0}\left(d,M,X\right)-g_{4ad}^{0}\left(d^{\prime},X\right)\right]|X\right] & =\frac{1}{1-g_{1d}^{0}\left(X\right)}E\left[g_{3a}^{0}\left(d,M,X\right)1\left\{ D=d^{\prime}\right\} |X\right]\\
	& -g_{4ad}^{0}\left(d^{\prime},X\right)\\
	& =0
\end{align*}
Therefore, the expectation of the first term in expression (\ref{efficient_score2})
is zero. The expectation of the second term of in expression (\ref{efficient_score2})
is $E\left[g_{4ad}^{0}\left(d^{\prime},X\right)\right]=F_{Y(d,M(d^{\prime}))}^{0}(a)$ by Proposition 1. Therefore, it follows that $F_{Y(d,M(d^{\prime}))}^{0}(a)=\theta_{d,d^{\prime},a}^{0}$.

\subsubsection*{The case when $d=d$}
Now, we have
\begin{align}
	\psi_{d,d,a}\left(W_{a};v_{a}^{0}\right) & =\frac{1\left\{ D=d\right\} }{g_{1d}^{0}\left(X\right)}\times\left[Y_{a} -g_{4ad}^{0}\left(d,X\right)\right]+g_{4ad}^{0}\left(d,X\right),\label{efficient_score3}
\end{align}
where $g_{4ad}^{0}\left(d,X\right)=F_{Y|D,X}^{0}\left(a|d,X\right)$. Concerning
the first term right of the equals sign in equation (\ref{efficient_score3}),
\begin{align*}
	E\left[\frac{1\left\{ D=d\right\} }{g_{1d}^{0}\left(X\right)}\times\left[Y_{a} -g_{4ad}^{0}\left(d,X\right)\right]|X\right] & =\frac{E\left[1\left\{ D=d\right\} Y_{a} |X\right]}{g_{1d}^{0}\left(X\right)}-g_{4ad}^{0}\left(d,X\right)\\
	& =0.
\end{align*}
Concerning the last term in equation (\ref{efficient_score3}), notice that \[F_{Y|D,X}^{0}\left(a|d,X\right)=F_{Y\left(d,M\left(d\right)\right)|D,X}^{0}\left(a|d,X\right)=F_{Y\left(d,M\left(d\right)\right)|X}^{0}\left(a|X\right)\]
by Assumptions 1.1 and 1.2, and therefore,
\begin{equation*}
	E\left[g_{4ad}^{0}\left(d,X\right)\right]  =E\left[F_{Y\left(d,M\left(d\right)\right)|X}^{0}\left(a|X\right)\right]=F_{Y(d,M(d))}^{0}(a).
\end{equation*}
Combining the previous results, it follows that $F_{Y(d,M(d^{\prime}))}^{0}(a)=\theta_{d,d^\prime,a}^{0}$
holds for all $a\in\mathcal{A}$ and $\left(d,d^{\prime}\right)\in\left\{ 0,1\right\} ^{2}$.


\subsubsection*{Verifying Assumption A.1.2}
If we treat $v$ as deterministic, the second order Gateau derivative of the map $v\longmapsto E\left[\psi_{d,d^{\prime},a}\left(W_{a};v\right)\right]$ exists and is continuous on $v\in\mathcal{G}_{an}$, and this property holds for each $\left(d,d^{\prime}\right)\in\left\{ 0,1\right\} ^{2}$ and $a\in\mathcal{A}$. Therefore $v\longmapsto E\left[\boldsymbol{\psi}_{a}\left(W_{a};v\right)\right]$ is twice continuously Gateau-differetiable for $a\in\mathcal{A}$.


\subsubsection*{Verifying Assumption A.1.3 (Neyman near orthogonality)}
\subsubsection*{The case when $d\protect\neq d^{\prime}$}
Recall that 
\begin{align*}
	\psi_{d,d^{\prime},a}\left(W_{a},v_{a}^{0}\right) & =\frac{1\left\{ D=d\right\} \left[1-g_{2d}^{0}\left(M,X\right)\right]}{\left[1-g_{1d}^{0}\left(X\right)\right]g_{2d}^{0}\left(M,X\right)}\left[1\left\{ Y\leq a\right\} -g_{3a}^{0}\left(d,M,X\right)\right]\\
	& +\frac{1\left\{ D=d^{\prime}\right\} }{1-g_{1d}^{0}\left(X\right)}\left[g_{3a}^{0}\left(d,M,X\right)-g_{4ad}^{0}\left(d^{\prime},X\right)\right]+g_{4ad}^{0}\left(d^{\prime},X\right).
\end{align*}
Let 
\[
\mu_{d,d^{\prime},a}\left(\mathbf{t}\right)=\frac{1\left\{ D=d\right\} \left(1-t_{2}\right)}{\left(1-t_{1}\right)t_{2}}\left(1\left\{ Y\leq a\right\} -t_{3}\right)+\frac{1\left\{ D=d^{\prime}\right\} }{1-t_{1}}\left(t_{3}-t_{4}\right)+t_{4},
\]
where $\mathbf{t}=\left(t_{1,},\ldots,t_{4}\right)$. If we set $\mathbf{t}=\left(g_{1d}\left(X\right),g_{2d}\left(M,X\right),g_{3a}\left(d,M,X\right),g_{4ad}\left(d^{\prime},X\right)\right)$,
then $E\left[\mu_{d,d^{\prime},a}\left(\mathbf{t}\right)\right]=E\left[\psi_{d,d^{\prime},a}\left(W_{a},v\right)\right]$.
If we set 
\[
\mathbf{t}=\mathbf{t}^{0}=\left(g_{1d}^{0}\left(X\right),g_{2d}^{0}\left(M,X\right),g_{3a}^{0}\left(d,M,X\right),g_{4ad}^{0}\left(d^{\prime},X\right)\right),
\]
then $E\left[\mu_{d,d^{\prime},a}\left(\mathbf{t}_{0}\right)\right]=E\left[\psi_{d,d^{\prime},a}\left(W_{a},v_{a}^{0}\right)\right]$.
Furthermore, $\left\Vert \partial_{v}E\left[\psi_{d,d^{\prime},a}\left(W_{a},v_{a}^{0}\right)\left[v-v_{a}^{0}\right]\right]\right\Vert =\left\Vert E\left[\partial_{\mathbf{t}}\mu_{d,d^{\prime},a}\left(\mathbf{t}_{0}\right)\left[\mathbf{t}-\mathbf{t}_{0}\right]\right]\right\Vert $.
The partial derivatives with respect to $\left(t_{1},\ldots,t_{4}\right)$
are given by 
\begin{align}
	\partial_{t_{1}}\mu_{d,d^{\prime},a}\left(\mathbf{t}\right) & =\frac{1\left\{ D=d\right\} \left(1-t_{2}\right)}{\left(1-t_{1}\right)^{2}t_{2}}\left(1\left\{ Y\leq a\right\} -t_{3}\right)+\frac{1\left\{ D=d^{\prime}\right\} }{(1-t_{1})^{2}}\left(t_{3}-t_{4}\right),\label{partial_mu_t1}\\
	\partial_{t_{2}}\mu_{d,d^{\prime},a}\left(\mathbf{t}\right) & =-\frac{1\left\{ D=d\right\} }{\left(1-t_{1}\right)t_{2}^{2}}\left(1\left\{ Y\leq a\right\} -t_{3}\right),\label{partial_mu_t2}\\
	\partial_{t_{3}}\mu_{d,d^{\prime},a}\left(\mathbf{t}\right) & =\frac{1\left\{ D=d^{\prime}\right\} }{1-t_{1}}-\frac{1\left\{ D=d\right\} \left(1-t_{2}\right)}{\left(1-t_{1}\right)t_{2}},\label{partial_mu_t3}\\
	\partial_{t_{4}}\mu_{d,d^{\prime},a}\left(\mathbf{t}\right) & =1-\frac{1\left\{ D=d^{\prime}\right\} }{1-t_{1}}.\label{partial_mu_t4}
\end{align}
Replacing $\mathbf{t}$ with $\mathbf{t}^{0}$ and taking expectations in equation (\ref{partial_mu_t1}), we have 
\begin{align}
	E\left[\partial_{t_{1}}\mu_{d,d^{\prime},a}\left(\mathbf{t}_{0}\right)\left[g_{1d}\left(X\right)-g_{1d}^{0}\left(X\right)\right]\right] & =E\left[\frac{1-g_{2d}^{0}\left(M,X\right)}{\left(1-g_{1d}^{0}\left(X\right)\right)^{2}g_{2d}^{0}\left(M,X\right)}\times\left[E\left[1\left\{ Y\leq a\right\} 1\left\{ D=d\right\} |M,X\right]\right.\right.\label{partial_t1}\\
	& \left.\left.-g_{2d}^{0}\left(M,X\right)g_{3a}^{0}\left(d,M,X\right)\right]\left[g_{1d}\left(X\right)-g_{1d}^{0}\left(X\right)\right]\right]\nonumber \\
	& +E\left[\frac{1}{\left(1-g_{1d}^{0}\left(X\right)\right)^{2}}\left[E\left[1\left\{ D=d^{\prime}\right\} g_{3a}^{0}\left(d,M,X\right)|X\right]\right.\right.\label{partial_t11}\\
	& \left.-\left(1-g_{1d}^{0}\left(X\right)\right)g_{4ad}^{0}\left(d^{\prime},X\right)\right]\left[g_{1d}\left(X\right)-g_{1d}^{0}\left(X\right)\right]\nonumber \\
	& =0,\nonumber 
\end{align}
because in expressions (\ref{partial_t1}) and (\ref{partial_t11}), we have that 
\begin{align}
	E\left[1\left\{ D=d\right\} 1\left\{ Y\leq a\right\} |M,X\right] & =P_{Y,D|M,X}^{0}\left(Y\leq a,D=d|M,X\right)\nonumber \\
	& =g_{2d}^{0}\left(M,X\right)g_{3a}^{0}\left(d,M,X\right),\label{partial_t13}\\
	E1\left[1\left\{ D=d^{\prime}\right\} g_{3a}^{0}\left(d,M,X\right)|X\right] & =\int g_{3a}^{0}\left(d,m,X\right)P_{D,M|X}^{0}\left(D=d^{\prime},M=m|X\right)dm\nonumber \\
	& =\left(1-g_{1d}^{0}\left(X\right)\right)g_{4ad}^{0}\left(d^{\prime},X\right).\nonumber 
\end{align}
When taking expectations in equation (\ref{partial_mu_t2}), we have 
\begin{align}
	E\left[\partial_{t_{2}}\mu_{d,d^{\prime},a}\left(\mathbf{t}_{0}\right)\left[g_{2d}\left(M,X\right)-g_{2d}^{0}\left(M,X\right)\right]\right] & =E\left[-\frac{1}{\left(1-g_{1d}^{0}\left(X\right)\right)\left(g_{2d}^{0}\left(M,X\right)\right)^{2}}\right.\\
	& \times\left[E\left[1\left\{ D=d^{\prime}\right\} 1\left\{ Y\leq a\right\} |M,X\right]-g_{2d}^{0}\left(M,X\right)g_{3a}^{0}\left(d,M,X\right)\right]\nonumber \\
	& \left.\times\left[g_{2d}\left(M,X\right)-g_{2d}^{0}\left(M,X\right)\right]\right]\nonumber \\
	& =0,\nonumber 
\end{align}
by making use of expression (\ref{partial_t13}). When taking expectations in equation (\ref{partial_mu_t3}), we have
\begin{align*}
	E\left[\partial_{t_{3}}\mu_{d,d^{\prime},a}\left(\mathbf{t}_{0}\right)\left[g_{3a}\left(d,M,X\right)-g_{3a}^{0}\left(d,M,X\right)\right]\right] & =E\left[E\left[\frac{1\left\{ D=d^{\prime}\right\} }{1-g_{1d}^{0}\left(X\right)}-\frac{1\left\{ D=d\right\} \left(1-g_{2d}^{0}\left(M,X\right)\right)}{\left(1-g_{1d}^{0}\left(X\right)\right)g_{2d}^{0}\left(M,X\right)}|M,X\right]\right.\\
	& \left.\times\left[g_{3a}\left(d,M,X\right)-g_{3a}^{0}\left(d,M,X\right)\right]\right]\\
	& =0,
\end{align*}
because 
\begin{align*}
	E\left[\frac{1\left\{ D=d^{\prime}\right\} }{1-g_{1d}^{0}\left(X\right)}-\frac{1\left\{ D=d\right\} \left(1-g_{2d}^{0}\left(M,X\right)\right)}{\left(1-g_{1d}^{0}\left(X\right)\right)g_{2d}^{0}\left(M,X\right)}|M,X\right] & =0.
\end{align*}
When taking expectations in equation (\ref{partial_mu_t4}), we have 
\begin{align*}
	E\left[\partial_{t_{4}}\mu_{d,d^{\prime},a}\left(\mathbf{t}_{0}\right)\left[g_{4ad}\left(d^{\prime},X\right)-g_{4ad}^{0}\left(d^{\prime},X\right)\right]\right] & =E\left[E\left[\left(1-\frac{1\left\{ D=d^{\prime}\right\} }{1-g_{1d}^{0}\left(X\right)}\right)|X\right]\left[g_{4ad}\left(d^{\prime},X\right)-g_{4ad}^{0}\left(d^{\prime},X\right)\right]\right]\\
	& =0,
\end{align*}
since 
\[
E\left[\left(1-\frac{1\left\{ D=d^{\prime}\right\} }{1-g_{1d}^{0}\left(X\right)}\right)|X\right]=1-\frac{1-g_{1d}^{0}\left(X\right)}{1-g_{1d}^{0}\left(X\right)}=0.
\]

\subsubsection*{The case when $d=d^{\prime}$}

Recall that 
\begin{align*}
	\psi_{d,d,a}\left(W_{a},v_{a}^{0}\right) & =\frac{1\left\{ D=d\right\} }{g_{1d}^{0}\left(X\right)}\left[1\left\{ Y\leq a\right\} -g_{4ad}^{0}\left(d,X\right)\right]+g_{4ad}^{0}\left(d,X\right).
\end{align*}
Now $\mu_{d,d^{\prime},a}\left(\mathbf{t}\right)$ is given by 
\[
\mu_{d,d^{\prime},a}\left(\mathbf{t}\right)=\frac{1\left\{ D=d\right\} }{t_{1}}\left(1\left\{ Y\leq a\right\} -t_{4}\right)+t_{4}.
\]
The partial derivatives with respect to $\left(t_{1},t_{4}\right)$
are given by 
\begin{align}
	\partial_{t_{1}}\mu_{d,d^{\prime},a}\left(\mathbf{t}\right) & =-\frac{1\left\{ D=d\right\} }{t_{1}^{2}}\left(1\left\{ Y\leq a\right\} -t_{4}\right)\label{partial_mu1_t1},\\
	\partial_{t_{4}}\mu_{d,d^{\prime},a}\left(\mathbf{t}\right) & =1-\frac{1\left\{ D=d\right\} }{t_{1}}.\label{partial_mu1_t4}
\end{align}
Replacing $\mathbf{t}$ with $\mathbf{t}^{0}$ and taking expectation in equation (\ref{partial_mu1_t1}), we have 
\begin{align*}
	E\left[\partial_{t_{1}}\mu_{d,d^{\prime},a}\left(\mathbf{t}_{0}\right)\left[g_{1d}\left(X\right)-g_{1d}^{0}\left(X\right)\right]\right] & =E\left[\frac{1}{\left(g_{1d}^{0}\left(X\right)\right)^{2}}\times\left[E\left[1\left\{ Y\leq a\right\} 1\left\{ D=d\right\} |X\right]\right.\right.\\
	& \left.\left.-g_{1d}^{0}\left(X\right)g_{4ad}^{0}\left(d,X\right)\right]\left[g_{1d}\left(X\right)-g_{1d}^{0}\left(X\right)\right]\right]\\
	& =0,
\end{align*}
since the term 
\begin{align*}
	g_{1d}^{0}\left(X\right)g_{4ad}^{0}\left(d,X\right) & =P_{Y,D,M,X}^{0}\left(Y\leq a,d|X\right)=E\left[1\left\{ D=d\right\} Y_{a}|X\right].
\end{align*}
Taking expectations in equation (\ref{partial_mu1_t4}), we have 
\begin{align*}
	E\left[\partial_{t_{4}}\mu_{d,d^{\prime},a}\left(\mathbf{t}_{0}\right)\left[g_{4ad}\left(d,X\right)-g_{4ad}^{0}\left(d,X\right)\right]\right] & =E\left[E\left[1-\frac{1\left\{ D=d\right\} }{g_{1d}^{0}\left(X\right)}|X\right]\left[g_{4ad}\left(d^{\prime},X\right)-g_{4ad}^{0}\left(d^{\prime},X\right)\right]\right]\\
	& =0,
\end{align*}
since 
\[
E\left[1-\frac{1\left\{ D=d\right\} }{g_{1d}^{0}\left(X\right)}|X\right]=1-\frac{g_{1d}^{0}\left(X\right)}{g_{1d}^{0}\left(X\right)}=0.
\]

Combining all previous results, it follows that 
\[
\left\Vert \partial_{v}E\left[\psi_{d,d^{\prime},a}\left(W_{a},v_{a}^{0}\right)\left[v-v_{a}^{0}\right]\right]\right\Vert =0
\]
holds for each $\left(d,d^{\prime}\right)\in\left\{ 0,1\right\} ^{2}$
and all $a\in\mathcal{A}$. Therefore, 
\[
\left\Vert \partial_{v}E\left[\boldsymbol{\psi}_{a}\left(W_{a},v_{a}^{0}\right)\left[v-v_{a}^{0}\right]\right]\right\Vert =0
\]
holds all $a\in\mathcal{A}$.


\subsubsection*{Verifying Assumption A.1.4a $\left(r_{n}\protect\leq\delta_{n}n^{-\frac{1}{4}}\right)$}

\subsubsection*{The case when $d\protect\neq d^{\prime}$}

We have 
\begin{align*}
	\psi_{d,d^{\prime},a}\left(W_{a},v\right)-\psi_{d,d^{\prime},a}\left(W_{a},v_{a}^{0}\right) & =\left\{ \frac{1\left\{ D=d\right\} \left[1-g_{2d}\left(M,X\right)\right]}{\left[1-g_{1d}\left(X\right)\right]g_{2d}\left(M,X\right)}-\frac{1\left\{ D=d\right\} \left[1-g_{2d}^{0}\left(M,X\right)\right]}{\left[1-g_{1d}^{0}\left(X\right)\right]g_{2d}^{0}\left(M,X\right)}\right\} 1\left\{ Y\leq a\right\} \\
	& +\left\{ \frac{1\left\{ D=d^{\prime}\right\} }{1-g_{1d}\left(X\right)}-\frac{1\left\{ D=d\right\} \left[1-g_{2d}\left(M,X\right)\right]}{\left[1-g_{1d}\left(X\right)\right]g_{2d}\left(M,X\right)}\right\} \times g_{3a}\left(d,M,X\right)\\
	& -\left\{ \frac{1\left\{ D=d^{\prime}\right\} }{1-g_{1d}^{0}\left(X\right)}-\frac{1\left\{ D=d\right\} \left[1-g_{2d}^{0}\left(M,X\right)\right]}{\left[1-g_{1d}^{0}\left(X\right)\right]g_{2d}^{0}\left(M,X\right)}\right\} \times g_{3a}^{0}\left(d,M,X\right)\\
	& +\left[1-\frac{1\left\{ D=d^{\prime}\right\} }{1-g_{1d}\left(X\right)}\right]g_{4ad}\left(d^{\prime},X\right)-\left[1-\frac{1\left\{ D=d^{\prime}\right\} }{1-g_{1d}^{0}\left(X\right)}\right]g_{4ad}^{0}\left(d^{\prime},X\right).
\end{align*}
To ease the notation, we express these nuisance 	parameters without their arguments in the following proof. Using the Minkowski inequality
yields 
\begin{align*}
	\left\Vert \psi_{d,d^{\prime},a}\left(W_{a},v\right)-\psi_{d,d^{\prime},a}\left(W_{a},v_{a}^{0}\right)\right\Vert _{P,2} & \leq\Pi_{1}+\Pi_{2}+\Pi_{3},
\end{align*}
where 
\begin{align*}
	\Pi_{1} & =\left\Vert \left[\frac{1\left\{ D=d\right\} \left(1-g_{2d}\right)}{\left(1-g_{1d}\right)g_{2d}}-\frac{1\left\{ D=d\right\} \left(1-g_{2d}^{0}\right)}{\left(1-g_{1d}^{0}\right)g_{2d}^{0}}\right]1\left\{ Y\leq a\right\} \right\Vert _{P,2},\\
	\Pi_{2} & =\left\Vert \left[\frac{1\left\{ D=d^{\prime}\right\} }{1-g_{1d}}-\frac{1\left\{ D=d\right\} \left(1-g_{2d}\right)}{\left(1-g_{1d}\right)g_{2d}}\right]\times g_{3a}-\left[\frac{1\left\{ D=d^{\prime}\right\} }{1-g_{1d}^{0}}-\frac{1\left\{ D=d\right\} \left(1-g_{2d}^{0}\right)}{\left(1-g_{1d}^{0}\right)g_{2d}^{0}}\right]\times g_{3a}^{0}\right\Vert _{P,2},\\
	\Pi_{3} & =\left\Vert \left(1-\frac{1\left\{ D=d^{\prime}\right\} }{1-g_{1d}}\right)g_{4ad}-\left(1-\frac{1\left\{ D=d^{\prime}\right\} }{1-g_{1d}^{0}}\right)g_{4ad}^{0}\right\Vert _{P,2}.
\end{align*}
In the following, Assumption 2.4 and the boundedness conditions (\ref{bound_1}),
(\ref{bound_2}), (\ref{bound_3}) and (\ref{bound_4}) are applied
to derive the relevant upper bounds. For the term $\Pi_{1}$, with
 probability $P$ at least $1-\Delta_{n}$, we have 
\begin{align*}
	\Pi_{1} & \leq\left\Vert \frac{g_{2d}^{0}-g_{2d}}{\left(1-g_{1d}\right)g_{2d}}\right\Vert _{P,2}+\left\Vert \frac{\left(1-g_{2d}^{0}\right)}{\left(1-g_{1d}\right)g_{2d}}-\frac{\left(1-g_{2d}^{0}\right)}{\left(1-g_{1d}^{0}\right)g_{2d}^{0}}\right\Vert _{P,2}\\
	& \leq\frac{1}{\varepsilon_{1}\varepsilon_{2}}\left\Vert g_{2}^{0}-g_{2}\right\Vert _{P,2}+\frac{1}{\varepsilon_{1}\varepsilon_{2}}\left\Vert g_{2}-g_{2}^{0}\right\Vert _{P,2}+\frac{1}{\varepsilon_{1}^{2}}\left\Vert g_{1}-g_{1}^{0}\right\Vert _{P,2}\\
	& \leq\left(\frac{2}{\varepsilon_{1}\varepsilon_{2}}+\frac{1}{\varepsilon_{1}^{2}}\right)\delta_{n}n^{-\frac{1}{4}}\lesssim\delta_{n}n^{-\frac{1}{4}}
\end{align*},
by making use of the fact that 
\begin{align*}
	\left|\frac{1}{\left(1-g_{1d}\right)g_{2d}}-\frac{1}{\left(1-g_{1d}^{0}\right)g_{2d}^{0}}\right| & =\left|\frac{g_{2d}^{0}-g_{2d}}{\left(1-g_{1d}\right)g_{2d}g_{2d}^{0}}+\frac{g_{1d}-g_{1d}^{0}}{\left(1-g_{1d}\right)\left(1-g_{1d}^{0}\right)g_{2d}^{0}}\right|\\
	& \leq\frac{1}{\varepsilon_{1}\varepsilon_{2}^{2}}\left|g_{2d}-g_{2d}^{0}\right|+\frac{1}{\varepsilon_{1}^{2}\varepsilon_{2}}\left|g_{1d}-g_{1d}^{0}\right|
\end{align*}
and Assumption $\left\Vert v-v_{a}^{0}\right\Vert _{P,2}\lesssim\delta_{n}n^{-\frac{1}{4}}$.
For the term $\Pi_{2}$, it is known that with probability $P$ at
least $1-\Delta_{n}$, for $a\in\mathcal{A}$, 
\begin{align*}
	\Pi_{2} & \leq\left\Vert \frac{\left(2g_{2d}-1\right)}{\left(1-g_{1d}\right)g_{2d}}\times\left(g_{3d}-g_{3d}^{0}\right)\right\Vert _{P,2}+\left\Vert \left[\frac{2g_{2d}-1}{\left(1-g_{1d}\right)g_{2d}}-\frac{2g_{2d}^{0}-1}{\left(1-g_{1d}^{0}\right)g_{2d}^{0}}\right]\times g_{3a}^{0}\right\Vert _{P,2}\\
	& \leq\frac{1-2\varepsilon_{2}}{\varepsilon_{1}\varepsilon_{2}}\left\Vert g_{3a}-g_{3a}^{0}\right\Vert _{P,2}+\frac{1-2\varepsilon_{2}}{\varepsilon_{1}\varepsilon_{2}^{2}}\left\Vert g_{2d}-g_{2d}^{0}\right\Vert _{P,2}+\frac{1-2\varepsilon_{2}}{\varepsilon_{1}^{2}\varepsilon_{2}}\left\Vert g_{1d}-g_{1d}^{0}\right\Vert _{P,2}+\frac{2}{\varepsilon_{1}\varepsilon_{2}}\left\Vert g_{2d}-g_{2d}^{0}\right\Vert _{P,2}\\
	& \leq\left(\frac{1-2\varepsilon_{2}}{\varepsilon_{1}\varepsilon_{2}}\varepsilon_{2}^{-\frac{1}{q}}+\frac{1-2\varepsilon_{2}}{\varepsilon_{1}\varepsilon_{2}^{2}}+\frac{1-2\varepsilon_{2}}{\varepsilon_{1}^{2}\varepsilon_{2}}+\frac{2}{\varepsilon_{1}\varepsilon_{2}}\right)\delta_{n}n^{-\frac{1}{4}}\lesssim\delta_{n}n^{-\frac{1}{4}},
\end{align*}
by using  
\[
\frac{2g_{2d}-1}{\left(1-g_{1d}\right)g_{2d}}\leq\frac{1-2\varepsilon_{2}}{\varepsilon_{1}\varepsilon_{2}},
\]
\begin{align*}
	\left|\frac{2g_{2d}-1}{\left(1-g_{1d}\right)g_{2d}}-\frac{2g_{2d}^{0}-1}{\left(1-g_{1d}^{0}\right)g_{2d}^{0}}\right| & =\left|\left[\frac{g_{2d}^{0}-g_{2d}}{\left(1-g_{1d}\right)g_{2d}g_{2d}^{0}}+\frac{g_{1d}-g_{1d}^{0}}{\left(1-g_{1d}\right)\left(1-g_{1d}^{0}\right)g_{2d}^{0}}\right]\times\left(2g_{2d}-1\right)+\frac{2\left(g_{2d}-g_{2d}^{0}\right)}{\left(1-g_{1d}^{0}\right)g_{2d}^{0}}\right|\\
	& \leq\frac{1-2\varepsilon_{2}}{\varepsilon_{1}\varepsilon_{2}^{2}}\left|g_{2d}-g_{2d}^{0}\right|+\frac{1-2\varepsilon_{2}}{\varepsilon_{1}^{2}\varepsilon_{2}}\left|g_{1d}-g_{1d}^{0}\right|+\frac{2}{\varepsilon_{1}\varepsilon_{2}}\left|g_{2d}-g_{2d}^{0}\right|,
\end{align*}
and Assumption $\left\Vert v-v_{a}^{0}\right\Vert _{P,2}\lesssim\delta_{n}n^{-\frac{1}{4}}$
as well as condition (\ref{bound_3}). For the term $\Pi_{3}$, we can show
that with probability $P$ at least $1-\Delta_{n}$ and for $a\in\mathcal{A}$,
\begin{align*}
	\Pi_{3} & \leq\left\Vert \frac{g_{1d}^{0}-g_{1d}}{\left(1-g_{1d}^{0}\right)\left(1-g_{1d}\right)}g_{4ad}^{0}\right\Vert _{P,2}+\left\Vert \frac{g_{4ad}^{0}}{1-g_{1d}}-\frac{g_{4ad}}{1-g_{1d}}\right\Vert _{P,2}+\left\Vert g_{4ad}-g_{4ad}^{0}\right\Vert _{P,2}\\
	& \leq\frac{1}{\varepsilon_{1}^{2}}\left\Vert g_{1d}^{0}-g_{1d}\right\Vert _{P,2}+\left(1+\frac{1}{\varepsilon_{1}}\right)\left\Vert g_{4ad}^{0}-g_{4ad}\right\Vert _{P,2}\\
	& \leq\left[\frac{1}{\varepsilon_{1}^{2}}+\varepsilon_{2}^{-\frac{1}{q}}\left(1+\frac{1}{\varepsilon_{1}}\right)\right]\delta_{n}n^{-\frac{1}{4}}\lesssim\delta_{n}n^{-\frac{1}{4}},
\end{align*}
by using $0\leq g_{4ad}^{0}\leq1$, Assumption
$\left\Vert v-v_{a}^{0}\right\Vert _{P,2}\lesssim\delta_{n}n^{-\frac{1}{4}}$
and condition (\ref{bound_4}). 

\subsubsection*{The case when $d=d^{\prime}$}

We have 
\begin{align*}
	\psi_{d,d,a}\left(W_{a},v\right)-\psi_{d,d,a}\left(W_{a},v_{a}^{0}\right) & =\left\{ \frac{1\left\{ D=d\right\} }{g_{1d}\left(X\right)}-\frac{1\left\{ D=d\right\} }{g_{1d}^{0}\left(X\right)}\right\} 1\left\{ Y\leq a\right\} \\
	& +\left[1-\frac{1\left\{ D=d^{\prime}\right\} }{g_{1d}}\right]g_{4ad}-\left[1-\frac{1\left\{ D=d\right\} }{g_{1d}^{0}}\right]g_{4ad}^{0},
\end{align*}
where $g_{4ad}=g_{4ad}\left(d,X\right)$ (not $g_{4ad}\left(d^{\prime},X\right)$).
Using the triangle inequality yields 
\begin{align*}
	\left\Vert \psi_{d,d,a}\left(W_{a},v\right)-\psi_{d,d,a}\left(W_{a},v_{a}^{0}\right)\right\Vert _{P,2} & \leq\Pi_{4}+\Pi_{5},
\end{align*}
where 
\begin{align*}
	\Pi_{4} & =\left\Vert \left[\frac{1\left\{ D=d\right\} }{g_{1d}\left(X\right)}-\frac{1\left\{ D=d\right\} }{g_{1d}^{0}\left(X\right)}\right]1\left\{ Y\leq a\right\} \right\Vert _{P,2},\\
	\Pi_{5} & =\left\Vert \left[1-\frac{1\left\{ D=d^{\prime}\right\} }{g_{1d}}\right]g_{4ad}-\left[1-\frac{1\left\{ D=d\right\} }{g_{1d}^{0}}\right]g_{4ad}^{0}\right\Vert _{P,2}.
\end{align*}
Following previous arguments, it can be shown that $\Pi_{4}\leq\varepsilon_{2}^{-2}\delta_{n}n^{-\frac{1}{4}}\lesssim\delta_{n}n^{-\frac{1}{4}}.$
Similar as for $\Pi_{3}$, we have for $\Pi_{5}$,  
\[
\Pi_{5}\leq\frac{1}{\varepsilon_{1}^{2}}\left\Vert g_{1d}^{0}-g_{1d}\right\Vert _{P,2}+\left(1+\frac{1}{\varepsilon_{1}}\right)\left\Vert g_{4ad}^{0}-g_{4ad}\right\Vert _{P,2}\leq\left[\frac{1}{\varepsilon_{1}^{2}}+\varepsilon_{2}^{-\frac{1}{q}}\left(1+\frac{1}{\varepsilon_{1}}\right)\right]\delta_{n}n^{-\frac{1}{4}}\lesssim\delta_{n}n^{-\frac{1}{4}}.
\]

Combining the previous results, we obtain that with probability $P$ 
at least $1-\Delta_{n}$ and for all $a\in\mathcal{A}$, 
\[
\left\Vert \psi_{d,d^{\prime},a}\left(W_{a},v\right)-\psi_{d,d^{\prime},a}\left(W_{a},v_{a}^{0}\right)\right\Vert _{P,2}\lesssim\delta_{n}n^{-\frac{1}{4}}
\]
holds for each $\left(d,d^{\prime}\right)\in\left\{ 0,1\right\} ^{2}$. Therefore, with probability $P$ at least $1-\Delta_{n}$ and
all $a\in\mathcal{A}$ and $v\in\mathcal{G}_{an}$, 
\[
\left\Vert \boldsymbol{\psi}_{a}\left(W_{a},v\right)-\boldsymbol{\psi}_{a}\left(W_{a},v_{a}^{0}\right)\right\Vert _{P,2}\lesssim\delta_{n}n^{-\frac{1}{4}}.
\]


\subsubsection*{Verifying Assumption A.1.4b $\left(\lambda_{n}^{\prime}\protect\leq\delta_{n}n^{-\frac{1}{2}}\right)$}

\subsubsection*{The case when $d\protect\neq d^{\prime}$}

We may write $\psi_{_{d,d^{\prime},a}}\left(W_{a};r\left(v-v_{a}^{0}\right)+v_{a}^{0}\right)$
as 
\begin{align*}
	\psi_{_{d,d^{\prime},a}}\left(W_{a};r\left(v-v_{a}^{0}\right)+v_{a}^{0}\right) & =\frac{1\left\{ D=d\right\} \left\{ 1-\left[r\left(g_{2d}-g_{2d}^{0}\right)+g_{2d}^{0}\right]\right\} }{\left\{ 1-\left[r\left(g_{1d}-g_{1d}^{0}\right)+g_{1d}^{0}\right]\right\} \left\{ r\left(g_{2d}-g_{2d}^{0}\right)+g_{2d}^{0}\right\} }\\
	& \times\left\{ Y_{a}-\left[r\left(g_{3d}-g_{3d}^{0}\right)+g_{3d}^{0}\right]\right\} \\
	& +\frac{1\left\{ D=d^{\prime}\right\} }{1-\left[r\left(g_{1d}-g_{1d}^{0}\right)+g_{1d}^{0}\right]}\\
	& \times\left\{ r\left[\left(g_{3d}-g_{4ad}\right)-\left(g_{3a}^{0}-g_{4ad}^{0}\right)\right]+\left(g_{3a}^{0}-g_{4ad}^{0}\right)\right\} \\
	& +\left[r\left(g_{4ad}-g_{4ad}^{0}\right)+g_{4ad}^{0}\right].
\end{align*}
Let 
\begin{align*}
	A_{1}\left(r\right) & =1-\left[r\left(g_{2d}-g_{2d}^{0}\right)+g_{2d}^{0}\right],\text{ }A_{2}\left(r\right)=1-\left[r\left(g_{1d}-g_{1d}^{0}\right)+g_{1d}^{0}\right],\\
	A_{3}\left(r\right) & =r\left(g_{2d}-g_{2d}^{0}\right)+g_{2d}^{0},\text{ }A_{4}\left(r\right)=Y_{a}-\left[r\left(g_{3a}-g_{3a}^{0}\right)+g_{3a}^{0}\right],\\
	A_{5}\left(r\right) & =r\left[\left(g_{3a}-g_{4ad}\right)-\left(g_{3a}^{0}-g_{4ad}^{0}\right)\right]+\left(g_{3a}^{0}-g_{4ad}^{0}\right).
\end{align*}
Notice that the functions $A_{i}\left(r\right)$, $i=1,\ldots,5$
are also functions of the random variables $\left(Y_{a},M,X\right)$.
For this reason, we may rewrite $\psi_{_{d,d^{\prime},a}}\left(W_{a};r\left(v-v_{a}^{0}\right)+v_{a}^{0}\right)$
 as 
\begin{align*}
	\psi_{_{d,d^{\prime},a}}\left(W_{a};r\left(v-v_{a}^{0}\right)+v_{a}^{0}\right) & =\frac{1\left\{ D=d\right\} A_{1}\left(r\right)}{A_{2}\left(r\right)A_{3}\left(r\right)}\times A_{4}\left(r\right)\\
	& +\frac{1\left\{ D=d^{\prime}\right\} }{A_{2}\left(r\right)}\times A_{5}\left(r\right)+\left[r\left(g_{4ad}-g_{4ad}^{0}\right)+g_{4ad}^{0}\right].
\end{align*}
After some calculations, we obtain 
\begin{align}
	\left.\frac{1}{2}E\left[\partial_{r}^{2}\left(\psi_{_{d,d^{\prime},a}}\left(W_{a};r\left(v-v_{a}^{0}\right)+v_{a}^{0}\right)\right)\right]\right|_{r=\bar{r}} & =-E\left[\frac{1\left\{ D=d\right\} A_{4}\left(\bar{r}\right)}{\left(A_{2}\left(\bar{r}\right)\right)^{2}A_{3}\left(\bar{r}\right)}\left(g_{1d}-g_{1d}^{0}\right)^{2}\right]\nonumber \\
	& +E\left[\frac{1\left\{ D=d\right\} A_{4}\left(\bar{r}\right)}{A_{2}\left(\bar{r}\right)\left(A_{3}\left(\bar{r}\right)\right)^{2}}\left(g_{2d}-g_{2d}^{0}\right)\left(g_{1d}-g_{1d}^{0}\right)\right]\nonumber \\
	& +E\left[\frac{1\left\{ D=d\right\} }{A_{2}\left(\bar{r}\right)A_{3}\left(\bar{r}\right)}\left(g_{2d}-g_{2d}^{0}\right)\left(g_{3a}-g_{3a}^{0}\right)\right]\nonumber \\
	& +E\left[\frac{1\left\{ D=d\right\} A_{1}\left(\bar{r}\right)A_{4}\left(\bar{r}\right)}{\left(A_{2}\left(\bar{r}\right)\right)^{3}A_{3}\left(\bar{r}\right)}\left(g_{1d}-g_{1d}^{0}\right)^{2}\right]\nonumber \\
	& -E\left[\frac{1\left\{ D=d\right\} A_{1}\left(\bar{r}\right)A_{4}\left(\bar{r}\right)}{\left(A_{2}\left(\bar{r}\right)\right)^{2}\left(A_{3}\left(\bar{r}\right)\right)^{2}}\left(g_{1d}-g_{1d}^{0}\right)\left(g_{2d}-g_{2d}^{0}\right)\right]\label{sec_der_Geteau}\\
	& +E\left[\frac{1\left\{ D=d\right\} A_{1}\left(\bar{r}\right)}{\left(A_{2}\left(\bar{r}\right)\right)^{2}A_{3}\left(\bar{r}\right)}\left(g_{1d}-g_{1d}^{0}\right)\left(g_{3a}-g_{3a}^{0}\right)\right]\nonumber \\
	& +E\left[\frac{1\left\{ D=d\right\} A_{1}\left(\bar{r}\right)A_{4}\left(\bar{r}\right)}{A_{2}\left(\bar{r}\right)\left(A_{3}\left(\bar{r}\right)\right)^{3}}\left(g_{2d}-g_{2d}^{0}\right)^{2}\right]\nonumber \\
	& +E\left[\frac{1\left\{ D=d\right\} A_{1}\left(\bar{r}\right)}{A_{2}\left(\bar{r}\right)\left(A_{3}\left(\bar{r}\right)\right)^{2}}\left(g_{2d}-g_{2d}^{0}\right)\left(g_{3a}-g_{3a}^{0}\right)\right]\nonumber \\
	& +E\left[\frac{1\left\{ D=d^{\prime}\right\} A_{5}\left(\bar{r}\right)}{\left(A_{2}\left(\bar{r}\right)\right)^{3}}\left(g_{1d}-g_{1d}^{0}\right)^{2}\right]\nonumber \\
	& +E\left[\frac{1\left\{ D=d^{\prime}\right\} A_{5}\left(\bar{r}\right)}{\left(A_{2}\left(\bar{r}\right)\right)^{2}}\left(g_{1d}-g_{1d}^{0}\right)\left[g_{3a}-g_{4ad}-\left(g_{3a}^{0}-g_{4ad}^{0}\right)\right]\right].\nonumber 
\end{align}
To bound the expectation of the second order derivative above, we can
use the properties of $A_{i}\left(r\right)$. Using Assumption 2.2 and
acknowledging that $\bar{r}\in\left(0,1\right)$, we have that
\begin{align*}
	\varepsilon_{2}<A_{1}\left(\bar{r}\right) & =1-\left[r\left(g_{2d}-g_{2d}^{0}\right)+g_{2d}^{0}\right]<1-\varepsilon_{2},\\
	\varepsilon_{1}<A_{2}\left(\bar{r}\right) & =1-\left[r\left(g_{1d}-g_{1d}^{0}\right)+g_{1d}^{0}\right]<1-\varepsilon_{1},\\
	\varepsilon_{2}<A_{3}\left(\bar{r}\right) & =r\left(g_{2d}-g_{2d}^{0}\right)+g_{2d}^{0}<1-\varepsilon_{1}
\end{align*}
hold with probability one. Also $\left|A_{4}\left(\bar{r}\right)\right|$
and $\left|A_{5}\left(\bar{r}\right)\right|$ are bounded by constants, since $\left|Y_{a}\right|$, $g_{3a}$, $g_{3a}^{0}$,
$g_{4ad}$ and $g_{4ad}^{0}$ are all bounded and $\bar{r}\in\left(0,1\right)$.
Based on these results and Assumption 2.4, it can be shown that the absolute
values of those terms on the right hand side of equation (\ref{sec_der_Geteau})
that involve interaction terms are all bounded by $\delta_{n}n^{-1/2}$.
We now consider the terms on the right hand side of equation (\ref{sec_der_Geteau})
that involve quadratic terms (the first, fourth, seventh and
ninth terms). By the assumption $\left\Vert v-v_{a}^{0}\right\Vert _{P,2}\leq\delta_{n}n^{-1/4}$,
we have $\left\Vert g_{1d}-g_{1d}^{0}\right\Vert _{P,2}\lesssim\delta_{n}n^{-1/4}$
and $\left\Vert g_{2d}-g_{2d}^{0}\right\Vert _{P,2}\lesssim\delta_{n}n^{-1/4}$.
Concerning the first term, 
\begin{align}
	\left|-E\left[\frac{1\left\{ D=d\right\} A_{4}\left(\bar{r}\right)}{\left(A_{2}\left(\bar{r}\right)\right)^{2}A_{3}\left(\bar{r}\right)}\left(g_{1d}-g_{1d}^{0}\right)^{2}\right]\right| & \leq\left|-E\left[\frac{1\left\{ D=d\right\} \left(Y_{a}-g_{3a}^{0}\right)}{\left(A_{2}\left(\bar{r}\right)\right)^{2}A_{3}\left(\bar{r}\right)}\left(g_{1d}-g_{1d}^{0}\right)^{2}\right]\right|\label{sec_der_Gateau1}\\
	& +\left|E\left[\frac{1\left\{ D=d\right\} \bar{r}\left(g_{3a}-g_{3a}^{0}\right)}{\left(A_{2}\left(\bar{r}\right)\right)^{2}A_{3}\left(\bar{r}\right)}\left(g_{1d}-g_{1d}^{0}\right)^{2}\right]\right|\nonumber \\
	& \leq\frac{\bar{r}\left(1-2\varepsilon_{1}\right)}{\varepsilon_{1}^{2}\varepsilon_{2}}\varepsilon_{2}^{-\frac{1}{q}}\delta_{n}n^{-\frac{1}{2}}\lesssim\delta_{n}n^{-\frac{1}{2}},\nonumber 
\end{align}
since
\begin{align*}
	E\left[\frac{1\left\{ D=d\right\} \left(Y_{a}-g_{3a}^{0}\right)}{\left(A_{2}\left(\bar{r}\right)\right)^{2}A_{3}\left(\bar{r}\right)}\left(g_{1d}-g_{1d}^{0}\right)^{2}\right] & =E\left[\frac{\left(g_{1}-g_{1}^{0}\right)^{2}}{\left(A_{2}\left(\bar{r}\right)\right)^{2}A_{3}\left(\bar{r}\right)}\times\left(P_{Y,D|M,X}^{0}\left(Y\leq a,D=d|M,X\right)-g_{2d}^{0}g_{3a}^{0}\right)\right]\\
	& =0,
\end{align*}
and
\begin{align*}
	\left|E\left[\frac{1\left\{ D=d\right\} \bar{r}\left(g_{3a}-g_{3a}^{0}\right)}{\left(A_{2}\left(\bar{r}\right)\right)^{2}A_{3}\left(\bar{r}\right)}\left(g_{1d}-g_{1d}^{0}\right)^{2}\right]\right| & \leq E\left[\left|\frac{1\left\{ D=d\right\} }{\left(A_{2}\left(\bar{r}\right)\right)^{2}A_{3}\left(\bar{r}\right)}\right|\bar{r}\left|g_{3a}-g_{3a}^{0}\right|\left(g_{1d}-g_{1d}^{0}\right)^{2}\right]\\
	& \leq\frac{\bar{r}\left(1-2\varepsilon_{1}\right)}{\varepsilon_{1}^{2}\varepsilon_{2}}\left\Vert g_{3a}-g_{3a}^{0}\right\Vert _{P,2}\left\Vert g_{1d}-g_{1d}^{0}\right\Vert _{P,2}\\
	& \leq\frac{\bar{r}\left(1-2\varepsilon_{1}\right)}{\varepsilon_{1}^{2}\varepsilon_{2}}\varepsilon_{2}^{-\frac{1}{q}}\delta_{n}n^{-\frac{1}{2}}\lesssim\delta_{n}n^{-\frac{1}{2}},
\end{align*}
by assuming that $\left\Vert v-v_{a}^{0}\right\Vert _{P,2}\leq\delta_{n}n^{-1/2}$
and $\left|g_{1d}-g_{1d}^{0}\right|<1-2\varepsilon_{1}$ with probability
one. Applying a similar argument to the fourth term,
\begin{align*}
	\left|E\left[\frac{1\left\{ D=d\right\} A_{1}\left(\bar{r}\right)A_{4}\left(\bar{r}\right)}{\left(A_{2}\left(\bar{r}\right)\right)^{3}A_{3}\left(\bar{r}\right)}\left(g_{1d}-g_{1d}^{0}\right)^{2}\right]\right| & \leq\left|E\left[\frac{A_{1}\left(\bar{r}\right)\left(g_{1d}-g_{1d}^{0}\right)^{2}}{\left(A_{2}\left(\bar{r}\right)\right)^{3}A_{3}\left(\bar{r}\right)}\times E\left[1\left\{ D=d\right\} \left(Y_{a}-g_{3a}^{0}\right)|M,X\right]\right]\right|\\
	& +E\left[\left|\frac{A_{1}\left(\bar{r}\right)}{\left(A_{2}\left(\bar{r}\right)\right)^{3}A_{3}\left(\bar{r}\right)}\right|\bar{r}\left|g_{3a}-g_{3a}^{0}\right|\left(g_{1d}-g_{1d}^{0}\right)^{2}\right]\\
	& \leq0+\frac{\bar{r}\left(1-\varepsilon_{2}\right)\left(1-2\varepsilon_{1}\right)}{\varepsilon_{1}^{3}\varepsilon_{2}}\varepsilon_{2}^{-\frac{1}{q}}\delta_{n}n^{-\frac{1}{2}}\lesssim\delta_{n}n^{-\frac{1}{2}}.
\end{align*}
For the ninth term, 
\begin{align*}
	\left|E\left[\frac{1\left\{ D=d^{\prime}\right\} A_{5}\left(\bar{r}\right)}{\left(A_{2}\left(\bar{r}\right)\right)^{3}}\left(g_{1d}-g_{1d}^{0}\right)^{2}\right]\right| & \leq\left|E\left[\frac{\left(g_{1d}-g_{1d}^{0}\right)^{2}}{\left(A_{2}\left(\bar{r}\right)\right)^{3}}E\left[1\left\{ D=d^{\prime}\right\} \left(g_{3a}^{0}-g_{4ad}^{0}\right)|X\right]\right]\right|\\
	& +E\left[\left|\frac{1\left\{ D=d^{\prime}\right\} }{\left(A_{2}\left(\bar{r}\right)\right)^{3}}\right|\bar{r}\left|\left(g_{3a}-g_{3a}^{0}\right)-\left(g_{4ad}-g_{4ad}^{0}\right)\right|\left(g_{1d}-g_{1d}^{0}\right)^{2}\right]\\
	& \leq\frac{\bar{r}\left(1-2\varepsilon_{1}\right)}{\varepsilon_{1}^{3}}\left(\varepsilon_{2}^{-\frac{1}{q}}\delta_{n}n^{-\frac{1}{2}}+\varepsilon_{1}^{-\frac{1}{q}}\delta_{n}n^{-\frac{1}{2}}\right)\lesssim\delta_{n}n^{-\frac{1}{2}},
\end{align*}
since the term 
\begin{align*}
	E\left[1\left\{ D=d^{\prime}\right\} \left(g_{3a}^{0}-g_{4ad}^{0}\right)|X\right] & =\int F_{Y|D,M,X}\left(a|d,m,X\right)f_{M|D|X}\left(m|d^{\prime}X\right)f_{D|X}\left(d^{\prime}|X\right)dm\\
	& -f_{D|X}\left(d^{\prime}|X\right)E\left[F_{Y|D,M,X}\left(a|d,M,X\right)|d^{\prime},X\right]\\
	& =0.
\end{align*}
For the seventh term, 
\begin{align*}
	\left|E\left[\frac{1\left\{ D=d\right\} A_{1}\left(\bar{r}\right)A_{4}\left(\bar{r}\right)}{A_{2}\left(\bar{r}\right)\left(A_{3}\left(\bar{r}\right)\right)^{3}}\left(g_{2d}-g_{2d}^{0}\right)^{2}\right]\right| & \leq0+\frac{\bar{r}\left(1-\varepsilon_{2}\right)\left(1-2\varepsilon_{2}\right)}{\varepsilon_{1}\varepsilon_{2}^{3}}\left\Vert g_{3a}-g_{3a}^{0}\right\Vert _{P,2}\left\Vert g_{1d}-g_{1d}^{0}\right\Vert _{P,2}\\
	& \leq\frac{\bar{r}\left(1-\varepsilon_{2}\right)\left(1-2\varepsilon_{2}\right)}{\varepsilon_{1}\varepsilon_{2}^{3}}\varepsilon_{2}^{-\frac{1}{q}}\delta_{n}n^{-\frac{1}{2}}\lesssim\delta_{n}n^{-\frac{1}{2}},
\end{align*}
by $\left|g_{2d}-g_{2d}^{0}\right|<1-2\varepsilon_{2}$ with probability
one.

\subsubsection*{The case when $d=d^{\prime}$}

Let 
\begin{align*}
	A_{6}\left(r\right) & =r\left(g_{1d}-g_{1d}^{0}\right)+g_{1d}^{0},\\
	A_{7}\left(r\right) & =Y_{a}-\left[r\left(g_{4ad}-g_{4ad}^{0}\right)+g_{4ad}^{0}\right],
\end{align*}
where $g_{4ad}=g_{4ad}\left(d,X\right)$. It holds that $\varepsilon_{1}<A_{6}\left(\bar{r}\right)<1-\varepsilon_{1}$
and $\left|A_{7}\left(\bar{r}\right)\right|$ for $\bar{r}\in\left[0,1\right]$
are bounded, since $\left|Y_{a}\right|$, $g_{4ad}$ and $g_{4ad}^{0}$
are bounded. Then, 
\begin{align*}
	\psi_{_{d,d,a}}\left(W_{a};r\left(v-v_{a}^{0}\right)+v_{a}^{0}\right) & =\frac{1\left\{ D=d\right\} }{r\left(g_{1d}-g_{1d}^{0}\right)+g_{1d}^{0}}\left\{ Y_{a}-\left[r\left(g_{4ad}-g_{4ad}^{0}\right)+g_{4ad}^{0}\right]\right\} +\left[r\left(g_{4ad}-g_{4ad}^{0}\right)+g_{4ad}^{0}\right]\\
	& =\frac{1\left\{ D=d\right\} }{A_{6}\left(\bar{r}\right)}A_{7}\left(\bar{r}\right)+\left[r\left(g_{4ad}-g_{4ad}^{0}\right)+g_{4ad}^{0}\right].
\end{align*}
After some calculations, we obtain
\begin{align*}
	\left.\frac{1}{2}E\left[\partial_{r}^{2}\left(\psi_{_{d,d,a}}\left(W_{a};r\left(v-v_{a}^{0}\right)+v_{a}^{0}\right)\right)\right]\right|_{r=\bar{r}} & =E\left[\frac{1\left\{ D=d\right\} A_{7}\left(\bar{r}\right)}{A_{6}\left(\bar{r}\right)^{3}}\left(g_{1d}-g_{1d}^{0}\right)^{2}\right]\\
	& +E\left[\frac{1\left\{ D=d\right\} }{A_{6}\left(\bar{r}\right)^{2}}\left(g_{1d}-g_{1d}^{0}\right)\left(g_{4ad}-g_{4ad}^{0}\right)\right].
\end{align*}
Considering the first term on the right hand side, we have that
\begin{align*}
	\left|E\left[\frac{1\left\{ D=d\right\} A_{7}\left(\bar{r}\right)}{A_{6}\left(\bar{r}\right)^{3}}\left(g_{1d}-g_{1d}^{0}\right)^{2}\right]\right| & \leq\left|E\left[\frac{1}{A_{6}\left(\bar{r}\right)^{3}}\left(g_{1d}-g_{1d}^{0}\right)^{2}\left(E\left[1\left\{ D=d\right\} Y_{a}|X\right]-g_{1d}^{0}g_{4ad}^{0}\right)\right]\right|\\
	& +E\left[\left|\frac{1\left\{ D=d\right\} \bar{r}}{A_{6}\left(\bar{r}\right)^{3}}\right|\left|g_{1d}-g_{1d}^{0}\right|\left|g_{1d}-g_{1d}^{0}\right|\left|g_{4ad}-g_{4ad}^{0}\right|\right]\\
	& \leq\frac{\bar{r}\left(1-2\varepsilon_{1}\right)}{\varepsilon_{1}^{3}}\varepsilon_{1}^{-\frac{1}{q}}\delta_{n}n^{-\frac{1}{2}}\lesssim\delta_{n}n^{-\frac{1}{2}},
\end{align*}
since $E\left[1\left\{ D=d\right\} Y_{a}|X\right]-g_{1d}^{0}g_{4ad}^{0}=0$
and

\begin{align*}
	\left\Vert g_{1d}-g_{1d}^{0}\right\Vert _{P,2}\left\Vert g_{4ad}-g_{4ad}^{0}\right\Vert _{P,2} & \leq\left\Vert g_{1d}\left(X\right)-g_{1d}^{0}\left(X\right)\right\Vert _{P,2}\left\Vert g_{4ad}\left(D,X\right)-g_{4ad}^{0}\left(D,X\right)\right\Vert _{P,2}\varepsilon_{1}^{-\frac{1}{q}}\\
	& \leq\varepsilon_{1}^{-\frac{1}{q}}\delta_{n}n^{-\frac{1}{2}}\lesssim\delta_{n}n^{-\frac{1}{2}}.
\end{align*}
Concerning the second term on the right hand side, 
\begin{align*}
	\left|E\left[\frac{1\left\{ D=d\right\} }{A_{6}\left(\bar{r}\right)^{2}}\left(g_{1d}-g_{1d}^{0}\right)\left(g_{4ad}-g_{4ad}^{0}\right)\right]\right| & \leq\frac{1}{\varepsilon_{1}^{2}}\left\Vert g_{1d}-g_{1d}^{0}\right\Vert _{P,2}\left\Vert g_{4ad}-g_{4ad}^{0}\right\Vert _{P,2}\\
	& \lesssim\delta_{n}n^{-\frac{1}{2}}.
\end{align*}

Finally, if $v=v_{a}^{0}$, it is trivial to see that $\left.E\left[\partial_{r}^{2}\left(\psi_{_{d,d^{\prime},a}}\left(W_{a};r\left(v-v_{a}^{0}\right)+v_{a}^{0}\right)\right)\right]\right|_{r=\bar{r}}=0$
for each $\left(d,d^{\prime}\right)\in\left\{ 0,1\right\} ^{2}$,
$a\in\mathcal{A}$ and $\bar{r}\in\left(0,1\right)$. Combining the previous results, it follows that with probability $P$ at least
$1-\Delta_{n}$, for $\bar{r}\in\left(0,1\right)$, all $a\in\mathcal{A}$
and $v\in\mathcal{G}_{an}\cup v_{a}^{0}$, we have that 
\[
\left|\left.E\left[\partial_{r}^{2}\left(\psi_{_{d,d^{\prime},a}}\left(W_{a};r\left(v-v_{a}^{0}\right)+v_{a}^{0}\right)\right)\right]\right|_{r=\bar{r}}\right|\lesssim\delta_{n}n^{-\frac{1}{2}},
\]
and this result holds for each $\left(d,d^{\prime}\right)\in\left\{ 0,1\right\} ^{2}$.
Therefore, with probability $P$ at least $1-\Delta_{n}$, 
\[
\left\Vert \left.E\left[\partial_{r}^{2}\left(\boldsymbol{\psi}_{a}\left(W_{a};r\left(v-v_{a}^{0}\right)+v_{a}^{0}\right)\right)\right]\right|_{r=\bar{r}}\right\Vert \lesssim\delta_{n}n^{-\frac{1}{2}}
\]
holds for $\bar{r}\in\left(0,1\right)$, all $a\in\mathcal{A}$ and
$v\in\mathcal{G}_{an}\cup v_{a}^{0}$.


\subsubsection*{Verifying Assumption A.1.5 (Smoothness condition)}

\subsubsection*{The case when $d\protect\neq d^{\prime}$}

We may write $\psi_{d,d^{\prime},a}\left(W_{a},v_{a}^{0}\right)-\psi_{d,d^{\prime},\bar{a}}\left(W_{\bar{a}},v_{\bar{a}}^{0}\right)$
as 
\begin{align*}
	\psi_{d,d^{\prime},a}\left(W_{a},v_{a}^{0}\right)-\psi_{d,d^{\prime},\bar{a}}\left(W_{\bar{a}},v_{\bar{a}}^{0}\right) & =\frac{1\left\{ D=d\right\} \left(1-g_{2d}^{0}\right)}{\left(1-g_{1d}^{0}\right)g_{2d}^{0}}\left(Y_{a}-Y_{\bar{a}}\right)\\
	& +\left[\frac{1\left\{ D=d^{\prime}\right\} }{1-g_{1d}^{0}}-\frac{1\left\{ D=d\right\} \left(1-g_{2d}^{0}\right)}{\left(1-g_{1d}^{0}\right)g_{2d}^{0}}\right]\left(g_{3a}^{0}-g_{3\bar{a}}^{0}\right)\\
	& +\left(1-\frac{1\left\{ D=d^{\prime}\right\} }{1-g_{1}^{0}\left(d,X\right)}\right)\left(g_{4ad}^{0}-g_{4\bar{a}d}^{0}\right).
\end{align*}
Using the Minkowski inequality yields 
\begin{align*}
	\left\Vert \psi_{d,d^{\prime},a}\left(W_{a},v_{a}^{0}\right)-\psi_{d,d^{\prime},\bar{a}}\left(W_{\bar{a}},v_{\bar{a}}^{0}\right)\right\Vert _{P,2} & \leq\Pi_{1}\left(a\right)+\Pi_{2}\left(a\right)+\Pi_{3}\left(a\right),
\end{align*}
where 
\begin{align*}
	\Pi_{1}\left(a\right) & =\left\Vert \frac{1\left\{ D=d\right\} \left(1-g_{2d}^{0}\right)}{\left(1-g_{1d}^{0}\right)g_{2d}^{0}}\left(Y_{a}-Y_{\bar{a}}\right)\right\Vert _{P,2},\\
	\Pi_{2}\left(a\right) & =\left\Vert \left[\frac{1\left\{ D=d^{\prime}\right\} }{1-g_{1d}^{0}}-\frac{1\left\{ D=d\right\} \left(1-g_{2d}^{0}\right)}{\left(1-g_{1d}^{0}\right)g_{2d}^{0}}\right]\left(g_{3a}^{0}-g_{3\bar{a}}^{0}\right)\right\Vert _{P,2},\\
	\Pi_{3}\left(a\right) & =\left\Vert \left(1-\frac{1\left\{ D=d^{\prime}\right\} }{1-g_{1}^{0}\left(d,X\right)}\right)\left(g_{4ad}^{0}-g_{4\bar{a}d}^{0}\right)\right\Vert _{P,2}.
\end{align*}
For $\Pi_{1}\left(a\right)$, we note that 
\[
\left|\frac{1\left\{ D=d\right\} \left(1-g_{2d}^{0}\right)}{\left(1-g_{1d}^{0}\right)g_{2d}^{0}}\right|\leq\left|\frac{1-g_{2d}^{0}}{\left(1-g_{1d}^{0}\right)g_{2d}^{0}}\right|\leq\frac{1-\varepsilon_{2}}{\varepsilon_{1}\varepsilon_{2}}
\]
with probability one, which implies that  
\begin{align*}
	\Pi_{1}\left(a\right) & \leq\frac{1-\varepsilon_{2}}{\varepsilon_{1}\varepsilon_{2}}\left\Vert Y_{a}-Y_{\bar{a}}\right\Vert _{P,2}.
\end{align*}
For $\Pi_{2}\left(a\right)$, we note that 
\[
\left|\frac{1\left\{ D=d^{\prime}\right\} }{1-g_{1d}^{0}}-\frac{1\left\{ D=d\right\} \left(1-g_{2d}^{0}\right)}{\left(1-g_{1d}^{0}\right)g_{2d}^{0}}\right|\leq\frac{1}{\varepsilon_{1}}+\frac{1-\varepsilon_{2}}{\varepsilon_{1}\varepsilon_{2}}
\]
with probability one, which implies that  
\begin{align*}
	\Pi_{2}\left(a\right) & \leq\left(\frac{1}{\varepsilon_{1}}+\frac{1-\varepsilon_{2}}{\varepsilon_{1}\varepsilon_{2}}\right)\left\Vert g_{3a}^{0}-g_{3\bar{a}}^{0}\right\Vert _{P,2}\leq\left(\frac{1}{\varepsilon_{1}}+\frac{1-\varepsilon_{2}}{\varepsilon_{1}\varepsilon_{2}}\right)\left\Vert Y_{a}-Y_{\bar{a}}\right\Vert _{P,2},
\end{align*}
because
\begin{align*}
	\left\Vert g_{3a}^{0}-g_{3\bar{a}}^{0}\right\Vert _{P,2} & \leq\left(E\left[1\left\{ D=d\right\} \left(Y_{a}-Y_{\bar{a}}\right)^{2}\right]\right)^{\frac{1}{2}}\leq\left\Vert Y_{a}-Y_{\bar{a}}\right\Vert _{P,2}.
\end{align*}
For $\Pi_{3}\left(a\right)$, we note that 
\[
\left|1-\frac{1\left\{ D=d^{\prime}\right\} }{1-g_{1d}^{0}}\right|\leq\frac{1+\varepsilon_{1}}{\varepsilon_{1}},
\]
which implies that
\begin{align*}
	\Pi_{3}\left(a\right)\leq & \frac{1+\varepsilon_{1}}{\varepsilon_{1}}\left\Vert g_{4ad}^{0}-g_{4\bar{a}d}^{0}\right\Vert _{P,2}\leq\frac{1+\varepsilon_{1}}{\varepsilon_{1}}\left\Vert Y_{a}-Y_{\bar{a}}\right\Vert _{P,2},
\end{align*}
because
\begin{align*}
	\left\Vert g_{4ad}^{0}-g_{4\bar{a}d}^{0}\right\Vert _{P,2} & \leq\left(E\left[1\left\{ D=d^{\prime}\right\} \left(E\left[Y_{a}-Y_{\bar{a}}|d,M,X\right]\right)^{2}\right]\right)^{\frac{1}{2}}\leq\left\Vert Y_{a}-Y_{\bar{a}}\right\Vert _{P,2}.
\end{align*}
Combining the previous results, we have 
\begin{align*}
	\left\Vert \psi_{d,d^{\prime},a}\left(W_{a},v_{a}^{0}\right)-\psi_{d,d^{\prime},\bar{a}}\left(W_{\bar{a}},v_{\bar{a}}^{0}\right)\right\Vert _{P,2} & \leq\left(\frac{1-\varepsilon_{2}}{\varepsilon_{1}\varepsilon_{2}}+\frac{1}{\varepsilon_{1}}+\frac{1-\varepsilon_{2}}{\varepsilon_{1}\varepsilon_{2}}+\frac{1+\varepsilon_{1}}{\varepsilon_{1}}\right)\left\Vert Y_{a}-Y_{\bar{a}}\right\Vert _{P,2}\\
	& \lesssim\left\Vert Y_{a}-Y_{\bar{a}}\right\Vert _{P,2}.
\end{align*}
By Assumption 2.1, for each $\left(d,d^{\prime}\right)\in\left\{ 0,1\right\} ^{2}$,
we then obtain

\[
\sup_{P\in\mathcal{P}}\left\Vert \psi_{d,d^{\prime},a}\left(W_{a},v_{a}^{0}\right)-\psi_{d,d^{\prime},\bar{a}}\left(W_{\bar{a}},v_{\bar{a}}^{0}\right)\right\Vert _{P,2}\lesssim\sup_{P\in\mathcal{P}}\left\Vert Y_{a}-Y_{\bar{a}}\right\Vert _{P,2}=0
\]
as $d_{\mathcal{A}}\left(a,\bar{a}\right)\rightarrow0$. 

\subsubsection*{The case when $d=d^{\prime}$}

\begin{align*}
	\psi_{d,d,a}\left(W_{a},v_{a}^{0}\right)-\psi_{d,d,\bar{a}}\left(W_{\bar{a}},v_{\bar{a}}^{0}\right) & =\frac{1\left\{ D=d\right\} }{g_{1d}^{0}}\left(1\left\{ Y\leq a\right\} -1\left\{ Y\leq\bar{a}\right\} \right)\\
	& +\left[1-\frac{1\left\{ D=d\right\} }{g_{1d}^{0}}\right]\left(g_{4ad}^{0}-g_{4\bar{a}d}^{0}\right),
\end{align*}
where $g_{4ad}^{0}=g_{4ad}^{0}\left(d,X\right)$. By the triangle
inequality,  
\begin{align*}
	\left\Vert \psi_{d,d,a}\left(W_{a},v_{a}^{0}\right)-\psi_{d,d,\bar{a}}\left(W_{\bar{a}},v_{\bar{a}}^{0}\right)\right\Vert _{P,2} & \leq\Pi_{4}\left(a\right)+\Pi_{5}\left(a\right)\lesssim\left\Vert Y_{a}-Y_{\bar{a}}\right\Vert _{P,2},
\end{align*}
where 
\begin{align*}
	\Pi_{4}\left(a\right) & =\left\Vert \frac{1\left\{ D=d\right\} }{g_{1d}^{0}}\left(1\left\{ Y\leq a\right\} -1\left\{ Y\leq\bar{a}\right\} \right)\right\Vert _{P,2}\leq\frac{1}{\varepsilon_{1}}\left\Vert Y_{a}-Y_{\bar{a}}\right\Vert _{P,2}\\
	\Pi_{5}\left(a\right) & =\left\Vert \left[1-\frac{1\left\{ D=d\right\} }{g_{1d}^{0}}\right]\left(g_{4ad}^{0}-g_{4\bar{a}d}^{0}\right)\right\Vert _{P,2}\leq\frac{1+\varepsilon_{1}}{\varepsilon_{1}}\left\Vert Y_{a}-Y_{\bar{a}}\right\Vert _{P,2}
\end{align*}
By Assumption 2.1, for each $\left(d,d^{\prime}\right)\in\left\{ 0,1\right\} ^{2}$,
we obtain 
\[
\sup_{P\in\mathcal{P}}\left\Vert \psi_{d,d,a}\left(W_{a},v_{a}^{0}\right)-\psi_{d,d,\bar{a}}\left(W_{\bar{a}},v_{\bar{a}}^{0}\right)\right\Vert _{P,2}\lesssim\sup_{P\in\mathcal{P}}\left\Vert Y_{a}-Y_{\bar{a}}\right\Vert _{P,2}=0
\]
as $d_{\mathcal{A}}\left(a,\bar{a}\right)\rightarrow0$.

Combining the previous results, it follows 
\[
\sup_{P\in\mathcal{P}}\left\Vert \boldsymbol{\psi}_{a}\left(W_{a},v_{a}^{0}\right)-\boldsymbol{\psi}_{a}\left(W_{\bar{a}},v_{\bar{a}}^{0}\right)\right\Vert _{P,2}=0
\]
as $d_{\mathcal{A}}\left(a,\bar{a}\right)\rightarrow0$.


\subsubsection*{Verifying Assumption A.1.6}

\subsubsection*{The case when $d\protect\neq d^{\prime}$}

Let $\mathcal{G}_{1}^{0}=\left\{ g_{1d}^{0}\left(X\right):d\in\left\{ 0,1\right\} \right\} $
$\mathcal{G}_{2}^{0}=\left\{ g_{2d}^{0}\left(M,X\right):d\in\left\{ 0,1\right\} \right\} $,
\begin{align*}
	\mathcal{G}_{3}^{0} & =\left\{ g_{3a}^{0}\left(d,M,X\right):a\in\mathcal{A},d\in\left\{ 0,1\right\} \right\} ,\\
	\mathcal{G}_{4}^{0} & =\left\{ g_{4ad}^{0}\left(d^{\prime},X\right):a\in\mathcal{A},\left\{ d,d^{\prime}\right\} \in\left\{ 0,1\right\} ^{2}\right\} ,
\end{align*}
$\mathcal{G}_{5}=\left\{ Y_{a}:a\in\mathcal{A}\right\} $, $\mathcal{G}_{6}=\left\{ 1\left\{ D=d\right\} :d\in\left\{ 0,1\right\} \right\} $.
The union $\cup_{j=1}^{4}\mathcal{G}_{j}^{0}$ forms the set $\mathcal{G}_{a}$
as defined above. By our assumptions, $g_{1d}^{0}\left(X\right)$ and $g_{2d}^{0}\left(M,X\right)$
are bounded within the interval $\left(0,1\right)$ with probability
one. $g_{3a}^{0}\left(d,M,X\right)=F_{Y|D,M,X}\left(a|d,M,X\right)$
is a conditional c.d.f. and $g_{4}^{0}\left(a,d,d^{\prime}X\right)=E\left[g_{3a}^{0}\left(d,M,X\right)|d^{\prime},X\right]$
is a conditional expectation of a c.d.f., which are also
bounded for $a\in\mathcal{A}$ with probability one. The functions
$Y_{a}=1\left\{ Y\leq a\right\} $ and $1\left\{ D=d\right\} $ are
indicator functions and are bounded with probability one. In conclusion,
functions in the sets $\mathcal{G}_{j}^{0}$, $j=1,\ldots,4$ are
uniformly bounded and their envelop functions are all bounded by some
constant. By Assumption 1 and Lemma L.2 of \citet{BCFH_2017},
it can be shown that uniform covering numbers of functions in $\mathcal{G}_{3}^{0}$
and $\mathcal{G}_{4}^{0}$ are bounded\textbf{ }by $\log\left(\text{e}/\epsilon\right)\vee0$
multiplied by some constants. Uniform covering numbers of functions
in $\mathcal{G}_{1}^{0}$,$\mathcal{G}_{2}^{0}$, $\mathcal{G}_{5}$
and $\mathcal{G}_{6}$ are also bounded by $\log\left(\text{e}/\epsilon\right)\vee0$
multiplied by some constants. The function $\psi_{d,d^{\prime},a}\left(W_{a},v_{a}^{0}\right)$
is formed based on a union of functions in the sets $\mathcal{G}_{j}^{0}$,
$j=1,\ldots,4$, $\mathcal{G}_{5}$ and $\mathcal{G}_{6}$. Let $\Psi_{d,d^{\prime}}^{0}=\left\{ \psi_{d,d^{\prime},a}\left(W_{a},v_{a}^{0}\right),a\in\mathcal{A}\right\} ,$
where $\left(d,d^{\prime}\right)\in\left\{ 0,1\right\} ^{2}$. Fixing
$\left(d,d^{\prime}\right)$, we have 
\begin{align*}
	\left|\psi_{d,d^{\prime},a}\left(W_{a},v_{a}^{0}\right)\right| & \leq\left|\frac{\left(1-g_{2d}^{0}\right)}{\left(1-g_{1d}^{0}\right)g_{2}^{0}}\right|\left|Y_{a}\right|+\left|\frac{1\left\{ D=d^{\prime}\right\} }{1-g_{1}^{0}}-\frac{1\left\{ D=d\right\} \left(1-g_{2d}^{0}\right)}{\left(1-g_{1d}^{0}\right)g_{2}^{0}}\right|\left|g_{3a}^{0}\right|\\
	& +\left|1-\frac{1\left\{ D=d^{\prime}\right\} }{1-g_{1d}^{0}}\right|\left|g_{4ad}^{0}\right|\\
	& \leq\frac{1-\varepsilon_{2}}{\varepsilon_{1}\varepsilon_{2}}+\left(\frac{1}{\varepsilon_{1}}+\frac{1-\varepsilon_{2}}{\varepsilon_{1}\varepsilon_{2}}\right)\left|g_{3a}^{0}\right|+\left(\frac{1+\varepsilon_{1}}{\varepsilon_{1}}\right)\left|g_{4ad}^{0}\right|.
\end{align*}
The envelop function of $f\in\Psi_{d,d^{\prime}}^{0}$ is defined
as $\psi{}_{d,d^{\prime}}^{0}\left(W\right):=\sup_{a\in\mathcal{A},v\in\left(\cup_{j=1}^{4}\mathcal{G}_{j}^{0}\right)}\left|\psi_{d,d^{\prime},a}\left(W_{a},v\right)\right|$
and we have 
\begin{align*}
	\psi{}_{d,d^{\prime}}^{0}\left(W\right) & \leq\frac{1-\varepsilon_{2}}{\varepsilon_{1}\varepsilon_{2}}+\left(\frac{1}{\varepsilon_{1}}+\frac{1-\varepsilon_{2}}{\varepsilon_{1}\varepsilon_{2}}\right)\sup_{a\in\mathcal{A},g_{3a}^{0}\in\mathcal{G}_{3}^{0}}\left|g_{3a}^{0}\left(d,M,X\right)\right|\\
	& +\left(\frac{1+\varepsilon_{1}}{\varepsilon_{1}}\right)\sup_{a\in\mathcal{A},g_{4}^{0}\in\mathcal{G}_{4}^{0}}\left|g_{4ad}^{0}\left(d^{\prime},X\right)\right|.
\end{align*}
Using the facts that for $q\geq4$, $\left\Vert g_{3a}^{0}\left(d,M,X\right)\right\Vert _{P,q}\leq\left\Vert g_{3a}^{0}\left(D,M,X\right)\right\Vert _{P,q}\varepsilon_{2}^{-\frac{1}{q}}$
and $\left\Vert g_{4ad}^{0}\left(d^{\prime},X\right)\right\Vert _{P,q}\leq\left\Vert g_{4ad}^{0}\left(D,X\right)\right\Vert _{P,q}\varepsilon_{1}^{-\frac{1}{q}}$,
it follows that 
\begin{align*}
	\left\Vert \psi{}_{d,d^{\prime}}^{0}\left(W\right)\right\Vert _{P,q} & \leq\frac{1-\varepsilon_{2}}{\varepsilon_{1}\varepsilon_{2}}+\left(\frac{1}{\varepsilon_{1}}+\frac{1-\varepsilon_{2}}{\varepsilon_{1}\varepsilon_{2}}\right)\sup_{a\in\mathcal{A},g_{3a}^{0}\in\mathcal{G}_{3}^{0}}\left\Vert g_{3a}^{0}\left(D,M,X\right)\right\Vert _{P,q}\varepsilon_{2}^{-\frac{1}{q}}\\
	& +\left(\frac{1+\varepsilon_{1}}{\varepsilon_{1}}\right)\sup_{a\in\mathcal{A},g_{4}^{0}\in\mathcal{G}_{4}^{0}}\left\Vert g_{4ad}^{0}\left(D,X\right)\right\Vert _{P,q}\varepsilon_{1}^{-\frac{1}{q}}\\
	& \leq\frac{1-\varepsilon_{2}}{\varepsilon_{1}\varepsilon_{2}}+\left(\frac{1}{\varepsilon_{1}}+\frac{1-\varepsilon_{2}}{\varepsilon_{1}\varepsilon_{2}}\right)\varepsilon_{2}^{-\frac{1}{q}}+\left(\frac{1+\varepsilon_{1}}{\varepsilon_{1}}\right)\varepsilon_{1}^{-\frac{1}{q}}<\infty,
\end{align*}
and this property holds for $P\in\mathcal{P}$. Therefore, $\sup_{P\in\mathcal{P}}\left\Vert \psi{}_{d,d^{\prime}}^{0}\left(W\right)\right\Vert _{P,q}<\infty$
for $q\geq4$ and $f\in\Psi_{d,d^{\prime}}^{0}$ is uniformly bounded
and has a uniform covering entropy bounded by $\log\left(\text{e}/\epsilon\right)\vee0$
up to multiplication by a constant. 

\subsubsection*{The case when $d=d^{\prime}$}

The function $\psi_{d,d,a}\left(W_{a},v_{a}^{0}\right)$ is formed
by a union of functions in the sets $\mathcal{G}_{1}^{0}$, $\mathcal{G}_{4}^{0}$,
$\mathcal{G}_{5}$ and $\mathcal{G}_{6}$. Let $\Psi_{d,d}^{0}=\left\{ \psi_{d,d,a}\left(W_{a},v_{a}^{0}\right),a\in\mathcal{A}\right\} ,$
where $d\in\left\{ 0,1\right\} $. Fixing $d$, we have 
\begin{align*}
	\left|\psi_{d,d,a}\left(W_{a},v_{a}^{0}\right)\right| & \leq\left|\frac{1\left\{ D=d\right\} }{g_{1d}^{0}}\right|\left|Y_{a}\right|+\left|\frac{1\left\{ D=d\right\} }{g_{1d}^{0}}\right|\left|g_{4ad}^{0}\right|+\left|g_{4ad}^{0}\right|\\
	& \leq\frac{1}{\varepsilon_{1}}+\left(\frac{1}{\varepsilon_{1}}+1\right)\left|g_{4ad}^{0}\right|,
\end{align*}
where $g_{4ad}^{0}=g_{4ad}^{0}\left(d,X\right)$. The envelop function
of $f\in\Psi_{d,d}^{0}$ is defined as $\psi{}_{d,d}^{0}\left(W\right):=\sup_{a\in\mathcal{A},v\in\left(\mathcal{G}_{1}^{0}\cup\mathcal{G}_{4}^{0}\right)}\left|\psi_{d,d,a}\left(W_{a},v\right)\right|$
and we have 
\[
\psi{}_{d,d}^{0}\left(W\right)\leq\frac{1}{\varepsilon_{1}}+\left(\frac{1}{\varepsilon_{1}}+1\right)\sup_{a\in\mathcal{A},g_{4}^{0}\in\mathcal{G}_{4}^{0}}\left|g_{4ad}^{0}\left(d,X\right)\right|.
\]
For $q\geq4$ and $\left\Vert g_{4ad}^{0}\left(d,X\right)\right\Vert _{P,q}\leq\left\Vert g_{4ad}^{0}\left(D,X\right)\right\Vert _{P,q}\varepsilon_{1}^{-\frac{1}{q}}$,
we have 
\begin{align*}
	\left\Vert \psi{}_{d,d}^{0}\left(W\right)\right\Vert _{P,q} & \leq\frac{1}{\varepsilon_{1}}+\left(\frac{1}{\varepsilon_{1}}+1\right)\sup_{a\in\mathcal{A},g_{4}^{0}\in\mathcal{G}_{4}^{0}}\left\Vert g_{4ad}^{0}\left(d,X\right)\right\Vert _{P,q}\\
	& \leq\frac{1}{\varepsilon_{1}}+\left(\frac{1}{\varepsilon_{1}}+1\right)\sup_{a\in\mathcal{A},g_{4}^{0}\in\mathcal{G}_{4}^{0}}\left\Vert g_{4ad}^{0}\left(D,X\right)\right\Vert _{P,q}\varepsilon_{1}^{-\frac{1}{q}}\\
	&\leq \frac{1}{\varepsilon_{1}}+\left(\frac{1}{\varepsilon_{1}}+1\right)\varepsilon_{1}^{-\frac{1}{q}}<\infty,
\end{align*}
and this property holds for $P\in\mathcal{P}$. Therefore, $\sup_{P\in\mathcal{P}}\left\Vert \psi{}_{d,d}^{0}\left(W\right)\right\Vert _{P,q}<\infty$
for $q\geq4$ and $f\in\Psi_{d,d}^{0}$ is uniformly bounded and has
a uniform covering entropy bounded by $\log\left(\text{e}/\epsilon\right)\vee0$
up to multiplication by a constant.

Combining the previous results, let $\Psi^{0}=\Psi_{1,1}^{0}\cup\Psi_{1,0}^{0}\cup\Psi_{0,1}^{0}\cup\Psi_{0,0}^{0}$
be a union of $\Psi_{d,d^{\prime}}^{0}$, $\left(d,d^{\prime}\right)\in\left\{ 0,1\right\} ^{2}$.
Since $\Psi^{0}$ is a union of $\Psi_{d,d^{\prime}}^{0}$ for $\left(d,d^{\prime}\right)\in\left\{ 0,1\right\} ^{2}$,
it is a finite union of classes of functions which are uniformly bounded
and have the uniform entropies bounded by $\log\left(\text{e}/\epsilon\right)\vee0$
up to multiplication by a constant. For this reason, $\Psi^{0}$ has a uniform
covering number $\sup_{P\in\mathcal{P}}\sup_{Q}\log N\left(\epsilon,\Psi^{0},\left\Vert .\right\Vert _{Q,2}\right)\lesssim\log\left(\text{e}/\epsilon\right)\vee0$
and its envelop function 
\[
\Psi^{0}\left(W\right)=\sup_{\left(d,d^{\prime}\right)\in\left\{ 0,1\right\} ^{2},a\in\mathcal{A},v\in\cup_{j=1}^{4}\mathcal{G}_{j}^{0}}\left|\psi_{d,d^{\prime},a}\left(W_{a},v\right)\right|
\]
is also bounded. Furthermore, the uniform covering integral of $\Psi^{0}$
satisfies: 
\begin{align*}
	\int_{0}^{1}\sqrt{\sup_{Q}\log N\left(\epsilon,\Psi^{0},\left\Vert .\right\Vert _{Q,2}\right)}d\epsilon & \leq\sqrt{C}\int_{0}^{1}\sqrt{1-\log\epsilon}d\epsilon\leq\sqrt{C}\int_{0}^{1}\frac{1}{\sqrt{\epsilon}}d\epsilon<\infty
\end{align*}
by the fact that $1-\log\epsilon\leq1/\epsilon$ for all $\epsilon>0$
and $\int_{0}^{1}\epsilon^{-b}d\epsilon<\infty$ for $b<1$.

The second condition in (B.1) of \citet{BCFH_2017} can also be proven
(which corresponds to Assumption A1.4). The goal of such a proof
is the same as the aim of bounding $m_{n}$ in \citet{FHLLS_2022}
and \citet{CCDDHNR_2018}. But in their scenarios, they need to consider
$Y,$ $E\left[Y|d,M,X\right]$ and $E\left[E\left[Y|d,M,X\right]|d^{\prime},X\right]$,
which requires additional moment conditions, say $\left\Vert Y\right\Vert _{P,q}$,
$\left\Vert E\left[Y|d,M,X\right]\right\Vert _{P,q}$ and $\left\Vert E\left[E\left[Y|d,M,X\right]|d^{\prime},X\right]\right\Vert _{P,q}$.
Our scenario is less challenging since the functions $g_{3a}\left(d,M,X\right)$
(a c.d.f.) and $g_{4ad}\left(d^{\prime},X\right)$ (an expectation
of a c.d.f.) are bounded.

\subsubsection*{Verifying Assumption A.1.7}
\subsubsection*{The case when $d\protect\neq d^{\prime}$}

We define the following sets of functions: 
\begin{align*}
	\mathcal{G}_{1}\left(d\right) & :=\left\{ \begin{array}{c}
		x\longmapsto h_{1}\left(f\left(x\right)^{\top}\boldsymbol{\beta}_{1}\right):\left\Vert \boldsymbol{\beta}_{1}\right\Vert _{0}\leq s_{1}\\
		\left\Vert h_{1}\left(f\left(X\right)^{\top}\boldsymbol{\beta}_{1}\right)-g_{1d}^{0}\left(X\right)\right\Vert _{P,2}\lesssim\delta_{n}n^{-\frac{1}{4}}\\
		\left\Vert h_{1}\left(f\left(X\right)^{\top}\boldsymbol{\beta}_{1}\right)-g_{1d}^{0}\left(X\right)\right\Vert _{P,\infty}\lesssim C
	\end{array}\right\} ,\\
	\mathcal{G}_{2}\left(d\right) & :=\left\{ \begin{array}{c}
		\left(m,x\right)\longmapsto h_{2}\left(f\left(m,x\right)^{\top}\boldsymbol{\beta}_{2}\right):\left\Vert \boldsymbol{\beta}_{2}\right\Vert _{0}\leq s_{2}\\
		\left\Vert h_{2}\left(f\left(M,X\right)^{\top}\boldsymbol{\beta}_{2}\right)-g_{2d}^{0}\left(M,X\right)\right\Vert _{P,2}\lesssim\delta_{n}n^{-\frac{1}{4}}\\
		\left\Vert h_{2}\left(f\left(M,X\right)^{\top}\boldsymbol{\beta}_{2}\right)-g_{2d}^{0}\left(M,X\right)\right\Vert _{P,\infty}\lesssim C
	\end{array}\right\} ,\\
	\mathcal{G}_{3}\left(d\right) & :=\left\{ \begin{array}{c}
		\left(d,m,x\right)\longmapsto h_{3}\left(f\left(d,m,x\right)^{\top}\boldsymbol{\beta}_{3}\right):\left\Vert \boldsymbol{\beta}_{3}\right\Vert _{0}\leq s_{3}\\
		\left\Vert h_{3}\left(f\left(d,M,X\right)^{\top}\boldsymbol{\beta}_{3}\right)-g_{3a}^{0}\left(d,M,X\right)\right\Vert _{P,2}\lesssim\delta_{n}n^{-\frac{1}{4}}\\
		\left\Vert h_{3}\left(f\left(d,M,X\right)^{\top}\boldsymbol{\beta}_{3}\right)-g_{3a}^{0}\left(d,M,X\right)\right\Vert _{P,\infty}\lesssim C
	\end{array}\right\} ,\\
	\mathcal{G}_{4}\left(d^{\prime}\right) & :=\left\{ \begin{array}{c}
		\left(d^{\prime},x\right)\longmapsto h_{4}\left(f\left(d^{\prime},x\right)^{\top}\boldsymbol{\beta}_{4}\right):\left\Vert \boldsymbol{\beta}_{4}\right\Vert _{0}\leq s_{4}\\
		\left\Vert h_{4}\left(f\left(d^{\prime},x\right)^{\top}\boldsymbol{\beta}_{4}\right)-g_{4ad}^{0}\left(d^{\prime},X\right)\right\Vert _{P,2}\lesssim\delta_{n}n^{-\frac{1}{4}}\\
		\left\Vert h_{4}\left(f\left(d^{\prime},X\right)^{\top}\boldsymbol{\beta}_{4}\right)-g_{4ad}^{0}\left(d^{\prime},X\right)\right\Vert _{P,\infty}\lesssim C
	\end{array}\right\} ,
\end{align*}
where $\boldsymbol{\beta}_{i}$, $i=1,\ldots,4$ are vectors of coefficients
on different sets of conditioning variables and $\left\Vert \right\Vert _{0}$ denotes
the $l^{0}$ norm. By Assumption 2.3b, $\text{dim}\left(f\left(x\right)\right)=p\times1$,
$\text{dim}\left(f\left(m,x\right)\right)=\left(p+1\right)\times1$,
$\text{dim}\left(f\left(d,m,x\right)\right)=\left(p+2\right)\times1$,
and $\text{dim}\left(f\left(d^{\prime},x\right)\right)=\left(p+1\right)\times1$.
From Assumption 2.4 follows that with probability $P$  no less than $1-\Delta_{n}$,
$1-\hat{g}_{1d}\left(X\right)\in1-\mathcal{G}_{1}\left(d\right)$,
$\hat{g}_{2d}\left(M,X\right)\in\mathcal{G}_{2}\left(d\right)$, $\left(1-\hat{g}_{2d}\left(M,X\right)\right)\in1-\mathcal{G}_{2}\left(d\right)$.
$\hat{g}_{3a}\left(d,M,X\right)\in\mathcal{G}_{3}\left(d\right)$
and $\hat{g}_{4ad}\left(d^{\prime},X\right)\in\mathcal{G}_{4}\left(d^{\prime}\right)$.
Notice that the union $\left(1-\mathcal{G}_{1}\left(d\right)\right)\cup\mathcal{G}_{2}\left(d\right)\cup\left(1-\mathcal{G}_{2}\left(d\right)\right)\cup\mathcal{G}_{3}\left(d\right)\cup\mathcal{G}_{4}\left(d^{\prime}\right)$
forms the set $\mathcal{G}_{an}$. We consider the following sets of functions:
\begin{align*}
	\mathcal{H}_{1} & =\left\{ x\longmapsto h_{1}\left(f\left(x\right)^{\top}\boldsymbol{\beta}_{1}\right):\left\Vert \boldsymbol{\beta}_{1}\right\Vert _{0}\leq s_{1},h_{1}\in\mathcal{H}_{1}^{*}\right\} ,\\
	\mathcal{H}_{2} & =\left\{ \left(m,x\right)\longmapsto h_{2}\left(f\left(m,x\right)^{\top}\boldsymbol{\beta}_{2}\right):\left\Vert \boldsymbol{\beta}_{2}\right\Vert _{0}\leq s_{2},h_{2}\in\mathcal{H}_{2}^{*}\right\} ,\\
	\mathcal{H}_{3} & =\left\{ \left(d,m,x\right)\longmapsto h_{3}\left(f\left(d,m,x\right)^{\top}\boldsymbol{\beta}_{3}\right):\left\Vert \boldsymbol{\beta}_{3}\right\Vert _{0}\leq s_{3},h_{3}\in\mathcal{H}_{3}^{*}\right\} ,\\
	\mathcal{H}_{4} & =\left\{ \left(d^{\prime},x\right)\longmapsto h_{4}\left(f\left(d^{\prime},x\right)^{\top}\boldsymbol{\beta}_{4}\right):\left\Vert \boldsymbol{\beta}_{4}\right\Vert _{0}\leq s_{4},h_{1}\in\mathcal{H}_{4}^{*}\right\} ,
\end{align*}
where $\mathcal{H}_{i}^{*}$, $i=1,\ldots,4$ are sets containing a finite number of monotonically increasing, continuously differentiable
link functions, possibly bounded within a certain interval (say $\left(0,1\right)$).
For example, \citet{BCFH_2017} chose some commonly used
link functions: $\left\{ \mathbf{I}\text{d},\varLambda,1-\varLambda,\Phi,1-\Phi\right\} $,
where $\mathbf{I}\text{d}$ is the identity function, $\varLambda$
is the logistic link, and $\Phi$ is the probit link. In our case,
each $\mathcal{H}_{i}^{*}$ is also a subset of $\left\{ \mathbf{I}\text{d},\varLambda,1-\varLambda,\Phi,1-\Phi\right\} $.
Obviously, $1-\mathcal{G}_{1}\left(d\right)\subseteq1-\mathcal{H}_{1}$,
$\mathcal{G}_{2}\left(d\right)\subseteq\mathcal{H}_{2}$, $1-\mathcal{G}_{2}\left(d\right)\subseteq1-\mathcal{H}_{2}$
$\mathcal{G}_{3}\left(d\right)\subseteq\mathcal{H}_{3}$, and $\mathcal{G}_{4}\left(d^{\prime}\right)\subseteq\mathcal{H}_{4}$
by Assumption 2.3a. For functions in $1-\mathcal{G}_{1}\left(d\right)$
$\mathcal{G}_{2}\left(d\right)$, $1-\mathcal{G}_{2}\left(d\right)$
and $\mathcal{G}_{3}\left(d\right)$, their envelope functions are
constant and bounded. As shown in \citet{BCFH_2017}, for the set $1-\mathcal{H}_{1}$, $f\left(x\right)^{\top}\boldsymbol{\beta}_{1}$
is VC-subgraph function with VC dimension bounded by some constant
($s_{1}$), and $1-\mathcal{H}_{1}$ is a union of at most $\binom{p}{s_{1}}$
of such functions. Therefore, 
\[
\log\sup_{Q}N\left(\epsilon,\mathcal{G}_{1},\left\Vert .\right\Vert _{Q}\right)\lesssim\left(s_{1}\log p+s_{1}\log\text{e}/\epsilon\right)\vee0.
\]
Using a similar argument, we obtain 
\begin{align*}
	\log\sup_{Q}N\left(\epsilon,\mathcal{G}_{2}\cup(1-\mathcal{G}_{2}),\left\Vert .\right\Vert _{Q}\right) & \lesssim\left(s_{2}\log\left(p+1\right)+s_{2}\log\text{e}/\epsilon\right)\vee0,\\
	\log\sup_{Q}N\left(\epsilon,\mathcal{G}_{3},\left\Vert .\right\Vert _{Q}\right) & \lesssim\left(s_{3}\log\left(p+2\right)+s_{3}\log\text{e}/\epsilon\right)\vee0.
\end{align*}
Concerning the functions in $\mathcal{G}_{4}\left(d\right)$, they have an
additive linear form $f\left(d^{\prime},X\right)^{\top}\boldsymbol{\beta}_{4}=aD+X^{\top}\mathbf{b}$,
$\boldsymbol{\beta}_{4}=\left(a,\mathbf{b}\right)^{\top}$. Its envelope
function is bounded by $\boldsymbol{\beta}_{4}$ being bounded and Assumption
2.3e, implying that $\left\Vert X\right\Vert _{P,q}$ is bounded. Concerning the set $\mathcal{H}_{4}$ and 
as shown in \citet{BCFH_2017}, $f\left(d^{\prime},X\right)^{\top}\boldsymbol{\beta}_{4}$
is a VC-subgraph function with VC dimension bounded by some constant
($s_{4}$), and $\mathcal{H}_{4}$ is a union of at most $\binom{p+1}{s_{4}}$
of such functions. Therefore, 
\[
\log\sup_{Q}N\left(\epsilon,\mathcal{G}_{4},\left\Vert \right\Vert _{Q}\right)\lesssim\left(s_{4}\log\left(p+1\right)+s_{4}\log\text{e}/\epsilon\right)\vee0.
\]
Combining the previous results, it follows that 
\[
\log\sup_{Q}N\left(\epsilon,\mathcal{G}_{an},\left\Vert \right\Vert _{Q}\right)\lesssim\left(s\log p+s\log\text{e}/\epsilon\right)\vee0,
\]
where $s_{1}+s_{2}+s_{3}+s_{4}\leq s$. The set of functions
\[
\mathcal{F}_{2,n}=\left\{ \psi_{d,d^{\prime},a}\left(W_{a},v\right)-\psi_{d,d^{\prime},a}\left(W_{a},v_{a}^{0}\right):\left(d,d^{\prime}\right)\in\left\{ 0,1\right\} ^{2},a\in\mathcal{A},v\in\mathcal{G}_{an}\right\} 
\]
is a Lipschitz transformation of function sets $\mathcal{G}_{i}^{0}$ $i=1,\ldots,4$
, $\mathcal{G}_{5}$, $\mathcal{G}_{6}$ defined in the previous proof,
and $\mathcal{G}_{an}$, with bounded Lipschitz coefficients and with
a constant envelope. Therefore, 
\[
\log\sup_{Q}N\left(\epsilon,\mathcal{F}_{2,n},\left\Vert \right\Vert _{Q}\right)\lesssim\left(s\log p+s\log\frac{\text{e}}{\epsilon}\right)\vee0.
\]
With probability $P$ $1-o\left(1\right)$, we have 
\[
\sup_{\left(d,d^{\prime}\right)\in\left\{ 0,1\right\} ^{2},a\in\mathcal{A}}\left|\mathbb{G}_{N,k}\left(\psi_{d,d^{\prime},a}\left(W_{a};\hat{v}_{k,a}\right)-\psi_{d,d^{\prime},a}\left(W_{a};v_{a}^{0}\right)\right)\right|\leq\sup_{f\in\mathcal{F}_{2}}\left|\mathbb{G}_{N,k}f\right|
\]
Furthermore, we have proven that $r_{n}\lesssim\delta_{n}n^{-1/4}$
and therefore, $\sup_{f\in\mathcal{F}_{2}}\left\Vert f\right\Vert _{P,2}\lesssim r_{n}\lesssim\delta_{n}n^{-1/4}$
holds. Let $L$ be a constant and $L\geq\text{e}$, using the maximum inequality A.1 of Lemma 6.2 of \cite{CCDDHNR_2018} by
setting the parameters $\sigma=C^{\prime\prime}\delta_{n}n^{-1/4}$. $C^{\prime\prime}>1$ is some constant, $a=b=p$, where $\log p=o\left(n^{-1/3}\right)=o\left(K^{-1/3}N^{-1/3}\right)$
and $v=s$ in this maximum inequality. We obtain that 
\begin{align*}
	\sup_{f\in\mathcal{F}_{2,n}}\left|\mathbb{G}_{N,k}f\right| & \lesssim\delta_{n}n^{-\frac{1}{4}}\sqrt{s\log\left(p\vee L\vee\sigma^{-1}\right)}+\frac{s}{\sqrt{N}}\log\left(p\vee L\vee\sigma^{-1}\right)\\
	& \lesssim\delta_{n}n^{-\frac{1}{4}}\sqrt{s\log\left(p\vee n\right)}+\sqrt{Ks^{2}\log^{2}\left(p\vee n\right)n^{-1}}\\
	& \lesssim\delta_{n}\delta_{n}^{\frac{1}{4}}+\delta_{n}^{\frac{1}{2}}\log^{-1}n\lesssim\delta_{n}^{\frac{1}{2}}.
\end{align*}
by applying similar arguments as in the proof of Theorem A.1 and by Assumption
2.3c. 
\subsubsection*{The case when $d=d^{\prime}$}
As this case is a special case of $d\neq d^{\prime}$, the same results apply. 
\end{proof}

\begin{proof}[Proof of Theorem 2] The proof makes use of
	Theorem A.2. As in Theorem A.2, let $U_{n,P}^{*}:=\left(\mathbb{G}_{n}\xi\boldsymbol{\psi}_{a}\left(W_{a};v_{a}^{0}\right)-\boldsymbol{\theta}_{a}^{0}\right)_{a\in\mathcal{A}}$.
	To show that $Z_{n,P}^{*}\rightsquigarrow_{B}Z_{P}$ uniformly over
	$P\in\mathcal{P}_{n}$ in $l^{\infty}\left(\mathcal{A}\right)^{4}$,
	we first show that $\left\Vert Z_{n,P}^{*}-U_{n,P}^{*}\right\Vert =o_{P}\left(1\right)$
	and then show that $U_{n,P}^{*}\rightsquigarrow_{B}Z_{P}$ uniformly
	over $P\in\mathcal{P}_{n}$ in $l^{\infty}\left(\mathcal{A}\right)^{4}$.
	Let $Z_{n,P}^{*}\left(a\right):=\mathbb{G}_{n}\xi\left(\boldsymbol{\psi}_{a}\left(W_{a};\hat{v}_{k,a}\right)-\hat{\boldsymbol{\theta}}_{a}\right)$
	and $U_{n,P}^{*}\left(a\right):=\mathbb{G}_{n}\xi\left(\boldsymbol{\psi}_{a}\left(W_{a};v_{a}^{0}\right)-\boldsymbol{\theta}_{a}^{0}\right)$.
	We notice that since $E\left[\xi\right]=0$ and $\xi$ and $W_{a}$
	are independent, $E\left[\xi\left(\boldsymbol{\psi}_{a}\left(W_{a};v_{a}^{0}\right)-\boldsymbol{\theta}_{a}^{0}\right)\right]=0$
	and 
	\begin{align*}
		Z_{n,P}^{*}\left(a\right) & =\frac{1}{\sqrt{n}}\sum_{i=1}^{n}\xi_{i}\left(\boldsymbol{\psi}_{a}\left(W_{a,i};\hat{v}_{a}^{0}\right)-\hat{\boldsymbol{\theta}}_{a}^{0}\right),\\
		U_{n,P}^{*}\left(a\right) & =\frac{1}{\sqrt{n}}\sum_{i=1}^{n}\xi_{i}\left(\boldsymbol{\psi}_{a}\left(W_{a,i};v_{a}^{0}\right)-\boldsymbol{\theta}_{a}^{0}\right).
	\end{align*}
	It follows that
	\[
	\sup_{a\in\mathcal{A}}\left\Vert Z_{n,P}^{*}\left(a\right)-U_{n,P}^{*}\left(a\right)\right\Vert \leq\Pi_{1}+\Pi_{2},
	\]
	where 
	\begin{align*}
		\Pi_{1} & =\sup_{a\in\mathcal{A}}\left\Vert \mathbb{G}_{n}\xi\left(\boldsymbol{\psi}_{a}\left(W_{a,i};\hat{v}_{a}^{0}\right)-\boldsymbol{\psi}_{a}\left(W_{a,i};v_{a}^{0}\right)\right)\right\Vert \\
		& =\sup_{a\in\mathcal{A}}\left\Vert \sqrt{n}\frac{1}{K}\sum_{k=1}^{K}\frac{1}{\sqrt{N}}\left[\mathbb{G}_{N,k}\xi\left(\boldsymbol{\psi}_{a}\left(W_{a};\hat{v}_{k,a}^{0}\right)-\boldsymbol{\psi}_{a}\left(W_{a};v_{a}^{0}\right)\right)\right]\right\Vert \\
		& \leq\sqrt{n}\frac{1}{K}\sum_{k=1}^{K}\frac{1}{\sqrt{N}}\sup_{a\in\mathcal{A}}\left\Vert \mathbb{G}_{N,k}\xi\left(\boldsymbol{\psi}_{a}\left(W_{a};\hat{v}_{k,a}^{0}\right)-\boldsymbol{\psi}_{a}\left(W_{a};v_{a}^{0}\right)\right)\right\Vert ,
	\end{align*}
	and 
	\[
	\Pi_{2}=\sup_{a\in\mathcal{A}}\left\Vert \frac{1}{\sqrt{n}}\sum_{i=1}^{n}\xi_{i}\left(\hat{\boldsymbol{\theta}}_{a}^{0}-\boldsymbol{\theta}_{a}^{0}\right)\right\Vert \leq\sup_{a\in\mathcal{A}}\left\Vert \hat{\boldsymbol{\theta}}_{a}^{0}-\boldsymbol{\theta}_{a}^{0}\right\Vert \left|\mathbb{G}_{n}\xi\right|.
	\]
	The term $\Pi_{2}$ is $O_{p}\left(n^{-1/2}\right)$, since $\sup_{a\in\mathcal{A}}\left\Vert \hat{\boldsymbol{\theta}}_{a}^{0}-\boldsymbol{\theta}_{a}^{0}\right\Vert =O_{p}\left(n^{-1/2}\right)$
	by Theorem 1 and $\left|\mathbb{G}_{n}\xi\right|=O_{p}\left(1\right)$.
	Concerning the term $\Pi_{1}$, recall the class of functions used in the
	proof of Theorem 1:
	\[
	\mathcal{F}_{2,n}=\left\{ \psi_{d,d^{\prime},a}\left(W_{a};v\right)-\psi_{d,d^{\prime},a}\left(W_{a};v_{a}^{0}\right):\left(d,d^{\prime}\right)\in\left\{ 0,1\right\} ^{2},a\in\mathcal{A},v\in\mathcal{G}_{an}\right\} .
	\]
	with the envelop function $F_{2,n}$ is $\left|\xi\right|$ times
	a constant. In the proof of Theorem 1, we have established that the
	covering entropy of $\mathcal{F}_{2}$ obeys 
	\[
	\log\sup_{Q}N\left(\epsilon,\mathcal{F}_{2,n},\left\Vert \right\Vert _{Q}\right)\lesssim\left(s\log p+s\log\frac{\text{e}}{\epsilon}\right)\vee0.
	\]
	Furthermore, using Lemma L.1 in the appendix of \citet{BCFH_2017},
	multiplication of this class by $\xi$ does not change the entropy
	bound modulo an absolute constant, and therefore its covering entropy
	is bounded by the same order as $\log\sup_{Q}N\left(\epsilon,\mathcal{F}_{2,n},\left\Vert \right\Vert _{Q}\right)$,
	\[
	\log\sup_{Q}N\left(\epsilon\left\Vert F_{2,n}\right\Vert _{Q,2},\xi\mathcal{F}_{2,n},\left\Vert .\right\Vert _{Q,2}\right)\lesssim\left(s\log p+s\log\frac{\text{e}}{\epsilon}\right)\vee0.
	\]
	Next, we use the result $\left(E\left[\max_{i\in I_{k}}\xi_{i}^{2}\right]\right)^{1/2}\lesssim\log N$
	by $E\left[\exp\left(\left|\xi\right|\right)\right]<\infty$, and
	the maximum inequality A.1 of Lemma 6.2 of \citet{CCDDHNR_2018} by
	setting the envelope function $F_{2,n}=C^{\prime\prime\prime}\left|\xi\right|$,
	$\sigma=C^{\prime\prime}\delta_{n}n^{-1/4}$, where $C^{\prime\prime}>1$
	is some constant, $a=b=p$, where $\log p=o\left(n^{-1/3}\right)=o\left(K^{-1/3}N^{-1/3}\right)$
	and $v=s$, with  probability $P$ $1-o\left(1\right)$. We have
	\begin{align*}
		\sup_{f\in\xi\mathcal{F}_{2,n}}\left|\mathbb{G}_{N,k}f\right| & \lesssim\delta_{n}n^{-\frac{1}{4}}\sqrt{s\log\left(p\vee L\vee\sigma^{-1}\right)}+\frac{s\log N}{\sqrt{N}}\log\left(p\vee L\vee\sigma^{-1}\right)\\
		& \lesssim\delta_{n}n^{-\frac{1}{4}}\sqrt{s\log\left(p\vee L\vee\sigma^{-1}\right)}+K^{\frac{1}{2}}\frac{s\left(\log n-\log K\right)}{\sqrt{n}}\log\left(p\vee L\vee\sigma^{-1}\right)\\
		& \lesssim\delta_{n}n^{-\frac{1}{4}}\sqrt{s\log\left(p\vee n\right)}+\sqrt{\frac{s^{2}\log^{2}\left(p\vee n\right)\log^{2}n}{n}}\\
		& \lesssim\delta_{n}\delta_{n}^{\frac{1}{4}}+\delta_{n}^{\frac{1}{2}}\lesssim\delta_{n}^{\frac{1}{2}}=o_{p}\left(1\right),
	\end{align*}
	by using $\sup_{f\in\xi\mathcal{F}_{2,n}}\left\Vert f\right\Vert _{P,2}=\sup_{f\in\mathcal{F}_{2,n}}\left\Vert f\right\Vert _{P,2}\leq r_{n}\lesssim\delta_{n}n^{-1/4}$
	and Assumption 2.3c. With probability $P$ $1-o\left(1\right)$ and for 
	$\hat{v}_{k,a}\in\mathcal{V}_{an}$, it can be shown that 
	\[
	\sup_{a\in\mathcal{A}}\left\Vert \mathbb{G}_{N,k}\xi\left(\boldsymbol{\psi}_{a}\left(W_{a};\hat{v}_{k,a}\right)-\boldsymbol{\psi}_{a}\left(W_{a};v_{a}^{0}\right)\right)\right\Vert \lesssim\sup_{f\in\xi\mathcal{F}_{2,n}}\left|\mathbb{G}_{N,k}f\right|.
	\]
	Therefore we conclude that with probability $P$  $1-o\left(1\right)$,
	\[
	\frac{1}{\sqrt{N}}\sup_{a\in\mathcal{A}}\left\Vert \mathbb{G}_{N,k}\xi\left(\boldsymbol{\psi}_{a}\left(W_{a};\hat{v}_{k,a}\right)-\boldsymbol{\psi}_{a}\left(W_{a};v_{a}^{0}\right)\right)\right\Vert \lesssim K^{\frac{1}{2}}n^{-\frac{1}{2}}o_{p}\left(1\right)\lesssim o_{p}\left(n^{-\frac{1}{2}}\right),
	\]
	and since $K$ is fixed and finite, 
	\begin{align*}
		\sup_{a\in\mathcal{A}}\left\Vert \mathbb{G}_{n}\xi\left(\boldsymbol{\psi}_{a}\left(W_{a};\hat{v}_{k,a}\right)-\boldsymbol{\psi}_{a}\left(W_{a};v_{a}^{0}\right)\right)\right\Vert  & \lesssim\sqrt{n}o_{p}\left(n^{-\frac{1}{2}}\right)=o_{p}\left(1\right),
	\end{align*}
	which implies that $\left\Vert Z_{n,P}^{*}-U_{n,P}^{*}\right\Vert =o_{p}\left(1\right)$.
	Next, we notice that $U_{n,P}^{*}$ is associated with the class of functions
	$\xi f$, where $f\in\mathcal{F}_{0}$. As shown in the proof of Theorem
	1, the class of $\mathcal{F}_{0}$ is Donsker uniformly in $P\in\mathcal{P}_{n}$
	under the required assumptions. Therefore we can invoke Theorem B.2
	of \citet{BCFH_2017} 
	and conclude that $U_{n,P}^{*}\rightsquigarrow_{B}Z_{P}$. Indeed,
	$U_{n,P}^{*}$ and $Z_{P}$ both are Gaussian processes, and share
	the same (zero) mean and the same covariance matrix. Finally, using
	a similar argument as in step 2 for proving Theorem 5.2 in the appendix
	of \citet{BCFH_2017}, it follows that $Z_{n,P}^{*}\rightsquigarrow_{B}Z_{P}$.
	Let $\text{BL}_{1}\left(l^{\infty}\left(\mathcal{A}\right)\right)$
	be the space of functions mapping the space of functions in $l^{\infty}\left(\mathcal{A}\right)$
	to $\left[0,1\right]$ with a Lipschitz norm of at most 1. Let $E_{B_{n}}$
	denote the expectation over the multiplier weights $\left(\xi_{i}\right)_{i=1}^{n}$
	when holding the data $\left(W_{i}\right)_{i=1}^{n}$ fixed. Following
	step 2 for proving Theorem 5.2 in the appendix of \citet{BCFH_2017},
	we obtain the following inequality: 
	\begin{align*}
		\sup_{h\in\text{BL}_{1}\left(l^{\infty}\left(\mathcal{A}\right)\right)}\left|E_{B_{n}}\left[h\left(Z_{n,P}^{*}\right)\right]-E_{P}\left[h\left(Z_{P}\right)\right]\right| & \leq\sup_{h\in\text{BL}_{1}\left(l^{\infty}\left(\mathcal{A}\right)\right)}\left|E_{B_{n}}\left[h\left(U_{n,P}^{*}\right)\right]-E_{P}\left[h\left(Z_{P}\right)\right]\right|\\
		& +E_{B_{n}}\left[\left\Vert Z_{n,P}^{*}-U_{n,P}\right\Vert \land2\right].
	\end{align*}
	The first term vanishes as asserted by using Theorem B.2 of \citet{BCFH_2017},
	since we have proven that $U_{n,P}^{*}\rightsquigarrow_{B}Z_{P}$.
	The second term is $o_{P}\left(1\right)$ since $E\left[\left\Vert Z_{n,P}^{*}-U_{n,P}\right\Vert \land2\right]=E\left[E_{B_{n}}\left[\left\Vert Z_{n,P}^{*}-U_{n,P}\right\Vert \land2\right]\right]\rightarrow0$
	by using the Markov inequality,  
	\begin{align*}
		P\left(E_{B_{n}}\left[\left\Vert Z_{n,P}^{*}-U_{n,P}^{*}\right\Vert \land2\right]\geq\varepsilon\right) & \leq\frac{E\left[E_{B_{n}}\left[\left\Vert Z_{n,P}^{*}-U_{n,P}^{*}\right\Vert \land2\right]\right]}{\varepsilon}\\
		& =\frac{E\left[\left\Vert Z_{n,P}^{*}-U_{n,P}^{*}\right\Vert \land2\right]}{\varepsilon}.
	\end{align*}
	As shown above, $\left\Vert Z_{n,P}^{*}-U_{n,P}^{*}\right\Vert =o_{p}\left(1\right)$,
	which implies that $\left\Vert Z_{n,P}^{*}-U_{n,P}^{*}\right\Vert \land2=o_{P}\left(1\right)$.
	Therefore, $\sup_{h\in BL_{1}\left(l^{\infty}\left(\mathcal{A}\right)\right)}\left|E_{B_{n}}\left[h\left(Z_{n,P}^{*}\right)\right]-E_{P}\left[h\left(Z_{P}\right)\right]\right|$
	vanishes and we obtain that $Z_{n,P}^{*}\rightsquigarrow_{B}Z_{P}$.
\end{proof}

Concerning the proof of Theorem 3, we first introduce the definition of uniform
Hadamard differentiability in the appendix further below. The proof relies on 
Theorems B.3 and B.4 of \citet{BCFH_2017} (restated as Theorem A.4
and A.5 in their appendix), which show that when an estimator satisfies
uniform validity, this property also holds for a transformation of
this estimator if uniform Hadamard tangential differentiability of
the transformation holds.

\begin{proof}[Proof of Theorem 3]: Since the $\phi_{\boldsymbol{\theta}}$
	satisfies uniform Hadamard tangential differentiable, and as shown
	in Theorem 1 and 2, both $Z_{n,P}\rightsquigarrow Z_{P}$ and $Z_{n,P}^{*}\rightsquigarrow Z_{P}$
	in $l^{\infty}\left(\mathcal{A}\right)^{4}$ uniformly in $P\in\mathcal{P}_{n}$.
	Therefore the proof can be completed by using Theorems A.4 and A.5
	(which are restated results of Theorems B.3 and B.4 of \citet{BCFH_2017}).
\end{proof}


\subsection{General Theorems for Uniform Validity of a K-Fold Cross Fitting Estimator Based on the EIF}
We subsequently derive some useful theorems for establishing the uniform validity of the proposed K-fold cross-fitting estimator based on the efficient influence function (EIF) 
under specific conditions (see below). We recall the notation used in Section 3 and assume that the parameter of interest, $F_{Y(d,M(d^{\prime}))}^{0}(a)$, is identified as 
\begin{equation}
	F_{Y(d,M(d^{\prime}))}^{0}(a)=\theta_{d,d^{\prime},a}^{0},\label{identification1}
\end{equation}
for $(d,d^{\prime})\in\{0,1\}^{2}$, where $\theta_{d,d^{\prime},a}^{0}:=E\left[\psi_{d,d^{\prime},a}\left(W_{a};v_{a}^{0}\right)\right]$ is an expectation of $\psi_{d,d^{\prime},a}$ evaluated with the true nuisance parameters. The estimator of 
$\theta_{d,d^{\prime},a}^{0}$ is the $K$-fold cross-fitting estimator
\[
\hat{\theta}_{d,d^{\prime},a}=\frac{1}{K}\sum_{k=1}^{K}\hat{\theta}_{d,d^{\prime},a}^{\left(k\right)},
\]
where $\hat{\theta}_{d,d^{\prime},a}^{\left(k\right)}=N^{-1}\sum_{i\in I_{k}}\psi_{d,d^{\prime},a}\left(W_{a,i};\hat{v}_{k,a}\right)$. Let $\boldsymbol{\psi}_{a}\left(W_{a},v\right)$ denote a vector containing the elements $\psi_{d,d^{\prime},a}\left(W_{a};v\right)$,
$\left(d,d^{\prime}\right)\in\left\{ 0,1\right\} ^{2}$. Let
$\boldsymbol{\theta}_{a}^{0}$, $\hat{\boldsymbol{\theta}}_{a}$, $\hat{\boldsymbol{\theta}}_{a}^{\left(k\right)}$ and $\mathbf{F}^{0}(a)$ denote vectors containing $\theta_{d,d^{\prime},a}^{0}$, $\hat{\theta}_{d,d^{\prime},a}$, $\hat{\theta}_{d,d^{\prime},a}^{(k)}$ and $F_{Y\left(d,M\left(d^{\prime}\right)\right)}^{0}\left(a\right)$ over different $\left(d,d^{\prime}\right)\in\{0,1\}^{2}$. It holds that $\boldsymbol{\theta}_{a}^{0}=E\left[\boldsymbol{\psi}_{a}\left(W_{a};v_{a}^{0}\right)\right]$
and $\hat{\boldsymbol{\theta}}_{a}=K^{-1}\sum_{k=1}^{K}\hat{\boldsymbol{\theta}}_{a}^{\left(k\right)}$. If equation (\ref{identification1}) holds for all $a\in\mathcal{A}$ and $\left(d,d^{\prime}\right)\in \{0,1\}^{2}$, $\mathbf{F}^{0}(a)=\boldsymbol{\theta}_{a}^{0}$.

The main results are stated in Theorems A.1 to A.3. Establishing these theorems relies on imposing the following high level assumptions on $\psi_{d,d^{\prime},a}\left(W_{a};v\right)$.
\begin{assumption} Consider a random element $W$, taking values
	in a measure space $\left(\mathcal{W},\mathcal{X}_{W}\right)$, with the
	law determined by a probability measure $P\in\mathcal{P}_{n}$. The
	observed data $\left(\left(W_{a,i}\right)_{a\in\mathcal{A}}\right)_{i=1}^{n}$
	consist of $n$ i.i.d.\ copies of a random element $\left(W_{a}\right)_{a\in\mathcal{A}}$
	which is generated as a suitably measurable transformation with respect
	to $W$ and $a$. Uniformly for all $3\leq n_{0}\leq n$ and $P\in\mathcal{P}_{n}$, 
	
	\item[1.] The true parameter $F_{Y(d,M(d^{\prime}))}^{0}(a)$ satisfies
	equation (\ref{identification1}), $\theta_{d,d^{\prime},a}^{0}$ is interior relative to $\Theta_{a}\subset\Theta\subset\mathbb{R}$
	for all $a\in\mathcal{A}$, $\left(d,d^{\prime}\right)\in\left\{ 0,1\right\} ^{2}$
	and $\Theta$ is a compact set. 
	
	\item[2.] For $a\in\mathcal{A}$, the map $v\longmapsto E\left[\boldsymbol{\psi}_{a}\left(W_{a};v\right)\right]$
	is twice continuously Gateaux-differentiable on $\mathcal{V}_{a}$.
	
	\item[3.] The function $\boldsymbol{\psi}_{a}\left(W_{a};v\right)$
	satisfies the following Neyman $\lambda_{n}$ near-orthogonality condition
	at $v=v_{a}^{0}$ with respect to $a\in\mathcal{A}$ and $v\in\mathcal{V}_{an}\cup\left\{ v_{a}^{0}\right\} $:
	\[
	\lambda_{n}:=\sup_{a\in\mathcal{A},v\in\mathcal{V}_{an}\cup\left\{ v_{a}^{0}\right\} }\left\Vert \partial_{v}E\left[\boldsymbol{\psi}_{a}\left(W_{a};v_{a}^{0}\right)\left[v-v_{a}^{0}\right]\right]\right\Vert \leq\delta_{n}n^{-\frac{1}{2}},
	\]
	where $\delta_{n}$ is a sequence converging to zero from above at
	a speed at most polynomial in $n$, e.g., $\delta_{n}\geq n^{-c}$
	for some $c>0$.
	
	\item[4.] The following moment conditions hold: 
	\begin{align*}
		r_{n} & :=\sup_{a\in\mathcal{A},v\in\mathcal{V}_{an}}\left\Vert \boldsymbol{\psi}_{a}\left(W_{a};v\right)-\boldsymbol{\psi}_{a}\left(W_{a};v_{a}^{0}\right)\right\Vert _{P,2}\leq\delta_{n}n^{-\frac{1}{4}}\\
		\lambda_{n}^{\prime} & :=\sup_{a\in\mathcal{A},v\in\mathcal{V}_{an}\cup\left\{ v_{a}^{0}\right\} ,\tilde{r}\in\left(0,1\right)}\left\Vert \partial_{r}^{2}E\left[\boldsymbol{\psi}_{a}\left(W_{a};r\left(v-v_{a}^{0}\right)+v_{a}^{0}\right)\right]|_{r=\tilde{r}}\right\Vert \leq\delta_{n}n^{-\frac{1}{2}},
	\end{align*}
	
	\item[5.] The following smoothness condition holds for each
	$(d,d^{\prime})\in\{0,1\}^{2}$: 
	\[
	\sup_{d_{\mathcal{A}}\left(a,\bar{a}\right)\leq\delta}E\left[\left\{ \left(\psi_{d,d^{\prime},a}\left(W_{a};v_{a}^{0}\right)-\theta_{d,d^{\prime},a}^{0}\right)-\left(\psi_{d,d^{\prime},\bar{a}}\left(W_{\bar{a}};v_{\bar{a}}^{0}\right)-\theta_{d,d^{\prime},\bar{a}}^{0}\right)\right\} ^{2}\right]\leq C\delta^{c_{1}},
	\]
	where $c_{1}$ is a constant.
	
	\item[6.] The set of functions
	\[
	\mathcal{F}_{0}=\left\{ \psi_{d,d^{\prime},a}\left(W_{a};v_{a}^{0}\right)-\theta_{d,d^{\prime},a}^{0}:\left(d,d^{\prime}\right)\in\left\{ 0,1\right\} ^{2},a\in\mathcal{A}\right\},
	\]
	expressed as a function of $W$, is suitably measurable, and has an envelope
	function 
	\[
	F_{0}\left(W\right)=\sup_{\left(d,d^{\prime}\right)\in\left\{ 0,1\right\} ^{2},a\in\mathcal{A},v\in\mathcal{V}_{a},\theta\in\Theta_{a}}\left|\psi_{d,d^{\prime},a}\left(W_{a};v\right)-\theta\right|
	\],
	which is measurable with respect to $W$, and $\left\Vert F_{0}\left(W\right)\right\Vert _{P,q}\leq C$,
	where $q\geq4$ is a fixed constant. Its uniform covering entropy
	satisfies 
	\[
	\log\sup_{Q}N\left(\epsilon\left\Vert F_{0}\right\Vert _{Q,2},\mathcal{F}_{0},\left\Vert .\right\Vert _{Q,2}\right)\leq C\log\left(\text{e}/\epsilon\right)\vee0,
	\]
	where $C>0$ is a constant, $\text{e}$ denotes $\exp(1)$ and
	$0<\epsilon\leq1$.
	
	\item[7.] The set of functions 
	\[
	\mathcal{F}_{1}=\left\{ \psi_{d,d^{\prime},a}\left(W_{a};v\right)-\theta:\left(d,d^{\prime}\right)\in\left\{ 0,1\right\} ^{2},a\in\mathcal{A},v\in\mathcal{V}_{an},\theta\in\Theta_{a}\right\} 
	\]
	is suitably measurable and has an envelope function 
	\[
	F_{1}\left(W\right)=\sup_{\left(d,d^{\prime}\right)\in\left\{ 0,1\right\} ^{2},a\in\mathcal{A},v\in\mathcal{V}_{an},\theta\in\Theta_{a}}\left|\psi_{d,d^{\prime},a}\left(W_{a};v\right)-\theta\right|
	\]
	which is measurable with respect to $W$, and $F_{1}\left(W\right)\leq F_{0}\left(W\right)$.
	Its uniform covering entropy satisfies 
	\[
	\log\sup_{Q}N\left(\epsilon\left\Vert F_{1}\right\Vert _{Q,2},\mathcal{F}_{1},\left\Vert .\right\Vert _{Q,2}\right)\leq v\log\left(b/\varepsilon\right)\vee0,
	\]
	where $v\geq1$ and $b\geq\max\{\text{e},N\}$ and $0<\epsilon\leq1$.
	
	\item[8.] Nuisance parameter estimation:\\
	Let $K$ be a fixed integer and $\Delta_{n}$ and $\tau_{n}$ be
	a sequence of positive constants converging to zero at a speed of at
	most polynomial $n$. The following conditions hold for each $n\geq3$
	and all $P\in\mathcal{P}_{n}$. Given a random subset $I_{k}$, $k=1,\ldots,K$
	of size $n/K$, the estimated nuisance parameter $\left\{ \hat{v}_{k,a,g}\right\} _{g=1}^{G}\in\mathcal{V}_{an}$
	with probability at least $1-\Delta_{n}$, where $\mathcal{V}_{an}$
	is the set of measurable maps $\left\{ v_{g}\right\} _{g=1}^{G}\in\mathcal{V}_{a}$
	such that for each $g$, $\left\Vert v_{g}-v_{a,g}^{0}\right\Vert _{P,2}\leq\tau_{n}$
	and $\sqrt{n}\tau_{n}^{2}\leq\delta_{n}$. Therefore and when denoting by $\mathcal{E}_{n}$
	the event that $\hat{v}_{k,a}\in\mathcal{V}_{an}$ for all
	$k=1,\ldots,K$, the probability of $\mathcal{E}_{n}$ is not smaller than $1-K\Delta_{n}=1-o\left(1\right)$. \end{assumption}

Let $\mathbb{G}_{n}$ denote an empirical process $\mathbb{G}_{n}f\left(W\right)=\sqrt{n}\left(E_{n}f\left(W\right)-E\left[f\left(W\right)\right]\right)$,
where $f$ is any $P\in\mathcal{P}_{n}$ integrable function on the
set $\mathcal{W}$. Let $\mathbb{G}_{P}f\left(W\right)$ denote the
limiting process of $\mathbb{G}_{n}f\left(W\right)$, which is a Gaussian
process with zero mean and a finite covariance matrix $E\left[\left(f\left(W\right)-E\left[f\left(W\right)\right]\right)\left(f\left(W\right)-E\left[f\left(W\right)\right]\right)^{\top}\right]$
under probability $P$ (the $P$-Brownian bridge). Using the previous notation and assumptions, we obtain the following result.

\begin{theorem}
	If Assumptions A.1.1 to A.1.8 hold, the K-fold cross-fitting estimator
	$\hat{\boldsymbol{\theta}}_{a}$ for estimating $\mathbf{F}^{0}(a)$ satisfies 
	\[
	\sqrt{n}\left(\hat{\boldsymbol{\theta}}_{a}-\mathbf{F}^{0}(a)\right)_{a\in\mathcal{A}}=Z_{n,P}+o_{P}\left(1\right),
	\]
	in $l^{\infty}\left(\mathcal{A}\right)^{4}$, uniformly in $P\in\mathcal{P}_{n}$,
	where $Z_{n,P}:=\left( \mathbb{G}_{n}\left(\boldsymbol{\psi}_{a}\left(W_{a};v_{a}^{0}\right)-\boldsymbol{\theta}_{a}^{0}\right)\right) _{a\in\mathcal{A}}$.
	Furthermore, 
	\[
	Z_{n,P}\rightsquigarrow Z_{P}
	\]
	in $l^{\infty}\left(\mathcal{A}\right)^{4}$, uniformly in $P\in\mathcal{P}_{n}$,
	where $Z_{P}:=\left( \mathbb{G}_{P}\left(\boldsymbol{\psi}_{a}\left(W_{a};v_{a}^{0}\right)-\boldsymbol{\theta}_{a}^{0}\right)\right) _{a\in\mathcal{A}}$
	and the paths of $a\longmapsto\mathbb{G}_{P}\left(\boldsymbol{\psi}_{a}\left(W_{a};v_{a}^{0}\right)-\boldsymbol{\theta}_{a}^{0}\right)$
	are a.s. uniformly continuous on $\left(\mathcal{A},d_{\mathcal{A}}\right)$,
	and 
	\begin{align*}
		\sup_{P\in\mathcal{P}_{n}}E\left[\sup_{a\in\mathcal{A}}\left\Vert \mathbb{G}_{P}\left(\boldsymbol{\psi}_{a}\left(W_{a};v_{a}^{0}\right)-\boldsymbol{\theta}_{a}^{0}\right)\right\Vert \right] & <\infty,\\
		\lim_{\delta\rightarrow0}\sup_{P\in\mathcal{P}_{n}}E\left[\sup_{d_{\mathcal{A}}\left(a,\bar{a}\right)}\left\Vert \mathbb{G}_{P}\left(\boldsymbol{\psi}_{a}\left(W_{a};v_{a}^{0}\right)-\boldsymbol{\theta}_{a}^{0}\right)-\mathbb{G}_{P}\left(\boldsymbol{\psi}_{\bar{a}}\left(W_{\bar{a}};v_{\bar{a}}^{0}\right)-\boldsymbol{\theta}_{\bar{a}}^{0}\right)\right\Vert \right] & =0.
	\end{align*}
\end{theorem}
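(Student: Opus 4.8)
The plan is to follow the structure of the double/debiased machine learning argument of \citet{CCDDHNR_2018}, carried out uniformly over $a\in\mathcal{A}$ and over $P\in\mathcal{P}_{n}$, and then to upgrade the resulting linearization to weak convergence in $l^{\infty}(\mathcal{A})^{4}$ via a uniform-in-$P$ Donsker theorem. Throughout I work on the event $\mathcal{E}_{n}$ that $\hat{v}_{k,a}\in\mathcal{V}_{an}$ for every $k=1,\ldots,K$, which by Assumption A.1.8 has probability at least $1-K\Delta_{n}=1-o(1)$, uniformly in $P$. Fix $(d,d')\in\{0,1\}^{2}$ and $a\in\mathcal{A}$. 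Since $\hat{v}_{k,a}$ depends only on $(W_{a,j})_{j\in I_{k}^{c}}$, it is independent of $(W_{a,i})_{i\in I_{k}}$, so conditionally on $I_{k}^{c}$ the map $w\mapsto\psi_{d,d',a}(w;\hat{v}_{k,a})$ is fixed. Writing $\hat{\theta}_{d,d',a}^{(k)}-\theta_{d,d',a}^{0}=E_{N,k}[\psi_{d,d',a}(W_{a};\hat{v}_{k,a})]-\theta_{d,d',a}^{0}$ and adding and subtracting $\psi_{d,d',a}(W_{a};v_{a}^{0})$ together with its expectation, I split $\sqrt{N}(\hat{\theta}_{d,d',a}^{(k)}-\theta_{d,d',a}^{0})$ into (i) the oracle term $\mathbb{G}_{N,k}(\psi_{d,d',a}(W_{a};v_{a}^{0})-\theta_{d,d',a}^{0})$; (ii) an equicontinuity remainder $\mathbb{G}_{N,k}(\psi_{d,d',a}(W_{a};\hat{v}_{k,a})-\psi_{d,d',a}(W_{a};v_{a}^{0}))$; and (iii) a bias remainder $\sqrt{N}\,E[\psi_{d,d',a}(W_{a};\hat{v}_{k,a})-\psi_{d,d',a}(W_{a};v_{a}^{0})\mid I_{k}^{c}]$. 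Aggregating the (i)-terms across folds and using the identity $E_{n}=K^{-1}\sum_{k}E_{N,k}$ reproduces the oracle process $\mathbb{G}_{n}(\psi_{d,d',a}(W_{a};v_{a}^{0})-\theta_{d,d',a}^{0})$, the $(d,d')$-coordinate of $Z_{n,P}$ at $a$.

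Next I would bound the two remainders uniformly in $a$ and in the finitely many $(d,d')$. For (ii), conditionally on $I_{k}^{c}$ and on $\mathcal{E}_{n}$, the functions $\psi_{d,d',a}(W_{a};\hat{v}_{k,a})-\psi_{d,d',a}(W_{a};v_{a}^{0})$ belong to the class $\mathcal{F}_{1}$ of Assumption A.1.7, have $L^{2}(P)$ size at most $r_{n}\le\delta_{n}n^{-1/4}$ by Assumption A.1.4, and obey the stated VC-type uniform covering-entropy bound; applying the maximum inequality A.1 of Lemma 6.2 of \citet{CCDDHNR_2018} with $\sigma\asymp\delta_{n}n^{-1/4}$ gives $\sup_{a,(d,d')}|\mathbb{G}_{N,k}(\psi_{d,d',a}(W_{a};\hat{v}_{k,a})-\psi_{d,d',a}(W_{a};v_{a}^{0}))|=o_{P}(1)$, and since $K$ is fixed the aggregated contribution to $\sqrt{n}(\hat{\boldsymbol{\theta}}_{a}-\boldsymbol{\theta}_{a}^{0})$ is $o_{P}(1)$ in $l^{\infty}(\mathcal{A})^{4}$. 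For (iii), on $\mathcal{E}_{n}$ I expand $r\mapsto E[\psi_{d,d',a}(W_{a};v_{a}^{0}+r(\hat{v}_{k,a}-v_{a}^{0}))]$ to second order about $r=0$: the first-order term is controlled by the Neyman near-orthogonality bound $\lambda_{n}\le\delta_{n}n^{-1/2}$ (Assumption A.1.3) and the remainder by $\lambda_{n}'\le\delta_{n}n^{-1/2}$ (Assumption A.1.4), so that (iii) is $O(\sqrt{N}\,\delta_{n}n^{-1/2})=O(\delta_{n})=o(1)$, uniformly in $a$. Combining (i)--(iii) and using $\mathbf{F}^{0}(a)=\boldsymbol{\theta}_{a}^{0}$ (Assumption A.1.1) yields $\sqrt{n}(\hat{\boldsymbol{\theta}}_{a}-\mathbf{F}^{0}(a))_{a\in\mathcal{A}}=Z_{n,P}+o_{P}(1)$ in $l^{\infty}(\mathcal{A})^{4}$, uniformly in $P\in\mathcal{P}_{n}$.

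It then remains to prove $Z_{n,P}\rightsquigarrow Z_{P}$. Here I would verify that the class $\mathcal{F}_{0}$ of Assumption A.1.6 is $P$-Donsker uniformly in $P\in\mathcal{P}$: the envelope bound $\|F_{0}\|_{P,q}\le C$ with $q\ge4$, the $\log(\text{e}/\epsilon)$ uniform entropy bound, and the measurability hypothesis let me invoke the uniform functional central limit theorem (Theorem B.2 of \citet{BCFH_2017}), while the $L^{2}$-modulus estimate of Assumption A.1.5 supplies asymptotic $d_{\mathcal{A}}$-equicontinuity and hence both the almost-sure uniform continuity of the limit paths and the two displayed integrability and oscillation properties of $\mathbb{G}_{P}(\boldsymbol{\psi}_{a}(W_{a};v_{a}^{0})-\boldsymbol{\theta}_{a}^{0})$. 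I expect the main obstacle to be step (ii): the empirical-process remainder must be controlled \emph{simultaneously over the continuum of thresholds $a$} and \emph{uniformly over $P\in\mathcal{P}_{n}$}, which forces one to package the cross-fitting independence, the shrinking localization radius $r_{n}$, and the VC-type entropy bound into a single maximal inequality and to track the conditioning event $\mathcal{E}_{n}$ carefully so that every $o_{P}(1)$ statement is genuinely uniform in $P$.
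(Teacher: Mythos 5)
Your proposal is correct and follows essentially the same route as the paper's proof: the same decomposition of $\sqrt{n}(\hat{\boldsymbol{\theta}}_{a}-\boldsymbol{\theta}_{a}^{0})$ into the oracle process plus an empirical-process remainder (the paper's $m_{3,N,k}$) and a conditional-bias remainder (the paper's $m_{4,N,k}$), the same maximal inequality from \citet{CCDDHNR_2018} with $\sigma\asymp\delta_{n}n^{-1/4}$ for the former, the same second-order Taylor expansion combined with Neyman near-orthogonality for the latter, and the same uniform-in-$P$ Donsker argument for $\mathcal{F}_{0}$ (the paper invokes Theorem B.1 rather than B.2 of \citet{BCFH_2017}, a minor citation difference). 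No substantive gaps.
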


Under Assumptions A.1.1 to A.1.8, we can also establish the uniform validity
of the multiplier bootstrap. Recall the multiplier bootstrap estimator:
\[
\hat{\theta}_{d,d^{\prime},a}^{*}=\hat{\theta}_{d,d^{\prime},a}+\frac{1}{n}\sum_{i=1}^{n}\xi_{i}\left(\psi_{d,d^{\prime},a}\left(W_{a,i};\hat{v}_{k,a}\right)-\hat{\theta}_{d,d^{\prime},a}\right),
\]
where $\xi$ is a random variable independent of $W_{a}$ that satisfies
$E\left[\xi\right]=0$, $Var\left(\xi\right)=1$ and $E\left[\exp\left(\left|\xi\right|\right)\right]<\infty$.
By the independence of $\xi$ and $W_{a}$, $E\left[\xi\psi_{d,d^{\prime},a}\left(W_{a};\hat{v}_{k,a}\right)\right]=E\left[\xi\right]E\left[\psi_{d,d^{\prime},a}\left(W_{a};\hat{v}_{k,a}\right)\right]=0$,
and also $E\left[\xi\hat{\theta}_{d,d^{\prime},a}\right]=0$. Therefore,
\begin{align*}
	\sqrt{n}\left(\hat{\theta}_{d,d^{\prime},a}^{*}-\hat{\theta}_{d,d^{\prime},a}\right) & =\frac{1}{\sqrt{n}}\sum_{i=1}^{n}\xi_{i}\left(\psi_{d,d^{\prime},a}\left(W_{a,i};\hat{v}_{k,a}\right)-\hat{\theta}_{d,d^{\prime},a}\right)\\
	& =\mathbb{G}_{n}\xi\left(\psi_{d,d^{\prime},a}\left(W_{a};\hat{v}_{k,a}\right)-\hat{\theta}_{d,d^{\prime},a}\right).
\end{align*}
Let $\hat{\boldsymbol{\theta}}_{a}^{*}$ denote a vector containing the multiplier bootstrap estimators $\hat{\theta}_{d,d^{\prime},a}^{*}$ over different $\left(d,d^{\prime}\right)\in \{0,1\}^{2}$. We may rewrite the
previous result in a vector form as 
\[
\sqrt{n}\left(\hat{\boldsymbol{\theta}}_{a}^{*}-\hat{\boldsymbol{\theta}}_{a}\right)=\mathbb{G}_{n}\xi\left(\boldsymbol{\psi}_{a}\left(W_{a};\hat{v}_{k,a}\right)-\hat{\boldsymbol{\theta}}_{a}\right).
\]
Furthermore, let $Z_{n,P}^{*}:=\left(\mathbb{G}_{n}\xi\left(\boldsymbol{\psi}_{a}\left(W_{a};\hat{v}_{k,a}\right)-\hat{\boldsymbol{\theta}}_{a}\right)\right)_{a\in\mathcal{A}}$ to postulate the following theorem. 

\begin{theorem}
	
	If Assumptions A.1.1 through A.1.8 hold, the large sample law $Z_{P}$
	of $Z_{n,P}$ in Theorem A.1 can be consistently approximated by the
	bootstrap law $Z_{n,P}^{*}$: 
	\[
	Z_{n,P}^{*}\rightsquigarrow_{B}Z_{P}
	\]
	uniformly over $P\in\mathcal{P}_{n}$ in $l^{\infty}\left(\mathcal{A}\right)^{4}$.
	
\end{theorem}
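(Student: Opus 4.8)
The plan is to mirror, at the level of the high-level conditions A.1.1--A.1.8, the two-stage ``oracle comparison'' argument already carried out for Theorem~2. First I would introduce the infeasible multiplier process built from the true nuisance parameters,
\[
U_{n,P}^{*}:=\left(\mathbb{G}_{n}\xi\left(\boldsymbol{\psi}_{a}\left(W_{a};v_{a}^{0}\right)-\boldsymbol{\theta}_{a}^{0}\right)\right)_{a\in\mathcal{A}},
\]
which is a genuine multiplier empirical process indexed by the class $\xi\mathcal{F}_{0}$. The proof then splits into: (i) showing that the feasible process is uniformly close to the oracle one, $\left\Vert Z_{n,P}^{*}-U_{n,P}^{*}\right\Vert =o_{P}(1)$ uniformly over $P\in\mathcal{P}_{n}$; and (ii) showing $U_{n,P}^{*}\rightsquigarrow_{B}Z_{P}$ uniformly over $P\in\mathcal{P}_{n}$. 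A bounded-Lipschitz argument combines (i) and (ii) to give the conclusion, exactly as in the last step of the proof of Theorem~2.

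For step (i) I would write $Z_{n,P}^{*}-U_{n,P}^{*}=\Pi_{1}+\Pi_{2}$ with $\Pi_{2}=\mathbb{G}_{n}\xi\cdot(\boldsymbol{\theta}_{a}^{0}-\hat{\boldsymbol{\theta}}_{a})$ and $\Pi_{1}=\mathbb{G}_{n}\xi\bigl(\boldsymbol{\psi}_{a}(W_{a};\hat{v}_{k,a})-\boldsymbol{\psi}_{a}(W_{a};v_{a}^{0})\bigr)$. The term $\Pi_{2}$ is $O_{P}(n^{-1/2})$ because $\sup_{a\in\mathcal{A}}\Vert\hat{\boldsymbol{\theta}}_{a}-\boldsymbol{\theta}_{a}^{0}\Vert=O_{P}(n^{-1/2})$ by Theorem~A.1 and $|\mathbb{G}_{n}\xi|=O_{P}(1)$ since $\mathrm{Var}(\xi)=1$. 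For $\Pi_{1}$ I would decompose over the $K$ folds; on fold $k$, $\hat{v}_{k,a}$ is a function of the complement sample only, so conditionally on that sample the class indexing $\mathbb{G}_{N,k}\xi\bigl(\boldsymbol{\psi}_{a}(\cdot;\hat{v}_{k,a})-\boldsymbol{\psi}_{a}(\cdot;v_{a}^{0})\bigr)$ is a fixed subclass of $\xi\mathcal{F}_{1}$ (with $\mathcal{F}_{1}$ as in A.1.7, restricted to differences $\boldsymbol{\psi}_{a}(\cdot;v)-\boldsymbol{\psi}_{a}(\cdot;v_{a}^{0})$ with $v\in\mathcal{V}_{an}$). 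By A.1.7 this class has covering entropy $\lesssim v\log(b/\epsilon)\vee0$, and multiplication by $\xi$ leaves that bound unchanged up to an absolute constant (Lemma~L.1 of \citet{BCFH_2017}); by A.1.4 its $L_{2}(P)$ radius is $r_{n}\lesssim\delta_{n}n^{-1/4}$. Applying the maximal inequality (Lemma~6.2 of \citet{CCDDHNR_2018}) with envelope $C|\xi|$ and $(E[\max_{i\in I_{k}}\xi_{i}^{2}])^{1/2}\lesssim\log N$ (from $E[\exp(|\xi|)]<\infty$), the ``fast'' term is of order $\delta_{n}n^{-1/4}\sqrt{v\log(b\vee n)}$ and the ``slow'' term of order $v\log N\log(b\vee n)/\sqrt{N}$; the rate budget in A.1.7--A.1.8 (in particular $\sqrt{n}\tau_{n}^{2}\leq\delta_{n}$ and the polynomial-in-$n$ control of $v,b$) forces both to be $o_{P}(1)$ on the event $\mathcal{E}_{n}$, which has probability $1-o(1)$ by A.1.8. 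Summing over the fixed number $K$ of folds then yields $\Pi_{1}=o_{P}(1)$ uniformly in $P$.

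For step (ii) I would note that $U_{n,P}^{*}$ is the multiplier empirical process attached to $\xi\mathcal{F}_{0}$, and that $\mathcal{F}_{0}$ is $P$-Donsker uniformly over $P\in\mathcal{P}_{n}$ --- this was already obtained in proving Theorem~A.1 from A.1.5 (smoothness/equicontinuity) and A.1.6 (uniform covering entropy, measurable $q$-integrable envelope). The conditional multiplier central limit theorem (Theorem~B.2 of \citet{BCFH_2017}) then delivers $U_{n,P}^{*}\rightsquigarrow_{B}Z_{P}$ uniformly in $P$, since the conditional limit of $U_{n,P}^{*}$ and $Z_{P}$ are both mean-zero tight Gaussian processes with the common covariance kernel $E\bigl[(\boldsymbol{\psi}_{a}(W_{a};v_{a}^{0})-\boldsymbol{\theta}_{a}^{0})(\boldsymbol{\psi}_{\bar a}(W_{\bar a};v_{\bar a}^{0})-\boldsymbol{\theta}_{\bar a}^{0})^{\top}\bigr]$. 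To conclude, let $\text{BL}_{1}$ denote the $1$-bounded-Lipschitz functionals on $l^{\infty}(\mathcal{A})^{4}$ and $E_{B_{n}}$ the expectation over $(\xi_{i})_{i=1}^{n}$ with the data held fixed; then
\[
\sup_{h\in\text{BL}_{1}}\bigl|E_{B_{n}}h(Z_{n,P}^{*})-E_{P}h(Z_{P})\bigr|\le\sup_{h\in\text{BL}_{1}}\bigl|E_{B_{n}}h(U_{n,P}^{*})-E_{P}h(Z_{P})\bigr|+E_{B_{n}}\bigl[\Vert Z_{n,P}^{*}-U_{n,P}^{*}\Vert\wedge2\bigr].
\]
The first right-hand term is $o_{P}(1)$ uniformly in $P$ by step (ii); the second has expectation $E[\Vert Z_{n,P}^{*}-U_{n,P}^{*}\Vert\wedge2]\to0$ by step (i) and dominated convergence, hence is $o_{P}(1)$ by Markov. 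This is precisely $Z_{n,P}^{*}\rightsquigarrow_{B}Z_{P}$ uniformly over $P\in\mathcal{P}_{n}$.

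The hard part will be the uniform-in-$P$ control of $\Pi_{1}$: one must use cross-fitting so that the random estimator $\hat{v}_{k,a}$ can be treated as fixed when the maximal inequality is applied fold by fold, while at the same time absorbing the only sub-exponential (not bounded) multiplier $\xi$, which inflates the slow term of the maximal inequality by a $\log N$ factor. Verifying --- uniformly over $\mathcal{P}_{n}$ --- that the complexity and nuisance-rate budgets in A.1.4, A.1.7 and A.1.8 are strong enough to annihilate this inflated term is the delicate point of the argument; everything else is a direct transcription of the structure of the proof of Theorem~2.
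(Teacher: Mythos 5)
Your proposal follows essentially the same route as the paper's proof of this theorem: the same oracle comparison with $U_{n,P}^{*}$, the same $\Pi_{1}+\Pi_{2}$ split with $\Pi_{2}$ controlled via Theorem A.1 and $|\mathbb{G}_{n}\xi|=O_{P}(1)$, the same fold-by-fold treatment of $\Pi_{1}$ using the difference class, the multiplier-preserving entropy lemma of \citet{BCFH_2017} and the maximal inequality of Lemma 6.2 of \citet{CCDDHNR_2018}, the uniform Donsker property of $\mathcal{F}_{0}$ combined with Theorem B.2 of \citet{BCFH_2017}, and the concluding bounded-Lipschitz/Markov argument. The only minor imprecision is that in the general setting of Assumptions A.1.6--A.1.7 the envelope of the difference class is $2F_{0}(W)\left|\xi\right|$ with $\left\Vert F_{0}\right\Vert _{P,q}\leq C$ rather than a constant multiple of $\left|\xi\right|$, so the slow term of the maximal inequality carries an extra $N^{1/q}$ factor, which is still negligible since $q\geq4$.
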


Let $\phi_{\tau}\left(F_{X}\right):=\inf\left\{ a\in\mathbb{R}:F_{X}\left(a\right)\geq\tau\right\} $
be the $\tau$-th quantile function of a random variable $X$ associated
with a c.d.f. $F_{X}$. The von Mises expansion of $\phi_{\tau}\left(F_{X}\right)$
(p.292 in \citet{Vaart_1998}) is given by: 
\[
\phi_{\tau}\left(E_{n}\right)-\phi_{\tau}\left(E\right)=\frac{1}{\sqrt{n}}\phi_{\tau,E}^{\prime}\left(\mathbb{G}_{n}\right)+\ldots+\frac{1}{m!}\frac{1}{n^{m/2}}\phi_{\tau,E}^{\left(k\right)}\left(\mathbb{G}_{n}\right)+\ldots,
\]
where $\phi_{\tau,E}^{\prime}\left(.\right)$ is a linear derivative
map and $\mathbb{G}_{n}$ denotes an empirical process $\mathbb{G}_{n}f\left(W\right)=\sqrt{n}\left(E_{n}f\left(W\right)-E\left[f\left(W\right)\right]\right)$.
Let $\phi_{\boldsymbol{\theta}}^{\prime}:=\left(\phi_{\tau,\boldsymbol{\theta}}^{\prime}\right)_{\tau\in\mathcal{T}}$,
where $\boldsymbol{\theta}=\left(\boldsymbol{\theta}_{a}\right)_{a\in\mathcal{A}}$.
Let $Q_{Y\left(d,M\left(d^{\prime}\right)\right)}^{0}\left(\tau\right):=\inf\left\{ a\in\mathbb{R}:\theta_{d,d^{\prime},a}^{0}\geq\tau\right\} $,
$\hat{Q}_{Y\left(d,M\left(d^{\prime}\right)\right)}\left(\tau\right):=\inf\left\{ a\in\mathbb{R}:\hat{\theta}_{d,d^{\prime},a}\geq\tau\right\} $
and $\hat{Q}_{Y\left(d,M\left(d^{\prime}\right)\right)}^{*}\left(\tau\right):=\inf\left\{ a\in\mathbb{R}:\hat{\theta}_{d,d^{\prime},a}^{*}\geq\tau\right\} $.
Let $\mathbf{Q}_{\tau}^{0}$, $\hat{\mathbf{Q}}_{\tau}$ and $\mathbf{\hat{Q}}_{\tau}^{*}$
denote the corresponding vectors containing $Q_{Y\left(d,M\left(d^{\prime}\right)\right)}^{0}\left(\tau\right)$,
$\hat{Q}_{Y\left(d,M\left(d^{\prime}\right)\right)}\left(\tau\right)$
and $\hat{Q}_{Y\left(d,M\left(d^{\prime}\right)\right)}^{*}\left(\tau\right)$
for different $(d,d^{\prime})\in\{0,1\}^{2}$, respectively. 
We then obtain the following result of uniform validity of quantile
estimation, which can be proven by invoking the functional delta theorems
(Theorems B.3 and B.4) of \citet{BCFH_2017}.

\begin{theorem}
	Under Assumptions A.1.1 to A.1.8, 
	\begin{align*}
		\sqrt{n}\left(\hat{\mathbf{Q}}_{\tau}-\mathbf{Q}_{\tau}^{0}\right)_{\tau\in\mathcal{T}} & \rightsquigarrow T_{P}:=\phi_{\boldsymbol{\theta}}^{\prime}\left(Z_{P}\right),\\
		\sqrt{n}\left(\hat{\mathbf{Q}}_{\tau}^{*}-\hat{\mathbf{Q}}_{\tau}\right)_{\tau\in\mathcal{T}} & \rightsquigarrow_{B}T_{P}:=\phi_{\boldsymbol{\theta}}^{\prime}\left(Z_{P}\right).
	\end{align*}
	uniformly over $P\in\mathcal{P}_{n}$ in $l^{\infty}\left(\mathcal{T}\right)^{4}$,
	where $\mathcal{T}\subset(0,1)$, $T_{P}$ is a zero mean tight Gaussian process for each $P\in\mathcal{P}_{n}$
	and $Z_{P}:=\left\{ \mathbb{G}_{P}\left(\boldsymbol{\psi}_{a}\left(W_{a};v_{a}^{0}\right)-\boldsymbol{\theta}_{a}^{0}\right)\right\} _{a\in\mathcal{A}}$.
\end{theorem}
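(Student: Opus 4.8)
The plan is to obtain Theorem 3 as a consequence of Theorems 1 and 2 via the (uniform-in-$P$) functional delta method, applied to the four-dimensional quantile map $\phi_{\boldsymbol{\theta}}:\left(\mathbf{F}(a)\right)_{a\in\mathcal{A}}\longmapsto\left(\mathbf{Q}_{\tau}\right)_{\tau\in\mathcal{T}}$, which sends a vector of c.d.f.\ profiles on $\mathcal{A}$ to the vector of their generalized inverses on $\mathcal{T}$, computed coordinate by coordinate over $\left(d,d^{\prime}\right)\in\{0,1\}^{2}$. Theorem 1 already delivers $\sqrt{n}\left(\hat{\boldsymbol{\theta}}_{a}-\mathbf{F}^{0}(a)\right)_{a\in\mathcal{A}}=Z_{n,P}+o_{P}(1)\rightsquigarrow Z_{P}$ in $l^{\infty}(\mathcal{A})^{4}$ uniformly in $P\in\mathcal{P}_{n}$, with $Z_{P}$ a tight, mean-zero Gaussian process whose sample paths are a.s.\ uniformly $d_{\mathcal{A}}$-continuous, and Theorem 2 delivers the matching bootstrap statement $Z_{n,P}^{*}\rightsquigarrow_{B}Z_{P}$. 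So the only genuinely new ingredient is the differentiability of $\phi_{\boldsymbol{\theta}}$.

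First I would establish that $\phi_{\boldsymbol{\theta}}$ is uniformly Hadamard differentiable at $\mathbf{F}^{0}=\left(\mathbf{F}^{0}(a)\right)_{a\in\mathcal{A}}$, tangentially to the subspace of $l^{\infty}(\mathcal{A})^{4}$ consisting of functions that are uniformly continuous on $\left(\mathcal{A},d_{\mathcal{A}}\right)$, with derivative $\phi_{\boldsymbol{\theta}}^{\prime}=\left(\phi_{\tau,\boldsymbol{\theta}}^{\prime}\right)_{\tau\in\mathcal{T}}$. For a single coordinate this is the classical Hadamard differentiability of the quantile functional (Lemma 21.3 in \citet{Vaart_1998}): it requires $F_{Y\left(d,M\left(d^{\prime}\right)\right)}^{0}$ to be continuously differentiable in a neighborhood of each $Q_{Y\left(d,M\left(d^{\prime}\right)\right)}^{0}(\tau)$, $\tau\in\mathcal{T}$, with density bounded away from zero there, so that $\phi_{\tau,\boldsymbol{\theta}}^{\prime}$ acts as division by that density at the quantile point. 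The uniform (in $P$) version requires these density bounds to hold uniformly over $P\in\mathcal{P}_{n}$ and $\tau\in\mathcal{T}$, which I would read off from the maintained regularity conditions together with $\mathcal{T}$ being a compact subset of $(0,1)$ whose associated quantiles stay interior to the support of the outcome. Since $Z_{P}$ has a.s.\ uniformly continuous paths, its realizations fall in the relevant tangent set with probability one, so tangential differentiability suffices.

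Then I would invoke the uniform functional delta theorems---Theorem A.4 (the restatement of Theorem B.3 of \citet{BCFH_2017}) for the estimator sequence from Theorem 1, and Theorem A.5 (the restatement of Theorem B.4 of \citet{BCFH_2017}) for the multiplier bootstrap sequence from Theorem 2---applied with the map $\phi_{\boldsymbol{\theta}}$. This yields $\sqrt{n}\left(\hat{\mathbf{Q}}_{\tau}-\mathbf{Q}_{\tau}^{0}\right)_{\tau\in\mathcal{T}}\rightsquigarrow\phi_{\boldsymbol{\theta}}^{\prime}\left(Z_{P}\right)=:T_{P}$ and $\sqrt{n}\left(\hat{\mathbf{Q}}_{\tau}^{*}-\hat{\mathbf{Q}}_{\tau}\right)_{\tau\in\mathcal{T}}\rightsquigarrow_{B}\phi_{\boldsymbol{\theta}}^{\prime}\left(Z_{P}\right)=T_{P}$ in $l^{\infty}(\mathcal{T})^{4}$, uniformly in $P\in\mathcal{P}_{n}$. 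Because $\phi_{\boldsymbol{\theta}}^{\prime}$ is a bounded linear map and $Z_{P}$ is tight and mean-zero Gaussian, $T_{P}=\phi_{\boldsymbol{\theta}}^{\prime}\left(Z_{P}\right)$ is again a tight, mean-zero Gaussian process for each $P\in\mathcal{P}_{n}$, which completes the statement.

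I expect the main obstacle to be the first step: verifying that the density-type conditions required for Hadamard differentiability of the quantile map hold uniformly over $P\in\mathcal{P}_{n}$---so that $\phi_{\boldsymbol{\theta}}^{\prime}$ is itself a uniformly bounded linear operator---and that the quantiles $Q_{Y\left(d,M\left(d^{\prime}\right)\right)}^{0}(\tau)$ for $\tau\in\mathcal{T}$ remain uniformly interior to $\mathcal{A}$ so the generalized inverse is well behaved. Once these regularity facts are in place, the transfer of weak convergence and of bootstrap consistency through Theorems A.4 and A.5 is essentially mechanical.
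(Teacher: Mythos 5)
Your proposal follows essentially the same route as the paper: the paper's proof of Theorem 3 simply asserts uniform Hadamard tangential differentiability of the quantile map $\phi_{\boldsymbol{\theta}}$ (citing the discussion around Definition 1, with $\mathbb{D}_{0}=\text{UC}(\mathcal{T})$ and $\mathbb{D}_{\rho}$ a compact subset of $C^{1}$ with positive derivative) and then invokes Theorems A.4 and A.5 together with the weak-convergence results of Theorems 1 and 2. If anything, you are more explicit than the paper about the density-type regularity needed for the first step, which the paper states only informally rather than deriving from Assumptions A.1.1--A.1.8.
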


\subsection{Proofs of Theorems A.1 to A.3}

In the proofs of Theorems A.1 to A.3, we will use the notation $\lesssim$
to denote ``less than equal a constant times'': $b\lesssim c$ denotes
$b\leq Bc$, where $B$ is a constant depending on Assumptions
A.1.1 to A.1.8, but not on $n_{0}\leq n$ and $P\in\mathcal{P}_{n}$.
We assume $n_{0}\leq n$ since the results are all asymptotic.

\begin{proof}[Proof of Theorem A.1] It is sufficient to establish
	the result over any sequence of induced probability measure $P_{n}\in\mathcal{P}_{n}$.
	But we will write $P=P_{n}$ to simplify the notation. Furthermore, we
	fix any $k=1,...,K$. From the definition of $\hat{\boldsymbol{\theta}}_{a}$ and under Assumption A.1.1, we obtain 
	\begin{align*}
		\sqrt{n}\left(\hat{\boldsymbol{\theta}}_{a}-\mathbf{F}^{0}(a)\right) & =  \sqrt{n}\left[\frac{1}{K}\sum_{k=1}^{K}\hat{\boldsymbol{\theta}}_{a}^{\left(k\right)}-\boldsymbol{\theta}_{a}^{0}\right]\\
		& =  \sqrt{n}\left\{ \frac{1}{K}\sum_{k=1}^{K}E_{N,k}\left[\boldsymbol{\psi}_{a}\left(W_{a};\hat{v}_{k,a}\right)-E\left[\boldsymbol{\psi}_{a}\left(W_{a};v_{a}^{0}\right)\right]\right]\right\} \\
		& =  \sqrt{n}\left\{ \frac{1}{K}\sum_{k=1}^{K}E_{N,k}\left[\boldsymbol{\psi}_{a}\left(W_{a};\hat{v}_{k,a}\right)\right]-E\left[\boldsymbol{\psi}_{a}\left(W_{a};v_{a}^{0}\right)\right]\right\} \\
		& =  \sqrt{n}\left\{ \frac{1}{K}\sum_{k=1}^{K}E_{N,k}\left[\boldsymbol{\psi}_{a}\left(W_{a};\hat{v}_{k,a}\right)\right]-\frac{1}{n}\sum_{i=1}^{n}\boldsymbol{\psi}_{a}\left(W_{a,i};v_{a}^{0}\right)\right.\\
		&   +\left.\frac{1}{n}\sum_{i=1}^{n}\left(\boldsymbol{\psi}_{a}\left(W_{a,i};v_{a}^{0}\right)-\boldsymbol{\theta}_{a}^{0}\right)-\left(E\left[\boldsymbol{\psi}_{a}\left(W_{a};v_{a}^{0}\right)-\boldsymbol{\theta}_{a}^{0}\right]\right)\right\} \\
		& =  \underbrace{\sqrt{n}\left\{ \frac{1}{K}\sum_{k=1}^{K}E_{N,k}\left[\boldsymbol{\psi}_{a}\left(W_{a};\hat{v}_{k,a}\right)\right]-\frac{1}{n}\sum_{i=1}^{N}\boldsymbol{\psi}_{a}\left(W_{a,i};v_{a}^{0}\right)\right\} }_{m_{1,n}\left(a\right)}\\
		& +\mathbb{G}_{n}\left(\boldsymbol{\psi}_{a}\left(W_{a};v_{a}^{0}\right)-\boldsymbol{\theta}_{a}^{0}\right).
	\end{align*}
	Therefore, proving $\sqrt{n}\left(\hat{\boldsymbol{\theta}}_{a}-\boldsymbol{\theta}_{a}^{0}\right)_{a\in\mathcal{A}}=Z_{n,P}+o_{P}\left(1\right)$
	uniformly over $P\in\mathcal{P}_{n}$ in $l^{\infty}\left(\mathcal{A}\right)^{4}$ is equivalent to showing that $\left(m_{1,n}\left(a\right)\right)_{a\in\mathcal{A}}=o_{P}\left(1\right)$
	uniformly over $P\in\mathcal{P}_{n}$ in $l^{\infty}\left(\mathcal{A}\right)^{4}$.
	Notice that 
	\begin{align*}
		\frac{1}{K}\sum_{k=1}^{K}E_{N,k}\left[\boldsymbol{\psi}_{a}\left(W_{a};\hat{v}_{k,a}\right)\right]-\frac{1}{n}\sum_{i=1}^{N}\boldsymbol{\psi}_{a}\left(W_{a,i};v_{a}^{0}\right) & =\frac{1}{K}\sum_{k=1}^{K}\left\{ E_{N,k}\left[\boldsymbol{\psi}_{a}\left(W_{a};\hat{v}_{k,a}\right)\right]-E_{N,k}\left[\boldsymbol{\psi}_{a}\left(W_{a};v_{a}^{0}\right)\right]\right\} 
	\end{align*}
	and 
	\begin{align*}
		\sup_{a\in\mathcal{A}}\left\Vert m_{1,n}\left(a\right)\right\Vert  & =\sqrt{n}\sup_{a\in\mathcal{A}}\left\Vert \frac{1}{K}\sum_{k=1}^{K}\left\{ E_{N,k}\left[\boldsymbol{\psi}_{a}\left(W_{a};\hat{v}_{k,a}\right)\right]-E_{N,k}\left[\boldsymbol{\psi}_{a}\left(W_{a};v_{a}^{0}\right)\right]\right\} \right\Vert \\
		& \leq\sqrt{n}\sup_{a\in\mathcal{A}}\frac{1}{K}\sum_{k=1}^{K}\left\Vert E_{N,k}\left[\boldsymbol{\psi}_{a}\left(W_{a};\hat{v}_{k,a}\right)\right]-E_{N,k}\left[\boldsymbol{\psi}_{a}\left(W_{a};v_{a}^{0}\right)\right]\right\Vert \\
		& \le\sqrt{n}\frac{1}{K}\sum_{k=1}^{K}\sup_{a\in\mathcal{A}}\left\Vert E_{N,k}\left[\boldsymbol{\psi}_{a}\left(W_{a};\hat{v}_{k,a}\right)\right]-E_{N,k}\left[\boldsymbol{\psi}_{a}\left(W_{a};v_{a}^{0}\right)\right]\right\Vert .
	\end{align*}
	Therefore, it suffices to show that 
	\[
	\sup_{a\in\mathcal{A}}\underbrace{\left\Vert E_{N,k}\left[\boldsymbol{\psi}_{a}\left(W_{a};\hat{v}_{k,a}\right)\right]-E_{N,k}\left[\boldsymbol{\psi}_{a}\left(W_{a};v_{a}^{0}\right)\right]\right\Vert }_{m_{2,N,k}\left(a\right)}=o_{P}\left(n^{-1/2}\right)
	\]
	holds, since $K$ is finite and fixed. Next, 
	\begin{align*}
		m_{2,N,k}\left(a\right) & =\left\Vert \frac{1}{N}\sum_{i\in I_{k}}\boldsymbol{\psi}_{a}\left(W_{a,i};\hat{v}_{k,a}\right)-\frac{1}{N}\sum_{i\in I_{k}}\boldsymbol{\psi}_{a}\left(W_{a,i};v_{a}^{0}\right)\right\Vert \\
		& =\left\Vert \frac{1}{N}\sum_{i\in I_{k}}\boldsymbol{\psi}_{a}\left(W_{a,i};\hat{v}_{k,a}\right)-\frac{1}{N}\sum_{i\in I_{k}}E\left[\boldsymbol{\psi}_{a}\left(W_{a,i};\hat{v}_{k,a}\right)|\left(W_{a,j}\right)_{j\in I_{k}^{c}}\right]\right.\\
		& -\left[\frac{1}{N}\sum_{i\in I_{k}}\boldsymbol{\psi}_{a}\left(W_{a,i};v_{a}^{0}\right)-\frac{1}{N}\sum_{i\in I_{k}}E\left[\boldsymbol{\psi}_{a}\left(W_{a,i};v_{a}^{0}\right)\right]\right]\\
		& \left.+\frac{1}{N}\sum_{i\in I_{k}}E\left[\boldsymbol{\psi}_{a}\left(W_{a,i};\hat{v}_{k,a}\right)|\left(W_{a,j}\right)_{j\in I_{k}^{c}}\right]-\frac{1}{N}\sum_{i\in I_{k}}E\left[\boldsymbol{\psi}_{a}\left(W_{a,i};v_{a}^{0}\right)\right]\right\Vert \\
		& \leq\frac{1}{\sqrt{N}}\underbrace{\left\Vert \mathbb{G}_{N,k}\left(\boldsymbol{\psi}_{a}\left(W_{a};\hat{v}_{k,a}\right)-\boldsymbol{\psi}_{a}\left(W_{a};v_{a}^{0}\right)\right)\right\Vert }_{m_{3,N,k}\left(a\right)}\\
		& +\underbrace{\left\Vert \frac{1}{N}\sum_{i\in I_{k}}E\left[\boldsymbol{\psi}_{a}\left(W_{a,i};\hat{v}_{k,a}\right)|\left(W_{a,j}\right)_{j\in I_{k}^{c}}\right]-E\left[\boldsymbol{\psi}_{a}\left(W_{a};v_{a}^{0}\right)\right]\right\Vert }_{m_{4,N,k}\left(a\right)},
	\end{align*}
	where $\mathbb{G}_{N,k}$ is an empirical process defined as 
	\[
	\mathbb{G}_{N,k}f\left(W\right)=\sqrt{N}\left(\frac{1}{N}\sum_{i\in I_{k}}f\left(W_{i}\right)-\int f\left(w\right)dP\right),
	\]
	and $f$ is any $P$ integrable function on $\mathcal{W}$. We note that 
	\[
	E\left[\boldsymbol{\psi}_{a}\left(W_{a,i};\hat{v}_{k,a}\right)|\left(W_{a,j}\right)_{j\in I_{k}^{c}}\right]=E\left[\boldsymbol{\psi}_{a}\left(W_{a,i};\hat{v}_{k,a}\right)\right]
	\]
	for $i\in I_{k}$, since conditional on $\left(W_{a,j}\right)_{j\in I_{k}^{c}}$,
	$\hat{v}_{k,a}$ is a constant and $\left(W_{a,i}\right)_{i\in I_{k}}$
	and $\left(W_{a,j}\right)_{i\in I_{k}^{c}}$ are independent. Then, 
	\[
	\sup_{a\in\mathcal{A}}m_{2,N,k}\left(a\right)\leq\frac{\sup_{a\in\mathcal{A}}m_{3,N,k}\left(a\right)}{\sqrt{N}}+\sup_{a\in\mathcal{A}}m_{4,N,k}\left(a\right).
	\]
	In order to bound $\sup_{a\in\mathcal{A}}m_{3,N,k}\left(a\right)$,
	we define the following class of functions:
	\[
	\mathcal{F}_{2}^{\prime}=\left\{ \psi_{d,d^{\prime},a}\left(W_{a};v\right)-\theta-\left(\psi_{d,d^{\prime},a}\left(W_{a};v_{a}^{0}\right)-\theta_{d,d^{\prime},a}^{0}\right):\left(d,d^{\prime}\right)\in\left\{ 0,1\right\} ^{2},a\in\mathcal{A},v\in\mathcal{V}_{an},\theta\in\Theta_{an}\right\} ,
	\]
	where $\Theta_{an}:=\left\{ \theta\in\Theta_{a}:\left|\theta-\theta_{d,d^{\prime},a}^{0}\right|\leq C\tau_{n}\right\} $.
	Notice that the envelope function of $\mathcal{F}_{2}^{\prime}$,
	denoted by 
	\begin{align*}
		F_{2}^{\prime}\left(W\right) & =\sup_{\left(d,d^{\prime}\right)\in\left\{ 0,1\right\} ^{2},a\in\mathcal{A},v\in\mathcal{V}_{an},\theta\in\Theta_{an}}\left|\psi_{d,d^{\prime},a}\left(W_{a};v\right)-\theta-\left(\psi_{d,d^{\prime},a}\left(W_{a};v_{a}^{0}\right)-\theta_{d,d^{\prime},a}^{0}\right)\right|\\
		& \leq\sup_{\left(d,d^{\prime}\right)\in\left\{ 0,1\right\} ^{2},a\in\mathcal{A},v\in\mathcal{V}_{an},\theta\in\Theta_{an}}\left|\psi_{d,d^{\prime},a}\left(W_{a};v\right)-\theta\right|\\
		& +\sup_{\left(d,d^{\prime}\right)\in\left\{ 0,1\right\} ^{2},a\in\mathcal{A},v\in\mathcal{V}_{a},\theta\in\Theta_{an}}\left|\psi_{d,d^{\prime},a}\left(W_{a};v\right)-\theta\right|\\
		& \leq F_{1}\left(W\right)+F_{0}\left(W\right)\\
		& \leq2F_{0}\left(W\right).
	\end{align*}
	The uniform covering entropy of $\mathcal{F}_{2}^{\prime}$: $\log\sup_{Q}N\left(\epsilon\left\Vert F_{2}^{\prime}\right\Vert _{Q,2},\mathcal{F}_{2}^{\prime},\left\Vert .\right\Vert _{Q,2}\right)$
	satisfies 
	\[
	\log\sup_{Q}N\left(\epsilon\left\Vert F_{2}^{\prime}\right\Vert _{Q,2},\mathcal{F}_{2}^{\prime},\left\Vert .\right\Vert _{Q,2}\right)\lesssim2v\left(\log\left(a/\epsilon\right)\right)\vee0.
	\]
	Next, consider another class of functions: 
	\[
	\mathcal{F}_{2}=\left\{ \psi_{d,d^{\prime},a}\left(W_{a};v\right)-\psi_{d,d^{\prime},a}\left(W_{a};v_{a}^{0}\right):\left(d,d^{\prime}\right)\in\left\{ 0,1\right\} ^{2},a\in\mathcal{A},v\in\mathcal{V}_{an},\theta\in\Theta_{an}\right\} .
	\]
	$\mathcal{F}_{2}$ is a subset of $\mathcal{F}_{2}^{\prime}$ in which
	we choose $C=0$ in $\Theta_{an}$, and for this reason, its envelope function
	$F_{2}\left(W\right)$ is bounded by $F_{2}^{\prime}\left(W\right)$.
	Therefore, the uniform covering entropy of $\mathcal{F}_{2}$: $\log\sup_{Q}N\left(\epsilon\left\Vert F_{2}\right\Vert _{Q,2},\mathcal{F}_{2},\left\Vert .\right\Vert _{Q,2}\right)$
	also satisfies 
	\[
	\log\sup_{Q}N\left(\epsilon\left\Vert F_{2}\right\Vert _{Q,2},\mathcal{F}_{2},\left\Vert .\right\Vert _{Q,2}\right)\lesssim2v\left(\log\left(a/\epsilon\right)\right)\vee0.
	\]
	With probability $P$ $1-K\Delta_{n}=1-o\left(1\right)$, we have
	\[
	\sup_{\left(d,d^{\prime}\right)\in\left\{ 0,1\right\} ^{2},a\in\mathcal{A}}\left|\mathbb{G}_{N,k}\left(\psi_{d,d^{\prime},a}\left(W_{a};\hat{v}_{k,a}\right)-\psi_{d,d^{\prime},a}\left(W_{a};v_{a}^{0}\right)\right)\right|\leq\sup_{f\in\mathcal{F}_{2}}\left|\mathbb{G}_{N,k}f\right|,
	\]
	where $\psi_{d,d^{\prime},a}\left(W_{a};\hat{v}_{k,a}\right)-\psi_{d,d^{\prime},a}\left(W_{a};v_{a}^{0}\right)$
	is an element of $\boldsymbol{\psi}_{a}\left(W_{a};\hat{v}_{k,a}\right)-\boldsymbol{\psi}_{a}\left(W_{a};v_{a}^{0}\right)$
	in $m_{3,N,k}\left(a\right)$. Furthermore, it can be shown that $\sup_{f\in\mathcal{F}_{2}}\left\Vert f\right\Vert _{P,2}\leq r_{n}$,
	where 
	\[
	r_{n}:=\sup_{a\in\mathcal{A},v\in\mathcal{V}_{an}}\left\Vert \boldsymbol{\psi}_{a}\left(W_{a};v\right)-\boldsymbol{\psi}_{a}\left(W_{a};v_{a}^{0}\right)\right\Vert _{P,2}.
	\]
	Then using the maximum inequality A.1 of Lemma 6.2 of \citet{CCDDHNR_2018}
	by setting the used parameter $\sigma=C^{\prime}\delta_{n}n^{-1/4}$,
	where $C^{\prime}>1$ is a constant, and $a=b=N$ in this maximum
	inequality, we have 
	\begin{align*}
		\sup_{f\in\mathcal{F}_{2}}\left|\mathbb{G}_{N,k}f\right| & \lesssim\delta_{n}n^{-\frac{1}{4}}\sqrt{\log\left(N\vee\sigma^{-1}\right)}+N^{\frac{1}{q}-\frac{1}{2}}\log\left(N\vee\sigma^{-1}\right)\\
		& \lesssim\delta_{n}+K^{\frac{1}{2}-\frac{1}{q}}\frac{\log n}{n^{\frac{1}{2}-\frac{1}{q}}}=o\left(1\right)
	\end{align*}
	If follows that with probability $P$ $1-o\left(1\right)$,
	\[
	\sup_{\left(d,d^{\prime}\right)\in\left\{ 0,1\right\} ^{2},a\in\mathcal{A}}\left|\mathbb{G}_{N,k}\left(\psi_{d,d^{\prime},a}\left(W_{a};\hat{v}_{k,a}\right)-\psi_{d,d^{\prime},a}\left(W_{a};v_{a}^{0}\right)\right)\right|\leq\sup_{f\in\mathcal{F}_{2}}\left|\mathbb{G}_{N,k}f\right|\lesssim o\left(1\right),
	\]
	and $\sup_{a\in\mathcal{A}}m_{3,N,k}\left(a\right)\lesssim o\left(1\right)$.
	Then, $N^{-1/2}\sup_{a\in\mathcal{A}}m_{3,N,k}\left(a\right)=n^{-1/2}K^{1/2}\sup_{a\in\mathcal{A}}m_{3,N,k}\left(a\right)=o_{P}\left(n^{-1/2}\right)$
	since $K$ is fixed and finite.
	
	Concerning the term $m_{4,N,k}\left(a\right)$, let 
	\[
	\varphi_{k}\left(r\right)=E\left[\boldsymbol{\psi}_{a}\left(W_{a};r\left(\hat{v}_{k,a}-v_{a}^{0}\right)+v_{a}^{0}\right)|\left(W_{a,j}\right)_{j\in I_{k}^{c}}\right]-E\left[\boldsymbol{\psi}_{a}\left(W_{a};v_{a}^{0}\right)\right],
	\]
	where $r\in\left(0,1\right)$. Notice that $\varphi_{k}\left(0\right)=0$,
	since 
	\[
	E\left[\boldsymbol{\psi}_{a}\left(W_{a};v_{a}^{0}\right)|\left(W_{a,j}\right)_{j\in I_{k}^{c}}\right]=E\left[\boldsymbol{\psi}_{a}\left(W_{a};v_{a}^{0}\right)\right],
	\]
	and $\varphi_{k}\left(1\right)=E\left[\boldsymbol{\psi}_{a}\left(W_{a};\hat{v}_{k,a}\right)|\left(W_{a,j}\right)_{j\in I_{k}^{c}}\right]-E\left[\boldsymbol{\psi}_{a}\left(W_{a};v_{a}^{0}\right)\right]$.
	By applying a Taylor expansion to $\varphi_{k}\left(r\right)$ around $0$,
	\[
	\varphi_{k}\left(r\right)=\varphi_{k}\left(0\right)+\varphi_{k}^{\prime}\left(0\right)r+\frac{1}{2}\varphi_{k}^{\prime\prime}\left(\bar{r}\right)r^{2}
	\]
	for some $\bar{r}\in\left(0,1\right)$. Then, $m_{4,N,k}\left(a\right)\leq\left\Vert \varphi_{k}\left(1\right)\right\Vert =\left\Vert \varphi_{k}^{\prime}\left(0\right)+1/2\varphi_{k}^{\prime\prime}\left(\bar{r}\right)\right\Vert $,
	where 
	\begin{align*}
		\varphi_{k}^{\prime}\left(0\right) & =\partial_{v}E\left[\boldsymbol{\psi}_{a}\left(W_{a};v_{a}^{0}\right)\left[\hat{v}_{k,a}-v_{a}^{0}\right]|\left(W_{a,j}\right)_{j\in I_{k}^{c}}\right]=\partial_{v}E\left[\boldsymbol{\psi}_{a}\left(W_{a};v_{a}^{0}\right)\left[\hat{v}_{k,a}-v_{a}^{0}\right]\right],\\
		\varphi_{k}^{\prime\prime}\left(\bar{r}\right) & =\partial_{r}^{2}E\left[\boldsymbol{\psi}_{a}\left(W_{a};\bar{r}\left(\hat{v}_{k,a}-v_{a}^{0}\right)+v_{a}^{0}\right)|\left(W_{a,j}\right)_{j\in I_{k}^{c}}\right],\bar{r}\in\left(0,1\right).
	\end{align*}
	Therefore, $\sup_{a\in\mathcal{A}}m_{4,N,k}\left(a\right)\leq\sup_{a\in\mathcal{A}}\left\Vert \varphi_{k}^{\prime}\left(0\right)\right\Vert +1/2\sup_{a\in\mathcal{A}}\left\Vert \varphi_{k}^{\prime\prime}\left(\bar{r}\right)\right\Vert $.
	Furthermore, $\sup_{a\in\mathcal{A}}\left\Vert \varphi_{k}^{\prime}\left(0\right)\right\Vert $
	and $\sup_{a\in\mathcal{A}}\left\Vert \varphi_{k}^{\prime\prime}\left(\bar{r}\right)\right\Vert $
	are bounded by the following terms, respectively: 
	\[
	\sup_{a\in\mathcal{A},v\in\mathcal{V}_{an}\cup\left\{ v_{a}^{0}\right\} }\left\Vert \partial_{v}E\left[\boldsymbol{\psi}_{a}\left(W_{a};v_{a}^{0}\right)\left[v-v_{a}^{0}\right]\right]\right\Vert =o\left(n^{-\frac{1}{2}}\right),
	\]
	\[
	\sup_{a\in\mathcal{A},v\in\mathcal{V}_{an}\cup\left\{ v_{a}^{0}\right\} ,\tilde{r}\in\left(0,1\right)}\left\Vert \partial_{r}^{2}E\left[\boldsymbol{\psi}_{a}\left(W_{a};\bar{r}\left(v-v_{a}^{0}\right)+v_{a}^{0}\right)\right]\right\Vert =o\left(n^{-\frac{1}{2}}\right).
	\]
	It follows that $\sup_{a\in\mathcal{A}}m_{4,N,k}\left(a\right)=o_{P}\left(n^{-1/2}\right).$
	Combining the previous results, we obtain that $\sup_{a\in\mathcal{A}}m_{2,N,k}\left(a\right)=o_{P}\left(n^{-1/2}\right)$,
	and $\sup_{a\in\mathcal{A}}\left\Vert m_{1,n}\left(a\right)\right\Vert \leq n^{1/2}o_{P}\left(n^{-1/2}\right)=o_{P}\left(1\right)$
	uniformly over $P\in\mathcal{P}_{n}$ in $l^{\infty}\left(\mathcal{A}\right)^{4}$.
	
	Finally, to show that $Z_{n,P}\rightsquigarrow Z_{P}$ in $l^{\infty}\left(\mathcal{A}\right)^{4}$
	uniformly in $P\in\mathcal{P}_{n}$, we may exploit the properties of
	functions in $\mathcal{F}_{0}$. Recall that
	$\mathcal{F}_{0}$ is suitably measurable, has an envelop function
	$F_{0}\left(W\right)$ that is measurable with respect to $\mathcal{W}$
	and satisfies $\left\Vert F_{0}\right\Vert _{P,q}=\left(E\left|F_{0}\right|^{q}\right)^{1/q}\leq C$,
	where $q\geq4$ is a fixed number. By Assumption A.1.5, functions in
	$\mathcal{F}_{0}$ satisfy 
	\[
	\sup_{P\in\mathcal{P}_{n}}E\left[\left\{ \left(\psi_{d,d^{\prime},a}\left(W_{a};v_{a}^{0}\right)-\theta_{d,d^{\prime},a}^{0}\right)-\left(\psi_{d,d^{\prime},\bar{a}}\left(W_{\bar{a}};v_{\bar{a}}^{0}\right)-\theta_{d,d^{\prime},\bar{a}}^{0}\right)\right\} ^{2}\right]\rightarrow0,
	\]
	as $d_{\mathcal{A}}\left(a,\bar{a}\right)\rightarrow0$. By Assumption
	A.1.6, uniform covering entropy of $\mathcal{F}_{0}$ satisfies 
	\[
	\sup_{Q}\log N\left(\epsilon\left\Vert F_{0}\right\Vert _{Q,2},\mathcal{F}_{0},\left\Vert .\right\Vert _{Q,2}\right)\leq C\log\left(\text{e}/\epsilon\right)\vee0.
	\]
	In fact, the uniform covering integral satisfies 
	\begin{align*}
		\int_{0}^{1}\sqrt{\sup_{Q}\log N\left(\epsilon\left\Vert F_{0}\right\Vert _{Q,2},\mathcal{F}_{0},\left\Vert .\right\Vert _{Q,2}\right)}d\epsilon & \leq\sqrt{C}\int_{0}^{1}\sqrt{1-\log\epsilon}d\epsilon\\
		& \leq\sqrt{C}\int_{0}^{1}\frac{1}{\sqrt{\epsilon}}d\epsilon<\infty,
	\end{align*}
	which follows from the result that $1-\log\epsilon\leq1/\epsilon$ for all $\epsilon>0$
	and $\int_{0}^{1}\epsilon^{-b}d\epsilon<\infty$ for $b<1$. Therefore, 
	we may invoke Theorem B.1 of \citet{BCFH_2017} 
	to obtain the result. The class of $\mathcal{F}_{0}$ is Donsker
	uniformly in $P\in\mathcal{P}_{n}$ because $\left\Vert F_{0}\right\Vert _{P,q}$
	is bounded, the entropy condition holds, and Assumption A.1.5 implies
	that $\sup_{P\in\mathcal{P}_{n}}\left\Vert \boldsymbol{\psi}_{a}\left(W_{a},v_{a}^{0}\right)-\boldsymbol{\theta}_{a}^{0}-\left(\boldsymbol{\psi}_{\bar{a}}\left(W_{\bar{a}},v_{\bar{a}}^{0}\right)-\boldsymbol{\theta}_{\bar{a}}^{0}\right)\right\Vert _{P,2}\rightarrow0$
	as $d_{\mathcal{A}}\left(a,\bar{a}\right)\rightarrow0$. \end{proof}

\begin{proof}[Proof of Theorem A.2] It is sufficient to establish
	the result over any sequence of probability measure $P_{n}\in\mathcal{P}_{n}$.
	Again, we will write $P=P_{n}$ for simplifying the notation. Let $U_{n,P}^{*}:=\left(\mathbb{G}_{n}\xi\left(\boldsymbol{\psi}_{a}\left(W_{a};v_{a}^{0}\right)-\boldsymbol{\theta}_{a}^{0}\right)\right)_{a\in\mathcal{A}}$.
	To show that $Z_{n,P}^{*}\rightsquigarrow_{B}Z_{P}$ uniformly over
	$P\in\mathcal{P}_{n}$ in $l^{\infty}\left(\mathcal{A}\right)^{4}$,
	we first show that $\left\Vert Z_{n,P}^{*}-U_{n,P}^{*}\right\Vert =o_{P}\left(1\right)$
	and then show that $U_{n,P}^{*}\rightsquigarrow_{B}Z_{P}$ uniformly
	over $P\in\mathcal{P}_{n}$ in $l^{\infty}\left(\mathcal{A}\right)^{4}$.
	To prove $\left\Vert Z_{n,P}^{*}-U_{n,P}^{*}\right\Vert =o_{P}\left(1\right)$
	uniformly over $P\in\mathcal{P}_{n}$ in $l^{\infty}\left(\mathcal{A}\right)^{4}$,
	we use a similar argument as for proving equation (E.7) in the appendix
	of \citet{BCFH_2017}. Let $Z_{n,P}^{*}\left(a\right):=\mathbb{G}_{n}\xi\left(\boldsymbol{\psi}_{a}\left(W_{a};\hat{v}_{k,a}\right)-\hat{\boldsymbol{\theta}}_{a}\right)$
	and $U_{n,P}^{*}\left(a\right):=\mathbb{G}_{n}\xi\left(\boldsymbol{\psi}_{a}\left(W_{a};v_{a}^{0}\right)-\boldsymbol{\theta}_{a}^{0}\right)$.
	We first notice that since $E\left[\xi\right]=0$ and $\xi$ and $W_{a}$
	are independent, $E\left[\xi\left(\boldsymbol{\psi}_{a}\left(W_{a};v_{a}^{0}\right)-\boldsymbol{\theta}_{a}^{0}\right)\right]=0$
	and 
	\begin{align*}
		Z_{n,P}^{*}\left(a\right) & =\frac{1}{\sqrt{n}}\sum_{i=1}^{n}\xi_{i}\left(\boldsymbol{\psi}_{a}\left(W_{a,i};\hat{v}_{a}^{0}\right)-\hat{\boldsymbol{\theta}}_{a}^{0}\right),\\
		U_{n,P}^{*}\left(a\right) & =\frac{1}{\sqrt{n}}\sum_{i=1}^{n}\xi_{i}\left(\boldsymbol{\psi}_{a}\left(W_{a,i};v_{a}^{0}\right)-\boldsymbol{\theta}_{a}^{0}\right).
	\end{align*}
	Therefore, we have
	\[
	\sup_{a\in\mathcal{A}}\left\Vert Z_{n,P}^{*}\left(a\right)-U_{n,P}^{*}\left(a\right)\right\Vert \leq\Pi_{1}+\Pi_{2},
	\]
	where 
	\begin{align*}
		\Pi_{1} & =\sup_{a\in\mathcal{A}}\left\Vert \mathbb{G}_{n}\xi\left(\boldsymbol{\psi}_{a}\left(W_{a,i};\hat{v}_{a}^{0}\right)-\boldsymbol{\psi}_{a}\left(W_{a,i};v_{a}^{0}\right)\right)\right\Vert \\
		& =\sup_{a\in\mathcal{A}}\left\Vert \sqrt{n}\frac{1}{K}\sum_{k=1}^{K}\frac{1}{\sqrt{N}}\left[\mathbb{G}_{N,k}\xi\left(\boldsymbol{\psi}_{a}\left(W_{a};\hat{v}_{k,a}^{0}\right)-\boldsymbol{\psi}_{a}\left(W_{a};v_{a}^{0}\right)\right)\right]\right\Vert \\
		& \leq\sqrt{n}\frac{1}{K}\sum_{k=1}^{K}\frac{1}{\sqrt{N}}\sup_{a\in\mathcal{A}}\left\Vert \mathbb{G}_{N,k}\xi\left(\boldsymbol{\psi}_{a}\left(W_{a};\hat{v}_{k,a}^{0}\right)-\boldsymbol{\psi}_{a}\left(W_{a};v_{a}^{0}\right)\right)\right\Vert ,
	\end{align*}
	and
	\[
	\Pi_{2}=\sup_{a\in\mathcal{A}}\left\Vert \frac{1}{\sqrt{n}}\sum_{i=1}^{n}\xi_{i}\left(\hat{\boldsymbol{\theta}}_{a}^{0}-\boldsymbol{\theta}_{a}^{0}\right)\right\Vert \leq\sup_{a\in\mathcal{A}}\left\Vert \hat{\boldsymbol{\theta}}_{a}^{0}-\boldsymbol{\theta}_{a}^{0}\right\Vert \left|\mathbb{G}_{n}\xi\right|.
	\]
	The term $\Pi_{2}$ is $O_{p}\left(n^{-1/2}\right)$, since $\sup_{a\in\mathcal{A}}\left\Vert \hat{\boldsymbol{\theta}}_{a}^{0}-\boldsymbol{\theta}_{a}^{0}\right\Vert =O_{p}\left(n^{-1/2}\right)$
	by Theorem A.1 and $\left|\mathbb{G}_{n}\xi\right|=O_{p}\left(1\right)$. Concerning
	the term $\Pi_{1}$, recall the class of functions used in the proof
	of Theorem A.1, 
	\[
	\mathcal{F}_{2}=\left\{ \psi_{d,d^{\prime},a}\left(W_{a};v\right)-\psi_{d,d^{\prime},a}\left(W_{a};v_{a}^{0}\right):\left(d,d^{\prime}\right)\in\left\{ 0,1\right\} ^{2},a\in\mathcal{A},v\in\mathcal{V}_{an}\right\},
	\]
	as well as its envelope function $F_{2}\leq2F_{0}$ and the covering entropy:
	\[
	\log\sup_{Q}N\left(\epsilon\left\Vert F_{2}\right\Vert _{Q,2},\mathcal{F}_{2},\left\Vert .\right\Vert _{Q,2}\right)\lesssim2v\left(\log\left(a/\epsilon\right)\right)\vee0.
	\]
	Using Lemma K.1 in the Appendix of \citet{BCFH_2017},
	multiplication of this class by $\xi$ does not change the entropy
	bound modulo an absolute constant, and therefore its covering entropy
	\[
	\log\sup_{Q}N\left(\epsilon\left\Vert \xi F_{2}\right\Vert _{Q,2},\xi\mathcal{F}_{2},\left\Vert .\right\Vert _{Q,2}\right)
	\]
	is bounded by the same order as $\log\sup_{Q}N\left(\epsilon\left\Vert F_{2}\right\Vert _{Q,2},\mathcal{F}_{2},\left\Vert .\right\Vert _{Q,2}\right)$.
	Next, notice that $\left(E\left[\max_{i\in I_{k}}\xi_{i}^{2}\right]\right)^{1/2}\lesssim\log N$
	by $E\left[\exp\left(\left|\xi\right|\right)\right]<\infty$. By the
	independence of $\xi_{i}$ and $W_{i}$, we have 
	\[
	\left\Vert \max_{i\in I_{k}}\xi_{i}F_{0}\left(W_{i}\right)\right\Vert _{P,2}\leq\left\Vert \max_{i\in I_{k}}\xi_{i}\right\Vert _{P,2}\left\Vert \max_{i\in I_{k}}F_{0}\left(W_{i}\right)\right\Vert _{P,2}\lesssim N^{\frac{1}{q}}\log N,
	\]
	which holds for $k=1,\ldots,K$. Using the maximum inequality
	A.1 of Lemma 6.2 of \citet{CCDDHNR_2018}, with probability $P$ $1-o\left(1\right)$,
	we obtain 
	\begin{align*}
		\sup_{f\in\xi\mathcal{F}_{2}}\left|\mathbb{G}_{N.k}f\right| & \lesssim\delta_{n}n^{-\frac{1}{4}}\sqrt{\log\left(N\vee\sigma^{-1}\right)}+\frac{N^{\frac{1}{q}}\log N}{\sqrt{N}}\log\left(N\vee\sigma^{-1}\right)\\
		& \lesssim\delta_{n}n^{-\frac{1}{4}}\sqrt{\log\left(N\vee\sigma^{-1}\right)}+K^{\frac{1}{2}-\frac{1}{q}}\frac{\left(\log n-\log K\right)}{n^{\frac{1}{2}-\frac{1}{q}}}\log\left(N\vee\sigma^{-1}\right)\\
		& \lesssim\delta_{n}+n^{\frac{1}{q}-\frac{1}{2}}\log n\log\left(n\vee\sigma^{-1}\right)=o_{p}\left(1\right),
	\end{align*}
	by using the fact that $\sup_{f\in\xi\mathcal{F}_{2}}\left\Vert f\right\Vert _{P,2}=\sup_{f\in\mathcal{F}_{2}}\left\Vert f\right\Vert _{P,2}\leq r_{n}$
	and setting the parameters $\sigma=C^{\prime}\delta_{n}n^{-\frac{1}{4}}$
	and $a=b=N$ in this maximum inequality. With probability $P$ $1-o\left(1\right)$ and for $\hat{v}_{k,a}\in\mathcal{V}_{an}$, it can be shown that 
	\[
	\sup_{a\in\mathcal{A}}\left\Vert \mathbb{G}_{N,k}\xi\left(\boldsymbol{\psi}_{a}\left(W_{a};\hat{v}_{k,a}\right)-\boldsymbol{\psi}_{a}\left(W_{a};v_{a}^{0}\right)\right)\right\Vert \lesssim\sup_{f\in\xi\mathcal{F}_{2}}\left|\mathbb{G}_{N,k}f\right|.
	\]
	Therefore, it follows with  probability $P$ $1-o\left(1\right)$ that
	\[
	\frac{1}{\sqrt{N}}\sup_{a\in\mathcal{A}}\left\Vert \mathbb{G}_{N,k}\xi\left(\boldsymbol{\psi}_{a}\left(W_{a};\hat{v}_{k,a}\right)-\boldsymbol{\psi}_{a}\left(W_{a};v_{a}^{0}\right)\right)\right\Vert \lesssim K^{\frac{1}{2}}n^{-\frac{1}{2}}o_{p}\left(1\right)\lesssim o_{p}\left(n^{-\frac{1}{2}}\right),
	\]
	and since $K$ is fixed and finite, 
	\begin{align*}
		\Pi_{1}=\sup_{a\in\mathcal{A}}\left\Vert \mathbb{G}_{n}\xi\left(\boldsymbol{\psi}_{a}\left(W_{a};\hat{v}_{k,a}\right)-\boldsymbol{\psi}_{a}\left(W_{a};v_{a}^{0}\right)\right)\right\Vert  & \lesssim\sqrt{n}o_{p}\left(n^{-\frac{1}{2}}\right)=o_{p}\left(1\right).
	\end{align*}
	Combining the previous results, we obtain that $\left\Vert Z_{n,P}^{*}-U_{n,P}^{*}\right\Vert =o_{p}\left(1\right)$.
	Next, notice that $U_{n,P}^{*}$ is associated with the class of functions
	$\xi f$, where $f\in\mathcal{F}_{0}$ is defined in Assumption A.1.6.
	As shown in the proof of Theorem A.1, the class of $\mathcal{F}_{0}$
	is Donsker, uniformly in $P\in\mathcal{P}_{n}$ under the imposed
	assumptions. Therefore, we may invoke Theorem B.2 of of \citet{BCFH_2017}
	and obtain $U_{n,P}^{*}\rightsquigarrow_{B}Z_{P}$. Indeed, both $U_{n,P}^{*}$ and $Z_{P}$ are Gaussian processes that share the same (zero) mean and the same covariance matrix. Finally,
	using a similar argument in step 2 for proving Theorem 5.2 in the
	appendix of \citet{BCFH_2017}, we can obtain $Z_{n,P}^{*}\rightsquigarrow_{B}Z_{P}$.
	Let $\text{BL}_{1}\left(l^{\infty}\left(\mathcal{A}\right)\right)$
	be the space of functions mapping the space of functions in $l^{\infty}\left(\mathcal{A}\right)$
	to $\left[0,1\right]$, with the Lipschitz norm being at most 1. Let $E_{B_{n}}$
	denote the expectation over the multiplier weights $\left(\xi_{i}\right)_{i=1}^{n}$ when 
	holding the data $\left(W_{i}\right)_{i=1}^{n}$ fixed. Following
	step 2 for proving Theorem 5.2 in the appendix of \citet{BCFH_2017},
	we obtain the following inequality: 
	\begin{align*}
		\sup_{h\in\text{BL}_{1}\left(l^{\infty}\left(\mathcal{A}\right)\right)}\left|E_{B_{n}}\left[h\left(Z_{n,P}^{*}\right)\right]-E_{P}\left[h\left(Z_{P}\right)\right]\right| & \leq\sup_{h\in\text{BL}_{1}\left(l^{\infty}\left(\mathcal{A}\right)\right)}\left|E_{B_{n}}\left[h\left(U_{n,P}^{*}\right)\right]-E_{P}\left[h\left(Z_{P}\right)\right]\right|\\
		& +E_{B_{n}}\left[\left\Vert Z_{n,P}^{*}-U_{n,P}\right\Vert \land2\right].
	\end{align*}
	The first term vanishes by Theorem B.2 of \citet{BCFH_2017},
	since we have proven that $U_{n,P}^{*}\rightsquigarrow_{B}Z_{P}$.
	The second term is $o_{P}\left(1\right)$, because $E\left[\left\Vert Z_{n,P}^{*}-U_{n,P}\right\Vert \land2\right]=E\left[E_{B_{n}}\left[\left\Vert Z_{n,P}^{*}-U_{n,P}\right\Vert \land2\right]\right]\rightarrow0$
	by the Markov inequality 
	\begin{align*}
		P\left(E_{B_{n}}\left[\left\Vert Z_{n,P}^{*}-U_{n,P}^{*}\right\Vert \land2\right]\geq\varepsilon\right) & \leq\frac{E\left[E_{B_{n}}\left[\left\Vert Z_{n,P}^{*}-U_{n,P}^{*}\right\Vert \land2\right]\right]}{\varepsilon}\\
		& =\frac{E\left[\left\Vert Z_{n,P}^{*}-U_{n,P}^{*}\right\Vert \land2\right]}{\varepsilon}.
	\end{align*}
	As shown above, $\left\Vert Z_{n,P}^{*}-U_{n,P}^{*}\right\Vert =o_{p}\left(1\right)$,
	which implies that $\left\Vert Z_{n,P}^{*}-U_{n,P}^{*}\right\Vert \land2=o_{P}\left(1\right)$.
	Therefore, $\sup_{h\in BL_{1}\left(l^{\infty}\left(\mathcal{A}\right)\right)}\left|E_{B_{n}}\left[h\left(Z_{n,P}^{*}\right)\right]-E_{P}\left[h\left(Z_{P}\right)\right]\right|$
	vanishes and it follows that $Z_{n,P}^{*}\rightsquigarrow_{B}Z_{P}$.
\end{proof}

The proof of Theorem A.3 relies on the \textit{uniform Hadamard differentiability}
\citep{BCFH_2017} of the quantile function. The definition of uniform
Hadamard differentiability is as follows.

\begin{definition}\textbf{Uniform Hadamard Tangential Differentiability,
		\citet{BCFH_2017}}: Let $\mathbb{E}$ and $\mathbb{D}$ be normed
	spaces. Consider a map $\phi:\mathbb{D}_{\phi}\longmapsto\mathbb{E}$,
	where $\mathbb{D}_{\phi}\subset\mathbb{D}$ and the range of $\phi$
	is a subset of $\mathbb{E}$. Let $\mathbb{D}_{0}\subset\mathbb{D}$
	be a normed space, and $\mathbb{D}_{\rho}\subset\mathbb{D}_{\phi}$
	be a compact metric space. Let $h\longmapsto\phi_{\rho}^{\prime}\left(h\right)$
	be the linear derivative map associated with $\phi$, where $h\in\mathbb{D}_{0}$
	and $\rho\in\mathbb{D}_{\rho}$. The linearity of $\phi_{\rho}^{\prime}\left(h\right)$
	holds for each $\rho$. Then the map $\phi:\mathbb{D}_{\phi}\longmapsto\mathbb{E}$
	is called Hadamard differentiable uniformly in $\rho\in\mathbb{D}_{\rho}$
	tangentially to $\mathbb{D}_{0}$ with derivative map $h\longmapsto\phi_{\rho}^{\prime}\left(h\right)$,
	if 
	\begin{align*}
		\left|\frac{\phi\left(\rho_{n}+t_{n}h_{n}\right)-\phi\left(\rho_{n}\right)}{t_{n}}-\phi_{\rho}^{\prime}\left(h\right)\right| & \rightarrow0,\\
		\left|\phi_{\rho_{n}}^{\prime}\left(h_{n}\right)-\phi_{\rho}^{\prime}\left(h\right)\right| & \rightarrow0\text{ as }n\rightarrow0.
	\end{align*}
	for all convergence sequences $\rho_{n}\rightarrow\rho$, $t_{n}\rightarrow0$
	in $\mathbb{R}$ and $h_{n}\rightarrow h$, such that $\rho_{n}+t_{n}h_{n}\in\mathbb{D}_{\phi}$
	for every $n$. \end{definition}

As pointed out by \citet{BCFH_2017}, the quantile function is uniformly
Hadamard-differentiable if we set $\mathbb{D}=l^{\infty}\left(T\right)$,
where $T=\left[\epsilon,1-\epsilon\right],\epsilon>0$, $\mathbb{D}_{\phi}$
is a set of c\`{a}dl\`{a}g functions on $T$, $\mathbb{D}_{0}=\text{UC}\left(T\right)$,
$\mathbb{D}_{\rho}$ being a compact subset of $C^{1}\left(T\right)$
such that each $\rho$ satisfies $\partial\rho\left(u\right)/\partial u>0$.\footnote{$\text{UC}\left(T\right)$ denotes a set of uniformly continuous functions
	from $T$ to $\mathbb{R}$, and $C^{1}\left(T\right)$ denotes a set
	of continuous differentiable functions from $T$ to $\mathbb{R}$.} Notice that this setting rules out the case that $Y\left(d,M\left(d^{\prime}\right)\right)$
is a discrete random variable. Also if $\mathbb{D}_{\rho}=\mathbb{D}_{\phi}$,
the quantile function is not Hadamard-differentiable uniformly in
$\mathbb{D}_{\rho}$ in the sense of our definition. This
is different from the definition of uniformly differentiability given
in \citet{Vaart_1998} which requires $\mathbb{D}_{\rho}=\mathbb{D}_{\phi}$.
Since our estimation is for infinite dimension, it is essential to
restrict $\mathbb{D}_{\rho}$ to be much smaller than $\mathbb{D}_{\phi}$
and endow $\mathbb{D}_{\rho}$ to have a much stronger metric than
the metric induced by the norm of $\mathbb{D}$. However, here the
estimated $\hat{\rho}$ can satisfy $\hat{\rho}\in\mathbb{D}_{\phi}$, but
$\hat{\rho}\notin\mathbb{D}_{\rho}$ (for example when $\hat{\rho}$ is
an empirical c.d.f.), even though the population values of $\rho\in\mathbb{D}_{\rho}$
and $\partial\rho\left(u\right)/\partial u>0$ should hold. With the
definition of uniform Hadamard differentiability, we in a next step restate
Theorems B.3 and B.4 of \citet{BCFH_2017} as follows. \begin{theorem}\textbf{Functional
		delta method uniformly in $P\in\mathcal{P}$, \citet{BCFH_2017}:}
	Let $\phi:\mathbb{D}_{\phi}\subset\mathbb{D}\longmapsto\mathbb{E}$
	be Hadamard differentiable uniformly in $\rho\in\mathbb{D}_{\rho}\subset\mathbb{D}_{\phi}$
	tangentially to $\mathbb{D}_{0}$ with derivative map $h\longmapsto\phi_{\rho}^{\prime}\left(h\right)$.
	Let $\hat{\rho}_{n,P}$ be a sequence of stochastic processes taking
	values in $\mathbb{D}_{\phi}$, where each $\hat{\rho}_{n,P}$ is
	an estimator of the parameter $\rho=\rho_{P}\in\mathbb{D}_{\rho}$.
	Suppose there exists a sequence of constants $r_{n}\rightarrow\infty$
	such that $Z_{n,P}:=r_{n}\left(\hat{\rho}_{n,P}-\rho_{P}\right)\rightsquigarrow Z_{P}$
	in $\mathbb{D}$ uniformly in $P\in\mathcal{P}_{n}$. The limit process
	$Z_{P}$ is separable and takes its values in $\mathbb{D}_{0}$ for
	all $P\in\mathcal{P}=\bigcup_{n\geq n_{0}}\mathcal{P}_{n}$, where
	$n_{0}$ is fixed. Moreover, the set of stochastic processes $\left\{ Z_{P}:P\in\mathcal{P}\right\} $
	is relatively compact in the topology of weak convergence in $\mathbb{D}_{0}$,
	that is, every sequence in this set can be split into weakly convergent
	subsequences. Then, $r_{n}\left(\phi\left(\hat{\rho}_{n,P}\right)-\phi\left(\rho\right)\right)\rightsquigarrow\phi_{\rho_{P}}^{\prime}\left(Z_{P}\right)$
	in $\mathbb{E}$ uniformly in $P\in\mathcal{P}_{n}$. If $\left(\rho,h\right)\longmapsto\phi_{\rho_{P}}^{\prime}\left(h\right)$
	is defined and continuous on the whole of $\mathbb{D}_{\rho}\times\mathbb{D}$,
	then the sequence $r_{n}\left(\phi\left(\hat{\rho}_{n,P}\right)-\phi\left(\rho\right)\right)\rightsquigarrow\phi_{\rho_{P}}^{\prime}\left(r_{n}\left(\hat{\rho}_{n,P}-\rho\right)\right)$
	converges to zero in outer probability uniformly in $P\in\mathcal{P}_{n}$.
	Moreover, the set of stochastic processes $\left\{ \phi_{\rho_{P}}^{\prime}\left(Z_{P}\right):P\in\mathcal{P}\right\} $
	is relatively compact in the topology of weak convergence in $\mathbb{E}$.
\end{theorem}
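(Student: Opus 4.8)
The plan is to mirror the proof of the classical functional delta method (Theorem~20.8 in \citet{Vaart_1998}) while upgrading every ``for fixed $P$'' step to hold uniformly over $P\in\mathcal{P}_n$. The central device is the reduction of uniform-in-$P$ assertions to sequences: a family of statements indexed by $P\in\mathcal{P}_n$ holds uniformly if and only if it holds along every sequence $P=P_n\in\mathcal{P}_n$. Fixing such a sequence, write $\rho_n:=\rho_{P_n}$ and define the maps $g_n:\mathbb{D}_n\longmapsto\mathbb{E}$ by $g_n(h):=r_n\big(\phi(\rho_n+r_n^{-1}h)-\phi(\rho_n)\big)$ on the domain $\mathbb{D}_n:=\{h\in\mathbb{D}:\rho_n+r_n^{-1}h\in\mathbb{D}_\phi\}$. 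The key identity is $g_n(Z_{n,P_n})=r_n\big(\phi(\hat\rho_{n,P_n})-\phi(\rho_n)\big)$, so that controlling the limit of $g_n(Z_{n,P_n})$ delivers the first conclusion.

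The core step is to verify the hypotheses of an extended continuous mapping theorem for the pair $(g_n,Z_{n,P_n})$. Because $\mathbb{D}_\rho$ is a compact metric space, I would pass to a subsequence along which $\rho_n\to\rho$ in the stronger metric of $\mathbb{D}_\rho$; by the assumed relative compactness of $\{Z_P:P\in\mathcal{P}\}$ in the topology of weak convergence, I would further refine to a subsequence along which $Z_{P_n}\rightsquigarrow Z$ for some separable limit $Z$ taking values in $\mathbb{D}_0$, whence $Z_{n,P_n}\rightsquigarrow Z$ by the postulated uniform weak convergence. Uniform Hadamard differentiability then supplies exactly the map-convergence required: for any $h_n\to h$ with $h\in\mathbb{D}_0$ and $h_n\in\mathbb{D}_n$, taking $t_n=r_n^{-1}$ gives $g_n(h_n)\to\phi'_\rho(h)$, where the derivative-continuity clause $\phi'_{\rho_n}(h_n)\to\phi'_\rho(h)$ is what licenses the drifting base point. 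The extended continuous mapping theorem then yields $g_n(Z_{n,P_n})\rightsquigarrow\phi'_\rho(Z)$ along the subsequence. Finally I would identify, in the bounded-Lipschitz metric, the law of $\phi'_\rho(Z)$ with that of $\phi'_{\rho_{P_n}}(Z_{P_n})$; since every subsequence admits a further subsequence on which this identification holds, the full-sequence (hence uniform-in-$P$) convergence $r_n(\phi(\hat\rho_{n,P})-\phi(\rho))\rightsquigarrow\phi'_{\rho_P}(Z_P)$ follows.

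For the second conclusion, assume $(\rho,h)\mapsto\phi'_{\rho}(h)$ is defined and continuous on $\mathbb{D}_\rho\times\mathbb{D}$. I would show the remainder $g_n(Z_{n,P})-\phi'_{\rho_P}(Z_{n,P})$ is $o_P(1)$ in outer probability uniformly, again via the sequence criterion: along a subsequence with $\rho_n\to\rho$ and $Z_{n,P_n}\rightsquigarrow Z$, the first term converges to $\phi'_\rho(Z)$ by Step~2, while joint continuity gives $(\rho_{P_n},Z_{n,P_n})\rightsquigarrow(\rho,Z)$ and hence $\phi'_{\rho_n}(Z_{n,P_n})\rightsquigarrow\phi'_\rho(Z)$ by the ordinary continuous mapping theorem; passing to an almost-sure (Skorokhod) representation forces both terms to the common limit pathwise, so their difference vanishes in outer probability. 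The relative-compactness conclusion is then immediate: $\{\phi'_{\rho_P}(Z_P):P\in\mathcal{P}\}$ is the image of the relatively compact set $\{Z_P\}$ under the jointly continuous, hence weak-convergence-preserving, derivative maps, so it is itself relatively compact in the topology of weak convergence in $\mathbb{E}$.

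The hard part will be making the extended continuous mapping theorem genuinely uniform in $P$: in the classical argument the base point $\rho$ is fixed, whereas here both the base point $\rho_{P_n}$ and the target law $Z_{P_n}$ drift with $n$, so the derivative map $\phi'_{\rho_{P_n}}$ appearing in the limit is itself moving. The two structural hypotheses are exactly what make this tractable---compactness of $\mathbb{D}_\rho$ lets me extract $\rho_n\to\rho$ in the strong metric so that the uniform Hadamard clause applies, and relative compactness of $\{Z_P\}$ lets me extract a simultaneous weak limit for the driving process. The delicate bookkeeping lies in threading these two subsequence extractions together and then verifying that the bounded-Lipschitz distance between $\mathcal{L}(g_n(Z_{n,P_n}))$ and $\mathcal{L}(\phi'_{\rho_{P_n}}(Z_{P_n}))$ tends to zero along the full sequence, which is precisely the content of Theorems B.3 and B.4 of \citet{BCFH_2017} that I would invoke at that final juncture.
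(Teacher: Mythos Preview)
The paper does not supply its own proof of this statement: it is explicitly a \emph{restatement} of Theorem~B.3 in \citet{BCFH_2017}, introduced in the appendix with the phrase ``we in a next step restate Theorems B.3 and B.4 of \citet{BCFH_2017}'' and followed immediately by the statement with no proof. So there is nothing in the present paper to compare against; the benchmark is the original argument in \citet{BCFH_2017}.

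Your outline is essentially the right one and tracks the structure of the proof in \citet{BCFH_2017}: reduce uniform-in-$P$ convergence to convergence along arbitrary sequences $P_n\in\mathcal{P}_n$; exploit compactness of $\mathbb{D}_\rho$ to pass to a subsequence with $\rho_{P_n}\to\rho$ in the strong metric; exploit relative compactness of $\{Z_P\}$ to refine to a subsequence with $Z_{P_n}\rightsquigarrow Z$; and then feed the drifting maps $g_n$ and the drifting laws into an extended continuous-mapping theorem, with uniform Hadamard differentiability supplying the requisite map convergence $g_n(h_n)\to\phi'_\rho(h)$. The identification of the limit with $\phi'_{\rho_{P_n}}(Z_{P_n})$ via bounded-Lipschitz distance, and the subsequence-of-subsequence argument to return to the full sequence, are also the correct ingredients.

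There is, however, a genuine circularity in your final paragraph. You write that the delicate bookkeeping ``is precisely the content of Theorems B.3 and B.4 of \citet{BCFH_2017} that I would invoke at that final juncture.'' But the statement you are asked to prove \emph{is} Theorem~B.3 of \citet{BCFH_2017}; invoking it to close the argument is not a proof. The steps you sketched before that sentence are already the substance of the proof---you need to carry them through rather than defer to the theorem itself. Concretely, the missing piece is to actually execute the extended continuous-mapping argument (e.g., Theorem~1.11.1 in van der Vaart and Wellner) with the drifting base points and then show directly, using the second clause of uniform Hadamard differentiability ($\phi'_{\rho_n}(h_n)\to\phi'_\rho(h)$), that the bounded-Lipschitz distance between the laws of $\phi'_\rho(Z)$ and $\phi'_{\rho_{P_n}}(Z_{P_n})$ tends to zero along the refined subsequence. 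That last step is elementary once the joint continuity of $(\rho,h)\mapsto\phi'_\rho(h)$ is in hand, but it must be written out rather than outsourced to the very result being established.
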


\begin{theorem}\textbf{Functional delta method uniformly in $P\in\mathcal{P}$
		for the bootstrap and other simulation methods, \citet{BCFH_2017}}: Assume
	that the conditions in Theorem A.4 hold. Let $\hat{\rho}_{n,P}$ and $\hat{\rho}_{n,P}^{*}$
	be maps as previously indicated, taking values in $\mathbb{D}_{\phi}$ such
	that $Z_{n,P}:=r_{n}\left(\hat{\rho}_{n,P}-\rho_{P}\right)\rightsquigarrow Z_{P}$
	and $Z_{n,P}^{*}:=r_{n}\left(\hat{\rho}_{n,P}^{*}-\rho_{P}\right)\rightsquigarrow_{B}Z_{P}$
	in $\mathbb{D}$ uniformly in $P\in\mathcal{P}_{n}$. Then, $r_{n}\left(\phi\left(\hat{\rho}_{n,P}^{*}\right)-\phi\left(\hat{\rho}_{n,P}\right)\right)\rightsquigarrow_{B}\phi_{\rho_{P}}^{\prime}\left(Z_{P}\right)$
	uniformly in $P\in\mathcal{P}_{n}$. \end{theorem}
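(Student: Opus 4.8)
The plan is to reduce the statement to the deterministic content of uniform Hadamard tangential differentiability and then push a conditionally weakly convergent sequence through the resulting linearization with a bootstrap version of the extended continuous mapping theorem, exactly in the spirit of the proofs of Theorems B.1--B.4 of \citet{BCFH_2017}. The starting observation is that both estimators are perturbations of the smooth population object $\rho_{P}\in\mathbb{D}_{\rho}$: writing $\hat{\rho}_{n,P}=\rho_{P}+r_{n}^{-1}Z_{n,P}$ and $\hat{\rho}_{n,P}^{*}=\rho_{P}+r_{n}^{-1}Z_{n,P}^{*}$, I would introduce the rescaled maps
\[
\Phi_{n,P}(h):=r_{n}\bigl(\phi(\rho_{P}+r_{n}^{-1}h)-\phi(\rho_{P})\bigr),
\]
defined on $\{h:\rho_{P}+r_{n}^{-1}h\in\mathbb{D}_{\phi}\}$. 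The definition of uniform Hadamard tangential differentiability is precisely the assertion that $\Phi_{n,P}(h_{n})\to\phi_{\rho_{P}}^{\prime}(h)$ for every $h_{n}\to h$ with $h\in\mathbb{D}_{0}$ (taking the admissible base sequence to be the constant $\rho_{n}\equiv\rho_{P}$), together with continuity of $(\rho,h)\mapsto\phi_{\rho}^{\prime}(h)$. This recasts the whole problem as an extended continuous mapping problem for $\Phi_{n,P}\to\phi_{\rho_{P}}^{\prime}$, to be run separately at the unconditional and the conditional (bootstrap) level.

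Second, I would treat the two levels in parallel. At the unconditional level, $Z_{n,P}\rightsquigarrow Z_{P}$ uniformly in $P$, and $Z_{P}$ is a.s.\ concentrated on the tangent set $\mathbb{D}_{0}=\text{UC}(\mathcal{A})$ (its paths are uniformly continuous by Theorem A.1); the uniform extended continuous mapping theorem then yields both $\Phi_{n,P}(Z_{n,P})\rightsquigarrow\phi_{\rho_{P}}^{\prime}(Z_{P})$ and the uniform linearization remainder $\Phi_{n,P}(Z_{n,P})-\phi_{\rho_{P}}^{\prime}(Z_{n,P})=o_{P}(1)$, which is Theorem A.4. At the conditional level I would run the identical argument using the bootstrap-consistency input $Z_{n,P}^{*}\rightsquigarrow_{B}Z_{P}$ established in Theorem A.2: the same deterministic convergence $\Phi_{n,P}\to\phi_{\rho_{P}}^{\prime}$ feeds a conditional extended continuous mapping theorem to make $\Phi_{n,P}(Z_{n,P}^{*})-\phi_{\rho_{P}}^{\prime}(Z_{n,P}^{*})$ conditionally $o_{P}(1)$, while linearity and continuity of $\phi_{\rho_{P}}^{\prime}$ combined with the conditional continuous mapping theorem give $\phi_{\rho_{P}}^{\prime}(Z_{n,P}^{*})\rightsquigarrow_{B}\phi_{\rho_{P}}^{\prime}(Z_{P})$. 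Subtracting the two Hadamard expansions and using linearity of $\phi_{\rho_{P}}^{\prime}$ leaves $r_{n}\bigl(\phi(\hat{\rho}_{n,P}^{*})-\phi(\hat{\rho}_{n,P})\bigr)=\phi_{\rho_{P}}^{\prime}\bigl(r_{n}(\hat{\rho}_{n,P}^{*}-\hat{\rho}_{n,P})\bigr)+o_{P}(1)$ in conditional probability; since the estimator-centered bootstrap process $r_{n}(\hat{\rho}_{n,P}^{*}-\hat{\rho}_{n,P})$ is exactly the object shown to satisfy $\rightsquigarrow_{B}Z_{P}$ in Theorem A.2, the conditional continuous mapping theorem delivers the claimed $r_{n}\bigl(\phi(\hat{\rho}_{n,P}^{*})-\phi(\hat{\rho}_{n,P})\bigr)\rightsquigarrow_{B}\phi_{\rho_{P}}^{\prime}(Z_{P})$ uniformly in $P\in\mathcal{P}_{n}$.

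Third, the genuine obstacle is the conditional (bootstrap) extended continuous mapping theorem uniformly in $P$, i.e.\ transferring the pointwise-along-sequences deterministic statement about $\Phi_{n,P}$ into an approximation of conditional laws. I would handle the ``in probability'' qualifier through the $\text{BL}_{1}$ metrization of conditional weak convergence already used in the proof of Theorem A.2, bounding $\sup_{h\in\text{BL}_{1}}\bigl|E_{B_{n}}[h(\Phi_{n,P}(Z_{n,P}^{*}))]-E_{P}[h(\phi_{\rho_{P}}^{\prime}(Z_{P}))]\bigr|$ by a vanishing remainder plus the already-established Gaussian approximation, and upgrading to uniformity in $P$ via relative compactness of $\{Z_{P}:P\in\mathcal{P}\}$ in the weak topology on $\mathbb{D}_{0}$. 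The reason this works despite the nuisance that $\hat{\rho}_{n,P}\notin\mathbb{D}_{\rho}$ (e.g.\ an empirical c.d.f.) is that the expansion is always taken around the smooth $\rho_{P}\in\mathbb{D}_{\rho}$ while the roughness is absorbed into the tangent directions $Z_{n,P},Z_{n,P}^{*}\in\mathbb{D}_{0}$; the compactness of $\mathbb{D}_{\rho}$ under the stronger $C^{1}$-type metric with $\partial\rho/\partial u>0$ is precisely what makes the Hadamard remainder vanish uniformly in $P$. With these ingredients assembled, the argument mirrors the proof of Theorem B.4 of \citet{BCFH_2017}.
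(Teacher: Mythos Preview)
The paper does not supply its own proof of this statement: Theorem A.5 is explicitly a restatement of Theorem B.4 of \citet{BCFH_2017}, invoked as a black box in the proofs of Theorems 3 and A.3. Your sketch is the standard argument underlying that result---linearization via the rescaled maps $\Phi_{n,P}$, transfer through the uniform extended continuous mapping theorem at both the unconditional and conditional levels, the $\text{BL}_{1}$ bound to handle conditional weak convergence in probability, and the compactness of $\mathbb{D}_{\rho}$ and relative compactness of $\{Z_{P}\}$ to upgrade to uniformity in $P$---so it aligns with the intended proof in \citet{BCFH_2017} rather than offering a different route.
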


\begin{proof}[Proof of Theorem A.3] Function $\phi_{\boldsymbol{\theta}}$
	satisfies uniform Hadamard tangential differentiability and both $Z_{n,P}\rightsquigarrow Z_{P}$ and $Z_{n,P}^{*}\rightsquigarrow Z_{P}$
	in $l^{\infty}\left(\mathcal{A}\right)^{4}$ uniformly in $P\in\mathcal{P}_{n}$, as shown
	in Theorems A.1 and A.2. Therefore, the proof follows by applying Theorems A.4 and A.5. \end{proof}


\subsection{Tables and Figures}

\begin{table}[H]
	\centering
	\caption{Descriptive statistics for the empirical application (Job Corps data)}
	\scalebox{0.8}{ %
		\begin{tabular}{ccccccccccc}
			\hline 
			& All  & $D=1$  & $D=0$  & Diff  & p-value  & $M=0$  & $M=1$  & $M=2$  & $M=3$  & $M=4$\tabularnewline
			\hline 
			sample size  & 9,240  & 6,574  & 2,666  & -  & -  & 311  & 3,495  & 4,004  & 1,298  & 141\tabularnewline
			\textbf{Outcome} $Y$  &  &  &  &  &  &  &  &  &  & \tabularnewline
			weekly earnings in third year  & 172.93  & 173.39  & 171.82  & 1.57  & 0.68  & 159.70  & 188.92  & 165.37  & 159.22  & 145.98 \tabularnewline
			\textbf{Treatment} $D$  &  &  &  &  &  &  &  &  &  & \tabularnewline
			training in first year  & 0.71  & 1.00  & 0.00  & -  & -  & 0.42  & 0.74  & 0.72  & 0.70  & 0.70 \tabularnewline
			\textbf{Mediator} $M$  &  &  &  &  &  &  &  &  &  & \tabularnewline
			general health after first year  & 1.72  & 1.74  & 1.70  & 0.04  & 0.03  & 0.00  & 1.00  & 2.00  & 3.00  & 4.00 \tabularnewline
			\textbf{Covariates} $X$  &  &  &  &  &  &  &  &  &  & \tabularnewline
			female  & 0.44  & 0.45  & 0.42  & 0.03  & 0.01  & 0.41  & 0.39  & 0.46  & 0.50  & 0.51 \tabularnewline
			age  & 18.44  & 18.21  & 18.99  & -0.77  & 0.00  & 18.55  & 18.43  & 18.42  & 18.42  & 18.86 \tabularnewline
			white  & 0.26  & 0.25  & 0.30  & -0.05  & 0.00  & 0.33  & 0.24  & 0.28  & 0.28  & 0.24 \tabularnewline
			black  & 0.49  & 0.50  & 0.47  & 0.03  & 0.01  & 0.41  & 0.52  & 0.48  & 0.49  & 0.44 \tabularnewline
			Hispanic  & 0.17  & 0.17  & 0.16  & 0.01  & 0.28  & 0.21  & 0.17  & 0.17  & 0.16  & 0.21 \tabularnewline
			education  & 9.96  & 9.91  & 10.07  & -0.15  & 0.00  & 9.93  & 10.00  & 9.94  & 9.91  & 9.76 \tabularnewline
			education missing  & 0.02  & 0.01  & 0.03  & -0.01  & 0.00  & 0.00  & 0.02  & 0.02  & 0.01  & 0.03 \tabularnewline
			GED degree  & 0.05  & 0.04  & 0.07  & -0.03  & 0.00  & 0.06  & 0.05  & 0.05  & 0.05  & 0.05 \tabularnewline
			high school degree  & 0.20  & 0.17  & 0.25  & -0.08  & 0.00  & 0.13  & 0.21  & 0.20  & 0.17  & 0.16 \tabularnewline
			English mother tongue & 0.85  & 0.84  & 0.86  & -0.02  & 0.02  & 0.83  & 0.85  & 0.84  & 0.88  & 0.80 \tabularnewline
			cohabiting or married  & 0.06  & 0.05  & 0.08  & -0.03  & 0.00  & 0.07  & 0.05  & 0.06  & 0.08  & 0.09 \tabularnewline
			has one or more children  & 0.20  & 0.18  & 0.24  & -0.06  & 0.00  & 0.23  & 0.20  & 0.19  & 0.21  & 0.21 \tabularnewline
			ever worked before JC  & 0.14  & 0.15  & 0.13  & 0.02  & 0.02  & 0.15  & 0.15  & 0.14  & 0.15  & 0.15 \tabularnewline
			average weekly gross earnings  & 19.42  & 18.24  & 22.32  & -4.08  & 0.05  & 37.52  & 19.91  & 17.65  & 16.97  & 39.69 \tabularnewline
			household size  & 3.43  & 3.47  & 3.34  & 0.13  & 0.01  & 3.30  & 3.40  & 3.47  & 3.45  & 3.37 \tabularnewline
			household size missing  & 0.02  & 0.01  & 0.03  & -0.02  & 0.00  & 0.01  & 0.02  & 0.02  & 0.01  & 0.03 \tabularnewline
			mum's education  & 9.31  & 9.45  & 8.98  & 0.47  & 0.00  & 8.61  & 9.35  & 9.41  & 9.24  & 8.00 \tabularnewline
			mum's education missing  & 0.19  & 0.18  & 0.21  & -0.03  & 0.00  & 0.24  & 0.19  & 0.18  & 0.20  & 0.28 \tabularnewline
			dad's education  & 7.04  & 7.16  & 6.73  & 0.44  & 0.00  & 6.45  & 7.14  & 7.10  & 6.74  & 6.66 \tabularnewline
			dad's education missing  & 0.39  & 0.38  & 0.41  & -0.02  & 0.03  & 0.44  & 0.39  & 0.38  & 0.40  & 0.44 \tabularnewline
			received welfare as child  & 1.92  & 1.93  & 1.89  & 0.04  & 0.23  & 1.86  & 1.89  & 1.92  & 1.98  & 1.97 \tabularnewline
			welfare info missing  & 0.07  & 0.07  & 0.08  & -0.01  & 0.06  & 0.10  & 0.07  & 0.07  & 0.07  & 0.07 \tabularnewline
			general health at baseline  & 1.65  & 1.65  & 1.65  & -0.01  & 0.65  & 1.69  & 1.40  & 1.73  & 2.00  & 2.12 \tabularnewline
			health at baseline missing  & 0.02  & 0.01  & 0.03  & -0.01  & 0.00  & 0.00  & 0.02  & 0.02  & 0.01  & 0.03 \tabularnewline
			smoker  & 0.81  & 0.81  & 0.80  & 0.01  & 0.63  & 0.86  & 0.75  & 0.82  & 0.88  & 0.86 \tabularnewline
			smoker info missing  & 0.48  & 0.49  & 0.46  & 0.03  & 0.03  & 0.42  & 0.52  & 0.47  & 0.41  & 0.43 \tabularnewline
			alcohol consumption & 1.79  & 1.78  & 1.84  & -0.06  & 0.15  & 1.96  & 1.67  & 1.85  & 1.90  & 1.88 \tabularnewline
			alcohol info missing  & 0.43  & 0.43  & 0.41  & 0.03  & 0.03  & 0.38  & 0.46  & 0.41  & 0.39  & 0.38 \tabularnewline
			\hline 
	\end{tabular}} \label{table3} 
\end{table}

\clearpage

\begin{sidewaysfigure}[ht]
	\centering
	\mbox{\subfigure{\includegraphics[height=5cm, width = 5cm]{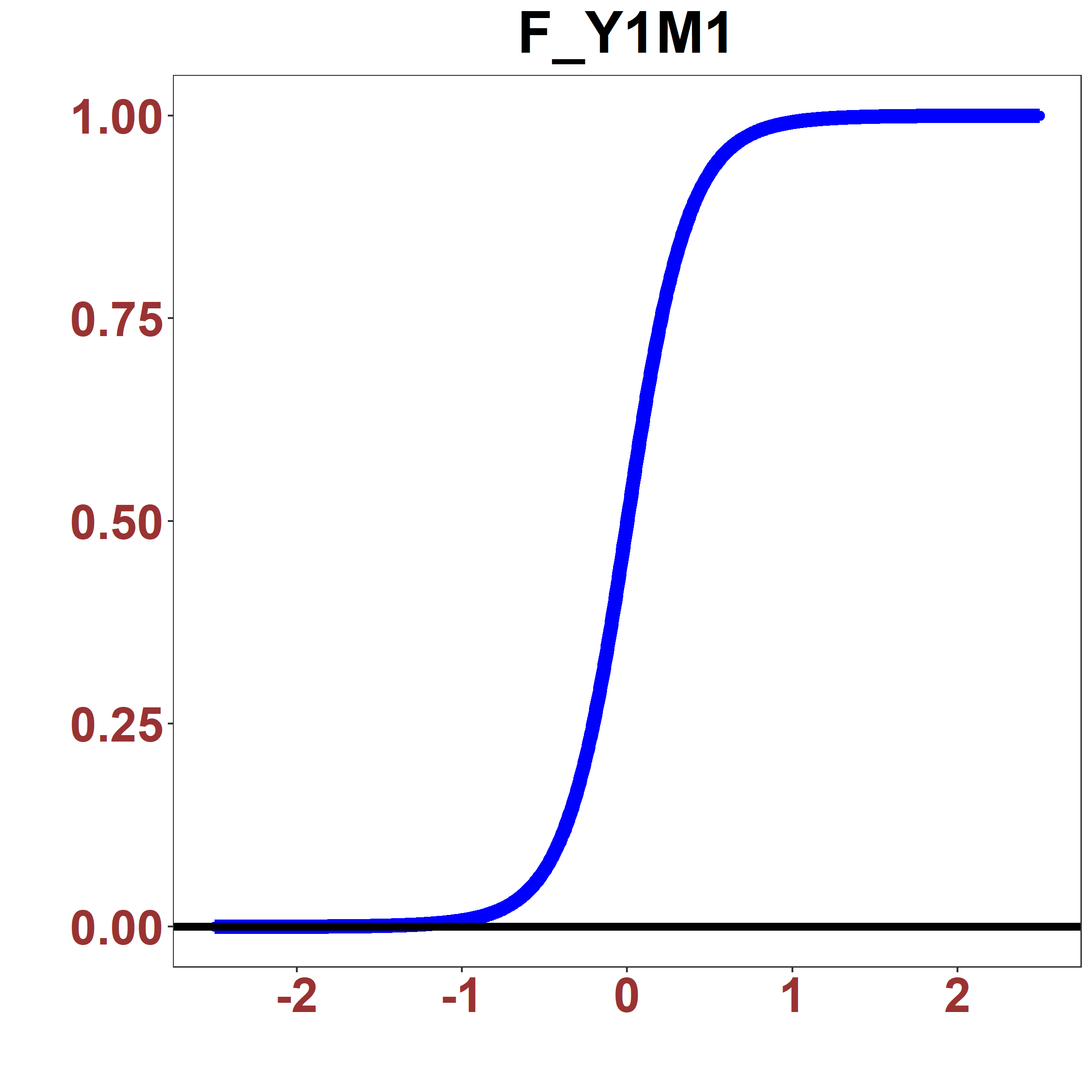}}		
		\subfigure{\includegraphics[height=5cm, width = 5cm]{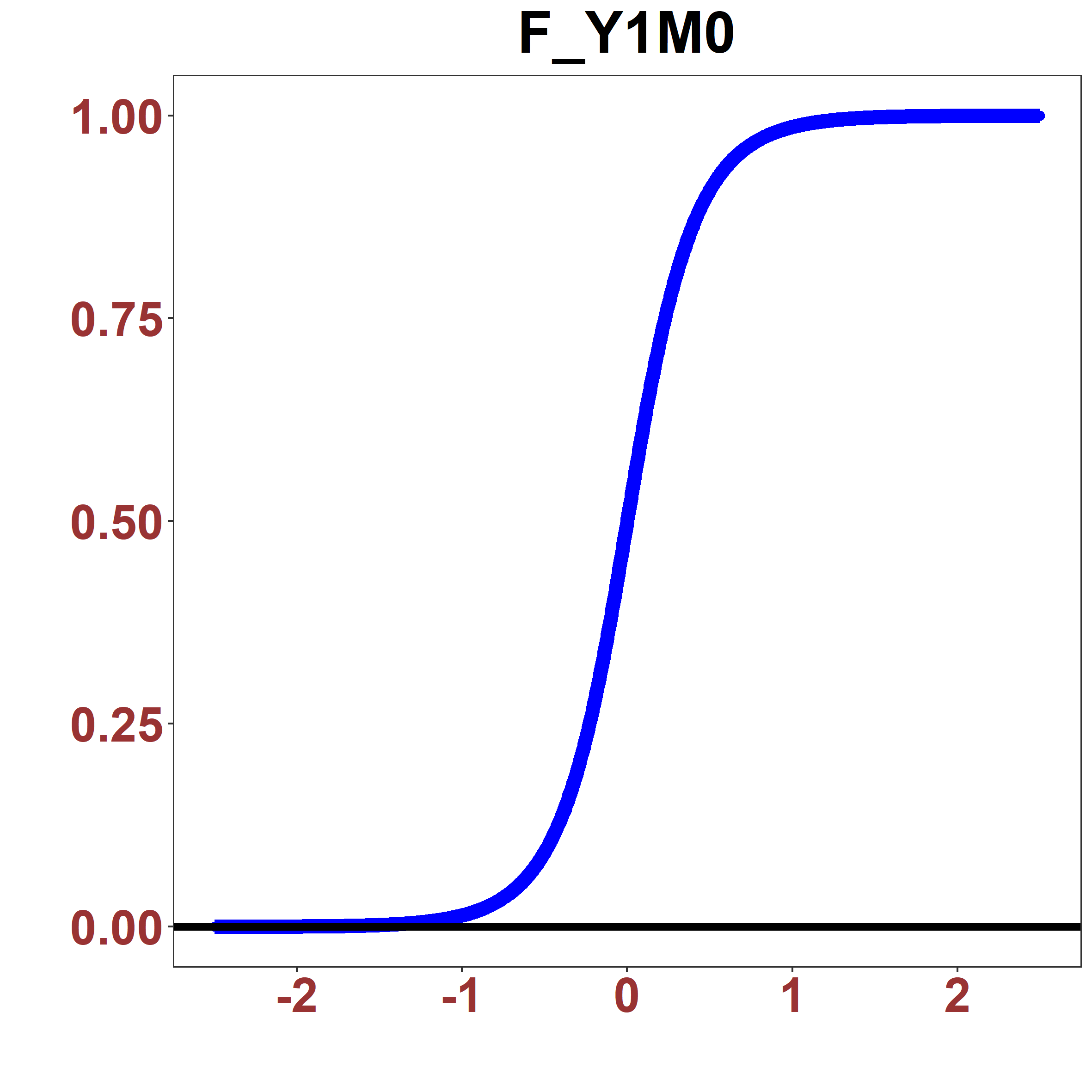}}
		\subfigure{\includegraphics[height=5cm, width = 5cm]{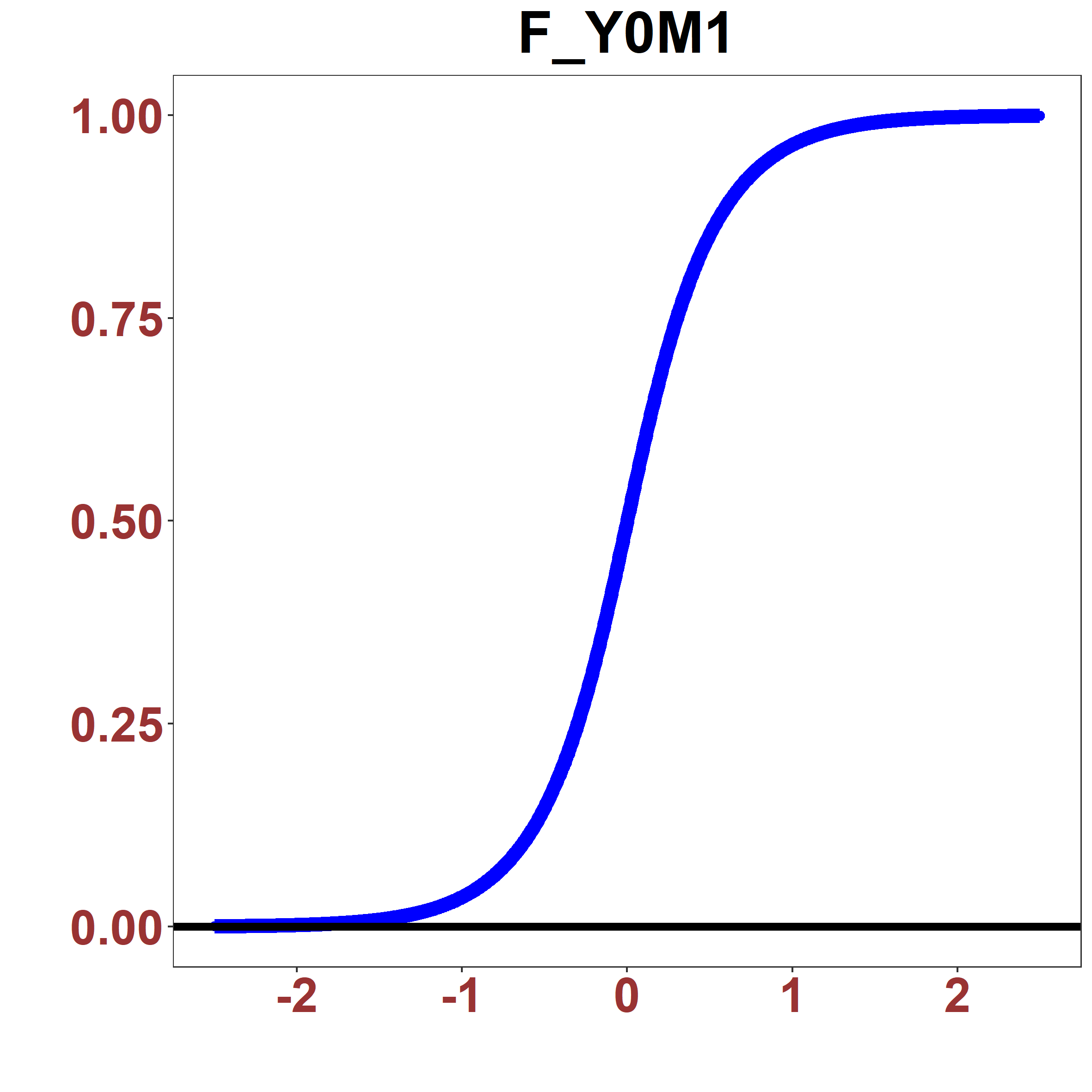}}
		\subfigure{\includegraphics[height=5cm, width = 5cm]{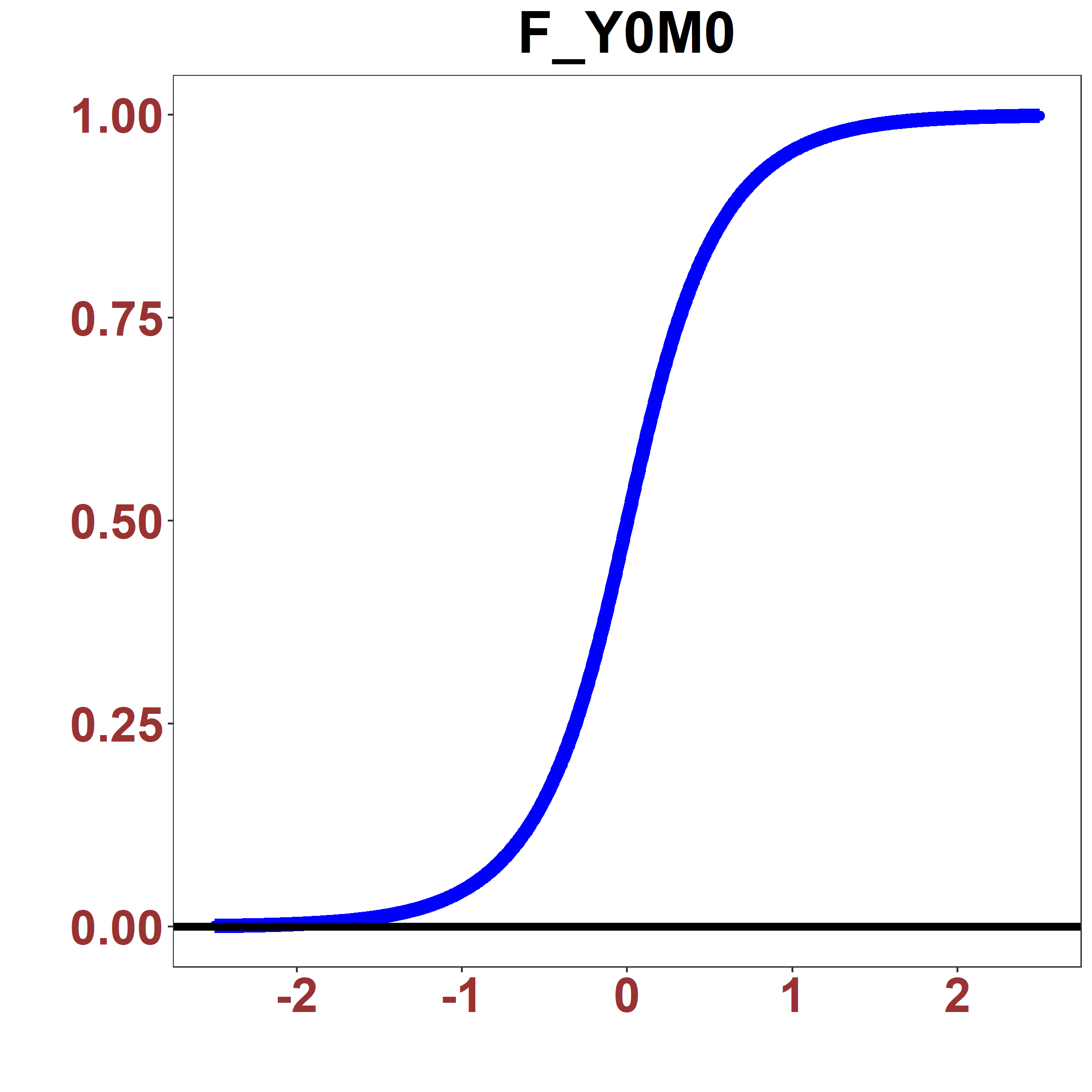}}
	}
	\mbox{\subfigure{\includegraphics[height=5cm, width = 5cm]{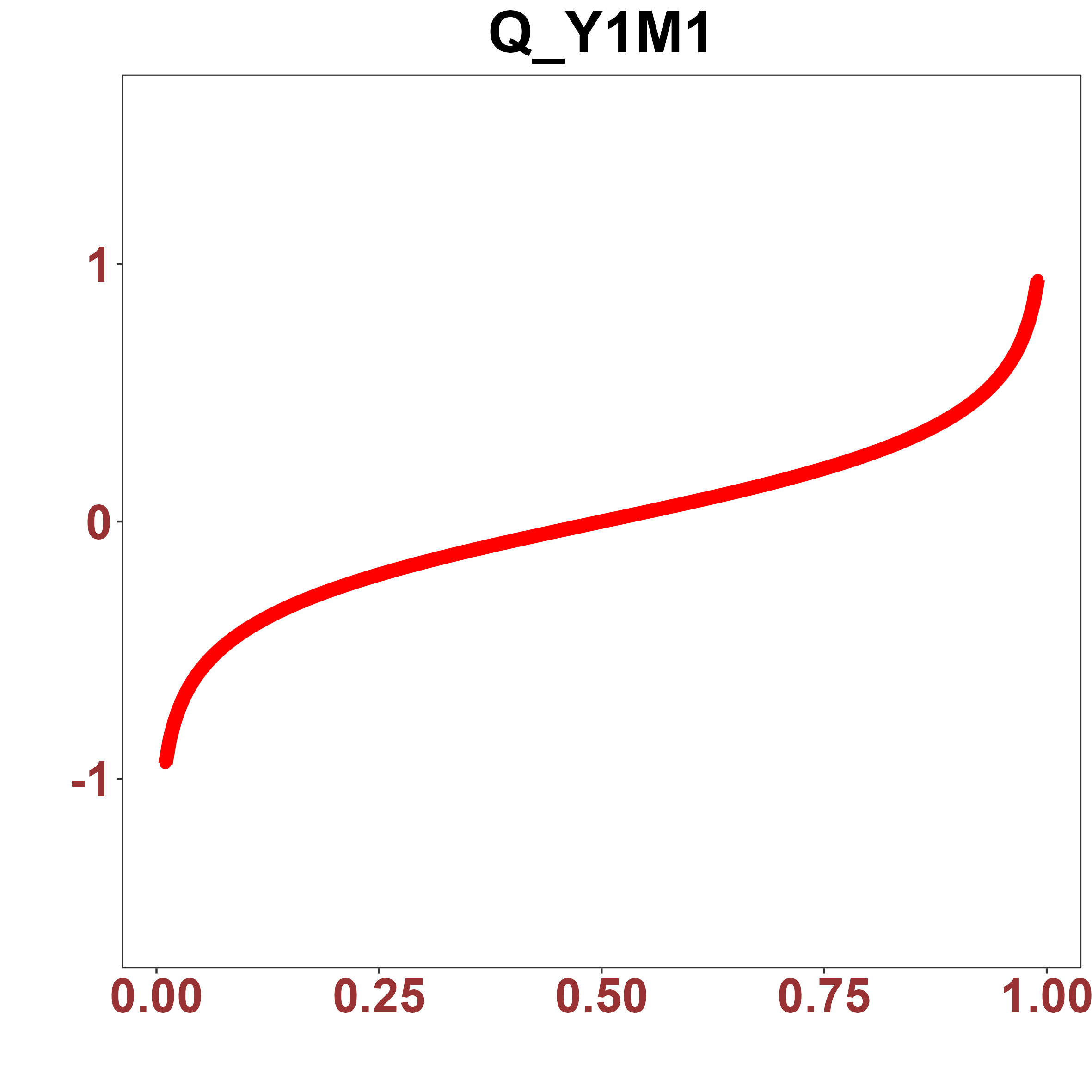}}		
		\subfigure{\includegraphics[height=5cm, width = 5cm]{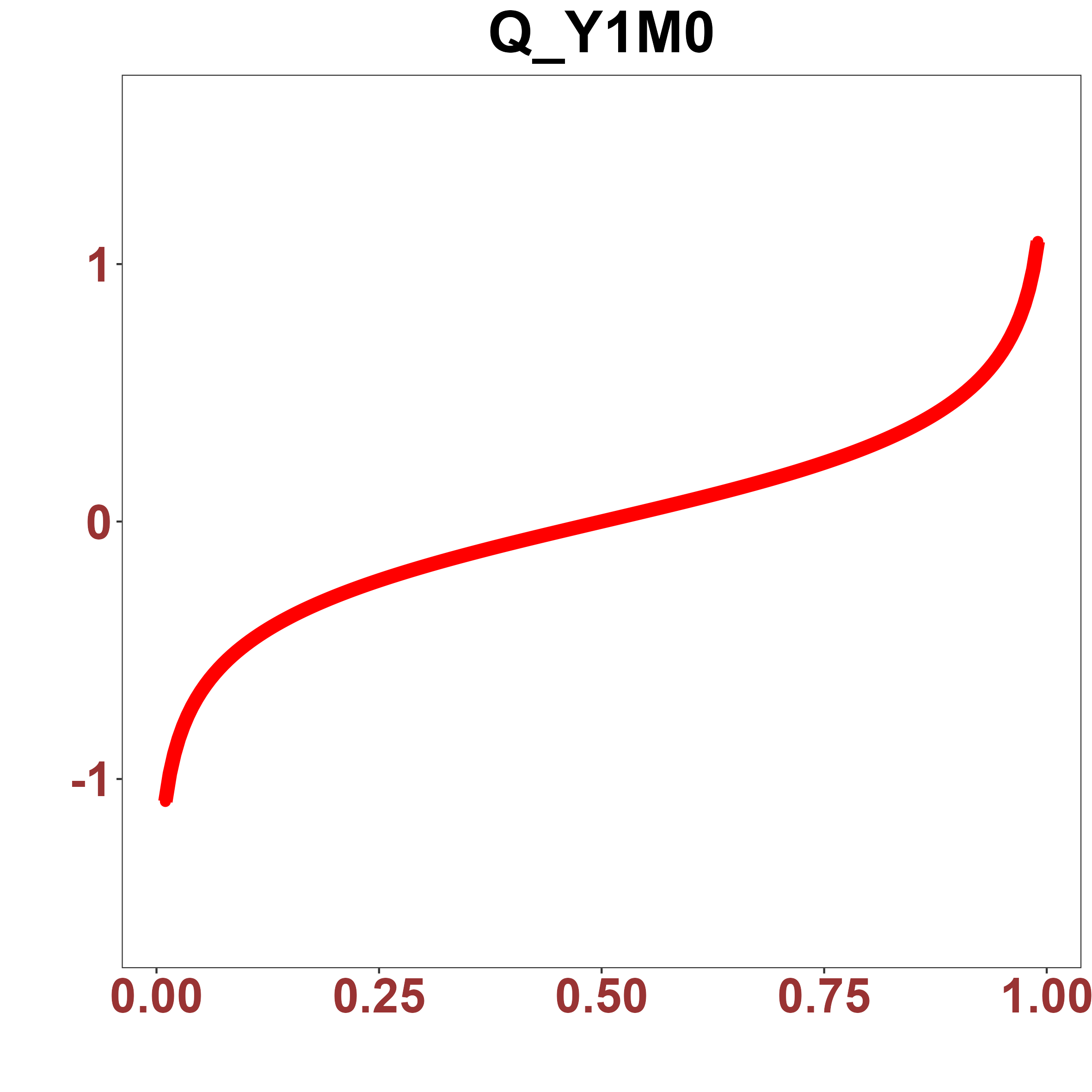}}
		\subfigure{\includegraphics[height=5cm, width = 5cm]{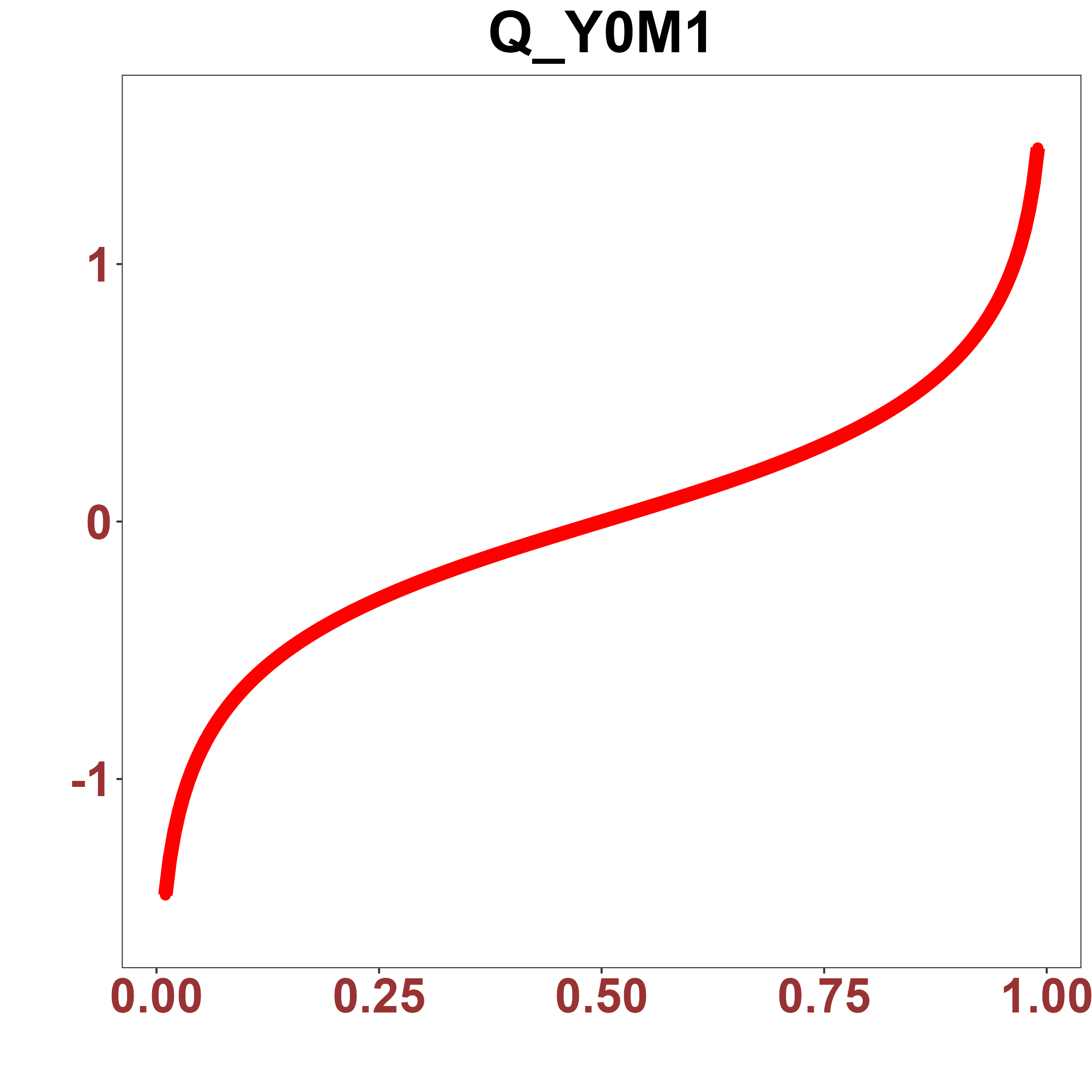}}
		\subfigure{\includegraphics[height=5cm, width = 5cm]{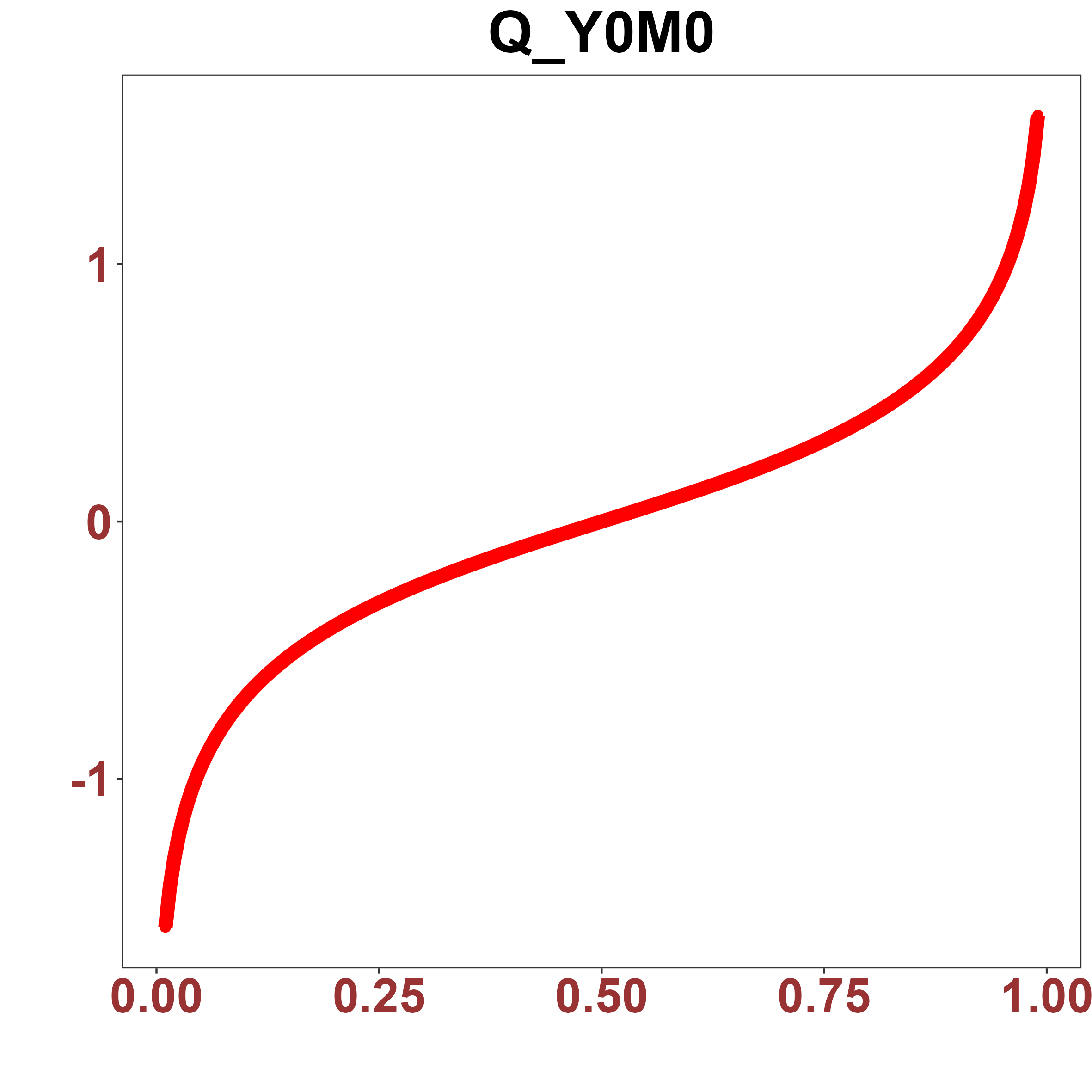}}
	}
	\caption{Approximate true c.d.f.\ profiles from 40 million Monte Carlo simulations under the data generating process described in Section 4. Top: $F_{Y\left(1,M\left(1\right)\right)}$, $F_{Y\left(1,M\left(0\right)\right)}$, $F_{Y\left(0,M\left(1\right)\right)}$ and $F_{Y\left(0,M\left(0\right)\right)}$; Bottom: $Q_{Y\left(1,M\left(1\right)\right)}$, $Q_{Y\left(1,M\left(0\right)\right)}$, $Q_{Y\left(0,M\left(1\right)\right)}$ and $Q_{Y\left(0,M\left(0\right)\right)}$. The data generating process is described in Section 4.}
	\label{figure3}
\end{sidewaysfigure}

\begin{figure}[ht]
	\centering
	
	\mbox{\subfigure{\includegraphics[height=5cm, width = 5cm]{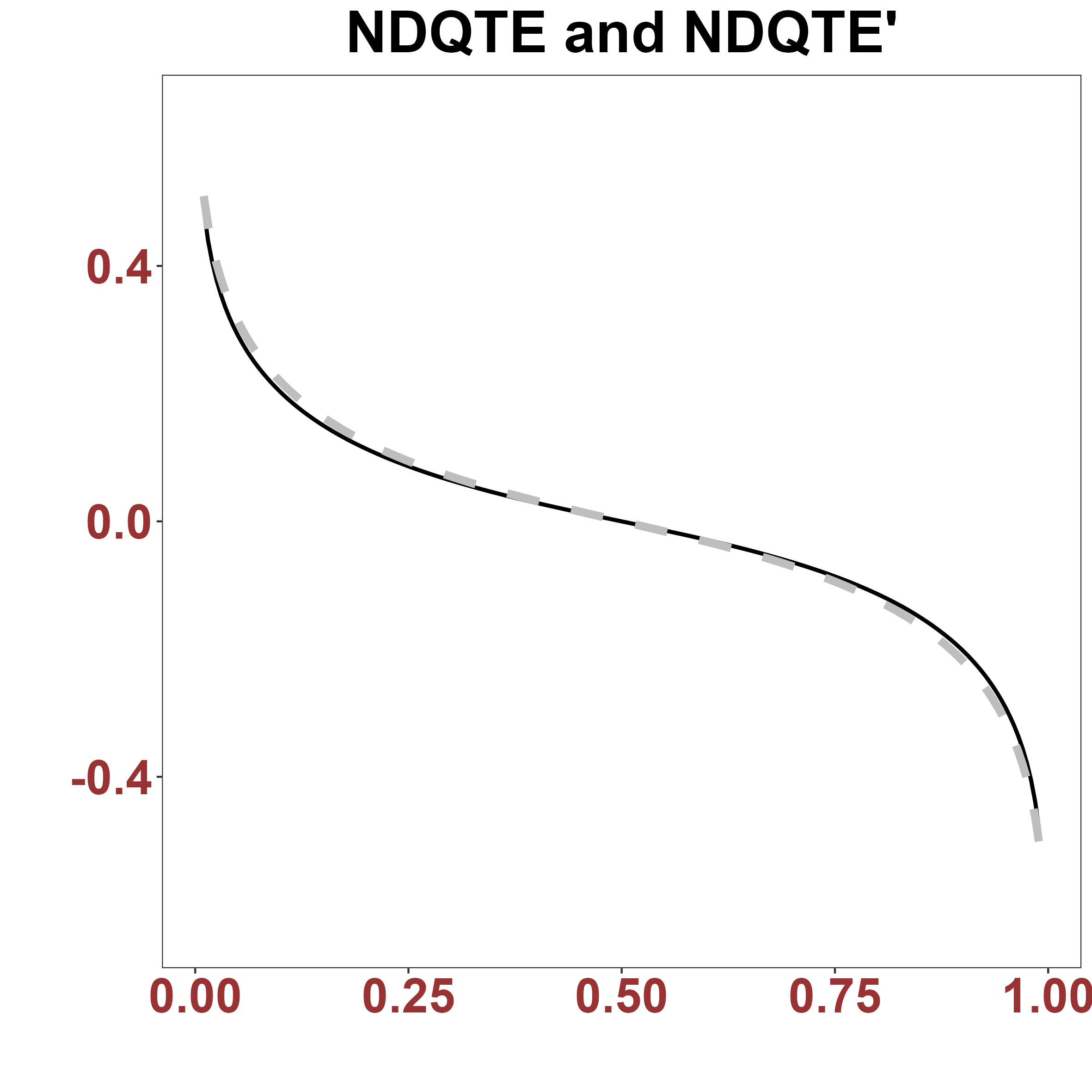}}		
		\subfigure{\includegraphics[height=5cm, width = 5cm]{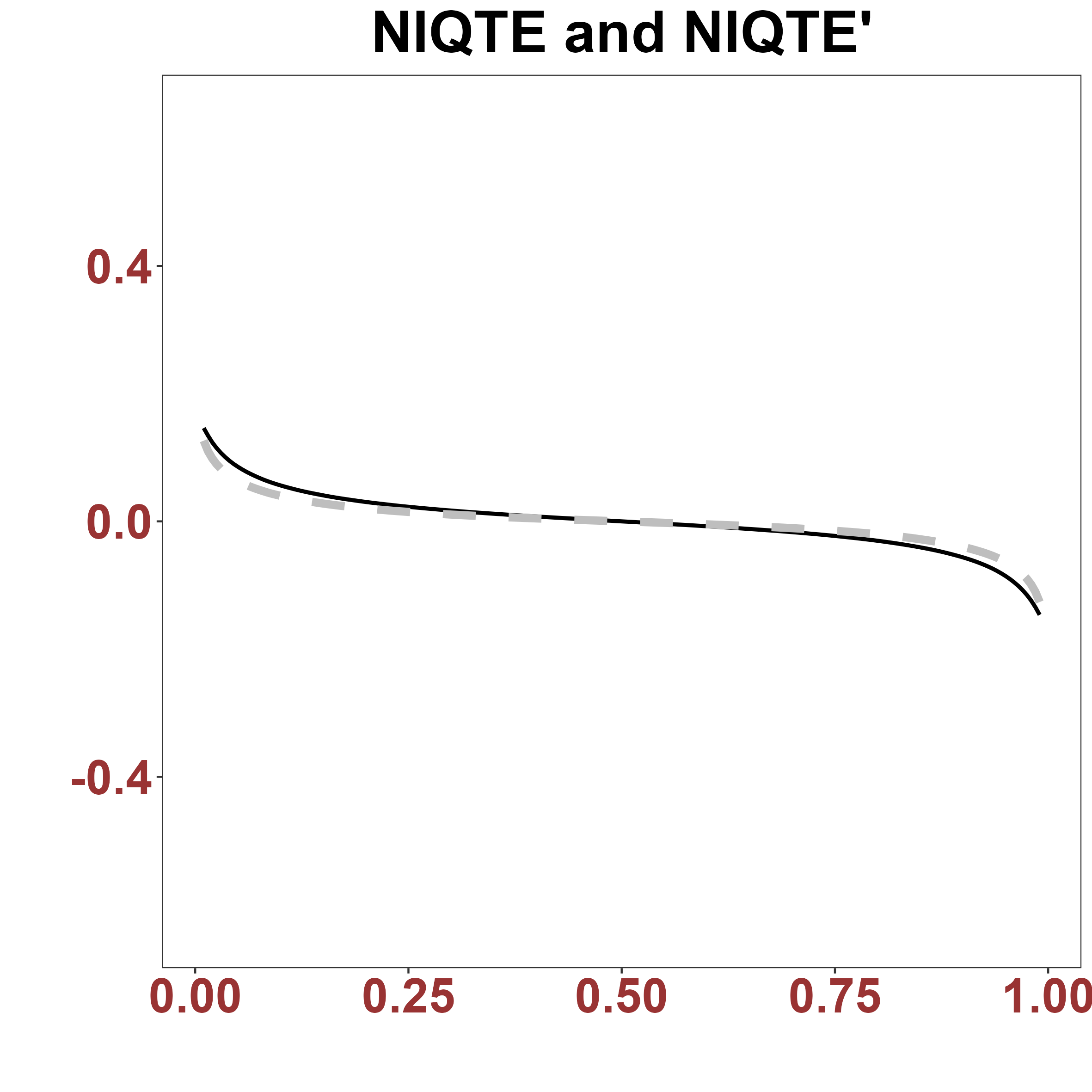}}
		\subfigure{\includegraphics[height=5cm, width = 5cm]{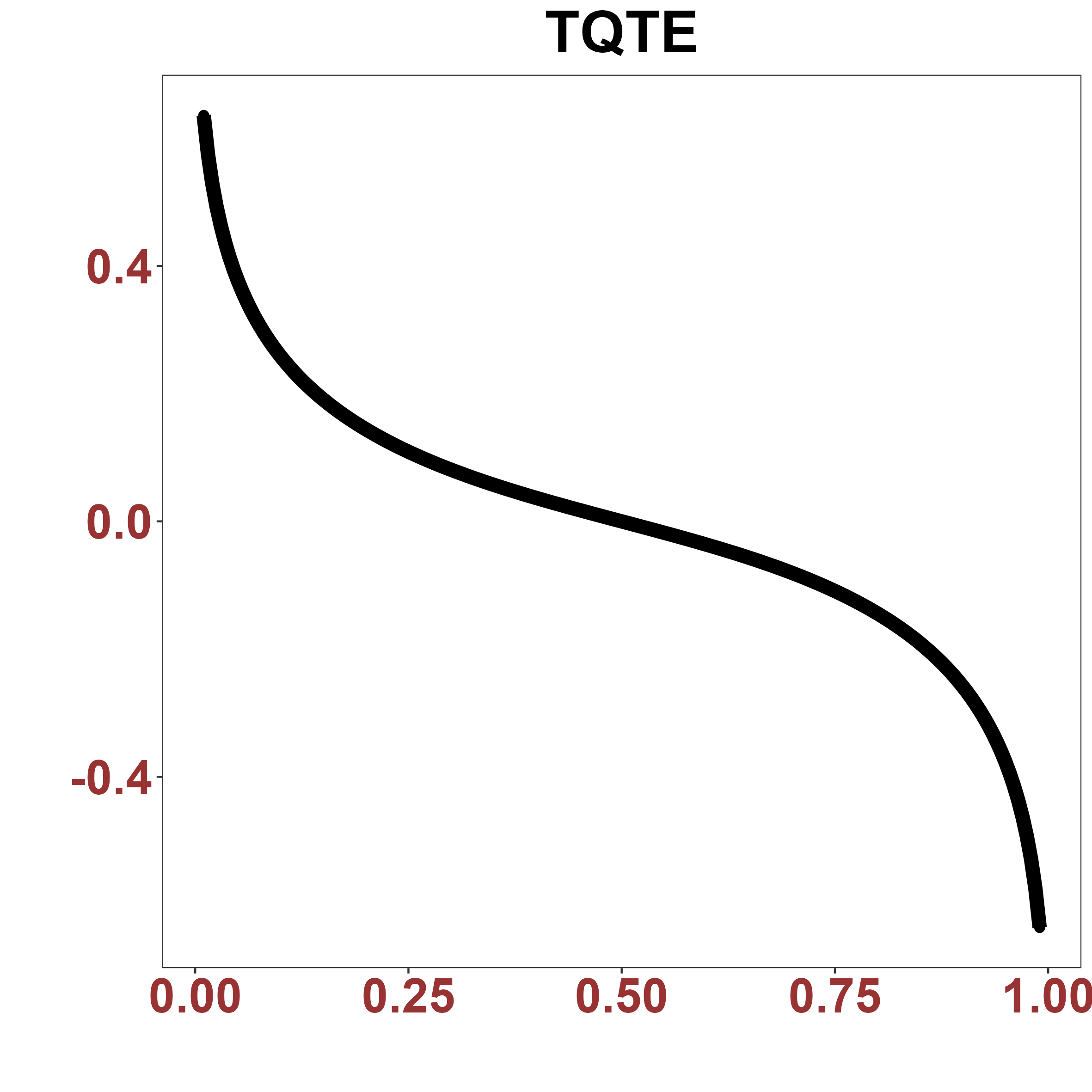}}
	}
	\caption{Approximate true effect profiles from 40 million Monte Carlo simulations under the data generating process described in Section 4: NDQTE (black solid) and NDQTE' (gray dashed), NIQTE (black solid) and NIQTE' (gray dashed) and TQTE. }
	\label{figure4}
\end{figure}

\clearpage
\bibliographystyle{ECTA}
\bibliography{ref_mediation}

\begin{thebibliography}{43}
\newcommand{\enquote}[1]{``#1''}
\expandafter\ifx\csname natexlab\endcsname\relax\def\natexlab#1{#1}\fi

\bibitem[\protect\citeauthoryear{Abadie, Angrist, and Imbens}{Abadie
  et~al.}{2002}]{AAI_2002}
\textsc{Abadie, A., J.~Angrist, and G.~Imbens} (2002): \enquote{Instrumental
  Variables Estimates of the Effect of Subsidized Training on the Quantiles of
  Trainee Earnings,} \emph{Econometrica}, 70, 91--117.

\bibitem[\protect\citeauthoryear{Ai, Linton, and Zhang}{Ai
  et~al.}{2022}]{ALZ_2022}
\textsc{Ai, C., O.~Linton, and Z.~Zhang} (2022): \enquote{Estimation and
  inference for the counterfactual distribution and quantile functions in
  continuous treatment models,} \emph{Journal of Econometrics}, 228, 39--61,
  annals Issue: In Honor of Ron Gallant.

\bibitem[\protect\citeauthoryear{Athey and Imbens}{Athey and
  Imbens}{2006}]{AtheyImbens06}
\textsc{Athey, S. and G.~Imbens} (2006): \enquote{Identification and inference
  in nonlinear difference-in-differences models,} \emph{Econometrica}, 74,
  431--497.

\bibitem[\protect\citeauthoryear{Belloni, Chernozhukov, Fern{\'a}ndez-Val, and
  Hansen}{Belloni et~al.}{2017}]{BCFH_2017}
\textsc{Belloni, A., V.~Chernozhukov, I.~Fern{\'a}ndez-Val, and C.~Hansen}
  (2017): \enquote{Program Evaluation and Causal Inference With
  High-Dimensional Data,} \emph{Econometrica}, 85, 233--298.

\bibitem[\protect\citeauthoryear{Bind, VanderWeele, Schwartz, and Coull}{Bind
  et~al.}{2017}]{BVSC_2017}
\textsc{Bind, M.-A.~C., T.~J. VanderWeele, J.~D. Schwartz, and B.~Coull}
  (2017): \enquote{Quantile causal mediation analysis allowing longitudinal
  data,} \emph{Statistics in Medicine}, 36, 4182 -- 4195.

\bibitem[\protect\citeauthoryear{Bodory and Huber}{Bodory and
  Huber}{2022}]{BH_2022}
\textsc{Bodory, H. and M.~Huber} (2022): \emph{causalweight: Estimation Methods
  for Causal Inference Based on Inverse Probability Weighting}, r package
  version 1.0.3.

\bibitem[\protect\citeauthoryear{Bodory, Huber, and Laff\'ers}{Bodory
  et~al.}{2022}]{BoHuLa2020}
\textsc{Bodory, H., M.~Huber, and L.~Laff\'ers} (2022): \enquote{Evaluating
  (weighted) dynamic treatment effects by double machine learning,} \emph{The
  Econometrics Journal}, 25, 628--648.

\bibitem[\protect\citeauthoryear{Chernozhukov, Chetverikov, Demirer, Duflo,
  Hansen, Newey, and Robins}{Chernozhukov et~al.}{2018}]{CCDDHNR_2018}
\textsc{Chernozhukov, V., D.~Chetverikov, M.~Demirer, E.~Duflo, C.~Hansen,
  W.~Newey, and J.~Robins} (2018): \enquote{{Double/debiased machine learning
  for treatment and structural parameters},} \emph{The Econometrics Journal},
  21, 1--68.

\bibitem[\protect\citeauthoryear{Chernozhukov, Fern\'{a}ndez-Val, and
  Galichon}{Chernozhukov et~al.}{2010}]{CFG_2010}
\textsc{Chernozhukov, V., I.~Fern\'{a}ndez-Val, and A.~Galichon} (2010):
  \enquote{Quantile and Probability Curves Without Crossing,}
  \emph{Econometrica}, 78, 1093--1125.

\bibitem[\protect\citeauthoryear{Chernozhukov, Fern\'{a}ndez-Val, and
  Melly}{Chernozhukov et~al.}{2013}]{CFM_2013}
\textsc{Chernozhukov, V., I.~Fern\'{a}ndez-Val, and B.~Melly} (2013):
  \enquote{Inference on Counterfactual Distributions,} \emph{Econometrica}, 81,
  2205--2268.

\bibitem[\protect\citeauthoryear{Chernozhukov and Hansen}{Chernozhukov and
  Hansen}{2005}]{CH_2005}
\textsc{Chernozhukov, V. and C.~Hansen} (2005): \enquote{An {IV} Model of
  Quantile Treatment Effects,} \emph{Econometrica}, 73, 245--261.

\bibitem[\protect\citeauthoryear{Cox}{Cox}{1958}]{Cox58}
\textsc{Cox, D.} (1958): \emph{Planning of Experiments}, New York: Wiley.

\bibitem[\protect\citeauthoryear{Donald and Hsu}{Donald and
  Hsu}{2014}]{DH_2014}
\textsc{Donald, S.~G. and Y.-C. Hsu} (2014): \enquote{Estimation and inference
  for distribution functions and quantile functions in treatment effect
  models,} \emph{Journal of Econometrics}, 178, 383--397.

\bibitem[\protect\citeauthoryear{Farbmacher, Huber, Laff\'{e}rs, Langen, and
  Spindler}{Farbmacher et~al.}{2022}]{FHLLS_2022}
\textsc{Farbmacher, H., M.~Huber, L.~Laff\'{e}rs, H.~Langen, and M.~Spindler}
  (2022): \enquote{{Causal mediation analysis with double machine learning},}
  \emph{The Econometrics Journal}, 25, 277--300.

\bibitem[\protect\citeauthoryear{Firpo}{Firpo}{2007}]{Firpo_2007}
\textsc{Firpo, S.} (2007): \enquote{Efficient Semiparametric Estimation of
  Quantile Treatment Effects,} \emph{Econometrica}, 75, 259--276.

\bibitem[\protect\citeauthoryear{Fisher and Kennedy}{Fisher and
  Kennedy}{2019}]{FK_2019}
\textsc{Fisher, A. and E.~H. Kennedy} (2019): \enquote{Visually Communicating
  and Teaching Intuition for Influence Functions,} .

\bibitem[\protect\citeauthoryear{Flores and Flores-Lagunes}{Flores and
  Flores-Lagunes}{2009}]{FlFl09}
\textsc{Flores, C.~A. and A.~Flores-Lagunes} (2009): \enquote{Identification
  and Estimation of Causal Mechanisms and Net Effects of a Treatment under
  Unconfoundedness,} \emph{IZA Discussion Paper No. 4237}.

\bibitem[\protect\citeauthoryear{Flores and Flores-Lagunes}{Flores and
  Flores-Lagunes}{2010}]{FlFl10}
---\hspace{-.1pt}---\hspace{-.1pt}--- (2010): \enquote{Nonparametric Partial
  Identification of Causal Net and Mechanism Average Treatment Effects,}
  \emph{mimeo, University of Florida}.

\bibitem[\protect\citeauthoryear{Flores, Flores-Lagunes, Gonzales, and
  Neuman}{Flores et~al.}{2012}]{FlGoNe12}
\textsc{Flores, C.~A., A.~Flores-Lagunes, A.~Gonzales, and T.~Neuman} (2012):
  \enquote{Estimating the effects of Length of Exposure to Instruction in a
  Training Program: The Case of Job Corps,} \emph{The Review of Economics and
  Statistics}, 94, 153--171.

\bibitem[\protect\citeauthoryear{Foresi and Peracchi}{Foresi and
  Peracchi}{1995}]{FP_1995}
\textsc{Foresi, S. and F.~Peracchi} (1995): \enquote{The Conditional
  Distribution of Excess Returns: An Empirical Analysis,} \emph{Journal of the
  American Statistical Association}, 90, 451--466.

\bibitem[\protect\citeauthoryear{Fr\"{o}lich and Huber}{Fr\"{o}lich and
  Huber}{2017}]{FrHu17}
\textsc{Fr\"{o}lich, M. and M.~Huber} (2017): \enquote{Direct and Indirect
  Treatment Effects: Causal Chains and Mediation Analysis with Instrumental
  Variables,} \emph{Journal of the Royal Statistical Society: Series(B)}, 79,
  1645--1666.

\bibitem[\protect\citeauthoryear{Hahn}{Hahn}{1998}]{Hahn_1998}
\textsc{Hahn, J.} (1998): \enquote{On the Role of the Propensity Score in
  Efficient Semiparametric Estimation of Average Treatment Effects,}
  \emph{Econometrica}, 66, 315--332.

\bibitem[\protect\citeauthoryear{Hines, Dukes, Diaz-Ordaz, and
  Vansteelandt}{Hines et~al.}{2022}]{HDDV_2022}
\textsc{Hines, O., O.~Dukes, K.~Diaz-Ordaz, and S.~Vansteelandt} (2022):
  \enquote{Demystifying Statistical Learning Based on Efficient Influence
  Functions,} \emph{The American Statistician}, 76, 292--304.

\bibitem[\protect\citeauthoryear{Hsu, Huber, and Lai}{Hsu
  et~al.}{2019}]{HHL_2019}
\textsc{Hsu, Y.-C., M.~Huber, and T.-C. Lai} (2019): \enquote{Nonparametric
  Estimation of Natural Direct and Indirect Effects Based on Inverse
  Probability Weighting,} \emph{Journal of Econometric Methods}, 8.

\bibitem[\protect\citeauthoryear{Hsu, Lai, and Lieli}{Hsu
  et~al.}{2022}]{HLL_2022}
\textsc{Hsu, Y.-C., T.-C. Lai, and R.~P. Lieli} (2022): \enquote{Estimation and
  inference for distribution and quantile functions in endogenous treatment
  effect models,} \emph{Econometric Reviews}, 41, 22--50.

\bibitem[\protect\citeauthoryear{Huber}{Huber}{2014}]{Hu14}
\textsc{Huber, M.} (2014): \enquote{Identifying causal mechanisms (primarily)
  based on inverse probability weighting,} \emph{Journal of Applied
  Econometrics}, 29, 920--943.

\bibitem[\protect\citeauthoryear{Huber, Schelker, and Strittmatter}{Huber
  et~al.}{2022}]{HSS_2022}
\textsc{Huber, M., M.~Schelker, and A.~Strittmatter} (2022): \enquote{Direct
  and Indirect Effects based on Changes-in-Changes,} \emph{Journal of Business
  \& Economic Statistics}, 40, 432--443.

\bibitem[\protect\citeauthoryear{Ichimura and Newey}{Ichimura and
  Newey}{2022}]{IN_2022}
\textsc{Ichimura, H. and W.~K. Newey} (2022): \enquote{The influence function
  of semiparametric estimators,} \emph{Quantitative Economics}, 13, 29--61.

\bibitem[\protect\citeauthoryear{Imai, Keele, and Yamamoto}{Imai
  et~al.}{2010}]{IKY_2010}
\textsc{Imai, K., L.~Keele, and T.~Yamamoto} (2010): \enquote{Identification,
  Inference and Sensitivity Analysis for Causal Mediation Effects,}
  \emph{Statistical Science}, 25, 51--71.

\bibitem[\protect\citeauthoryear{Levy}{Levy}{2019}]{Levy_2019}
\textsc{Levy, J.} (2019): \enquote{Tutorial: Deriving The Efficient Influence
  Curve for Large Models,} .

\bibitem[\protect\citeauthoryear{Neyman}{Neyman}{1923}]{Neyman23}
\textsc{Neyman, J.} (1923): \enquote{On the Application of Probability Theory
  to Agricultural Experiments. Essay on Principles.} \emph{Statistical
  Science}, Reprint, 5, 463--480.

\bibitem[\protect\citeauthoryear{Neyman}{Neyman}{1959}]{Neyman1959}
---\hspace{-.1pt}---\hspace{-.1pt}--- (1959): \emph{Optimal asymptotic tests of
  composite statistical hypotheses}, Wiley, 416--444.

\bibitem[\protect\citeauthoryear{Pearl}{Pearl}{2000}]{Pearl00}
\textsc{Pearl, J.} (2000): \emph{Causality: Models, Reasoning, and Inference},
  Cambridge: Cambridge University Press.

\bibitem[\protect\citeauthoryear{Pearl}{Pearl}{2001}]{Pearl01}
---\hspace{-.1pt}---\hspace{-.1pt}--- (2001): \enquote{Direct and indirect
  effects,} in \emph{Proceedings of the Seventeenth Conference on Uncertainty
  in Artificial Intelligence}, San Francisco: Morgan Kaufman, 411--420.

\bibitem[\protect\citeauthoryear{Robins and Greenland}{Robins and
  Greenland}{1992}]{RoGr92}
\textsc{Robins, J.~M. and S.~Greenland} (1992): \enquote{Identifiability and
  Exchangeability for Direct and Indirect Effects,} \emph{Epidemiology}, 3,
  143--155.

\bibitem[\protect\citeauthoryear{Robins, Rotnitzky, and Zhao}{Robins
  et~al.}{1994}]{RRZ_1994}
\textsc{Robins, J.~M., A.~Rotnitzky, and L.~P. Zhao} (1994):
  \enquote{Estimation of Regression Coefficients When Some Regressors are not
  Always Observed,} \emph{Journal of the American Statistical Association}, 89,
  846--866.

\bibitem[\protect\citeauthoryear{Rubin}{Rubin}{1980}]{Rubin80}
\textsc{Rubin, D.} (1980): \enquote{Comment on 'Randomization Analysis of
  Experimental Data: The Fisher Randomization Test' by D. Basu,} \emph{Journal
  of American Statistical Association}, 75, 591--593.

\bibitem[\protect\citeauthoryear{Rubin}{Rubin}{1974}]{Rubin74}
\textsc{Rubin, D.~B.} (1974): \enquote{Estimating Causal Effects of Treatments
  in Randomized and Nonrandomized Studies,} \emph{Journal of Educational
  Psychology}, 66, 688--701.

\bibitem[\protect\citeauthoryear{Schochet, Burghardt, and Glazerman}{Schochet
  et~al.}{2001}]{SchBuGl01}
\textsc{Schochet, P., J.~Burghardt, and S.~Glazerman} (2001): \enquote{National
  Job Corps Study: The Impacts of Job Corps on Participants Employment and
  Related Outcomes,} \emph{Report, Washington, DC: Mathematica Policy Research,
  Inc.}

\bibitem[\protect\citeauthoryear{Schochet, Burghardt, and McConnell}{Schochet
  et~al.}{2008}]{SchBuMc08}
\textsc{Schochet, P., J.~Burghardt, and S.~McConnell} (2008): \enquote{Does Job
  Corps Work? Impact Findings from the National Job Corps Study,} \emph{The
  American Economic Review}, 98, 1864--1886.

\bibitem[\protect\citeauthoryear{Tchetgen~Tchetgen and
  Shpitser}{Tchetgen~Tchetgen and Shpitser}{2012}]{TS_2012}
\textsc{Tchetgen~Tchetgen, E.~J. and I.~Shpitser} (2012):
  \enquote{Semiparametric theory for causal mediation analysis: Efficiency
  bounds, multiple robustness and sensitivity analysis,} \emph{The Annals of
  Statistics}, 40, 1816--1845.

\bibitem[\protect\citeauthoryear{Vaart}{Vaart}{1998}]{Vaart_1998}
\textsc{Vaart, A. W. v.~d.} (1998): \emph{Asymptotic Statistics}, Cambridge
  Series in Statistical and Probabilistic Mathematics, Cambridge University
  Press.

\bibitem[\protect\citeauthoryear{Zhou}{Zhou}{2022}]{Zhou_2022}
\textsc{Zhou, X.} (2022): \enquote{Semiparametric estimation for causal
  mediation analysis with multiple causally ordered mediators,} \emph{Journal
  of the Royal Statistical Society: Series B (Statistical Methodology)},
  forthcoming.

\end{thebibliography}
\end{document}